\documentclass[sigconf, nonacm]{acmart}

\usepackage{booktabs} % format tables
\usepackage{times} % times font can reduce page length
\usepackage{color} % color figures
\usepackage{balance} %balance the end of the draft
\usepackage{url} 
\usepackage{tabularx}
\usepackage{siunitx} %unit package for special unit symbols
\usepackage{epstopdf} % convert eps figures to pdf
\usepackage{epsfig,tabularx,subfigure,multirow, graphicx}
\usepackage{enumitem}
\usepackage{caption}
\usepackage{pifont} % support for ams symbols
\usepackage{makecell}
\captionsetup[figure]{font=small,skip=0ex, belowskip=-5pt}
\captionsetup[table]{font=small,skip=-2pt}

\usepackage{amsthm,amsmath}  % support for mathematic symbols
\usepackage{amsfonts}
\newcolumntype{L}[1]{>{\raggedright\arraybackslash}p{#1}}
\newcolumntype{C}[1]{>{\centering\arraybackslash}p{#1}}
\newcolumntype{R}[1]{>{\raggedleft\arraybackslash}p{#1}}

\usepackage[linesnumbered,ruled,vlined]{algorithm2e}
\SetKwRepeat{Do}{do}{while}% add do-while sentences
 %adjust the font size of comments
\SetCommentSty{mycommfont}

\long\def\comment#1{}

\setlength{\algomargin}{1em} %adjust the side margin of algorithm blocks
\setlength{\textfloatsep}{1ex} %adjust the space above and below the float block, including algorithms, figures, tables

\newcommand{\nop}[1]{}

%%%%%%%uncomment below commands to enable related blocks
\newtheorem{theorem}{\bf Theorem}[section]
\newtheorem{lemma}{\bf Lemma}[section]

\newtheorem{example}{\bf Example}

\theoremstyle{remark}

\theoremstyle{definition}
\newtheorem{definition}{\bf Definition}

\usepackage[title]{appendix}
\newcommand{\revision}[1]{\color{black}{#1} \color{black}}
%% The following content must be adapted for the final version
% paper-Spring\n
\newcommand\vldbdoi{10.14778/3611479.3611512}
\newcommand\vldbpages{XXX-XXX}
% issue-specific
\newcommand\vldbvolume{16}
\newcommand\vldbissue{11}
\newcommand\vldbyear{2023}
% should be fine as it is
\newcommand\vldbauthors{\authors}
\newcommand\vldbtitle{\shorttitle} 
% leave empty if no availability url should be set
\newcommand\vldbavailabilityurl{https://github.com/inabao/trajcSimilar}
% whether page numbers should be shown or not, use 'plain' for review versions, 'empty' for camera ready
\newcommand\vldbpagestyle{empty} 

\begin{document}
	\title{Efficient Non-Learning Similar Subtrajectory Search}
	
	%%
	%% The "author" command and its associated commands are used to define the authors and their affiliations.
	\author{Jiabao Jin}
	\email{jiabaojin@stu.ecnu.edu.cn}
	\affiliation{%
		\institution{East China Normal University}
		\state{Shanghai}
		\country{China}
	}
	
	\author{Peng Cheng}
	\email{pcheng@sei.ecnu.edu.cn}
	\orcid{0000-0002-9797-6944}
	\affiliation{%
		\institution{East China Normal University}
		\state{Shanghai}
		\country{China}
	}
	
	\author{Lei Chen}
	\affiliation{%
		\institution{Hong Kong University of Science and Technology}
		\city{Hong Kong SAR}
		\country{China}
	}
	\email{leichen@cse.ust.hk}

	\author{Xuemin Lin}
	\affiliation{%
		\institution{Shanghai Jiaotong University}
		\city{Shanghai}
		\country{China}
	}
	\email{xuemin.lin@gmail.com}
	
	\author{Wenjie Zhang}
	\affiliation{%
		\institution{University of New South Wales}
		\city{Sydney}
		\country{Australia}
	}
	\email{wenjie.zhang@unsw.edu.au}

\begin{abstract}
Similar subtrajectory search is a finer-grained operator that can better capture the similarities between one query trajectory and a portion of a data trajectory than the traditional similar trajectory search, which requires that the two checking trajectories are similar in their entirety. Many real applications (e.g., trajectory clustering and trajectory join) utilize similar subtrajectory search as a basic operator.
It is considered that the time complexity is $O(mn^2)$ for exact algorithms to solve the similar subtrajectory search problem under most trajectory distance functions in the existing studies, where $m$ is the length of the query trajectory and $n$ is the length of the  data trajectory.
In this paper, to the best of our knowledge, we are the first to propose an exact algorithm to solve the similar subtrajectory search problem in $O(mn)$ time for most of widely used trajectory distance functions (e.g., WED, DTW, ERP, EDR and Frechet distance). 
Through extensive experiments on three real datasets, we demonstrate the efficiency and effectiveness of our proposed algorithms.
\end{abstract}

%Similar subtrajectory search is a finer-grained operator that can better capture the similarities between one query trajectory and a portion of a data trajectory than the traditional similar trajectory search, which requires the two checked trajectories are similar to each other in whole. Many real applications (e.g., trajectory clustering and trajectory join) utilize similar subtrajectory search as a basic operator. It is considered that the time complexity is O(mn^2) for exact algorithms to solve the similar subtrajectory search problem under most trajectory distance functions in the existing studies, where m is the length of the query trajectory and n is the length of the  data trajectory. In this paper, to the best of our knowledge, we are the first to propose an exact algorithm to solve the similar subtrajectory search problem in O(mn) time for most of widely used trajectory distance functions (e.g., WED, DTW, ERP, EDR and Frechet distance). Through extensive experiments on three real datasets, we demonstrate the efficiency and effectiveness of our proposed algorithms.
	
	\maketitle
	
	%%% do not modify the following VLDB block %%
	%%% VLDB block start %%%
	\pagestyle{\vldbpagestyle}
	\begingroup\small\noindent\raggedright\textbf{PVLDB Reference Format:}\\
	\vldbauthors. \vldbtitle. PVLDB, \vldbvolume(\vldbissue): \vldbpages, \vldbyear.\\
	\href{https://doi.org/\vldbdoi}{doi:\vldbdoi}
	\endgroup
	\begingroup
	\renewcommand\thefootnote{}\footnote{\noindent
		This work is licensed under the Creative Commons BY-NC-ND 4.0 International License. Visit \url{https://creativecommons.org/licenses/by-nc-nd/4.0/} to view a copy of this license. For any use beyond those covered by this license, obtain permission by emailing \href{mailto:info@vldb.org}{info@vldb.org}. Copyright is held by the owner/author(s). Publication rights licensed to the VLDB Endowment. \\
		\raggedright Proceedings of the VLDB Endowment, Vol. \vldbvolume, No. \vldbissue\ %
		ISSN 2150-8097. \\
		\href{https://doi.org/\vldbdoi}{doi:\vldbdoi} \\
	}\addtocounter{footnote}{-1}\endgroup
	%%% VLDB block end %%%
	
	%%% do not modify the following VLDB block %%
	%%% VLDB block start %%%
	\ifdefempty{\vldbavailabilityurl}{}{
		\vspace{.3cm}
		\begingroup\small\noindent\raggedright\textbf{PVLDB Artifact Availability:}\\
		The source code, data, and/or other artifacts have been made available at \url{\vldbavailabilityurl}.
		\endgroup
	}
	%%% VLDB block end %%%
	
\section{Introduction}

% Simply introduce the background of trajectory similarity search and subtrajectory search

%Quickly give the main cake: our method can achieve the linear time subtrajectory similarity search result, which is considered impossible for a long time period.

%Review the existing study on subtrajectory similarity searching.

%Discuss the similarity function.

%Generally introduce the main step of our approach.

%Conclude the contributions.

The increasing popularity of mobile devices flourishes the generation of trajectory data, which is widely used in many fields (e.g., traffic flow prediction~\cite{Hui2021, Hui0CK21}, route planning~\cite{WangCZFCLW20}). 
%Through integrating and analyzing the trajectory data, platforms can improve the quality of their services. 
With the focus of researchers on trajectory data, more and more methods are proposed for analyzing and processing trajectory data.

A significant problem in analyzing trajectory data is to query the most similar trajectory to a given trajectory among the vast amount of trajectories in the database \cite{ChenN04,ChenOO05,Xie14,ranu2015indexing,li2018deep,YiJF98}.
In real scenarios, it is hard to guarantee that the lengths of two trajectories are same or close to each other. Thus, similar subtrajectory search attracts much attention recently as a more practical method \cite{agarwal2018subtrajectory,lee2007trajectory,buchin2011detecting,tampakis2020distributed, WangLCJ19}, which uses a part of a long data trajectory as the basic unit to test its similarity to the short query trajectory. For example, as shown in Figure \ref{fig:subtrajectorysearch}, there are two trajectories: data trajectory $\tau_d$ and query trajectory $\tau_q$. They are not similar when the whole trajectories are considered, while $\tau_q$ is similar to a portion of $\tau_d$. 

%There are two main benefits of utilizing similar subtrajectory search. Firstly, searching for subtrajectories better solve \textit{data sparsity} (i.e., difficulty in finding trajectories with a high matching degree to the trajectory to be queried directly from the database). On the other hand, since the subtrajectory query is not required to consider all the points in the trajectory, it can reasonably handle the problems caused by the different sampling strategies and the existence of natural noise. For example, one application scenario of subtrajectory query is to analyze the performance of players by their trajectory data in a sport (e.g., soccer or basketball)~\cite{WangLCJ19}. 

Searching similar subtrajectories is usually a basic operator in real applications (e.g., subtrajectory join \cite{tampakis2020distributed} and subtrajectory clustering \cite{buchin2011detecting, agarwal2018subtrajectory}) and will be frequently invoked, thus its efficiency is very important. One application scenario of subtrajectory query is to analyze the performance of players by their trajectory data in a sport (e.g., soccer or basketball)~\cite{WangLCJ19}.  

Subtrajectory search is a highly related but different problem from trajectory search~\cite{WangLCL20, KoideXI20, SakuraiFY07, abs-2203-10364, faloutsos1994fast}. Compared with trajectory search, subtrajectory search has to not only consider the data trajectory itself but also determine whether there are subtrajectories of the data trajectory with a smaller distance from the query trajectory. The state-of-the-art study on similar subtrajectory search utilize reinforcement learning methods to accelerate the detecting speed and achieve the time complexity of $O(mn)$ \cite{WangLCL20}, where $m$ is the length of the query trajectory and $n$ is the length of the  data trajectory. However, the reinforcement learning based algorithms are approximation algorithms, which have no theoretical guarantee on the accuracy of the returned results.
In this paper, we find that \textit{the similar subtrajectory search problem can be \underline{solved exactly} with the time complexity of $O(mn)$ for most trajectory distance functions} (e.g., DTW, WED, ERP, EDR and FD. Details will be discussed in Section \ref{sec:detail_design_algorithm}), which had not been discovered to the best of our knowledge.

\begin{figure}
	\begin{small}
		\begin{center}
			\includegraphics[width=0.22\textwidth]{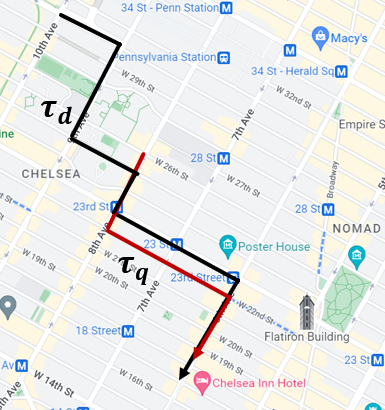}
		\end{center}
		\caption{Subtrajectory Search}\vspace{-1ex}
		\label{fig:subtrajectorysearch}
	\end{small}
\end{figure}

\noindent\textbf{Challenges.} 
For a data trajectory, the number of its subtrajectories is quadratic to its length. Let $n$ be the length of a data trajectory, there will be $\frac{n(n+1)}{2}$ its subtrajectories. Assuming that the length of the query trajectory is $m$ and the length of the data trajectory is $n$, the time complexity of directly searching for the most similar subtrajectory is $O(mn^3)$ (through traversal searching $\frac{n(n+1)}{2}$ subtrajectories of the data trajectory, and the time complexity of directly computing the similarity of two trajectories of length $x$ and $y$ by dynamic programming is $O(xy)$).  Although a recent work~\cite{WangLCL20} optimizes the time complexity of a single subtrajectory query problem from $O(mn^3)$ to $O(mn^2)$ through dynamic programming techniques, it is still unaffordable for most applications that need to find the optimal subtrajectory in a few seconds. 
In existing studies, only for \textit{dynamic time wrapping distance} (DTW) and \textit{Frechet distance} (FD), the similar subtrajectory search problem can be exactly solved in $O(nm)$ time complexity with particular algorithms \cite{SakuraiFY07, abs-2203-10364}. However, it cannot be extended to other trajectory distance functions.
%Some studies~\cite{WangLCL20, KoideXI20, SakuraiFY07, abs-2203-10364} find an approximate solution within $O(mn)$ time complexity.

In this paper, we propose the conversion-matching algorithm (CMA) to find the optimal subtrajectory by computing the minimum cost of converting the query trajectory into the data trajectory. With carefully tailored methods and transformation of the trajectory distance functions, we can incrementally fast track the optimal start position of the optimal subtrajectory in the data trajectory in $O(1)$ time. Given a query trajectory and a data trajectory, we search for the optimal subtrajectory with the time complexity of $O(nm)$. Meanwhile, the algorithm is applicable for the vast majority of distance functions. We use \textit{weighted edit distance} (WED)~\cite{KoideXI20} and \textit{dynamic time warping} (DTW)~\cite{YiJF98} as examples to analyze the design of the algorithm. We also discuss how to apply our methods to other most popular trajectory distance functions. Experiments show that the performance of our algorithm is better than other existing methods.

% In the pruning phase, two modules are designed to quickly filter out dissimilar data trajectories for the query trajectory. Sepcifically, our key-points filtering (KPF) can estimate a lower bound on the distance between all subtrajectories in the data trajectory and the query trajectory in $O(n)$ based on the continuity of trajectories. In addition, we optimize the pruning process by using a grid-based pruning (GBP) to pruning most dissimilar data trajectories.

To summarize, we make the following contributions:
\begin{itemize}[leftmargin=*]
	\item We propose CMA with the time complexity of $O(nm)$ to find the most similar subtrajectory for a query trajectory under most order-insensitive trajectory distance functions in Section \ref{sec:cma_algorithm}. 
	\item We describe the design idea of the algorithm in detail and simplify the calculation of conversion cost, using WED and DTW as examples in Section \ref{sec:detail_design_algorithm}.
	\item We conduct experiments on three different real data sets to verify the superiority of our framework with the state-of-the-art similar subtrajectory query methods in Section \ref{sec:experimental}.
\end{itemize}

%The related work is discussed in Section \ref{sec:related}. We formally define the similar subtrajectory search problem in Section \ref{sec:problemDefinition}. We also simply review the existing exact algorithms in Section \ref{sec:existing_solutions}.

\section{Problem Definition}
\label{sec:problemDefinition}
\begin{figure}[t!]
	\centering\vspace{-2ex}
	\subfigure[][{\scriptsize $Cost_{del}$}]{
		\scalebox{0.2}[0.2]{\includegraphics{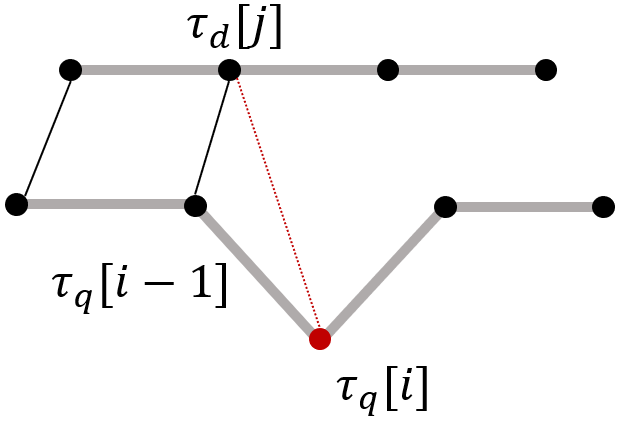}}
		\label{fig:del}}
	\subfigure[][{\scriptsize $Cost_{sub}$}]{
		\scalebox{0.2}[0.2]{\includegraphics{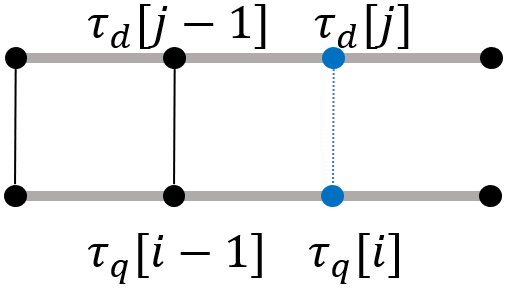}}
		\label{fig:sub}}
	\subfigure[][{\scriptsize $Cost_{ins}$}]{
		\scalebox{0.2}[0.2]{\includegraphics{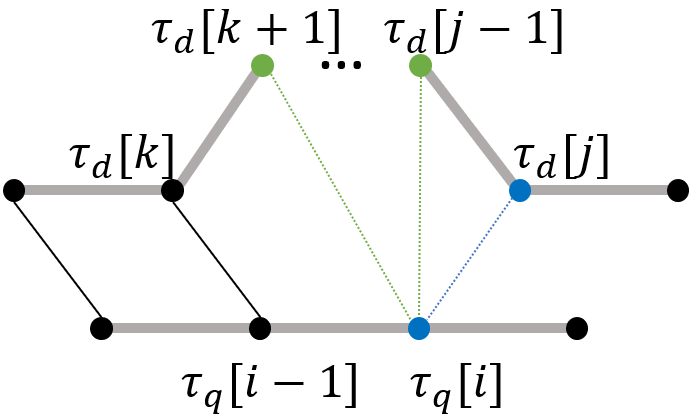}}
		\label{fig:ins}}
	\caption{\small Demonstration of the conversion cost}
	\label{fig:conversion_cost}\vspace{-1ex}
\end{figure}

%This section defines some basic concepts and gives a formal representation of the Similar Subtrajectory Search (SSS) problem.
\subsection{Basic Concepts}
There are two types of trajectories for the Similar Subtrajectory Search (SSS) problem: query and data trajectories. We expect to search for the most similar subtrajectory for a given query trajectory under a specific distance function among a large volume of data trajectories. We first provide the definitions of trajectories and subtrajectories as follows:
\begin{definition}(Trajectory)
    A trajectory $\tau$ with the length of $n$ consists of a series of points denoted as $\langle p_1,p_2,p_3,\dots,p_n\rangle$. 
\end{definition}

\revision{We denote the query trajectory as $\tau_q$ with the length of $m$ and the data trajectory as $\tau_d$ with the length of $n$. The points of trajectories can be specific physical locations, nodes on a road network or edges on a road network. In particular, we denote a trajectory without any point as $\tau_{\emptyset}$.}

\begin{definition}(Subtrajectory)
	Given a trajectory $\tau$ with the length of $n$, its subtrajectory is a portion of consecutive points, $\tau[i:j]=\langle p_i,p_{i+1},\dots, p_j\rangle$ ($1\leq i \leq j \leq n$).
\end{definition}

 In particular, we denote the $i^{th}$ point in $\tau$ as $\tau[i:i]$, abbreviated as $\tau[i]$. If $i>j$, we have $\tau[i:j]=\tau_{\emptyset}$. 
 
Usually, we have a set of data trajectories. In this paper, we focus on finding the optimal subtrajectory from a data trajectory among many data trajectories to match the query trajectory. We have also implemented two pruning methods, Grid-Based Prune (GBP) and Key Points Filter (KPF), to help filter the irrelevant trajectories quickly. Please refer to Appendix B and C for more details.

\subsection{Distance Function}

The distance function between trajectories represents the cost of converting the points of the query trajectory into the data trajectory plus the cost of inserting prefix subtrajectory and suffix subtrajectory.
%We will give a detailed description of the matching and conversion costs next.

\begin{definition}[Matching Sequence]
	For a query trajectory $\tau_q$ and a data trajectory $\tau_d$, we define its matching sequence as $\mathcal{A}_{\tau_q:\tau_d}=[a_1,a_2, a_3, \dots, a_{m}]$. \revision{If $\tau_q[i]$'s matching point is $\tau_d[j]$, we let $a_i=j$ to indicate the index of  $\tau_d[j]$ in the data trajectory.} For any $i\leq j$, we must have $a_i\leq a_j$.
\end{definition}

According to the definition, if a trajectory $\tau_d[s:t]$ is a subtrajectory of another $\tau_d[i:j]$, we have $\mathcal{A}_{\tau_q:\tau_d[s:t]}\subseteq \mathcal{A}_{\tau_q:\tau_d[i:j]}$. \revision{For example, the matching sequence for Figure \ref{fig:example1} is $[1,1,2,4,5,6,7,8,9]$, and the matching sequence for Figure \ref{fig:example2} is $[1,1,2,2,3,3,5,6,9]$. Note that, given a data trajectory $\tau_d$ and a query trajectory $\tau_q$, there may be many matching sequences. One matching sequence is valid as long as its matching index value is not decreasing (i.e., $a_i\leq a_j, \forall i\leq j$).
}

\begin{definition}[Point Matching-Conversion Cost]
	For a data trajectory $\tau_d$ and a query trajectory $\tau_q$, when $\tau_q[i]$ matches $\tau_d[j]$ (i.e., $a_i=j$), depending on the different matches of $\tau_q[i-1]$ as shown in Figure \ref{fig:conversion_cost}, we define the cost of converting $\tau_q[i]$ into  $\tau_d[j]$ in the following three cases:
	\begin{enumerate}[label=(\alph*),align=right,itemindent=2em,labelsep=2pt,labelwidth=1em,leftmargin=0pt,nosep]
		\item \textbf{$Cost_{del}$}. When $a_{i-1}=j$, we need to remove $\tau_q[i]$, and denote the conversion cost as {\small$Cost(\tau_q[i], \tau_d[a_i])=Cost_{del}(\tau_q[i], \tau_d[j])$}. 
		\item \textbf{$Cost_{sub}$}. When $a_{i-1}=j-1$, we replace $\tau_q[i]$ with $\tau_d[j]$, and denote the conversion cost as {\small$Cost(\tau_q[i], \tau_d[a_i])=Cost_{sub}(\tau_q[i], \tau_d[j])$}. 
		\item \textbf{$Cost_{ins}$}. When $a_{i-1}=k$ ($1\leq k <j-1$), we substitute $\tau_q[i]$ with $\tau_d[j]$ and insert $\tau_d[k+1:j-1]$. We denote the conversion cost as {
			\small$Cost(\tau_q[i], \tau_d[a_i])=Cost_{ins(k)}(\tau_q[i], \tau_d[j])$}.
	\end{enumerate}
\end{definition}

%\noindent\textbf{Conversion Cost.} 
%We assume that the current matching sequence is $\mathcal{A}_{\tau_q:\tau_d}$. When $\tau_q[i]$ matches $\tau_d[j]$ (i.e., $a_i=j$), depending on the different matches of $\tau_q[i-1]$, we define three different conversion costs $Cost_{del}$, $Cost_{sub}$ and  $Cost_{ins}$, and give the demonstration  of these conversion costs in Figure \ref{fig:conversion_cost}:
%\begin{enumerate}[label=(\alph*),align=right,itemindent=2em,labelsep=2pt,labelwidth=1em,leftmargin=0pt,nosep]
%\item \textbf{$Cost_{del}$}. When $a_{i-1}=j$, we need to remove $\tau_q[i]$, and denote the conversion cost as {\small$Cost(\tau_q[i], \tau_d[a_i])=Cost_{del}(\tau_q[i], \tau_d[j])$}. 
%\item \textbf{$Cost_{sub}$}. When $a_{i-1}=j-1$, we replace $\tau_q[i]$ with $\tau_d[j]$, and denote the conversion cost as {\small$Cost(\tau_q[i], \tau_d[a_i])=Cost_{sub}(\tau_q[i], \tau_d[j])$}. 
%\item \textbf{$Cost_{ins}$}. When $a_{i-1}=k$ ($1\leq k <j-1$), we substitute $\tau_q[i]$ with $\tau_d[j]$ and insert $\tau_d[k+1:j-1]$. We denote the conversion cost as {
%\small$Cost(\tau_q[i], \tau_d[a_i])=Cost_{ins(k)}(\tau_q[i], \tau_d[j])$}.
%\end{enumerate}

\begin{figure}[t!]
	\begin{small}\vspace{-2ex}
		\begin{center}
			\includegraphics[width=0.36\textwidth]{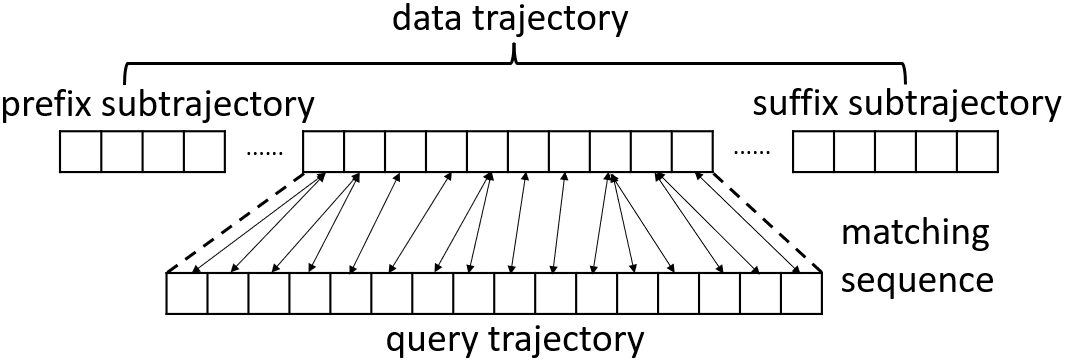}
		\end{center}
		\caption{Demonstration of Matching Process}\vspace{-1ex}
		\label{fig:matching_process}
	\end{small}
\end{figure}

We can calculate the cost of converting the query trajectory into a data trajectory for each matching sequence, which includes the cost of converting each point in the query trajectory into a matching point in the data trajectory and the cost of inserting \textit{prefix trajectory} and \textit{suffix trajectory} of the data trajectory as shown in Figure \ref{fig:matching_process}. We denote the cost of inserting a subtrajectory $\tau_d[x:y]$ as {\small$Insert(\tau_d[x:y])$}. Given a matching sequence $\mathcal{A}_{\tau_q:\tau_d}$,  its \textit{matching-conversion cost}  is {\small$\sum_{a_i \in \mathcal{A}_{\tau_q:\tau_d}}{Cost(\tau_q[i], \tau_d[a_i])} + Insert(\tau_d[1:a_1-1])+Insert(\tau_d[a_m+1:n])$}. \revision{We give an example to demonstrate the calculation of the matching-conversion cost in Example \ref{exa:dtw_example}.}

\begin{definition}[General Distance Function]
We denote the set of all possible matching sequences between the query trajectory $\tau_q$ and the data trajectory $\tau_d$ as $\mathbb{A}$. Then, we define the general distance $\Theta(\tau_q,\tau_d)$ between the query trajectory and the data trajectory as follows:\vspace{-2ex}
{\scriptsize
\begin{eqnarray}
\Theta(\tau_q,\tau_d)&=&\mathop{\min}\limits_{\mathcal{A}_{\tau_q:\tau_d}\in \mathbb{A}}\sum_{a_i \in \mathcal{A}_{\tau_q:\tau_d}}{Cost(\tau_q[i], \tau_d[a_i])} \notag\\
&+&Insert(\tau_d[1:a_1-1])+Insert(\tau_d[a_m+1:n])
\end{eqnarray}}\vspace{-2ex}
\end{definition}

There are many trajectory distance functions, such as DTW~\cite{YiJF98}, ERP~\cite{ChenN04}, EDR~\cite{ChenOO05}, and WED~\cite{KoideXI20}. In this paper, we use WED and DTW as examples to illustrate our definition. \revision{We discuss the generality of the general distance $\Theta(\tau_q,\tau_d)$ in the Appendix A.}

\noindent\textbf{WED.} WED is a general distance function that allows the user-defined cost functions and contains several important cost functions (e.g., EDR and ERP). WED defines the distance $wed(\tau_q, \tau_d)$ between $\tau_q$ and $\tau_d$ as the minimum cost of converting $\tau_q$ to $\tau_d$ by a finite number of insertion, deletion and substitution. Given two points $\tau_q[i]$ and $\tau_d[j]$, we denote the cost of insertion, deletion and substitution by $ins(\tau_d[j])$, $del(\tau_q[i])$ and $sub(\tau_q[i],\tau_d[j])$. Besides, the cost of deleting the subtrajectory $\tau_q[i:j]$ and inserting the subtrajectory $\tau_d[i:j]$ are denoted as $del(\tau_q[i:j])$ and $ins(\tau_d[i:j])$. We have $del(\tau_q[i:j])=\sum_{i\leq k\leq j}del(\tau_q[k])$ and $ins(\tau_d[i:j])=\sum_{i\leq k\leq j}ins(\tau_q[k])$.

\begin{figure}[t!]
	\centering
	\subfigure[][{\scriptsize Best Matching Sequence of WED}]{
		\scalebox{0.26}[0.26]{
			\includegraphics{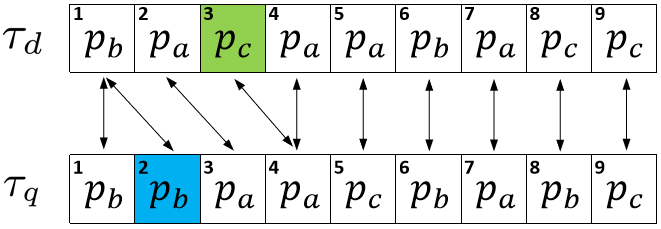}
			\label{fig:example1}
		}
	}
	\subfigure[][{\scriptsize Best Matching Sequence of DTW}]{
		\scalebox{0.21}[0.21]{
			\includegraphics{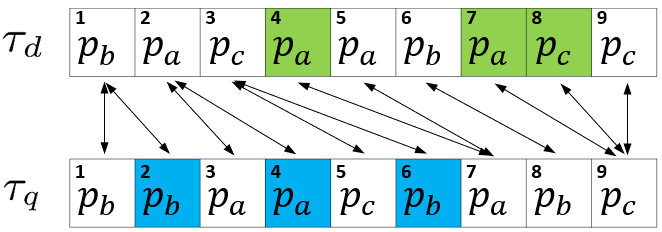}
			\label{fig:example2}
		}
	}
	\caption{\small Examples of Matching for Difference Distance Function}
	\label{fig:wed_example}\vspace{-1ex}
\end{figure}

\begin{example}
	Given two trajectories $\tau_q$ and $\tau_d$ as shown in Figure \ref{fig:wed_example}, we use WED to calculate the distance between them. The blue points indicate the deleted points in the conversion of $\tau_q$ into $\tau_d$, while the green points indicate the inserted points. We set the cost of $ins(\tau_d[j])$, $del(\tau_q[i])$ to $1$. In addition, we set the cost of $sub(\tau_q[i],\tau_d[j])$ to 1 if $\tau_q[i]\neq\tau_d[j]$; otherwise, it is set to 0. Figure \ref{fig:example1} shows an optimal matching sequence that converts $\tau_q$ into $\tau_d$ by deleting $\tau_q[2]$, inserting $\tau_d[3]$ and substituting $\tau_q[5]$ with $\tau_d[5]$ and $\tau_q[8]$ with $\tau_d[8]$. Since there are no redundant prefix and suffix subtrajectories, we have $Insert(\tau_d[1:a_1-1])+Insert(\tau_d[a_m+1:n])=0$. Therefore, the distance between $\tau_q$ and $\tau_d$ is 4($=del(\tau_q[2])+ins(\tau_d[3])+sub(\tau_q[5],\tau_d[5])+sub(\tau_q[8],\tau_d[8])$).
\end{example}

Moreover, we can compute the distance between $\tau_q$ and $\tau_d$ by a dynamic programming algorithm \cite{KoideTYXI18}. We have $wed(\tau_q[i:j], \tau_{\emptyset})=del(\tau_q[i:j])=\sum_{k=i}^{j}{del(\tau_q[k])}$ and $wed(\tau_{\emptyset}, \tau_d[i:j])=ins(\tau_d[i:j])=\sum_{k=i}^{j}{ins(\tau_d[k])}$. The $wed(\tau_q, \tau_d)$ is defined recursively:
{\scriptsize\begin{eqnarray}
	&&wed(\tau_q[1:i], \tau_d[1:j])\notag\\
	&=&\min\left\{
	\begin{array}{ll}
	wed(\tau_q[1:i-1], \tau_d[1:j-1]) + sub(\tau_q[i],\tau_d[j]) \\
	wed(\tau_q[1:i], \tau_d[1:j-1]) + ins(\tau_d[j])\\
	wed(\tau_q[1:i-1], \tau_d[1:j]) + del(\tau_q[i])\\
	\end{array}
	\right. 
	\label{eq:raw_wed}
\end{eqnarray}}

\noindent\textbf{DTW.} Another well-known distance function is DTW. 
Unlike WED, there is no deletion and insertion in DTW, instead, multiple points are allowed to be substituted for the same point in another trajectory. However, we try to interpret DTW from a different perspective to make it applicable to the algorithm proposed in this paper. We interpret the original substitution relation as a matching. We consider that only one point $\tau_q[i]$ is substituted for a point $\tau_d[j]$ in another trajectory, while other points that substitute $\tau_d[j]$ are deleted. We can define the insertion in the same way. The cost of deleting a point or inserting a point in the query trajectory is different, depending on which point it matches with, that is, {\small$del(\tau_q[i])=sub(\tau_q[i],\tau_d[j])$} and {\small$ins(\tau_d[j])=sub(\tau_q[i],\tau_d[j])$} if $\tau_q[i]$ matches $\tau_d[j]$.

Here, we give an example about the optimal matching when using DTW as distance function.

\begin{example}
	\label{exa:dtw_example}
	 The optimal matching when converting the $\tau_q$ into $\tau_d$ is shown in Figure \ref{fig:example2}. We set the distance between two points to 1 in case the two points are not equal; otherwise, it is set to 0. \revision{When the matching sequence is $[1,1,2,4,5,6,7,8,9]$, the conversion cost corresponding to each point is $[0,0,1,1,1,0,0,1,0]$. When $\tau_q[4]$ is converted into $\tau_d[4]$, $\tau_d[3]$ needs to be inserted. Therefore, although $\tau_q[4]=\tau_d[4]$, the required cost is still 1. Therefore, the conversion cost of the matching sequence corresponding to Figure \ref{fig:example1} is 4. When the matching sequence is $[1,1,2,2,3,3,5,6,9]$, the conversion cost corresponding to each point is $[0,0,0,0,0,1,0,0,1]$. When converting $\tau_q[9]$ into $\tau_d[9]$, the cost of inserting $\tau_d[7]$ is 1. Therefore, the conversion cost of this matching sequence corresponding to Figure \ref{fig:example2} is 2.}
\end{example}

Finally, we also give the dynamic process for the calculation of $dtw(\tau_q, \tau_d)$ as follows:
{\scriptsize\begin{eqnarray}
	dtw(\tau_q[1:i], \tau_d[1:j])
	=\left\{
	\begin{array}{ll}
		\sum_{k=1}^{j}{sub(\tau_q[1],\tau_d[k])}, &i=1\\
		\sum_{k=1}^{i}{sub(\tau_q[k],\tau_d[1])}, &j=1\\
	\min\{dtw(\tau_q[1:i-1], \tau_d[1:j]),\\
	dtw(\tau_q[1:i], \tau_d[1:j-1]),\\
	dtw(\tau_q[1:i-1], \tau_d[1:j-1])\}\\
	 + sub(\tau_q[i],\tau_d[j]), &else\\
	\end{array}
	\right. 
	\label{eq:raw_dtw}
\end{eqnarray}}

The algorithm proposed in this paper requires that the distance function to satisfy a specific property: the distance of points between different trajectories is independent of the position of the point in the trajectory. We will explain this in Section~\ref{sec:other}.

\subsection{Problem Definition}
%We define the similar subtrajectory search problem as follows:
%We define the \textit{optimal matching sequence} as the matching sequence that minimizes the matching-conversion cost:
\begin{definition}[Similar Subtrajectory Search Problem, SSS]
	Given a query trajectory $\tau_q$ and a data trajectory $\tau_d$, we expect a closest subtrajectory $\tau_d[i^{*}:j^{*}]$ under a specific distance function $\Theta$ (e.g., WED or DTW) from the data trajectory for the query trajectory $\tau_q$: {\small$$\left(i^{*},j^{*}\right)=\mathop{\arg\min}\limits_{1\leq i \leq  j \leq n}{\Theta(\tau_q,\tau_d[i:j])}$$} 
\end{definition}

A more general query is to find the $top$-$K$ similar subtrajectories from massive data trajectories for the query trajectory. Instead, we can follow such a search process in previous work~\cite{WangLCJ19} that maintains the most similar $K$ trajectories and updates it when a more similar subtrajectory appears. Therefore, we mainly consider querying the most similar subtrajectory from the data trajectory. \revision{Details of top-K SSS  can be found in the Appendix E.}

Suppose the length of a data trajectory is $n$, which means that a data trajectory has $\frac{n(n+1)}{2}$ subtrajectories. Assuming that the length of a query trajectory is $m$ and the complexity of computing the distance between the data trajectory and the query trajectory is {\small$O(mn)$}~\cite{KoideTYXI18}. Therefore, given a query trajectory $\tau_q$ and a data trajectory $\tau_d$, the time complexity of searching a subtrajectory of $\tau_d$ with the smallest distance from $\tau_q$ in $\tau_d$ is {\small$O(mn^{3})$}. Table \ref{table0} summarizes the commonly used notations in this paper.
% Even though the previous work reduces the time complexity of searching the optimal sub-trajectories from a data trajectory to $O(mn^{2})$, it cannot solve the SSS problem in a very short time.

\begin{table}
	\centering \vspace{-2ex}
	{\small
		\caption{\small Symbols and Descriptions.} \label{table0}
		\begin{tabular}{l|l}
			{\bf Symbol} & {\bf \qquad \qquad \qquad\qquad\qquad Description} \\ \hline \hline
			$\tau_d$   & a data trajectory\\
			$\tau_q$   & a query trajectory\\
%			$n$   & the length of data trajectory\\
%			$m$  & the length of query trajectory\\
			$\tau[i:j]$   & a subtrajectory of $\tau$ from $i^{th}$ point to $j^{th}$ point \\
			$\tau[i]$   & the $i^{th}$ point in trajectory $\tau$\\
			$\mathcal{A}_{\tau_q:\tau_d}$ & a matching sequence between $\tau_q$ and $\tau_d$\\
			$a_i$ & the matches of $\tau_q[i]$ and $\tau_d[a_i]$\\
			$\Theta$ & the distance function\\
			\hline
			\hline
		\end{tabular}
	}\vspace{-1ex}
\end{table}
\section{Review of Existing Solutions}
\label{sec:existing_solutions}

We  briefly review the existing exact algorithms for the SSS problem.
\subsection{ExactS}
The vast majority of distance functions~\cite{YiJF98, ChenN04, ChenOO05, KoideXI20, VlachosGK02, 0007BCXLQ18, Yuan019, Xie14, AltG95} are defined via recursive processes. Using dynamic programming, we can compute the trajectory distance of a query trajectory and a subtrajectory of the data trajectory in $O(mn)$, where $m$ and $n$ are the lengths of the query trajectory and the data trajectory, respectively. For a query trajectory $\tau_q$ and a data trajectory $\tau_d$, let $M_{x,y}$ denote the trajectory distance between $\tau_q[1:x]$ and $\tau_d[i:i+y]$ for a given iteration $i$. ExactS~\cite{WangLCL20} can compute $M_{x,y}$ from $M_{x,y-1}$ using a dynamic programming technique. Thus, line 4 in Algorithm \ref{algo:efficiencyE} can be solved in $O(mn)$. There are $n$ iterations, thus the overall time complexity of ExactS is $O(mn^2)$. ExactS can be applied to most of the distance functions.

\subsection{Spring}

Spring algorithm~\cite{SakuraiFY07} is based on the existing dynamic programming computational procedure of DTW and changes the initialization procedure of $dtw(\tau_q[1:i], \tau_d[1:j])$ in the Equation~\ref{eq:raw_dtw} when $i=1$. Spring considers $\tau_d[1:j-1]$ to be redundant when $i=1$; therefore, they modify the equation for $dtw(\tau_q[1:i], \tau_d[1:j])$ when $i=1$ to be as follows:\vspace{-2ex}
\begin{eqnarray}
    dtw(\tau_q[1:i], \tau_d[1:j])=sub(\tau_q[1],\tau_d[j]) \label{eq:spring}
\end{eqnarray}

In addition, the authors demonstrate that a modification of the Equation \ref{eq:raw_dtw} enables it to compute the optimal subtrajectory. However, this trick can only be applied to the DTW function and cannot be extended to other distance functions (e.g., ERP, EDR, and WED).

\subsection{Greedy Backtracking (GB)}
GB~\cite{abs-2203-10364} investigates finding the optimal subtrajectory in a data trajectory when using FD as the distance function. It constructs a matrix $X$, where $X_{i,j}$ denotes the Euclidean distance between $\tau_q[i]$ and $\tau_d[j]$. Assuming that $X_{1,1}$ denotes the upper left corner of the matrix, GB finds a path from the top to the right or down until it reaches the bottom. The path's cost is the maximum value in the matrix through which the path passes, and GB finds the optimal subtrajectory by finding the path with the lowest cost. Since FD only considers substitution operations between the trajectory point and trajectory point, it can construct the matrix $S$. However, the cost of converting $\tau_q[i]$ into $\tau_d[j]$ in other distance functions that consider insertion and deletion operations (e.g., ERP, EDR, and WED) is uncertain; thus, the matrix $S$ cannot be constructed and GB is not suitable.

\begin{algorithm}[t!]
	{\small
		\DontPrintSemicolon
		\KwIn{\small a query trajectory $\tau_q$, a data trajectory $\tau_d$}
		\KwOut{\small a subtrajectory $\tau_d[i^{*},j^{*}]$}
		$i^{*}\gets 0, j^{*}\gets 0$\;
		$score\gets \infty$\;
		\ForAll{$1\leq i \leq n$}{
			$M\gets DP(\tau_q,\tau_d[i:n])$\;
			$y^{*} \gets \mathop{\arg\min}\limits_{ 1\leq y\leq n-i+1}{M_{m,y}}$\;
			\If{$M_{i,y^{*}}<score$}{
				$score\gets M_{i,y^{*}}$\;
				$i^{*}\gets i$\;
				$j^{*}\gets y^{*} + i - 1$\;
			}
		}
		\Return{$\tau_d[i^{*},j^{*}]$}
		\caption{\small $ExactS(\tau_q,\tau_d)$~\cite{WangLCL20}}
		\label{algo:efficiencyE}}
\end{algorithm}
\section{Conversion-Matching Algorithm }
\label{sec:cma_algorithm}

This section presents an efficient and exact subtrajectory search algorithm, namely Conversion-Matching Algorithm (CMA). Firstly, we transform the problem of finding the optimal subtrajectory into a problem of finding the optimal matching sequence. Meanwhile, we introduce \textit{the cost of optimal partial matching} $C_{i,j}$ to find the optimal matching sequence. Here, $C_{i,j}$ denotes the minimal cost of converting $\tau_q[1:i]$ into a subtrajectory of $\tau_d[1:j]$ when $\tau_q[i]$ matches $\tau_d[j]$ (i.e., $a_i=j$). Note that, converting $\tau_q[1:i]$ into $\tau_d[1:j]$ does not means that $\tau_q[1]$ must match $\tau_d[1]$. Finally, we propose the Conversion-Matching Algorithm (CMA) to calculate $C_{i,j}$ and find the optimal subtrajectory.

\subsection{Optimal Matching Sequence }
\label{sec:optima}

Although previous work \cite{WangLCJ19} has optimized the time complexity of this problem to {\small$O(mn^2)$}, it still makes the computational cost increase dramatically when the length of the data trajectory is large. This paper reduces this time complexity to {\small$O(mn)$} by a different dynamic programming algorithm based on a newly introduced concept.

Different from existing algorithms, the algorithm is not based on the existing dynamic programming method for calculating the distance. Instead, the basic idea of the algorithm is to calculate the minimum cost of converting the points in the query trajectory to the data trajectory by three operations: insertion, deletion and substitution. Each point in the query trajectory is converted to its matching point in the data trajectory at a specific cost in the conversion process. We can prove that the optimal subtrajectory do not contain redundant prefix trajectories and suffix trajectories by following theorem.
\begin{theorem}
    \label{the:abundent}
    Assume that $\tau_d[i:j]$ is the optimal subtrajectory in $\tau_d$, i.e., $\Theta(\tau_q,\tau_d[i:j])=\mathop{\min}\limits_{1\leq s \leq t \leq n}\Theta(\tau_q,\tau_d[s:t])$. Then, we have
    {\small$$\Theta(\tau_q,\tau_d[i:j])=\sum_{a_k \in \mathcal{A}_{\tau_q:\tau_d[i:j]}^{o}}{Cost(\tau_q[k], \tau_d[a_k])}$$}
    where {\small$\mathcal{A}_{\tau_q:\tau_d[i:j]}^{o}$} is the optimal match sequence of $\tau_q$ and $\tau_d[i:j]$.
\end{theorem}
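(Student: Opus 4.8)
The plan is to prove this by contradiction, exploiting the global optimality of $\tau_d[i:j]$ to rule out any nonempty inserted prefix or suffix in its optimal matching sequence; this is exactly the informal claim that the optimal subtrajectory carries no redundant prefix/suffix. First I would write out the value of $\Theta(\tau_q,\tau_d[i:j])$ realized by its optimal matching sequence $\mathcal{A}_{\tau_q:\tau_d[i:j]}^{o}=[a_1,\dots,a_m]$ via the General Distance Function definition, and decompose it as $S+P+Q$, where $S=\sum_{a_k}Cost(\tau_q[k],\tau_d[a_k])$ is the total point matching-conversion cost, $P=Insert(\tau_d[i:i+a_1-2])$ is the prefix insertion cost, and $Q=Insert(\tau_d[i+a_m:j])$ is the suffix insertion cost (here $a_1,a_m$ are read in the local indexing of the subtrajectory). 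The claim then reduces to showing $P=Q=0$, since in that case $\Theta(\tau_q,\tau_d[i:j])=S$, which is precisely the asserted identity.

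The core of the argument is a cut-and-reindex step applied to the prefix. Suppose, toward a contradiction, that $P>0$. Because inserting the empty trajectory costs $0$, $P>0$ forces the prefix to be nonempty, so $\tau_q[1]$ matches an interior point of $\tau_d[i:j]$, say $\tau_d[i^{\dagger}]$ with $i^{\dagger}>i$. I would then form the shorter candidate $\tau_d[i^{\dagger}:j]$ and equip it with the matching sequence obtained from $\mathcal{A}_{\tau_q:\tau_d[i:j]}^{o}$ by decreasing every index by the common offset $i^{\dagger}-i$. This shifted sequence is still nondecreasing, hence a valid matching sequence, and its prefix is now empty. The decisive point is that shifting all data indices by one offset preserves every gap $a_k-a_{k-1}$, so each query point keeps the same match type (del, sub, or ins), and by the position-independence property of the distance (Section~\ref{sec:other}) the per-point costs, and therefore both $S$ and $Q$, are unchanged. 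Consequently $\Theta(\tau_q,\tau_d[i^{\dagger}:j])\le S+Q=\Theta(\tau_q,\tau_d[i:j])-P<\Theta(\tau_q,\tau_d[i:j])$, contradicting the global optimality of $\tau_d[i:j]$. Hence $P=0$, and a symmetric cut of the suffix gives $Q=0$.

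I expect the main obstacle to be the bookkeeping that justifies the invariance of $S$ and $Q$ under the cut, which is exactly where the position-independence assumption is indispensable: one must check that the first matched point, now sitting at the boundary rather than in the interior, still incurs the same substitution cost, and that neither the inserted interior segments nor the suffix are affected by the reindexing. Once this invariance is in hand, the contradiction is immediate, and nonnegativity of the insertion cost ($Insert(\cdot)\ge 0$) closes the case, since $P$ can be neither positive nor negative and is therefore $0$. Combining $P=0$ and $Q=0$ with the decomposition $\Theta(\tau_q,\tau_d[i:j])=S+P+Q$ yields the theorem.
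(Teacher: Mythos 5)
Your proof is correct and follows essentially the same route as the paper's: both arguments reuse the optimal matching sequence of $\tau_d[i:j]$ as a valid matching sequence for a trimmed subtrajectory (your $\tau_d[i^{\dagger}:j]$, the paper's $\tau_d[s:t]$ with $s=a_1$, $t=a_m$) and invoke global optimality to force the prefix and suffix insertion costs to vanish. If anything, your variant is slightly tighter: by deriving the contradiction from $P>0$ (which yields a strict inequality $\Theta(\tau_q,\tau_d[i^{\dagger}:j])<\Theta(\tau_q,\tau_d[i:j])$) rather than from a nonempty prefix, you sidestep the tie case in which a nonempty prefix has zero insertion cost, a case the paper's step ``if $s>i$ or $t<j$ then $\tau_d[i:j]$ is not the optimal subtrajectory'' glosses over, although the theorem's equation still holds there.
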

\begin{proof}
We will prove that $a_1=i$ and $a_m=j$ in $\mathcal{A}_{\tau_q:\tau_d[i:j]}$ when $\tau_d[i:j]$ is the optimal subtrajectory.

Suppose $a_1=s$ and $a_m=t$ ($s\geq i, t \leq j$), then $\mathcal{A}_{\tau_q:\tau_d[i:j]}^{o}$ is also a matching sequence of $\tau_d[s:t]$. Therefore, we have
{\scriptsize\begin{eqnarray}
    \Theta(\tau_q,\tau_d[s:t])&\leq&\sum_{a_k \in \mathcal{A}_{\tau_q:\tau_d[i:j]}^{o}\backslash \{a_1, a_m\}}{Cost(\tau_q[k], \tau_d[a_k])} \notag \\
    &&+Cost(\tau_q[1], \tau_d[s])+Cost(\tau_q[m], \tau_d[t]) \notag \\
    &=& \sum_{a_k \in \mathcal{A}_{\tau_q:\tau_d[i:j]}^{o}}{Cost(\tau_q[k], \tau_d[a_k])} \notag \\
    &\leq& \Theta(\tau_q,\tau_d[i:j])\notag
\end{eqnarray}}

If $s>i$ or $t<j$, then $\tau_d[i:j]$ is not the optimal subtrajectory, which contradicts what is known. Therefore, we have $a_1=i$ and $a_m=j$. Further, we can obtain
{\scriptsize\begin{eqnarray}
    \Theta(\tau_q,\tau_d[i:j])&=&\mathop{\min}\limits_{\mathcal{A}_{\tau_q:\tau_d[i:j]}\in \mathbb{A}}\sum_{a_k \in \mathcal{A}_{\tau_q:\tau_d[i:j]}}{Cost(\tau_q[k], \tau_d[a_k])}  \notag\\
&&+Insert(\tau_d[i:a_1-1])+Insert(\tau_d[a_m+1:j])\notag\\
&=& \sum_{a_k \in \mathcal{A}_{\tau_q:\tau_d[i:j]}^{o}}{Cost(\tau_q[k], \tau_d[a_k])}  \notag\\
&&+Insert(\tau_d[i:a_1-1])+Insert(\tau_d[a_m+1:j])\notag\\
&=& \sum_{a_k \in \mathcal{A}_{\tau_q:\tau_d[i:j]}^{o}}{Cost(\tau_q[k], \tau_d[a_k])}\notag
\end{eqnarray}}\vspace{-4ex}
\end{proof}

Theorem \ref{the:abundent} proves that we do not need to consider redundant prefix subtrajectory and suffix subtrajectory in the optimal subtrajectory problem but only need to consider minimizing the conversion cost of all matching points. Then, we will prove that the optimal matching sequence of optimal subtrajectory is also optimal among all matching sequences between query trajectory and data trajectory.
\begin{theorem}
    \label{the:opt}
    Assume that $\mathcal{A}_{\tau_q:\tau_d[i:j]}^{o}$ is the optimal matching sequence for the optimal subtrajectory $\tau_d[i:j]$, then it is also the optimal among all matching sequences, i.e.$\mathcal{A}_{\tau_q:\tau_d[i:j]}^{o}=\mathop{\arg\min}\limits_{\mathcal{A}_{\tau_q:\tau_d}\in \mathbb{A}}$ $\sum_{a_i \in \mathcal{A}_{\tau_q:\tau_d}}{Cost(\tau_q[i], \tau_d[a_i])}$.
\end{theorem}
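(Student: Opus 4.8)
The plan is to read the claimed equality as the assertion that $\mathcal{A}^{o}_{\tau_q:\tau_d[i:j]}$ attains the minimum of the pure conversion cost $\sum_{a_i\in\mathcal{A}_{\tau_q:\tau_d}}Cost(\tau_q[i],\tau_d[a_i])$ over $\mathbb{A}$, and to prove this by a two-sided inequality. Write $f(\mathcal{A})$ for this pure conversion cost. Since $\mathcal{A}^{o}_{\tau_q:\tau_d[i:j]}$ is itself an element of $\mathbb{A}$ — any matching of $\tau_q$ into $\tau_d[i:j]$ is a fortiori a matching of $\tau_q$ into the full $\tau_d$, its indices lying in $[i,j]\subseteq[1,n]$ and being non-decreasing — one direction is immediate: $f(\mathcal{A}^{o}_{\tau_q:\tau_d[i:j]})\geq \min_{\mathcal{A}\in\mathbb{A}}f(\mathcal{A})$. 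It therefore remains to prove the reverse inequality, that $f(\mathcal{A}^{o}_{\tau_q:\tau_d[i:j]})\leq f(\mathcal{A})$ for every $\mathcal{A}\in\mathbb{A}$.

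For the reverse inequality I would use a cropping argument. Fix an arbitrary $\mathcal{A}=[a_1,\dots,a_m]\in\mathbb{A}$ and consider the subtrajectory $\tau_d[a_1:a_m]$ that its matched indices span. The same index sequence is a legal matching of $\tau_q$ into $\tau_d[a_1:a_m]$, now with $\tau_q[1]$ matched to the first point and $\tau_q[m]$ matched to the last point of this subtrajectory, so the prefix and suffix terms $Insert(\tau_d[a_1:a_1-1])$ and $Insert(\tau_d[a_m+1:a_m])$ are insertions of the empty trajectory and vanish. Because the point-matching costs depend only on the points themselves and not on their absolute positions (the position-independence property the paper assumes of the distance function), the internal conversion cost is unchanged by this re-indexing. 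Hence from the definition of $\Theta$ as a minimum over all matchings,
$$\Theta(\tau_q,\tau_d[a_1:a_m])\leq f(\mathcal{A}).$$

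Finally I would invoke optimality of $\tau_d[i:j]$ together with Theorem \ref{the:abundent}. Since $\tau_d[i:j]$ minimises $\Theta(\tau_q,\tau_d[s:t])$ over all subtrajectories, $\Theta(\tau_q,\tau_d[i:j])\leq\Theta(\tau_q,\tau_d[a_1:a_m])$, and Theorem \ref{the:abundent} gives $\Theta(\tau_q,\tau_d[i:j])=f(\mathcal{A}^{o}_{\tau_q:\tau_d[i:j]})$ because the optimal subtrajectory carries no prefix or suffix insertion cost. Chaining these yields $f(\mathcal{A}^{o}_{\tau_q:\tau_d[i:j]})=\Theta(\tau_q,\tau_d[i:j])\leq\Theta(\tau_q,\tau_d[a_1:a_m])\leq f(\mathcal{A})$, the desired reverse inequality for arbitrary $\mathcal{A}$; combined with the first direction this shows $\mathcal{A}^{o}_{\tau_q:\tau_d[i:j]}$ attains $\min_{\mathcal{A}\in\mathbb{A}}f(\mathcal{A})$. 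The step I expect to require the most care is the cropping observation: one must verify that restricting the data trajectory to the exact span $[a_1,a_m]$ both annihilates the prefix/suffix insertion terms and leaves every matched-point conversion cost (including the in-between insertions counted by $Cost_{ins(k)}$) untouched, which is precisely where the position-independence assumption on the distance function is needed.
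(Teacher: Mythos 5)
Your proof is correct and takes essentially the same route as the paper's: the paper argues by contradiction that any better matching sequence $\mathcal{A}^{p}_{\tau_q:\tau_d}$ would either contradict the optimality of $\mathcal{A}^{o}_{\tau_q:\tau_d[i:j]}$ for $\tau_d[i:j]$ or make the spanned subtrajectory $\tau_d[a_1:a_m]$ beat $\tau_d[i:j]$, which is precisely your cropping argument read contrapositively. Your direct two-sided-inequality formulation, with the explicit appeal to Theorem~\ref{the:abundent}, to the vanishing prefix/suffix insertion terms, and to position-independence of the costs, only spells out steps the paper's terse proof leaves implicit.
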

\begin{proof}
    We assume that the matching sequence $\mathcal{A}_{\tau_q:\tau_d}^{p}$ is better than $\mathcal{A}_{\tau_q:\tau_d[i:j]}^{o}$. If $\mathcal{A}_{\tau_q:\tau_d}^{p}$ is the matching sequence of sub-trajectories $\tau_d[i:j]$ and query trajectories, then it contradicts the condition that $\mathcal{A}_{\tau_q:\tau_d[i:j]}^{o}$ is the optimal matching sequence for $\tau_d[i:j]$; conversely, if $\mathcal{A}_{\tau_q:\tau_d}^{p}$ is a matching sequence of the subtrajectory $\tau_d[a_1:a_m]$, then $\tau_d[i:j]$ is not an optimal subtrajectory, which contradicts what is known.
\end{proof}
By using the theorems \ref{the:abundent} and \ref{the:opt}, we can conclude
{\scriptsize\begin{eqnarray}
    \mathop{\min}\limits_{1\leq i \leq j \leq n}{\Theta( \tau_q,\tau_d[i:j])}
    =\mathop{\min}\limits_{\mathcal{A}_{\tau_q:\tau_d}\in \mathbb{A}}\sum_{a_i \in \mathcal{A}_{\tau_q:\tau_d}}{Cost(\tau_q[i], \tau_d[a_i])} \label{eq:reduce}
\end{eqnarray}}

According to the Equation \ref{eq:reduce}, we reduce the problem of finding the optimal subtrajectory to finding the optimal matching sequence. We split all match sequences $\mathbb{A}$ of the query trajectory $\tau_q$ with the data trajectory $\tau_d$ according to the matches at different points. We use $\mathbb{A}[a_i=j]$ to denote the set of all matching sequences in $\mathbb{A}$ that satisfy the condition that $\tau_q[i]$ matches $\tau_d[j]$. 

\begin{definition}[Optimal Partial Matching-Conversion Cost]
We denote by $C_{i,j}$ the minimum value of the cost of converting $\tau_q[1:i]$ into a subtrajectory of $\tau_d[1:j]$ when $\tau_q[i]$ matches $\tau_d[j]$, that is, 
{\scriptsize$$C_{i,j}=\mathop{\min}\limits_{\mathcal{A}_{\tau_q:\tau_d}\in \mathbb{A}[a_i=j]}\sum_{k=1,a_k \in \mathcal{A}_{\tau_q:\tau_d}}^{k=i}{Cost(\tau_q[k], \tau_d[a_k])}$$}
\end{definition}

Once we have calculated $C_{i,j}$, the distance between the query trajectory and the optimal subtrajectory is the minimum conversion cost when $\tau_q[m]$ matches a point in the data trajectory because 
{\scriptsize\begin{eqnarray}
    &&\mathop{\min}\limits_{1\leq i \leq j \leq n}{\Theta( \tau_q,\tau_d[i:j])}
    =\mathop{\min}\limits_{\mathcal{A}_{\tau_q:\tau_d}\in \mathbb{A}}\sum_{a_i \in \mathcal{A}_{\tau_q:\tau_d}}{Cost(\tau_q[i], \tau_d[a_i])} \notag \\
    &=&\mathop{\min}\limits_{1\leq j\leq n}\mathop{\min}\limits_{\mathcal{A}_{\tau_q:\tau_d}\in \mathbb{A}[a_m=j]}\sum_{k=1,a_k \in \mathcal{A}_{\tau_q:\tau_d}}^{k=m}{Cost(\tau_q[k], \tau_d[a_k])} \\\notag
    &=&\mathop{\min}\limits_{1\leq j\leq n}C_{m, j}\notag
\end{eqnarray}}

Therefore, we will mainly discuss how to compute $C_{i,j}$ in the subsequent section; meanwhile, we will use DTW and WED as examples to illustrate our algorithm in detail in Section \ref{sec:detail_design_algorithm}.

\subsection{Universal Calculation of $C_{i,j}$}
This section will discuss how to calculate $C_{i,j}$ and find the subtrajectory with the shortest distance to the query trajectory from the data trajectory for a given query trajectory $\tau_q$ and a data trajectory $\tau_d$.

\noindent\textbf{Calculate $C_{i,j}$.} We will discuss the computation process of $C_{i,j}$ in three cases:
\begin{enumerate}[label=\arabic*),align=right,itemindent=1em,labelsep=2pt,labelwidth=1em,leftmargin=0pt,nosep]
	\item $i=1$.
	When $i=1$, we substitute $\tau_q[1]$ with $\tau_d[j]$, which is $C_{i,j}=Cost_{sub}(\tau_q[i], \tau_d[j])$.
	\item $j=1$.
	There are two possible ways of converting $\tau_q[i]$ when $j=1$: deleting $\tau_q[i]$, which means that $\tau_q[i-1]$ matches $\tau_d[1]$ so that we have {\small$C_{i,j}=C_{i-1,j}+Cost_{del}(\tau_q[i], \tau_d[j])$}; the other way is to substitute $\tau_q[i]$ with $\tau_d[1]$, which means that $\tau_q[1:i-1]$ will be deleted, resulting in {\scriptsize$C_{i,j}=Cost_{sub}(\tau_q[i], \tau_d[j])+\sum_{k=1}^{i-1}Cost_{del}(\tau_q[k], \tau_d[j])$}. Therefore, we have {\scriptsize$C_{i,j}=\min \{C_{i-1,j}+Cost_{del}(\tau_q[i], \tau_d[j]), Cost_{sub}(\tau_q[i], \tau_d[j])+\sum_{k=1}^{i-1}Cost_{del}(\tau_q[k], \tau_d[j])\}$}.
	\item $1<i\leq m,1<j\leq n$.
	Considering that the point $\tau_q[i]$ matches $\tau_d[j]$, there are three different conversion possibilities for $\tau_q[i]$ and $C_{i,j}=\min\{delCost_{i,j},subCost_{i,j},insCost_{i,j}\}$:
	\begin{enumerate}
		\item $delCost_{i,j}$: deleting $\tau_q[i]$. When $\tau_q[i]$ is deleted, by the definition of matching, $\tau_q[i-1]$ and $\tau_d[j]$ are matched; thus, we have $delCost_{i,j}=C_{i-1,j}+Cost_{del}(\tau_q[i], \tau_d[j])$.
		\item $subCost_{i,j}$: substituting $\tau_q[i]$ with $\tau_d[j]$. In this case, $\tau_q[i-1]$ matches $\tau_d[j-1]$; thus, we have $subCost_{i,j}=C_{i-1,j-1}+Cost_{sub}(\tau_q[i], \tau_d[j])$.
		\item $insCost_{i,j}$: substituting $\tau_q[i]$ and inserting $\tau_d[k+1:j-1]$. In this situation, $\tau_q[i-1]$ may match $\tau_d[k]$ ($1\leq k<j-1$). We  insert $\tau_d[k+1:j-1]$ and have $C_{i,j}=C_{i-1,k}+Cost_{ins(k)}(\tau_q[i], \tau_d[j])$. Considering all possible values of $k$, $insCost_{i,j}=\min_{1\leq k<j-1} C_{i-1,k}+Cost_{ins(k)}(\tau_q[i], \tau_d[j])$.
	\end{enumerate}
\end{enumerate}

\noindent In case 3.(b), our substitution of $\tau_q[i]$ for $\tau_d[j]$ can be seen as inserting an empty trajectory along with the substitution. Therefore, we will discuss 3.(b) and 3.(c) together in the subsequent sections.

To record the start position the optimal subtrajectory, we use \underline{$s_{i,j}$ to denote the index of $\tau_q[1]$'s matched point in $\tau_d$}, when $\tau_q[i]$ \\
matches $\tau_d[j]$, i.e., the start position of the subtrajectory. Based on the computation process of $C_{i,j}$, we are able to determine which point $\tau_q[i-1]$ matches when $\tau_q[i]$ matches $\tau_d[j]$. Suppose $\tau_q[i-1]$ matches $\tau_d[k]$ ($1\leq k\leq j$), then we have $s_{i,j}=s_{i-1,k}$. Finally, we propose CMA to solve the SSS problem as shown in Algorithm \ref{algo:efficiencyA}.

\noindent\textbf{Complexity.} Since $Cost_{sub}(\tau_q[i], \tau_d[j])$ and $Cost_{del}(\tau_q[i], \tau_d[j])$ involve only the substitution and deletion of one trajectory point, their time complexity is $O(1)$; therefore, when $i=1$, the time complexity of $C_{i,j}$ is $O(1)$. We can calculate $\sum_{k=1}^{i-1}Cost_{del}(\tau_q[k], \tau_d[j])$ when $j=1$ in advance for any $i$ by preprocessing, and thus we can compute $C_{i,j}$ within the time complexity of $O(1)$. In other cases, we need to calculate $\min\{delCost_{i,j},subCost_{i,j},insCost_{i,j}\}$. Therefore, the time complexity of CMA is {\small$O(mn)$}.
\revision{We will discuss how to compute $C_{i,j}$ in $O(1)$ time complexity for a specific distance function (e.g., DTW and WED) in Section~\ref{sec:detail_design_algorithm}.}

\revision{\noindent\textbf{Discussion.} Spring and GB can also achieve $O(mn)$ time complexity for SSS problem under DTW and FD distance function, respectively. CMA is different from them. CMA and Spring are all DP methods. The main difference between CMA and Spring is that their recursive formulas are different: Spring's recursive formula is well-designed for DTW function, and just can support DTW; CMA's recursive formula is a more general one, which can support abstract insertion, substitution and deletion operations thus can be applied under most commonly used trajectory distance functions (e.g., DTW, WED and FD). Secondly, Spring will output all the subtrajectories with distances less than a given threshold to the query trajectory, thus some additional computations are involved in the process of Spring, which do not exist in CMA.
Greedy Backtracking is in general a breadth-first search method with memorizing techniques.}

\begin{algorithm}[t]
	{\small
	\DontPrintSemicolon
	\KwIn{\small a query trajectory $\tau_q$, a data trajectory $\tau_d$}
	\KwOut{\small a subtrajectory $\tau_d[i^{*},j^{*}]$}
    \ForAll{$1\leq i \leq m$}{
        \ForAll{$1\leq j \leq n$}{
            \uIf{$i=1$}{
                {\scriptsize$C_{i,j}\gets Cost_{sub}(\tau_q[i], \tau_d[j])$\;}
                $s_{i,j}\gets j$\;
            }\ElseIf{$j=1$}{
                {\scriptsize$C_{i,j}\gets \min \{C_{i-1,j}+Cost_{del}(\tau_q[i], \tau_d[j]),$ $ Cost_{sub}(\tau_q[i], \tau_d[j])+\sum_{k=1}^{i-1}Cost_{del}(\tau_q[k], \tau_d[j])\}$\;}
                $s_{i,j}\gets 1$\;
            }\Else{
                {\scriptsize$C_{i,j}\gets \min\{delCost_{i,j},subCost_{i,j},insCost_{i,j}\}$\;}
                update $s_{i,j}$ according to the matches of $\tau_q[i-1]$\;
            }
        }
    }
    $j^{*}\gets \mathop{\arg\min}\limits_{1\leq j\leq n}C_{m,j}$\;
    $i^{*}\gets s_{m,j^{*}}$\;
    \Return{$\tau_d[i^{*},j^{*}]$}
	\caption{\small $CMA(\tau_q,\tau_d)$}
	\label{algo:efficiencyA}}
\end{algorithm}

\section{Fast Calculating $C_{i,j}$ on Specific $\Theta$}
\label{sec:detail_design_algorithm}

In this section, we discuss how to calculate the conversion cost and $C_{i,j}$ for each point of the query trajectory with  WED and DTW. Meanwhile, we will explain how $insCost_{i,j}$ can be computed in $O(1)$ time for the two distance functions WED and DTW.
\subsection{Minimum Cost $C_{i,j}$ of WED}
By introducing the concept of matching, we can convert the distance between trajectories into the cost required to convert points in $\tau_q$ into points in $\tau_d$. Let's discuss the cost of converting each point $\tau_q[i]$ to its matched point $\tau_d[j]$ in $\tau_d$.\\
\textbf{Conversion Cost.} There are three cases:
\begin{enumerate}[label=(\alph*),align=right,itemindent=2em,labelsep=2pt,labelwidth=1em,leftmargin=0pt,nosep]
\item $\tau_q[i-1]$ matches $\tau_d[j]$. We delete $\tau_q[i-1]$ so that $Cost_{del}(\tau_q[i],$ $ \tau_d[j])=del(\tau_q[i])$. 
\item $\tau_q[i-1]$ matches $\tau_d[j-1]$. We substitute $\tau_q[i]$ with $\tau_d[j]$, i.e., $Cost_{sub}(\tau_q[i], \tau_d[j])=sub(\tau_q[i],\tau_d[j])$. 
\item $\tau_q[i-1]$ matches $\tau_d[k]$, where $1\leq k<j-1$. The cost of converting $\tau_q[i]$ to $\tau_d[j]$ is the summation of $sub(\tau_q[i], \tau_d[j])$ and the cost of inserting the trajectory $\tau_d[k+1:j-1]$. Therefore, we have $Cost_{ins(k)}(\tau_q[i], \tau_d[j])=ins(\tau_d[k+1:j-1])+sub(\tau_q[i],\tau_d[j])$.
\end{enumerate}

\begin{example}
	Consider the example in Figure \ref{fig:wed_example}, where $\tau_q$ is converted into $\tau_d$. Since $\tau_q[1]$ has no predecessor node, $\tau_q[1]$ is only substituted for $\tau_d[1]$ with the cost of $sub(\tau_q[1],\tau_d[1])$. $\tau_q[2]$ matches $\tau_d[1]$, but since $\tau_q[1]$ matches $\tau_d[1]$, $\tau_q[2]$ has to be deleted, with the cost of $del(\tau_q[2])$. $\tau_q[4]$ matches $\tau_d[4]$ and $\tau_q[3]$ matches $\tau_d[2]$, thus $\tau_q[4]$ is converted to $\tau_d[4]$ with the cost of $sub(\tau_q[4],\tau_d[4])+ins(\tau_d[3])$.
\end{example}

\begin{figure*}[t!]
	\centering
	\subfigure[][{\scriptsize $C_{i,j}$}]{
		\scalebox{0.47}[0.47]{
			\includegraphics{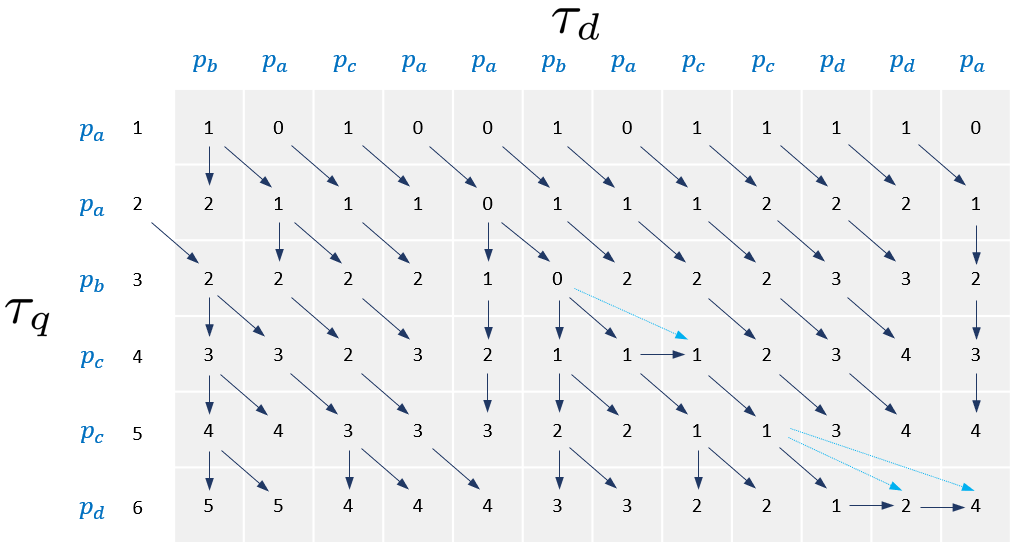}
			\label{fig:cost_matrix}
		}
	}
	\subfigure[][{\scriptsize $S_{i,j}$}]{
		\scalebox{0.47}[0.47]{
			\includegraphics{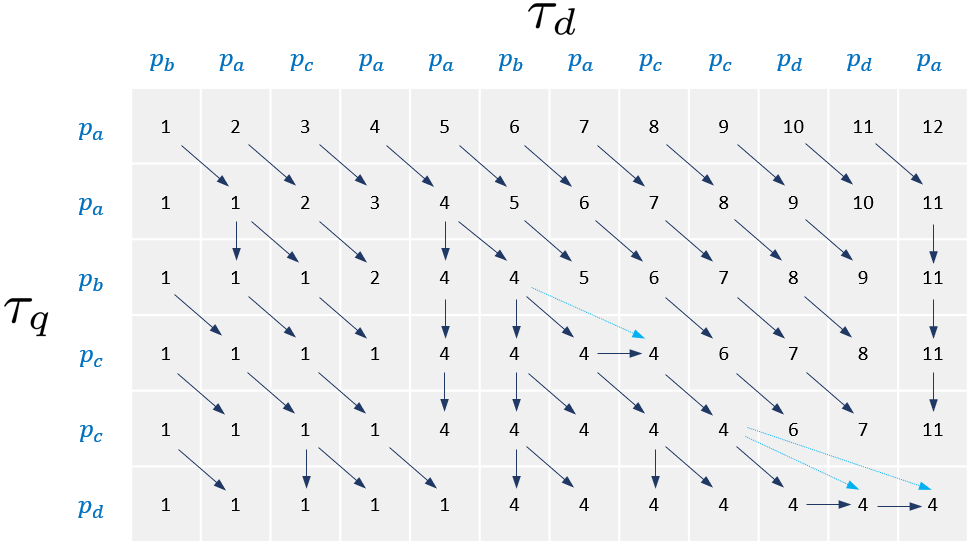}
			\label{fig:start_matrix}
		}
	}
	\caption{\small Demonstration of Calculating $C_{i,j}$ and $S_{i,j}$ when using WED as distance function}
	\label{fig:demonstration}\vspace{-1ex}
\end{figure*}

\noindent \textbf{Calculate $C_{i,j}$.} After obtaining the conversion cost of the points using WED as a distance function, we can calculate $C_{i,j}$. We will discuss the relational equation for $C_{i,j}$ in three cases:
\begin{enumerate}[label=\arabic*),align=right,itemindent=1em,labelsep=2pt,labelwidth=1em,leftmargin=0pt,nosep]
	\item $i=1$. {\small$C_{i,j}=sub(\tau_q[1],\tau_d[j])$}.
	\item $j=1$. {\small$C_{i,j}=\min\{C_{i-1,j}+del(\tau_q[i]), sub(\tau_q[i],\tau_d[1])+del(\tau_q[1:i-1])\}$}. 
	\item $1<i\leq m,1<j\leq n$.
	Considering that the point $\tau_q[i]$ matches $\tau_d[j]$, $\tau_q[i]$ may need to be deleted or substituted. Thus, the update of $C_{i,j}$ depends mainly on whether $\tau_q[i]$ is deleted or replaced:
	\begin{enumerate}
		\item $delCost_{i,j}$. When $\tau_q[i]$ is deleted, by the definition of matching, $\tau_q[i-1]$ and $\tau_d[j]$ are matched and we have $delCost_{i,j}=C_{i-1,j}+del(\tau_q[i])$.
		\item $subCost_{i,j}$ and $insCost_{i,j}$. $\tau_q[i-1]$ may match $\tau_d[k]$ ($1\leq k<j-1$) while substituting $\tau_q[i]$. In this case, we insert $\tau_d[k+1:j-1]$ and have $C_{i,j}=C_{i-1,k}+ins(\tau_d[k+1:j-1])+sub(\tau_q[i],\tau_d[j])$. Considering all possible values of $k$, {\scriptsize$insCost_{i,j}=\min_{1\leq k<j-1} C_{i-1,k}+ins(\tau_d[k+1:j-1])+sub(\tau_q[i],\tau_d[j])$}. Another situation is that $\tau_q[i-1]$ matches $\tau_d[j-1]$ and we have $subCost_{i,j}=C_{i-1,j-1}+sub\left(\tau_q[i],\tau_d[j]\right)$. Combining these two situation, we have {\scriptsize$C_{i,j}=\min\{insCost_{i,j}, subCost_{i,j}\}=\min_{1\leq k <j} C_{i-1,k}+ins(\tau_d[k+1:j-1])+sub(\tau_q[i],\tau_d[j])$}. Then, the calculation of $C_{i,j}$ can be simplified by follows:
		{\scriptsize\begin{align*}
				C_{i,j}&=\min_{1\leq k<j} C_{i-1,k}+ins\left(\tau_d[k+1:j-1]\right)\\
				&+sub\left(\tau_q[i],\tau_d[j]\right) \label{equ:e1}\\
				&=\min \{\min_{1\leq k<j-1}C_{i-1,k}+ins\left(\tau_d[k+1:j-1]\right) \\
				&+sub\left(\tau_q[i],\tau_d[j]\right),C_{i-1,j-1}+sub\left(\tau_q[i],\tau_d[j]\right)\}\\
				&=\min\{C_{i,j-1}+ins\left(\tau_d[j-1]\right)-sub\left(\tau_{q}[i],\tau_d[j-1]\right)\\
				&+sub\left(\tau_{q}[i],\tau_d[j]\right),C_{i-1,j-1}+sub\left(\tau_q[i],\tau_d[j]\right)\}
		\end{align*}}
		
	\end{enumerate}
\end{enumerate}

By the above analysis, we can obtain the expression for the calculation of $C_{i,j}$ while using WED as distance function
{\scriptsize\begin{equation}
		C_{i,j}=\left\{
		\begin{array}{ll}
			sub(\tau_q[i],\tau_d[j]), & i=1 \\
			\min\{C_{i-1,j}+del(\tau_q[i]), sub(\tau_q[i],\tau_d[1])\\
			+del(\tau_q[1:i-1])\}, & j=1,i\neq1 \\
			\min\{C_{i-1,j}+del(\tau_q[i]),\\
			C_{i,j-1}+ins\left(\tau_d[j-1]\right)-\\
			sub\left(\tau_{q}[i],\tau_d[j-1]\right)+sub\left(\tau_{q}[i],\tau_d[j]\right),\\
			C_{i-1,j-1}+sub\left(\tau_q[i],\tau_d[j]\right)\},& otherwise
		\end{array}
		\right.
		\label{eq:wed_cal}
\end{equation}}

Finally, we illustrate algorithm with an example as follows:
\begin{example}
	Given two trajectories as shown in Figure \ref{fig:demonstration}, we need to find the subtrajectory from $\tau_d$ that is closest to $\tau_q$. The insertion, deletion, and substitution costs are the same as the settings in Example \ref{fig:example1}. At the beginning, we will initialize $C_{1,j}$ (i.e., $C_{1,j}=sub(\tau_q[1],\tau_d[j])$). Then initialize $C_{i,1}$ based on whether $\tau_q[i]$ matches $\tau_d[1]$. Figure \ref{fig:cost_matrix} shows that $\tau_d[1]=b$, thus only $\tau_q[3]$ is substituted with it. For $\tau_q[4]$, when it matches $\tau_d[8]$, we need to determine which point is optimal for $\tau_q[3]$ to match with. From Figure \ref{fig:cost_matrix}, we can see that the cost of $\tau_q[3]$ when matching with $\tau_d[6]$ is 0, being the minimum, which means we need to insert $\tau_d[7]$. Therefore, considering $\tau_q[4]=\tau_d[8]$, we can compute the result of {\small$C_{4,8}$ from $C_{3,6}$, i.e., $C_{4,8}=C_{3,6}+ins(\tau_d[7])+sub(\tau_q[4],\tau_d[8])=0+1+0=1$}. In the actual implementation of the algorithm \ref{algo:efficiencyA}, we will compute $C_{4,8}$ by $C_{4,7}$, i.e. {\small$C_{4,8}=C_{4,7}+ins(\tau_d[7])-sub(\tau_q[4],\tau_d[7])+sub(\tau_q[4 ],\tau_d[8])=1+1-1+0=1$}.
	
	On the other hand, the algorithm updates $S_{i,j}$ as it executes. For example, when $\tau_q[4]$ matches $\tau_d[8]$, $\tau_q[3]$ is matched with $\tau_d[6]$ and we have $S_{4,8}=S_{3,6}$ as shown in Figure \ref{fig:start_matrix}.
\end{example}

\subsection{Minimum Cost $C_{i,j}$ of DTW}
Unlike WED, the cost required to delete a point or insert a point in DTW is different. Firstly, we analyze the cost of converting each point $\tau_q[i]$ in the query trajectory to its matched  point $\tau_d[j]$ in the data trajectory.\\
\textbf{Conversion Cost.} There are three cases:
\begin{enumerate}[label=(\alph*),align=right,itemindent=2em,labelsep=2pt,labelwidth=1em,leftmargin=0pt,nosep]
\item $\tau_q[i-1]$ matches $\tau_d[j]$. The cost of deleting $\tau_q[i]$ is equal to the cost of substituting $\tau_q[i]$ with $\tau_d[j]$, thus {\small$Cost_{del}(\tau_q[i], \tau_d[j])=sub(\tau_q[i],\tau_d[j])$}. 
\item $\tau_q[i-1]$ matches $\tau_d[j-1]$. We substitute $\tau_q[i]$ with $\tau_d[j]$, i.e., $Cost_{sub}(\tau_q[i], \tau_d[j])=sub(\tau_q[i],\tau_d[j])$. 
\item $\tau_q[i-1]$ matches $\tau_d[k]$, where $1\leq k<j-1$. The cost of converting $\tau_q[i]$ to $\tau_d[j]$ is the summation of $sub(\tau_q[i], \tau_d[j])$ and the cost of inserting the trajectory $\tau_d[k+1:j-1]$. The cost for inserting subtrajectories $\tau_d[k+1:j-1]$ depends on the points matched by $\tau_d[k+1:j-1]$ at $\tau_q$. Suppose $\tau_q[i-1]$ matches $\tau_d[k]$ and $\tau_q[i]$ will be matched into $\tau_d[j]$. Thus, the cost to insert $\tau_d[k+1:j-1]$ is {\scriptsize$\mathop{\min}\limits_{k\leq t\leq j-1}\sum_{p=k+1}^{t}{sub(\tau_q[i-1],\tau_d[p])}+\sum_{p=t+1}^{j-1}{sub(\tau_q[i],\tau_d[p])}$}. Thus, we have {\scriptsize$Cost_{ins(k)}(\tau_q[i], \tau_d[j])=\mathop{\min}\limits_{k\leq t\leq j-1}$ $\sum_{p=k+1}^{t}{sub(\tau_q[i-1],\tau_d[p])}+ \sum_{p=t+1}^{j-1}{sub(\tau_q[i],\tau_d[p])}+sub(\tau_q[i],\tau_d[j])$}.
\end{enumerate}
\begin{example}
	Let's take Figure \ref{fig:example2} as an example, the cost of converting $\tau_q[1]$ to $\tau_d[1]$ when $i=1$ and $j=1$ is {\small$sub(\tau_q[1],\tau_d[1])=sub(b,b)$}. When $i=2$ and $j=1$, $\tau_q[2]$ can only be converted to $\tau_d[1]$, thus the cost of the conversion is {\small$sub(\tau_q[2],\tau_d[1])=sub(b,b)$}. By the time $\tau_q[4]$ matches $\tau_q[2]$, we need to delete $\tau_q[4]$ requiring a cost of {\small$del(\tau_q[4])=sub(\tau_q[4],\tau_q[2])$} because $\tau_q[3]$ matches $\tau_q[2]$. For i=9 and j=9, since $\tau_q[8]$ matches $\tau_d[7]$, the cost of converting $\tau_q[9]$ to $\tau_d[9]$ consists of not only the cost of the substitution $sub(\tau_q[9],\tau_d[9])$, but also the cost of inserting $\tau_d[8]$, that is, $\mathop{\min}\limits_{7\leq t\leq 8}\sum_{p=8}^{t}{sub(\tau_q[8],\tau_d[p])}$ $+\sum_{p=t+1}^{8}{sub(\tau_q[9],\tau_d[p])}$. It is equal to $\min\{sub(\tau_q[8],\tau_d[8]),sub(\tau_q[9],\tau_d[8])\}$. It can be understood in another way that when $\tau_q[8]$ matches $\tau_d[7]$ and $\tau_q[9]$ matches $\tau_d[9]$, inserting $\tau_d[8]$ is equivalent to replacing $\tau_d[8]$ with $\tau_q[8]$ or $\tau_q[9]$.
\end{example}

\noindent\textbf{Calculate $C_{i,j}$.} After analyzing the conversion cost, similarly, we discuss the computation of $C_{i,j}$ in three cases:
\begin{enumerate}[label=(\alph*),align=right,itemindent=2em,labelsep=2pt,labelwidth=1em,leftmargin=0pt,nosep]
\item $i=1$.
When $i=1$, $\tau_q[1]$ can only be substituted with $\tau_d[j]$ as the same as WED and we have {\small$C_{i,j}=sub(\tau_q[1],\tau[j])$}.
\item $j=1$.
Considering that the cost of deleting $\tau_q[i]$ and substituting $\tau_q[i]$ is the same when $j=1$, we have
{\scriptsize\begin{eqnarray}
		C_{i,j}&=&\min \{Cost_{sub}(\tau_q[i], \tau_d[j])+\sum_{k=1}^{i-1}Cost_{del}(\tau_q[k], \tau_d[j]), \notag\\
		&&C_{i-1,j}+Cost_{del}(\tau_q[i], \tau_d[j])\} \notag\\
		&=&\min \{\sum_{k=1}^{i}sub(\tau_q[k], \tau_d[j]), C_{i-1,j}+sub(\tau_q[i],\tau_d[1])\}\notag\\
		&=&C_{i-1,j}+sub(\tau_q[i],\tau_d[1]) \notag
\end{eqnarray}}
\item $1<i\leq m,1<j\leq n$.
If we delete $\tau_q[i]$, we have $delCost_{i,j}=C_{i-1,j}+sub(\tau_q[i], \tau_q[j])$. Another conversion is substitution. $\tau_q[i-1]$ may be matched with any $\tau_d[k]$ ($1\leq k<j$), and $C_{i,j}$ denotes the smallest of all possible values. Thus, we have
{\scriptsize
	\begin{eqnarray}
		C_{i,j}&=& \min\{insCost_{i,j}, subCost_{i,j}\}\notag\\
		&=&\min \{\min_{1\leq k<j-1} C_{i-1,k}+Cost_{ins(k)}(\tau_q[i], \tau_d[j]),\notag \\ 
		&&C_{i-1,j-1}+Cost_{sub}(\tau_q[i],\tau_d[j])\}\notag \\
		&=&\mathop{\min}\limits_{1\leq k<j}C_{i-1,k}+\mathop{\min}\limits_{k\leq t< j-1 }\sum_{p=k+1}^{t}{sub(\tau_q[i-1],\tau_d[p] )}\notag \\
		&+&\sum_{p=t+1}^{j-1}{sub(\tau_q[i],\tau_d[p])}+sub(\tau_q[i],\tau_d[j])\notag \\
		&=&\mathop{\min}\limits_{1\leq k<j}\mathop{\min}\limits_{k\leq t< j-1 }C_{i-1,k}+\sum_{p=k+1}^{t}{sub(\tau_q[i-1],\tau_d[p] )}\notag \\
		&+&\sum_{p=t+1}^{j-1}{sub(\tau_q[i],\tau_d[p])}+sub(\tau_q[i],\tau_d[j]) \notag
\end{eqnarray}}
\end{enumerate}

The time complexity of computing $C_{i,j}$ ($1<i<m, 1<j<n$) directly from the above expression is very high, and therefore we are required to simplify the computation of $C_{i,j}$ by Theorem \ref{the:simplify}.
\begin{theorem}
	\label{the:simplify}
	When $i\geq 2,j\geq 2$, we have {\small$C_{i,j}=\mathop{\min}\limits_{1\leq k < j}C_{i-1,k}+\sum_{t=k+1}^{j}sub(\tau_q[i], \tau_d[t])$}.
\end{theorem}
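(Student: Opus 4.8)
The plan is to read the statement as a simplification of the expanded double minimization derived immediately above the theorem, and to prove that
\[
\min_{1\le k<j}\ \min_{k\le t\le j-1}\Big(C_{i-1,k}+\sum_{p=k+1}^{t}sub(\tau_q[i-1],\tau_d[p])+\sum_{p=t+1}^{j-1}sub(\tau_q[i],\tau_d[p])+sub(\tau_q[i],\tau_d[j])\Big)
\]
is equal to the single minimization $\min_{1\le k<j}\big(C_{i-1,k}+\sum_{t=k+1}^{j}sub(\tau_q[i],\tau_d[t])\big)$ claimed in Theorem~\ref{the:simplify}. I would establish this equality by proving the two inequalities separately, the engine of the $\ge$ direction being a monotone \emph{shift lemma} stating that for all $1\le k\le t$, $C_{i-1,t}\le C_{i-1,k}+\sum_{p=k+1}^{t}sub(\tau_q[i-1],\tau_d[p])$.

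For the $\le$ direction, I would simply restrict the inner minimization to the split point $t=k$: then the first sum $\sum_{p=k+1}^{k}$ is empty, and the expanded term collapses to $C_{i-1,k}+\sum_{p=k+1}^{j-1}sub(\tau_q[i],\tau_d[p])+sub(\tau_q[i],\tau_d[j])=C_{i-1,k}+\sum_{t=k+1}^{j}sub(\tau_q[i],\tau_d[t])$, which is exactly the index-$k$ term of the simplified form; minimizing over $k$ gives the bound. For the $\ge$ direction, I would apply the shift lemma to each summand: the generic expanded term indexed by $(k,t)$ is at least $C_{i-1,t}+\sum_{p=t+1}^{j-1}sub(\tau_q[i],\tau_d[p])+sub(\tau_q[i],\tau_d[j])=C_{i-1,t}+\sum_{p=t+1}^{j}sub(\tau_q[i],\tau_d[p])$, and since $1\le t\le j-1<j$ this is a legitimate term of the simplified form (at index $t$). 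Hence every expanded term dominates some simplified term, so the expanded minimum is at least the simplified minimum; combined with the $\le$ direction this yields the claimed identity, with the split point $t$ of the expanded form being absorbed into the matched index of the simplified form.

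The main obstacle is a clean proof of the shift lemma, and I would reduce it to the single step $C_{i-1,t}\le C_{i-1,t-1}+sub(\tau_q[i-1],\tau_d[t])$ and then telescope from $t$ down to $k$. For the single step I would take an optimal matching sequence realizing $C_{i-1,t-1}$ (so $a_{i-1}=t-1$), and construct a feasible competitor for $C_{i-1,t}$ by resetting $a_{i-1}=t$ while keeping $a_1,\dots,a_{i-2}$ unchanged, which is admissible because the sequence stays non-decreasing. Using the DTW property that inserted and deleted points are charged as substitution costs, the point $\tau_d[t-1]$ is absorbed onto $\tau_q[i-1]$'s side of the same $\tau_q[i-2]$/$\tau_q[i-1]$ split; a direct comparison of the two $Cost_{ins}$ expressions shows the net cost increase is exactly $sub(\tau_q[i-1],\tau_d[t])$, so the competitor — and hence the minimum $C_{i-1,t}$ — is bounded by $C_{i-1,t-1}+sub(\tau_q[i-1],\tau_d[t])$. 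The delicate parts to handle carefully are the feasibility of the modified matching sequence, the empty-sum boundary cases (e.g.\ $a_{i-2}$ adjacent to the match, or $t=k$), and the reliance on the DTW assumption that point-to-point distances are position-independent so that the cost bookkeeping in the construction is valid.
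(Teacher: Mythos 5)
Your proof is correct, and it takes a genuinely different route from the paper's. The paper argues by induction on $i$: the base case $i=2$ uses $C_{1,k}=sub(\tau_q[1],\tau_d[k])$ together with non-negativity of $sub$ (the inner minimum $\min_{1\le k\le t}sub(1,k:t)$ collapses to $k=t$), and the inductive step swaps the two minimizations, expands $C_{h-1,k}$ through the induction hypothesis into row-$(h-2)$ terms, merges the telescoping sums $sub(h-1,l+1:k)+sub(h-1,k+1:t)=sub(h-1,l+1:t)$, and then re-collapses via the induction hypothesis to recover $C_{h-1,t}$. You instead sandwich the expanded double minimization between two inequalities that never refer to row $i-2$: the $\le$ direction by restricting the inner minimization to the split $t=k$, and the $\ge$ direction by the shift lemma $C_{i-1,t}\le C_{i-1,k}+\sum_{p=k+1}^{t}sub(\tau_q[i-1],\tau_d[p])$, which plays exactly the role of the paper's expand-and-recollapse step but, being a one-sided inequality, is cheaper to establish; in fact it follows even more directly than by your matching-sequence surgery, since every feasible split $(k,s)$ in the expanded formula for $C_{i-1,t-1}$ remains feasible for $C_{i-1,t}$ with cost larger by exactly $sub(\tau_q[i-1],\tau_d[t])$ (with a trivial separate check when $i-1=1$). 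Two small cautions: first, in your surgery the net increase is only \emph{at most} $sub(\tau_q[i-1],\tau_d[t])$ rather than exactly that --- e.g.\ when $a_{i-2}=t-1$ the old deletion charge $sub(\tau_q[i-1],\tau_d[t-1])$ is dropped, so non-negativity of $sub$ is needed --- but an upper bound is all the lemma requires; second, like the paper itself, what you prove covers only the substitution/insertion branches, since the deletion branch $C_{i-1,j}+sub(\tau_q[i],\tau_d[j])$ lies outside the double minimization and is only re-added in Equation \ref{eq:dtw}, so your reading of the statement coincides with the paper's own.
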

\begin{proof}
	We use mathematical induction to prove this theorem. To simplify the proof, we denote $\sum_{t=k+1}^{j}sub(\tau_q[i], \tau_d[t])$ as $sub(i,k+1:j)$. For $\forall j\geq 2$ when $i=2$, we have
	{\scriptsize \begin{eqnarray}
			C_{2,j}&=&\mathop{\min}\limits_{1\leq k < j}\mathop{\min}\limits_{k\leq t < j}C_{1,k}+sub(1,k+1:t)+sub(2,t+1:j)\notag \\
			&=&\mathop{\min}\limits_{1\leq t < j}\mathop{\min}\limits_{1\leq k \leq t}{sub(1,k:t)+sub(2,t+1:j)}\notag \\
			&=&\mathop{\min}\limits_{1\leq t < j}{sub(\tau_q[1],\tau_d[t])+sub(2,t+1:j)}\notag \\
			&=&\mathop{\min}\limits_{1\leq t < j}{C_{1,t}+\sum_{k=t+1}^{j}{sub(\tau_q[2],\tau_d[k])}}\notag\\
			&=&\mathop{\min}\limits_{1\leq k < j}{C_{1,k}+\sum_{t=k+1}^{j}{sub(\tau_q[2],\tau_d[t])}}\notag
	\end{eqnarray}}
	Suppose $i=h-1$, and we have {\small$C_{h-1,j}=\mathop{\min}\limits_{1\leq k < j}C_{h-2,k}+\sum_{t=k+1}^{j}sub(\tau_q[h-1], \tau_d[t])=\mathop{\min}\limits_{1\leq k < j}C_{h-2,k}+sub(h-1,k+1:j)$}. Next, we have to prove that the theorem also holds when $i=h$.
	{\scriptsize \begin{eqnarray}
			C_{h,j}&=&\mathop{\min}\limits_{1\leq k < j}\mathop{\min}\limits_{k\leq t < j}C_{h-1,k}+sub(h-1,k+1:t) + sub(h,t+1:j)\notag \\
			&=&\mathop{\min}\limits_{1\leq t < j}\mathop{\min}\limits_{1\leq k \leq t}C_{h-1,k}+sub(h-1,k+1:t) + sub(h,t+1:j)\notag \\
			&=&\mathop{\min}\limits_{1\leq t < j}\mathop{\min}\limits_{1\leq k \leq t}\mathop{\min}\limits_{1\leq l < k}C_{h-2,l}+sub(h-1,l+1:k) \notag\\
			&&+ sub(h-1,k+1:t) + sub(h,t+1:j)\notag \\
			&=& \mathop{\min}\limits_{1\leq t < j}\mathop{\min}\limits_{1\leq l < t}C_{h-2,l} + sub(h-1,l+1:t)+sub(h,t+1:j) \notag \\
			&=&\mathop{\min}\limits_{1\leq t < j}C_{h-1,t}+sub(h,t+1:j)\notag \\
			&=&\mathop{\min}\limits_{1\leq k < j}C_{h-1,k}+\sum_{t=k+1}^{j}sub(\tau_q[h], \tau_d[t])\notag 
	\end{eqnarray}}
	The above analysis shows that the theorem holds when $i=2$ and the theorem holds when $i=h-1$ can infer that the theorem holds when $i=h$. Therefore, the theorem holds.
\end{proof}
After obtaining the expression for $C_{i,j}$ from the theorem \ref{the:simplify}, we can further simplify it.
{\scriptsize\begin{eqnarray}
		C_{i,j}&=&\mathop{\min}\limits_{1\leq k < j}C_{i-1,k}+\sum_{t=k+1}^{j}sub(\tau_q[i], \tau_d[t])\notag\\
		&=&\min\{\mathop{\min}\limits_{1\leq k < j-1}C_{i-1,k}+\sum_{t=k+1}^{j-1}sub(\tau_q[i], \tau_d[t]), \notag\\
		&&C_{i-1,j-1}\}+sub(\tau_q[i], \tau_d[j])\notag \\
		&=&\min\{C_{i,j-1},C_{i-1,j-1}\}+sub(\tau_q[i], \tau_d[j])\notag
\end{eqnarray}}
Finally, integrating all the previous analysis results, we can get the computational expression of $C_{i,j}$. With the Equation \ref{eq:dtw}, we can quickly adapt the Algorithm \ref{algo:efficiencyA} to get the optimal subtrajectory using DTW as the distance function.
{\scriptsize\begin{equation}
		C_{i,j}=\left\{
		\begin{array}{ll}
			sub(\tau_q[i],\tau_d[j]), & i=1 \\
			C_{i-1,j}+sub(\tau_q[i],\tau_d[1]), & j=1,i\neq1 \\
			\min\{C_{i-1,j},C_{i,j-1},C_{i-1,j-1}\}\\
			+sub\left(\tau_q[i],\tau_d[j]\right),& otherwise
		\end{array}
		\right.
		\label{eq:dtw}
\end{equation}}

\subsection{Other Similarity Functions}
\label{sec:other}
In addition to DTW and WED, our method is also valid for other order-insensitive distance functions. EDR and ERP are specific cases of WED functions. Therefore, we only need to define $sub$, $ins$, and $del$ in Equation \ref{eq:wed_cal}.
We denote the euclidean distance between two points $\tau_q[i]$ and $\tau_d[j]$ as $d(\tau_q[i], \tau_d[j])$. We can convert WED to ERP and EDR by defining $sub$, $ins$ and $del$: (i) ERP. We can convert WED into ERP by making $sub(\tau_q[i], \tau_d[j])=d(\tau_q[i], \tau_d[j])$, $del(\tau_q[i])=d(\tau_q[i], q_c)$, $ins(\tau_d[j])=d(\tau_d[j], q_c)$, where $q_c$ is a fixed point on the map (e.g., the center of the region). (ii) EDR. $ins(\tau_d[j])$ and $del(\tau_q[i])$ in EDR are both 1, while $sub(\tau_q[i], \tau_d[j])$ takes a value of 0 if and only if $d(\tau_d[j], q_c) < \epsilon$ holds; otherwise, $sub(\tau_q[i], \tau_d[j])=1$.

FD is similar to DTW. In the same way, we can obtain the expressions for $C_{i,j}$ when FD is the distance function.
{\scriptsize\begin{equation}
	C_{i,j}=\left\{
	\begin{array}{ll}
		sub\left(\tau_q[i],\tau_d[j]\right), & i=1 \\
		\max\left\{C_{i-1,j},sub(\tau_q[i],\tau_d[1])\right\}, & j=1,i\neq1 \\
		\max\{\min\{C_{i-1,j},C_{i,j-1},C_{i-1,j-1}\},\\
		sub\left(\tau_q[i],\tau_d[j]\right)\},& otherwise
	\end{array}
	\right.
	\label{eq:fd}
\end{equation}}
When the order-insensitive distance functions are used as the trajectory distance functions, the calculation of the conversation cost does not depend on the position of the current point in the trajectory.

Unfortunately, our method cannot be applied to the subtrajectory search problem when an order-sensitive trajectory distance function (such as LCSS) is used. This is because we do not consider the position from which the subtrajectory starts when computing $C_{i,j}$. When $\tau_q[i]$ matches $\tau_d[j]$, the cost of converting $\tau_q[i]$ to $\tau_d[j]$ is only related to the matching relationship between $\tau_q[i-1]$ and $\tau_d[k]$ ($1\leq k \leq j$). However, when LCSS is used as the distance function, the cost of converting $\tau_q[i]$ to $\tau_d[j]$ is also related to the matching relation of $\tau_q[1]$, i.e., the starting position of the subtrajectory. The starting position of the subtrajectories has a great influence on judging the distance between the points in two trajectories when LCSS is used as the distance function. Therefore, our algorithm is not suitable for a class of distance functions that considers the position of points in the trajectory, such as LCSS.
 \section{Experimental Study}
\label{sec:experimental}

%\begin{table}[t]
%	\begin{center}
	%		{\small 
		%			\caption{\small Dataset statistics.} \label{tab:statistics}
		%			\vspace{1ex}
		%			\begin{tabular}{c|c|c|c}
			%				{\bf  dataset} & {\bf  Xi'an}&{\bf  Porto} & {\bf Beijing}\\\hline\hline
			%				\makecell[c]{trajectories} &149742  & 1710670& 10357 \\
			%				average length & 401 & 73 &1705\\
			%				interval (s) & 3  & 15 & 300 \\ \hline
			%			\end{tabular}
		%		}
	%	\end{center}
%\end{table}

\begin{table*}[t!]\vspace{1ex}
	\begin{center}
		{\small 
			\caption{\small Effectiveness of Algorithms.} \label{tab:performance}
			\vspace{1ex}
			\begin{tabular}{c|c|c|c|c|c|c|c|c|c|c|c|c|c}
				\hline
				\multirow{2}{*}{Dataset}&\multirow{2}{*}{Algorithm}&\multicolumn{3}{c|}{DTW}&\multicolumn{3}{c|}{EDR}&\multicolumn{3}{c|}{ERP}&\multicolumn{3}{c}{FD} \\\cline{3-14}
				&&AR&MR&RR&AR&MR&RR&AR&MR&RR&AR&MR&RR \\ \hline\hline
				
				\multirow{8}{*}{Porto}&POS&3.033461&351.01&13.09\%&1.432727&321.91&15.35\%&1.498225&58.99&1.83\%&2.936241&210.86&5.02\%\\ 
				&PSS&1.976035&128.91&6.80\%&1.352727&237.01&9.11\%&2.532003&139.10&5.71\%&1.378006&154.50&2.60\%\\ 
				&RLS&1.739143&97.56&5.13\%&1.343469&190.54&7.32\%&2.232406&114.62&4.70\%&1.384809&134.26&2.25\%\\ 
				&RLS-Skip&2.033048&142.68&7.01\%&1.354480&234.13&9.28\%&2.447045&134.79&5.48\%&1.643495&173.68&3.08\%\\ 
				&CMA&\textbf{1}&\textbf{1}&\textbf{0\%}&\textbf{1}&\textbf{1}&\textbf{0\%}&\textbf{1}&\textbf{1}&\textbf{0\%}&\textbf{1}&\textbf{1}&\textbf{0\%}\\ 
				&ExactS&1&1&0\%&1&1&0\%&1&1&0\%&1&1&0\%\\ 
				&Spring&1&1&0\%&-&-&-&-&-&-&-&-&-\\ 
				&GB&-&-&-&-&-&-&-&-&-&1&1&0\%\\ 
				\hline
				\multirow{8}{*}{Xi'an}&POS&35.563473&10505.00&18.12\%&1.516196&286.84&1.14\%&1.453240&34.51&0.15\%&20.502515&3771.50&5.31\%\\ 
				&PSS&4.374571&676.99&2.99\%&1.460815&378.33&1.34\%&1.703792&41.49&0.17\%&1.383835&25.90&0.03\%\\ 
				&RLS&3.613057&511.53&2.26\%&1.434072&304.03&1.08\%&1.564813&34.32&0.14\%&1.389806&22.68&0.02\%\\ 
				&RLS-Skip&7.318057&1567.09&4.25\%&1.459621&352.07&1.26\%&1.691469&41.94&0.17\%&3.531339&434.34&0.60\%\\ 
				&CMA&\textbf{1}&\textbf{1}&\textbf{0\%}&\textbf{1}&\textbf{1}&\textbf{0\%}&\textbf{1}&\textbf{1}&\textbf{0\%}&\textbf{1}&\textbf{1}&\textbf{0\%}\\ 
				&ExactS&1&1&0\%&1&1&0\%&1&1&0\%&1&1&0\%\\ 
				&Spring&1&1&0\%&-&-&-&-&-&-&-&-&-\\ 
				&GB&-&-&-&-&-&-&-&-&-&1&1&0\%\\ 
				\hline
				%\multirow{8}{*}{Chengdu}&POS&23.042967&3216.08&16.90\%&1.615970&285.20&1.28\%&1.415800&20.42&0.22\%&12.534516&978.02&3.51\%\\ 
				%&PSS&4.902728&345.92&3.05\%&1.598859&376.55&3.83\%&2.464106&55.14&0.52\%&1.294845&103.86&0.35\%\\ 
				%&RLS&4.025692&261.47&2.31\%&1.549786&302.60&3.08\%&2.177677&45.56&0.43\%&1.313524&90.32&0.30\%\\ 
				%&RLS-Skip&6.559528&597.59&4.19\%&1.593222&350.39&3.41\%&2.377169&53.13&0.51\%&2.591752&207.45&0.72\%\\ 
				%&CMA&\textbf{1}&\textbf{1}&\textbf{0\%}&\textbf{1}&\textbf{1}&\textbf{0\%}&\textbf{1}&\textbf{1}&\textbf{0\%}&\textbf{1}&\textbf{1}&\textbf{0\%}\\ 
				%&ExactS&1&1&0\%&1&1&0\%&1&1&0\%&1&1&0\%\\ 
				%\hline
			\end{tabular}
		}
	\end{center}
\end{table*}

\subsection{Experimental Settings}

\noindent\textbf{Data Sets.} \revision{We conduct experiments on three real data sets: (i) Porto \cite{porto} is a dataset describing a whole year (i.e., from July 1st, 2013 to June 30th, 2014) of the trajectories for all the 442 taxis running in the city of Porto (i.e., size: $23.44km\times 24.7km$, longitude: \ang{-8.75}$\sim$\ang{-8.47}, latitude: \ang{41.02}$\sim$\ang{ 41.25}). There are 1,710,670 trajectories with 15-seconds point intervals, whose average length is 67. (ii)  Xi'an Taxi Trip Dataset. DiDi Chuxing GAIA Open Dataset \cite{gaia} provides a dataset of taxi trips in Xi'an area (i.e., size: $33.43km\times 23.5km$, longitude: \ang{108.78}$\sim$\ang{109.05}, latitude: \ang{34.14}$\sim$\ang{34.38}). We use the taxi trip records on October 1st. There are 149,742 trajectories with 3-seconds point intervals, whose average length is 401. (iii) T-Drive Data~\cite{yuan2010t, yuan2011driving}. T-Drive Data provides taxi trips in Beijing area (i.e., size:  $49.80km\times 42.11km$, longitude: \ang{116.15}$\sim$\ang{116.60}, latitude: \ang{39.75}$\sim$\ang{40.10}). There are 10,357 trajectories with 300-seconds point intervals, whose average length is 1705.}

In this experiment, we generate $Q$ query trajectories from all trajectories and take the average of the results. Specifically, we select $Q$ trajectories in uniform random as query trajectory, while the other trajectories are used as data trajectories. We set $Q$ to 100 by default. 
%Finally, we present the statistical information of the trajectory data in Table \ref{tab:statistics},  where the default values of parameters are in bold font.

\noindent\textbf{Searching Algorithms.}
We mainly compare our CMA algorithm with the following existing methods:

\begin{enumerate}[label=\arabic*),align=right,itemindent=1em,labelsep=2pt,labelwidth=1em,leftmargin=0pt,nosep]
\item ExactS. When it computes the distance between the query trajectory and some subtrajectories of the data trajectory, it records these intermediate results. Then, ExactS can utilize a dynamic programming technique to optimize the time complexity of searching the optimal subtrajectory from a data trajectory to $O(mn^2)$.

\item PSS and POS. The main idea of PSS is to traverse each point of a data trajectory to find the appropriate splitting position. The current optimal subtrajectory is updated by comparing the distance between the subtrajectory before the splitting point and the subtrajectory after the splitting point and the query trajectory. Then, the next suitable splitting point is found starting from the current splitting point. PSS can find an approximate solution of the optimal subtrajectory within the time complexity of $O(mn)$. As a variant of PSS, POS does not consider the subtrajectory after the splitting point. Therefore, the efficiency of POS is substantially improved compared with PSS, but the result quality of  PSS is better than that of POS.

\item RLS and RLS-Skip. RLS is an algorithm based on reinforcement learning to determine whether to split the current point, and RLS takes a different action based on the state of the current point. RLS-Skip, on the other hand, adds a new action to RLS by not segmenting the current point and skipping the next point to traverse the entire trajectory faster. As a result, RLS-Skip can get a solution in less time, while RLS can find a better solution.

\item Spring and Greedy Backtracking (GB). Both algorithms are of time complexity $O\left(mn\right)$. However, unlike the method proposed in this paper, Spring and GB can only be applied to specific distance functions, DTW and FD, respectively. Therefore, we will test the performance and results of these two algorithms under particular functions in different data sets.
\end{enumerate}

\begin{figure*}[t!]
	\centering\vspace{-2ex}
	\subfigure{
		\scalebox{0.4}[0.4]{\includegraphics{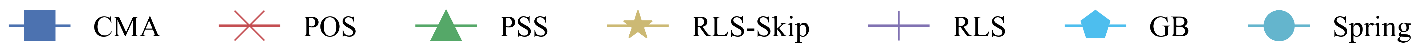}}}\hfill\\
	\addtocounter{subfigure}{-1}\vspace{-2ex}
	\subfigure[][{\scriptsize DTW (Porto)}]{
		\scalebox{0.19}[0.19]{\includegraphics{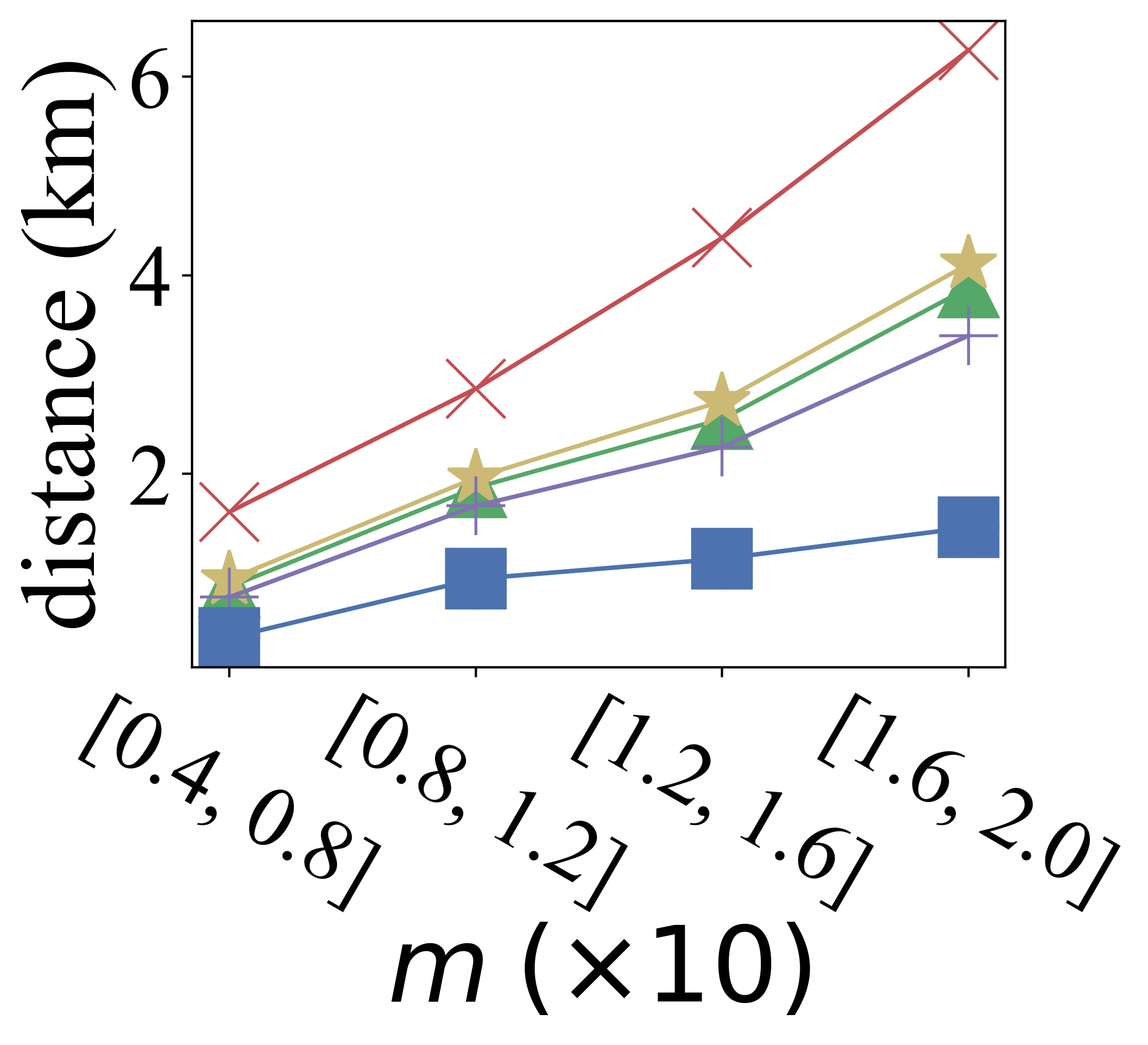}}
		\label{fig:porto_dtw_score}}
	\subfigure[][{\scriptsize DTW (Xi'an)}]{
		\scalebox{0.19}[0.19]{\includegraphics{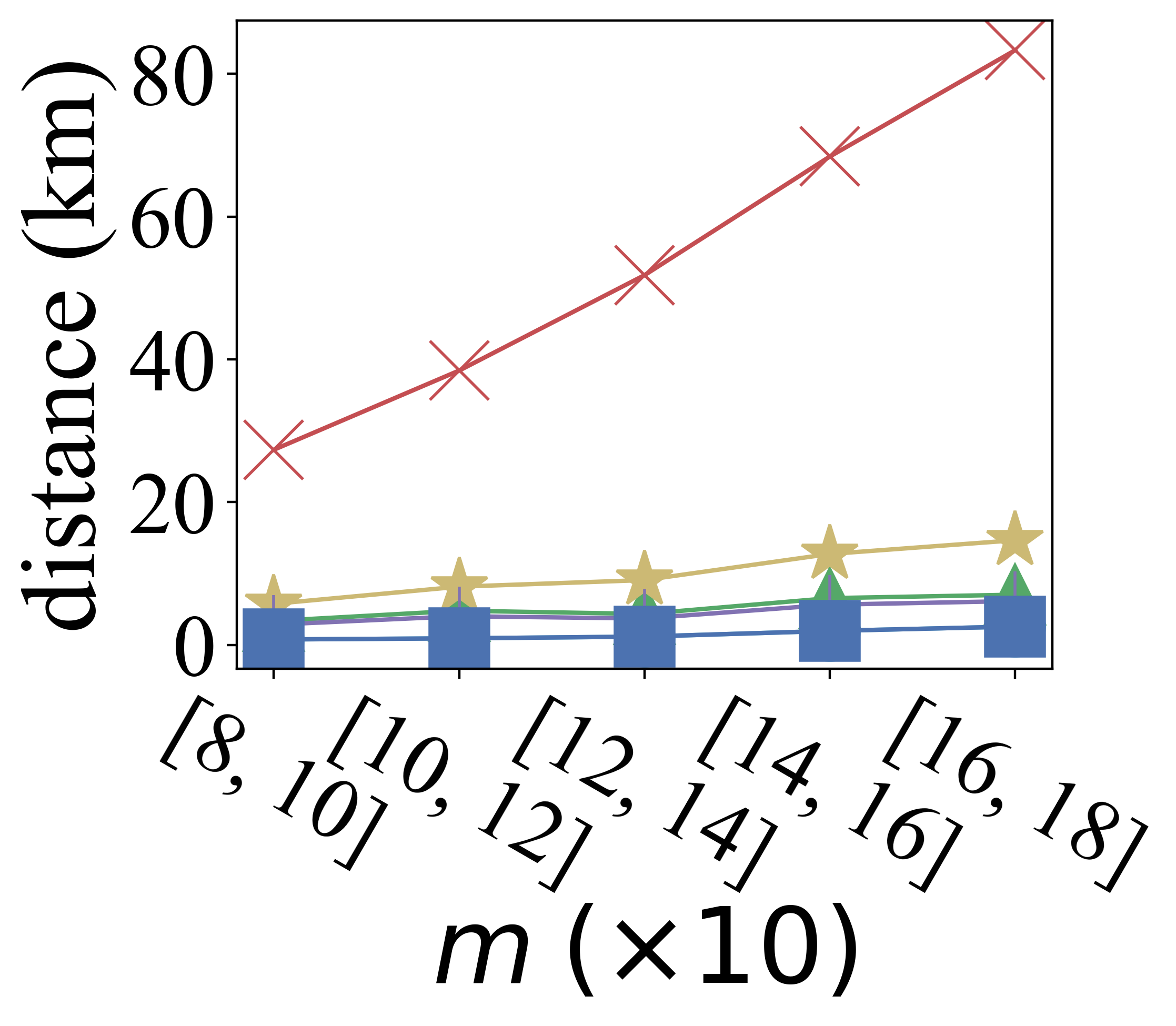}}
		\label{fig:xian_dtw_score}}
	\subfigure[][{\scriptsize DTW (Beijing)}]{
		\scalebox{0.19}[0.19]{\includegraphics{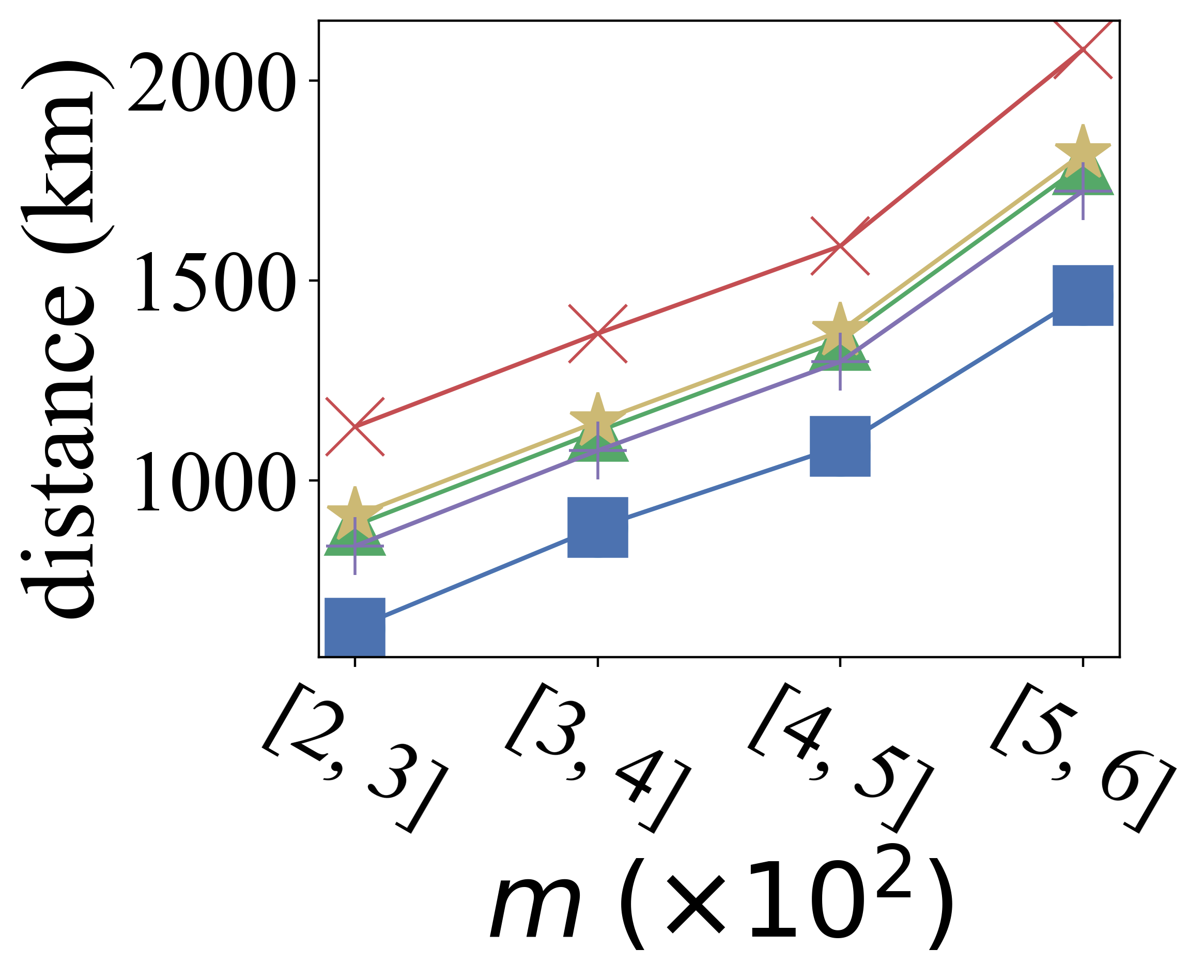}}
		\label{fig:beijing_dtw_score}}
	\subfigure[][{\scriptsize DTW (Porto)}]{
		\scalebox{0.19}[0.19]{\includegraphics{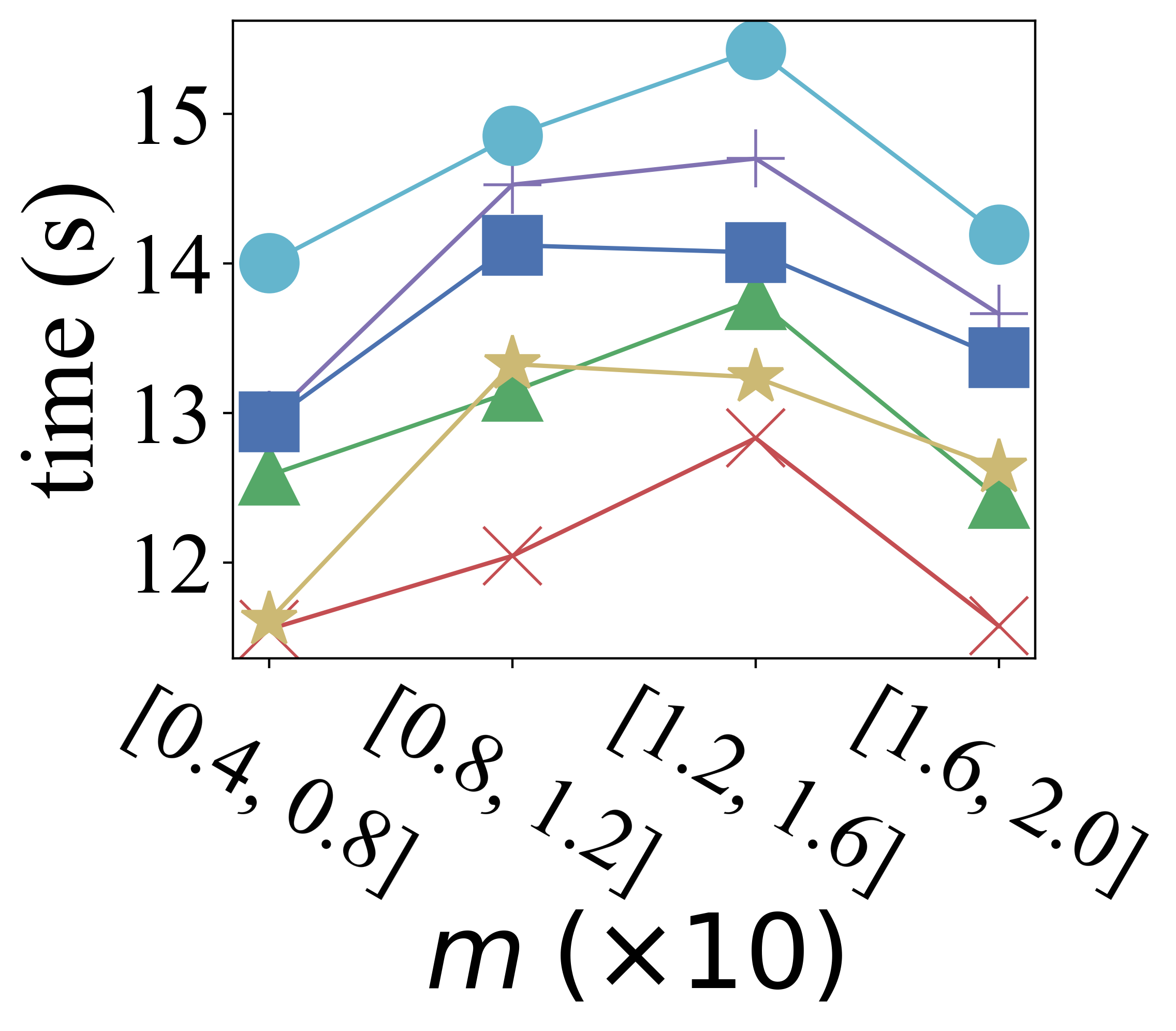}}
		\label{fig:porto_dtw}}
	\subfigure[][{\scriptsize DTW (Xi'an)}]{
		\scalebox{0.19}[0.19]{\includegraphics{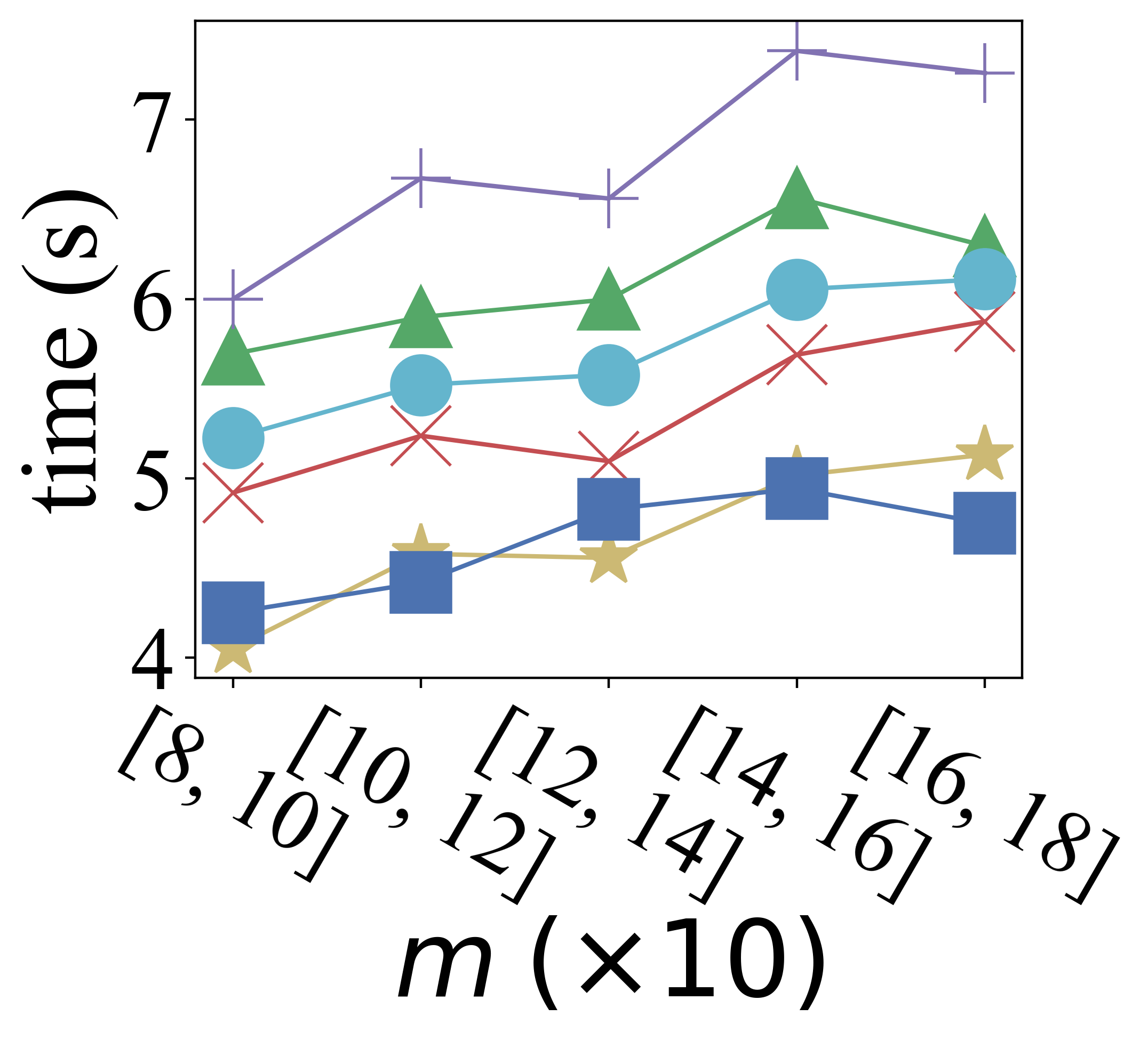}}
		\label{fig:xian_dtw}}
	\subfigure[][{\scriptsize DTW (Beijing)}]{
		\scalebox{0.19}[0.19]{\includegraphics{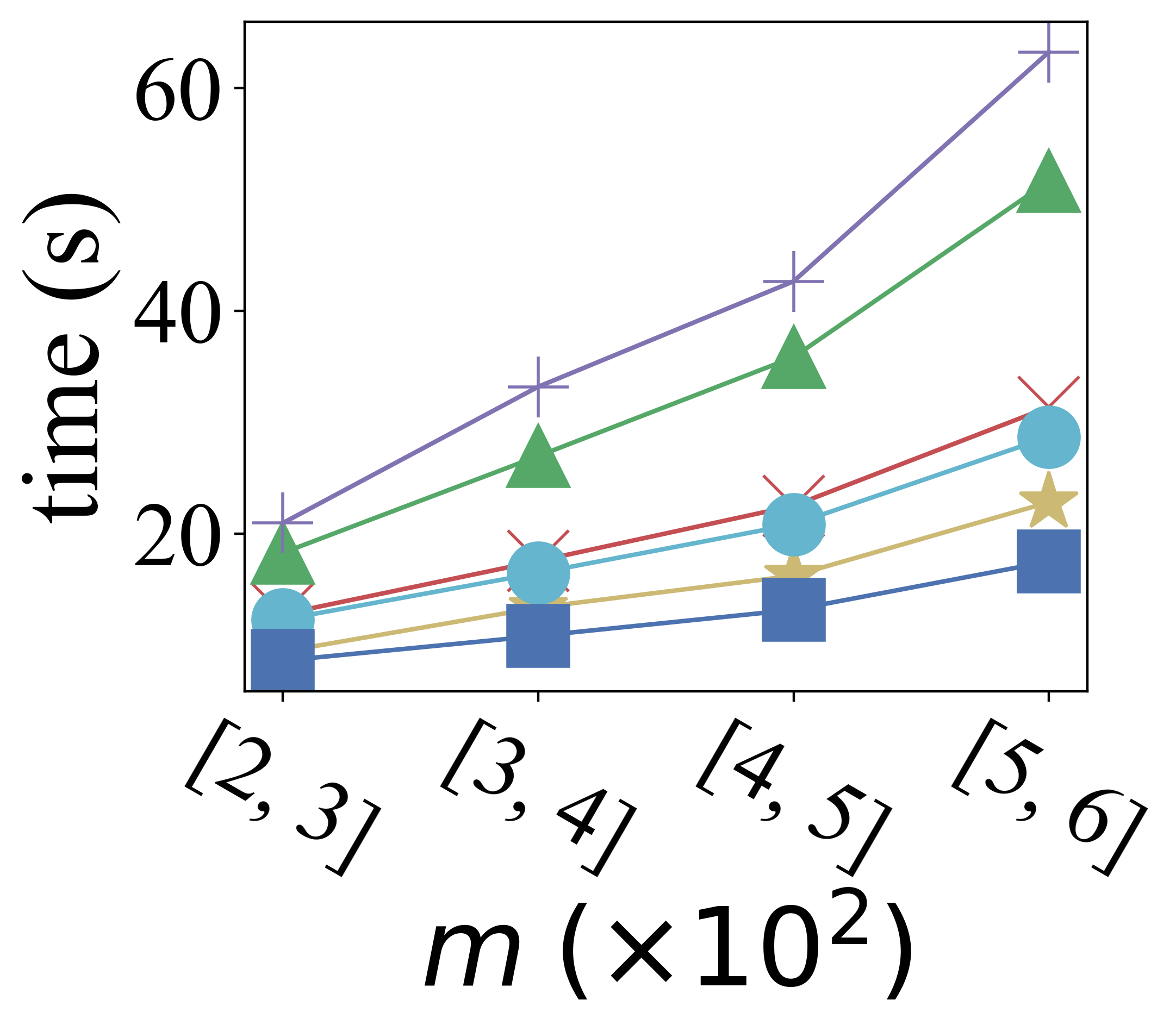}}
		\label{fig:beijing_dtw}}
	\\
	\subfigure[][{\scriptsize EDR (Porto)}]{
		\scalebox{0.19}[0.19]{\includegraphics{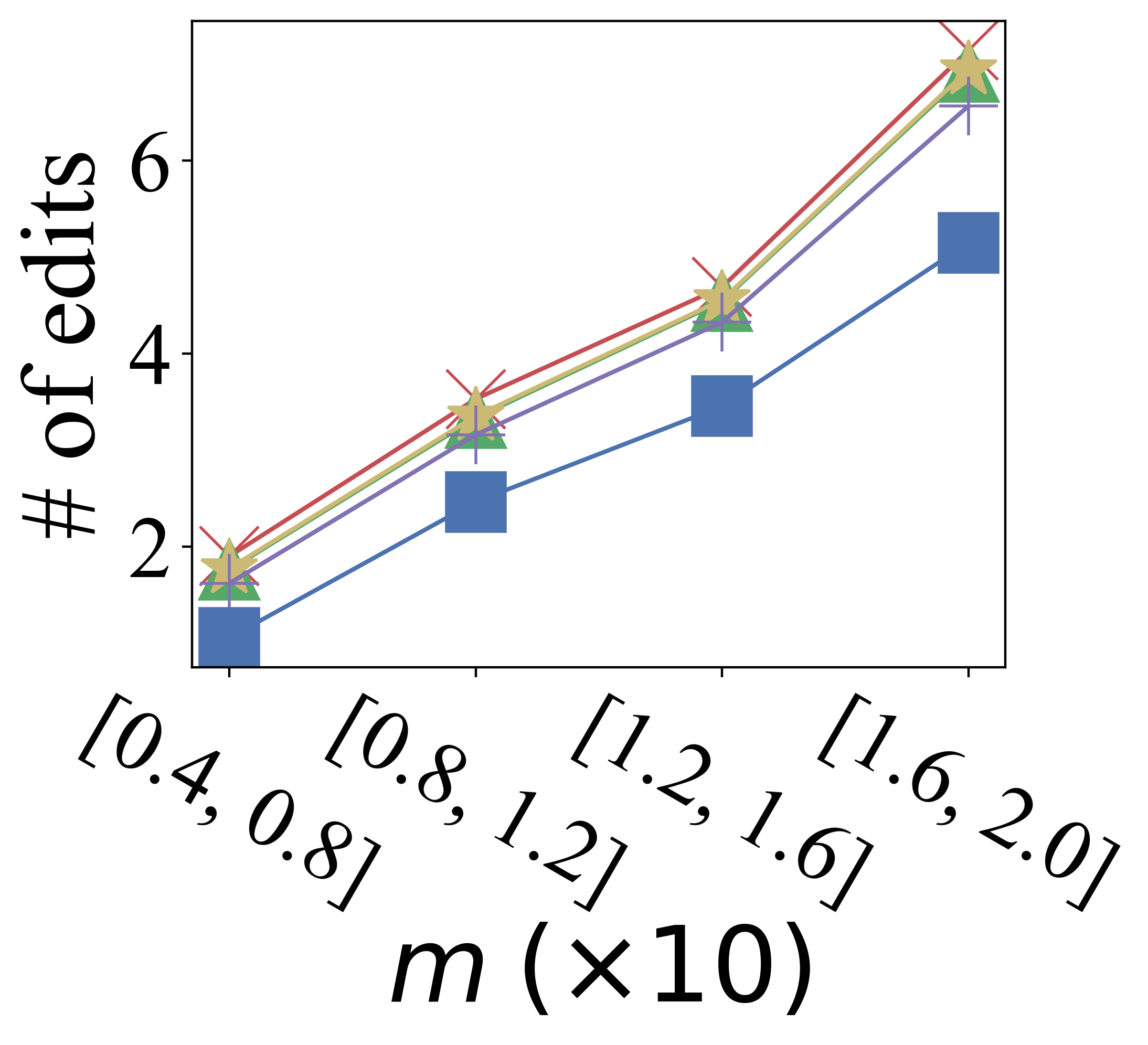}}
		\label{fig:porto_edr_score}}
	\subfigure[][{\scriptsize EDR (Xi'an)}]{
		\scalebox{0.19}[0.19]{\includegraphics{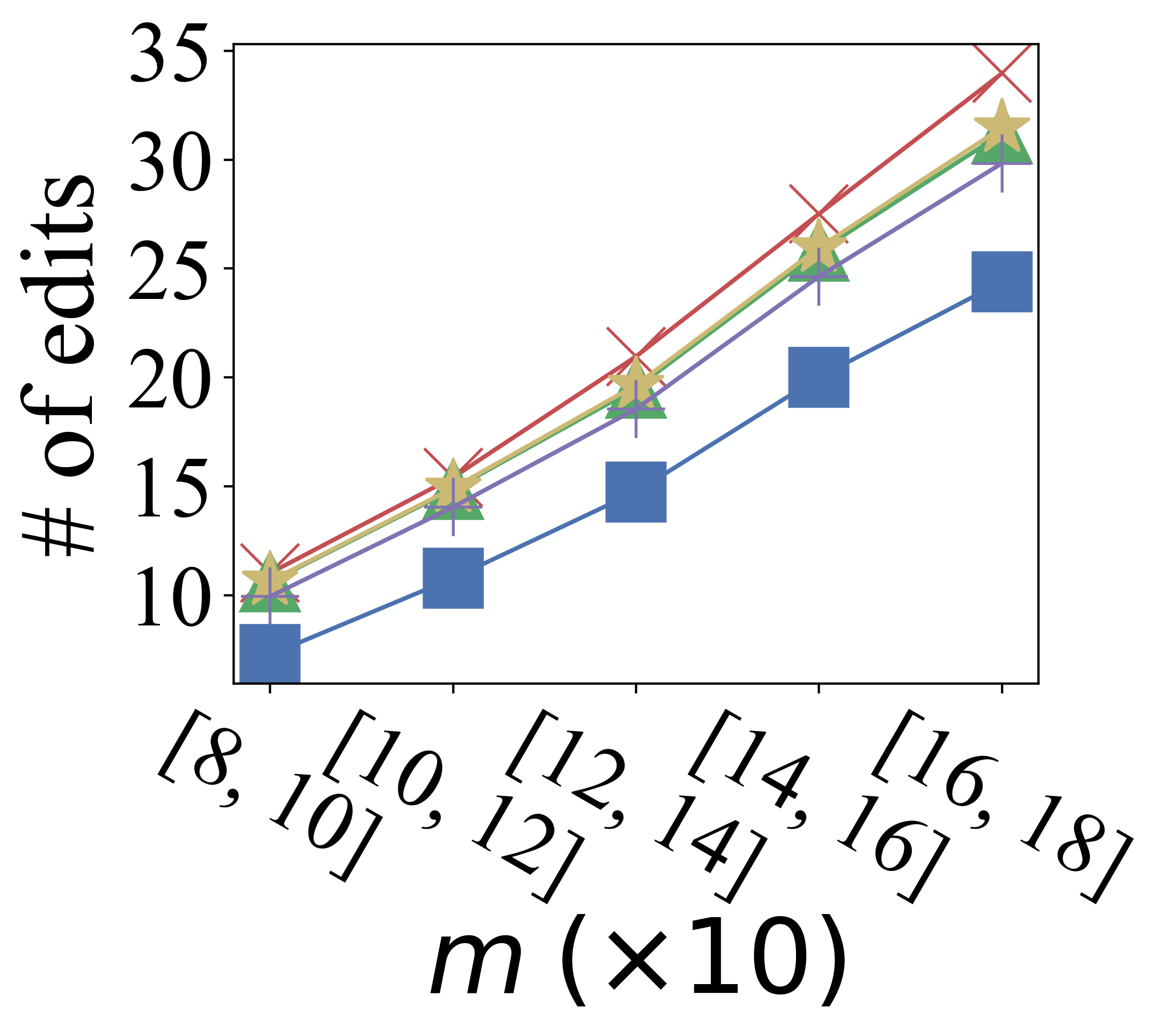}}
		\label{fig:xian_edr_score}}
	\subfigure[][{\scriptsize EDR (Beijing)}]{
		\scalebox{0.19}[0.19]{\includegraphics{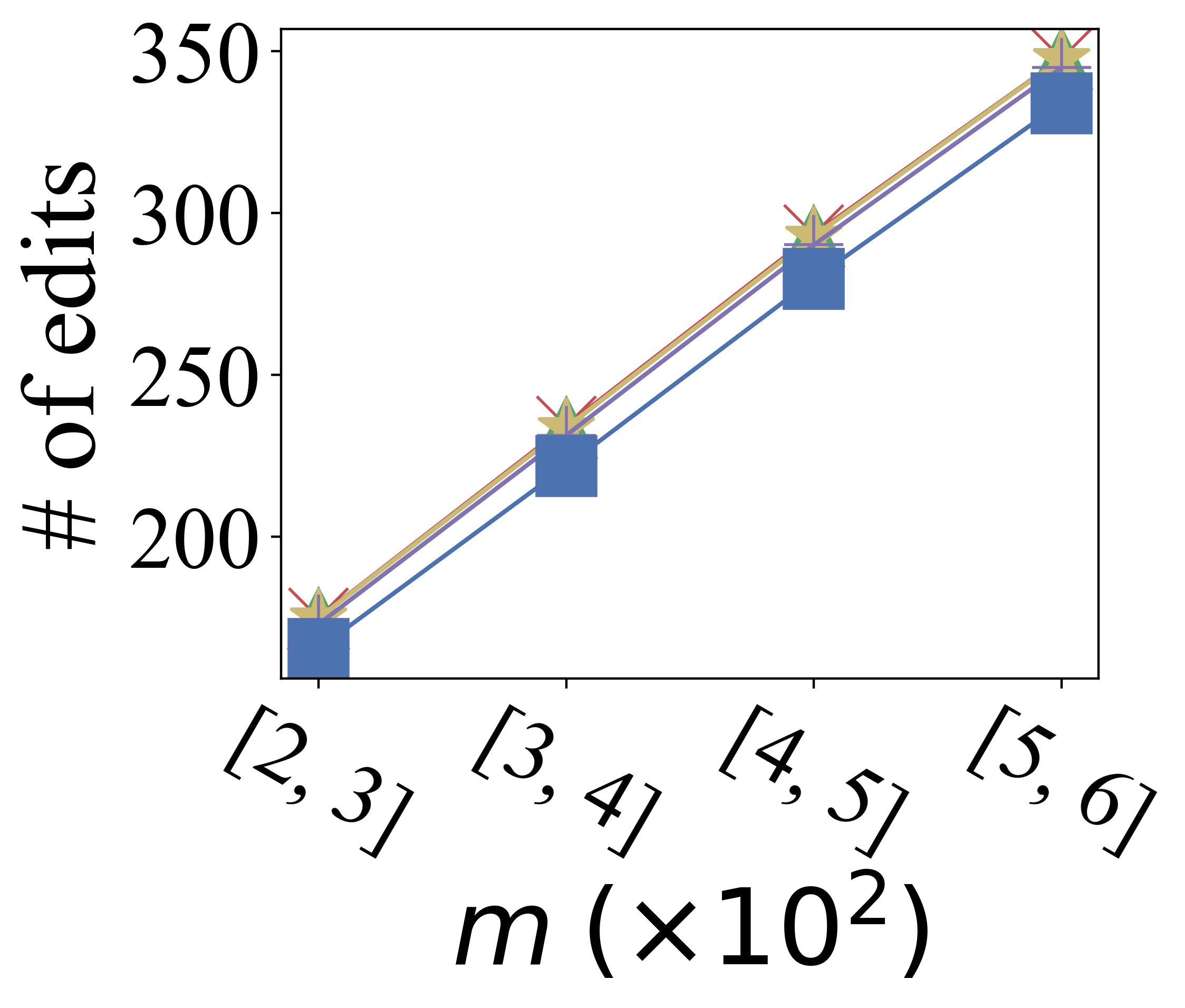}}
		\label{fig:beijing_edr_score}}
	\subfigure[][{\scriptsize EDR (Porto)}]{
		\scalebox{0.19}[0.19]{\includegraphics{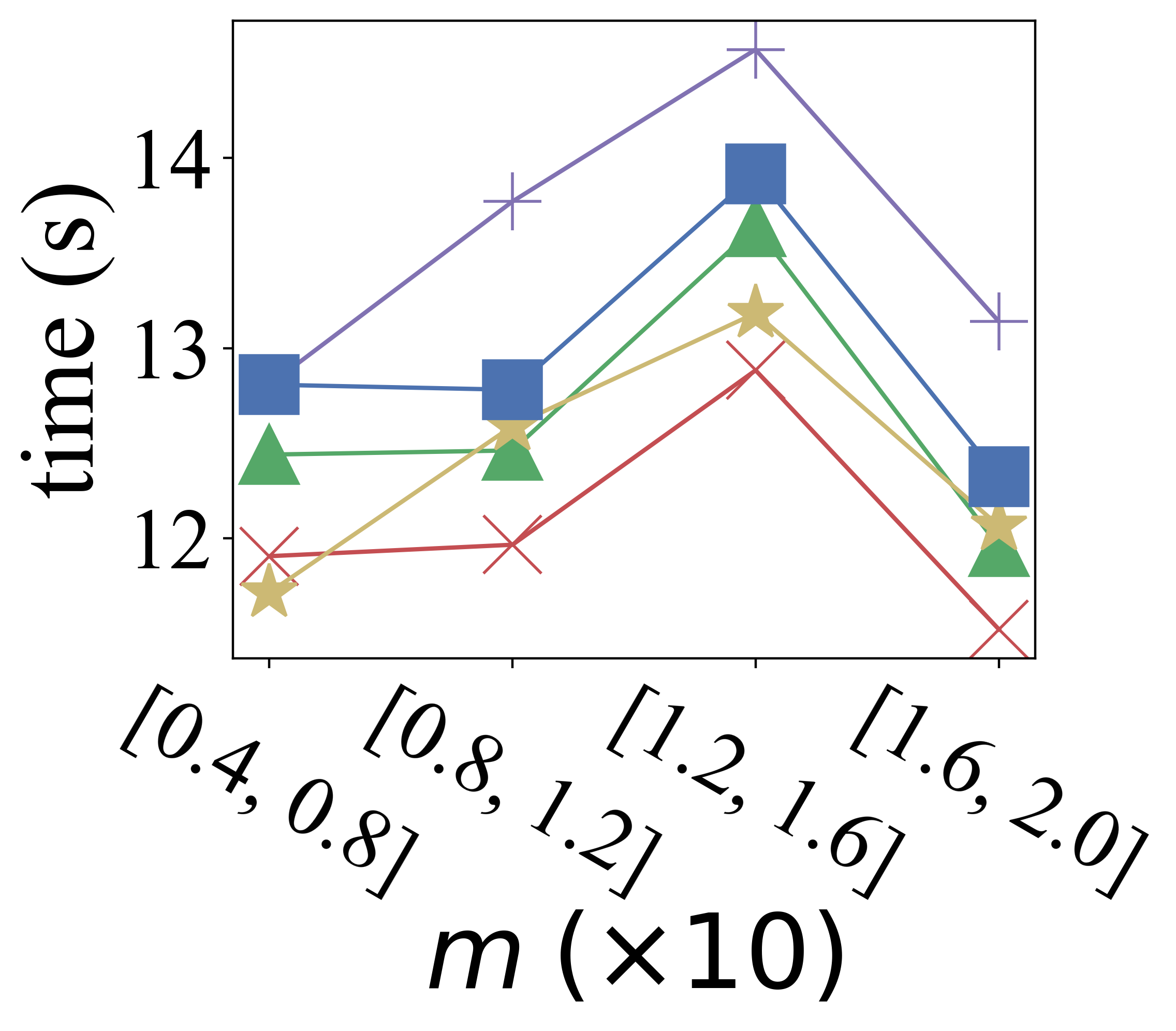}}
		\label{fig:porto_edr}}
	\subfigure[][{\scriptsize EDR (Xi'an)}]{
		\scalebox{0.19}[0.19]{\includegraphics{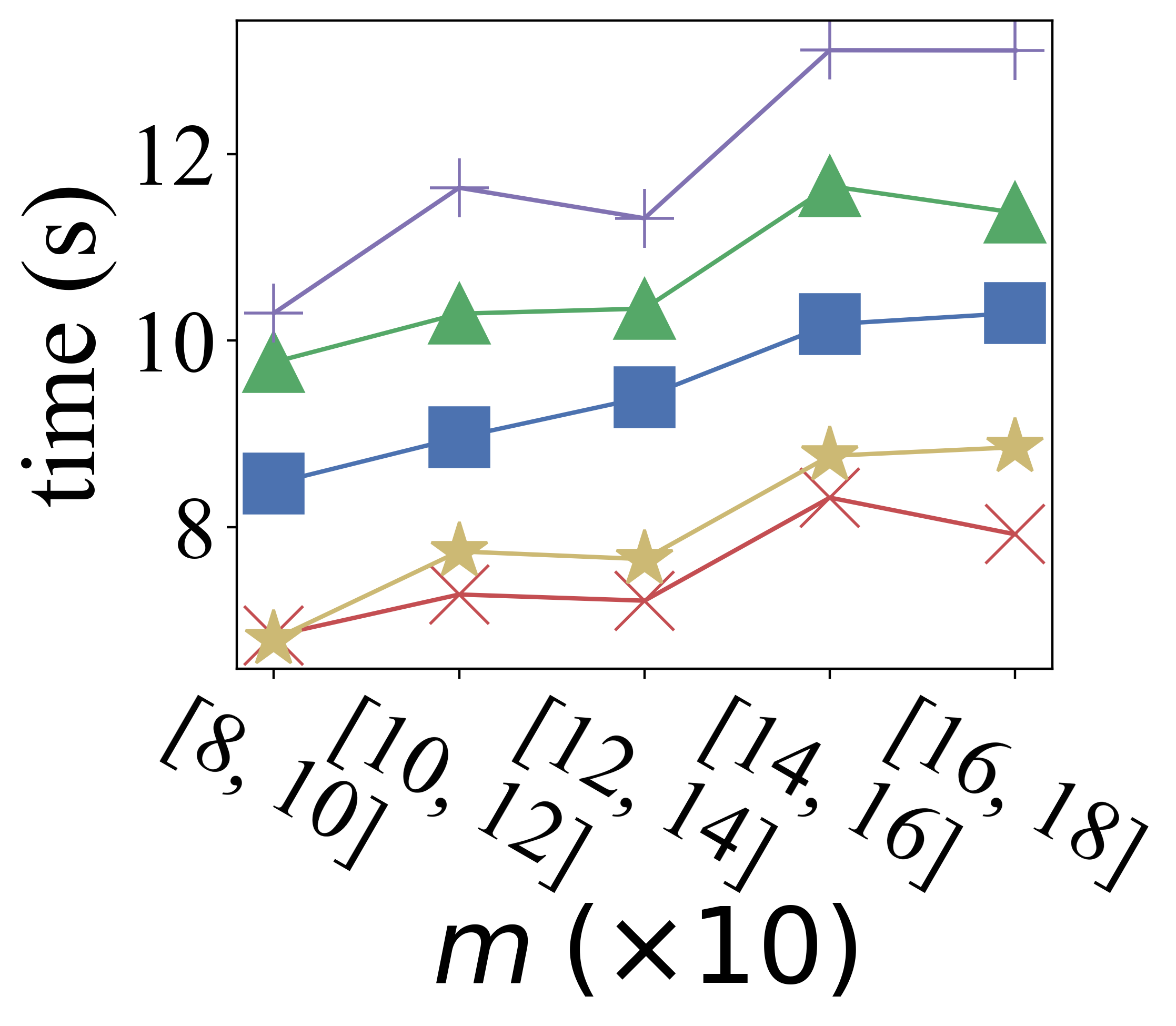}}
		\label{fig:xian_edr}}
	\subfigure[][{\scriptsize EDR (Beijing)}]{
		\scalebox{0.19}[0.19]{\includegraphics{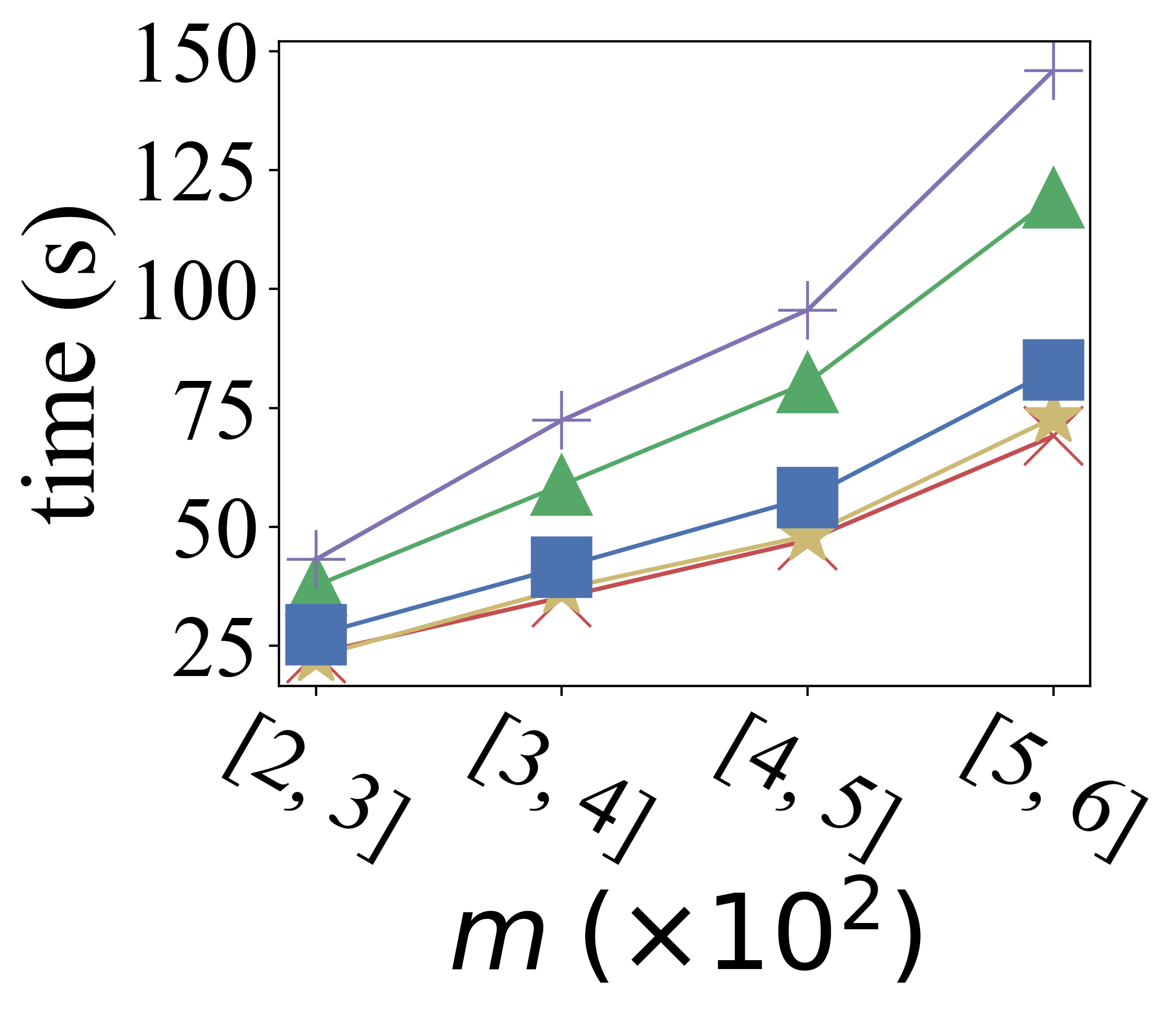}}
		\label{fig:beijing_edr}}
	\\
	\subfigure[][{\scriptsize ERP (Porto)}]{
		\scalebox{0.19}[0.19]{\includegraphics{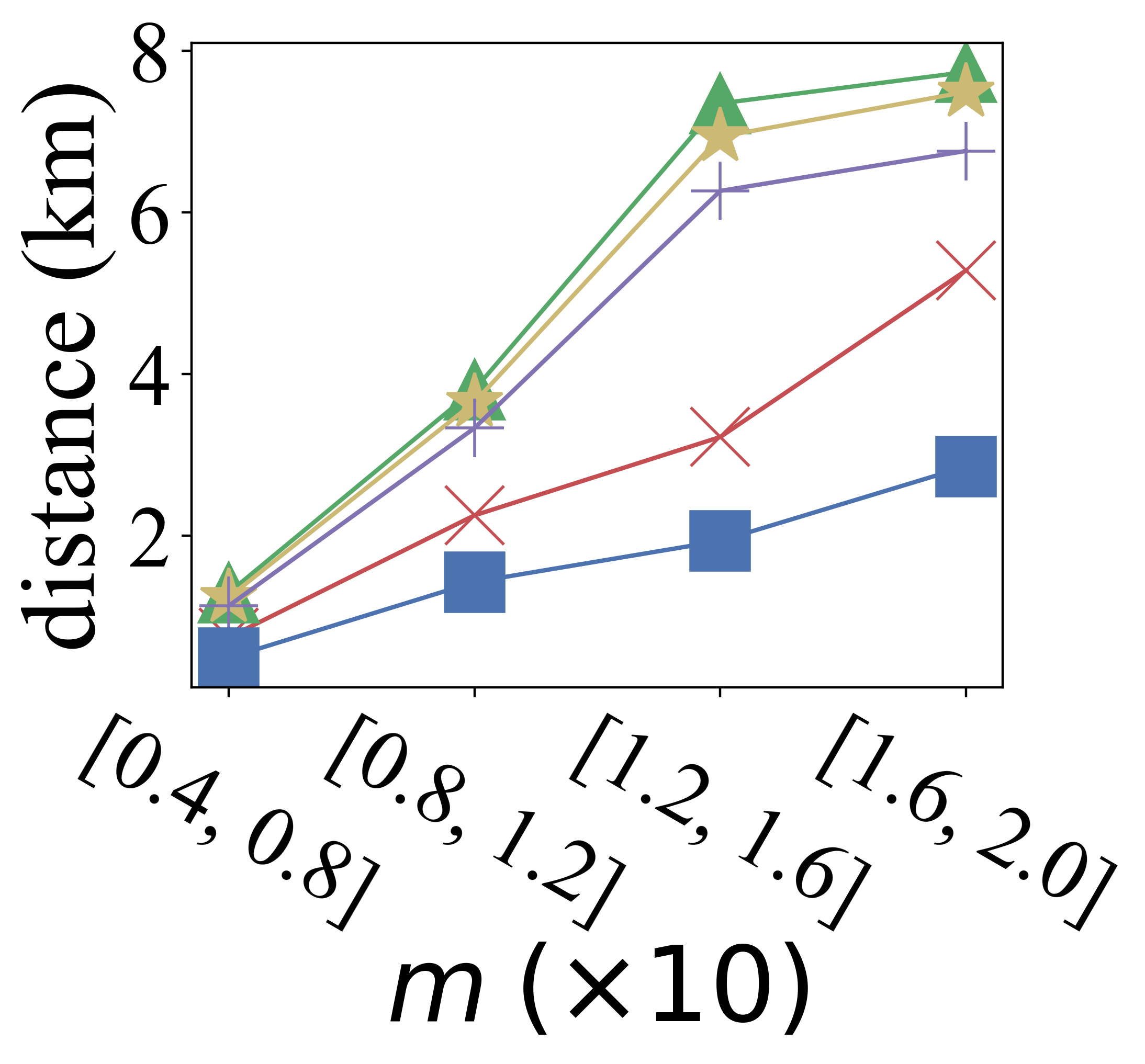}}
		\label{fig:porto_erp_score}}
	\subfigure[][{\scriptsize ERP (Xi'an)}]{
		\scalebox{0.19}[0.19]{\includegraphics{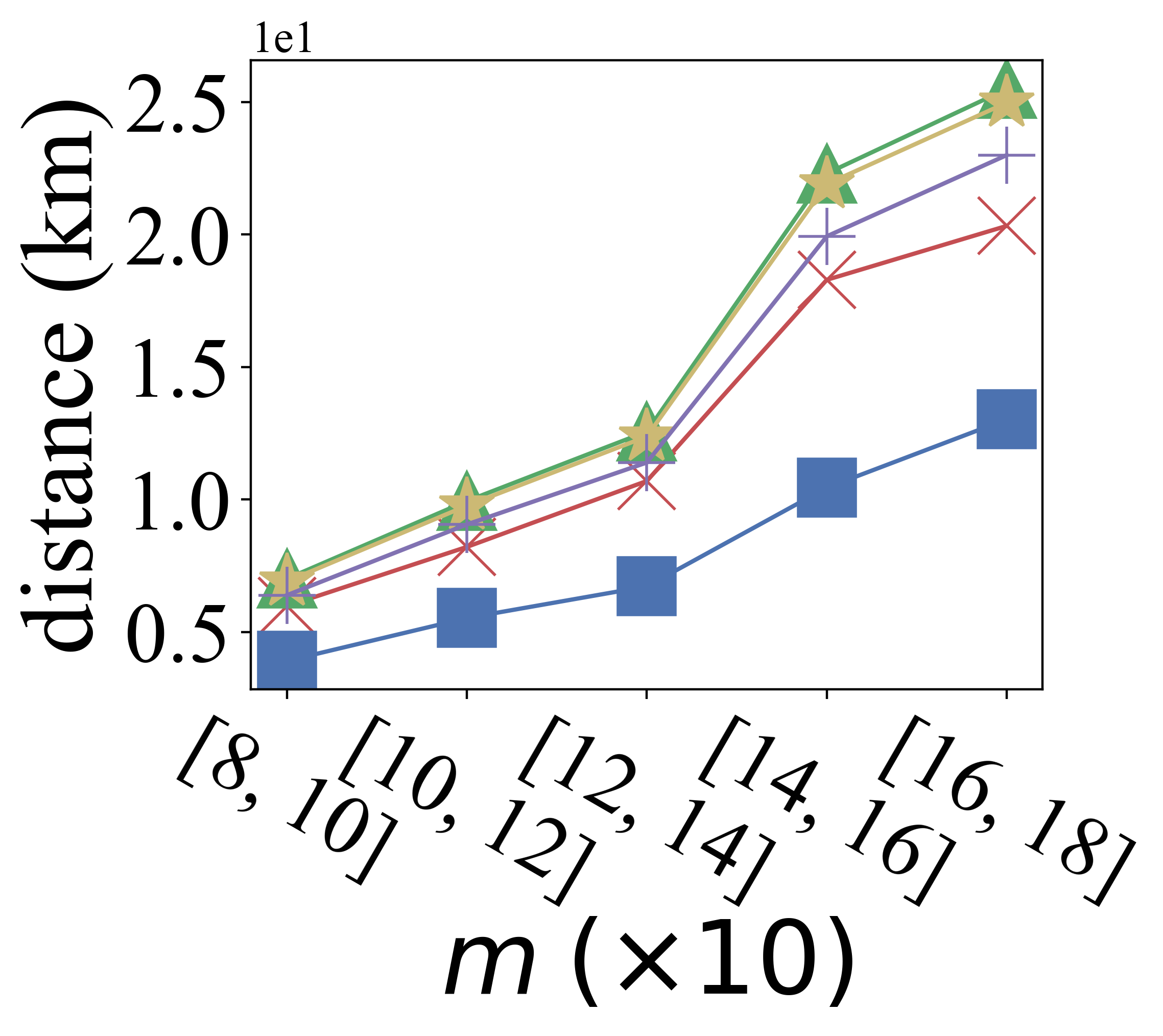}}
		\label{fig:xian_erp_score}}
	\subfigure[][{\scriptsize ERP (Beijing)}]{
		\scalebox{0.19}[0.19]{\includegraphics{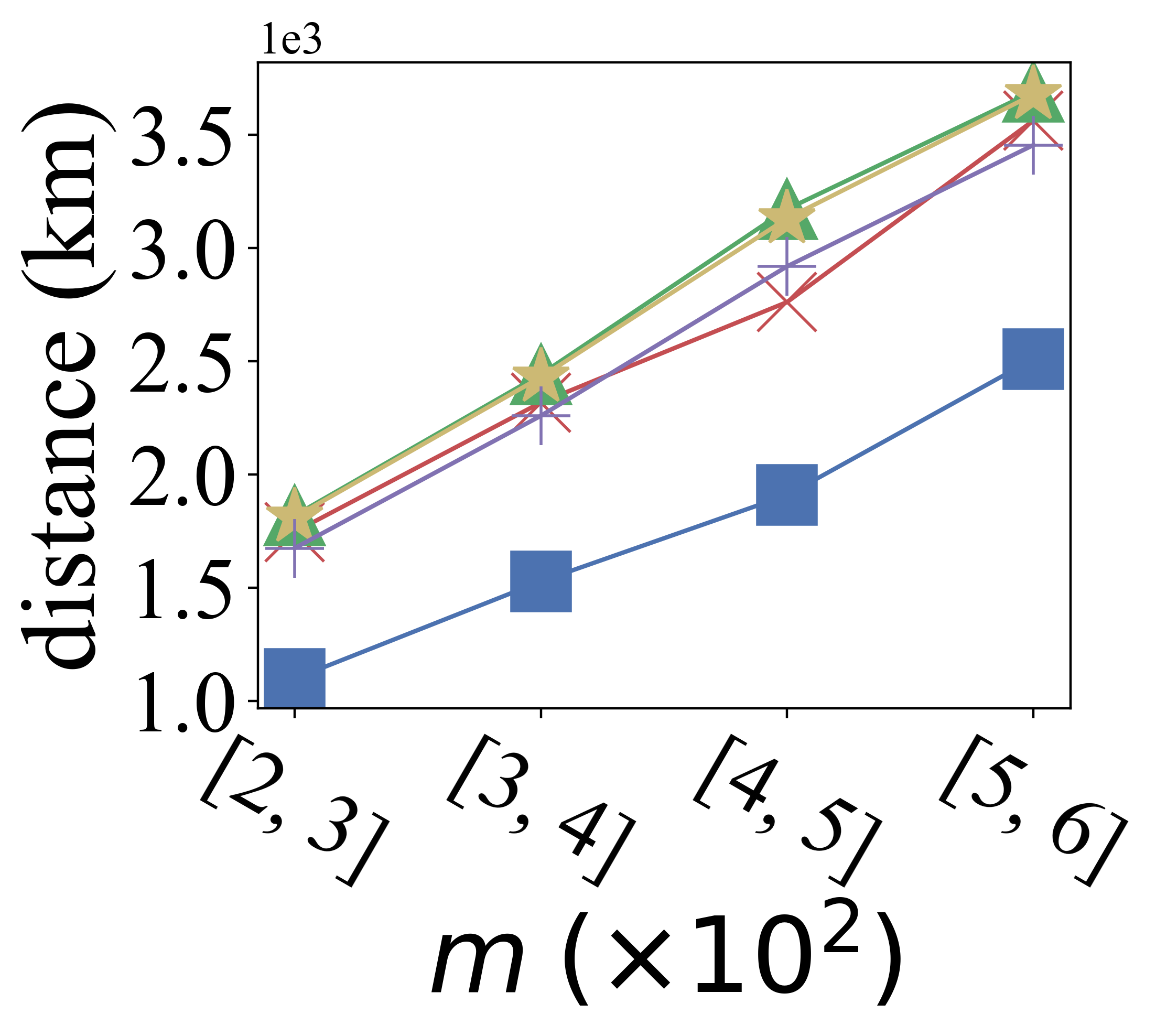}}
		\label{fig:beijing_erp_score}}
	\subfigure[][{\scriptsize ERP (Porto)}]{
		\scalebox{0.19}[0.19]{\includegraphics{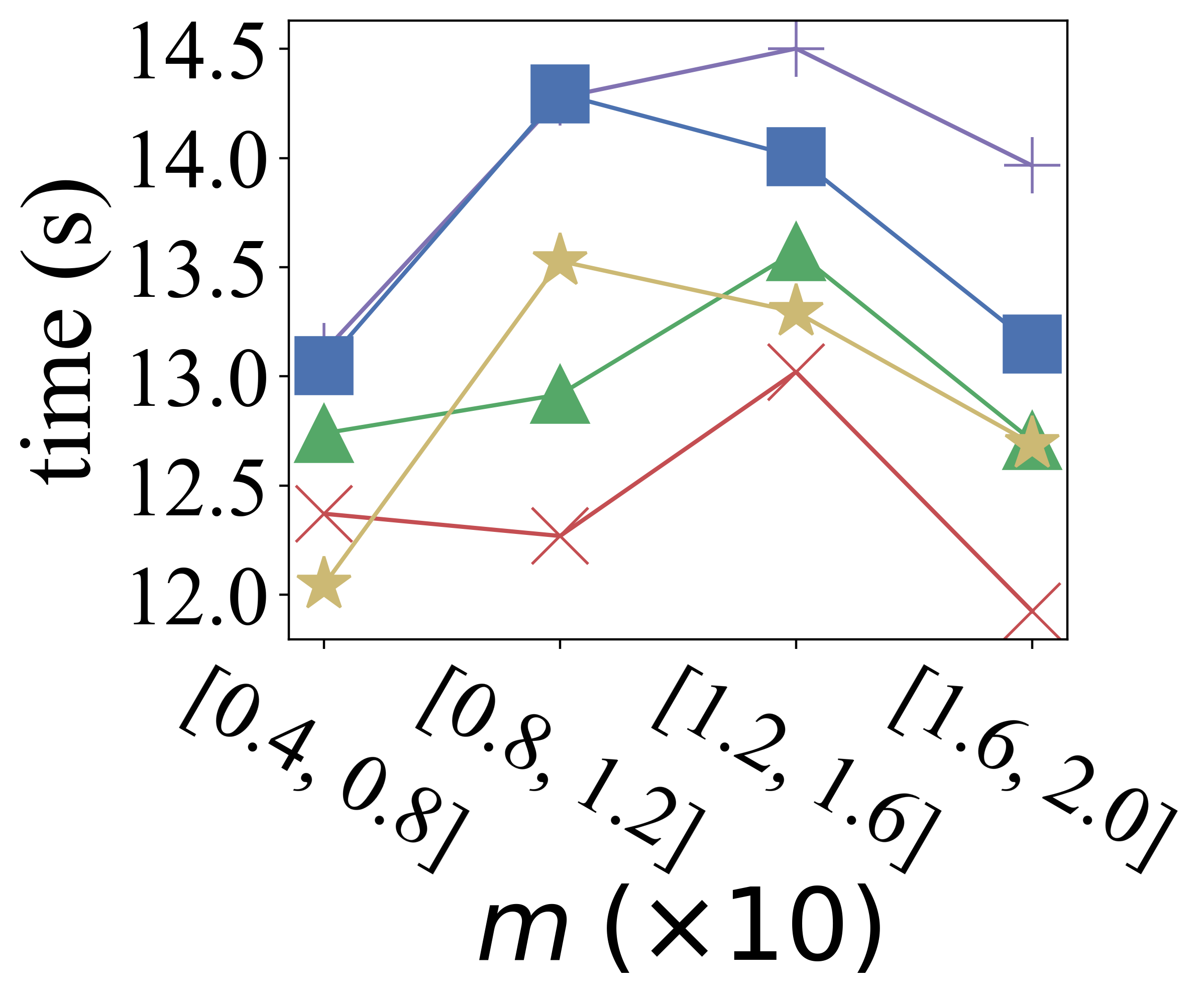}}
		\label{fig:porto_erp}}
	\subfigure[][{\scriptsize ERP (Xi'an)}]{
		\scalebox{0.19}[0.19]{\includegraphics{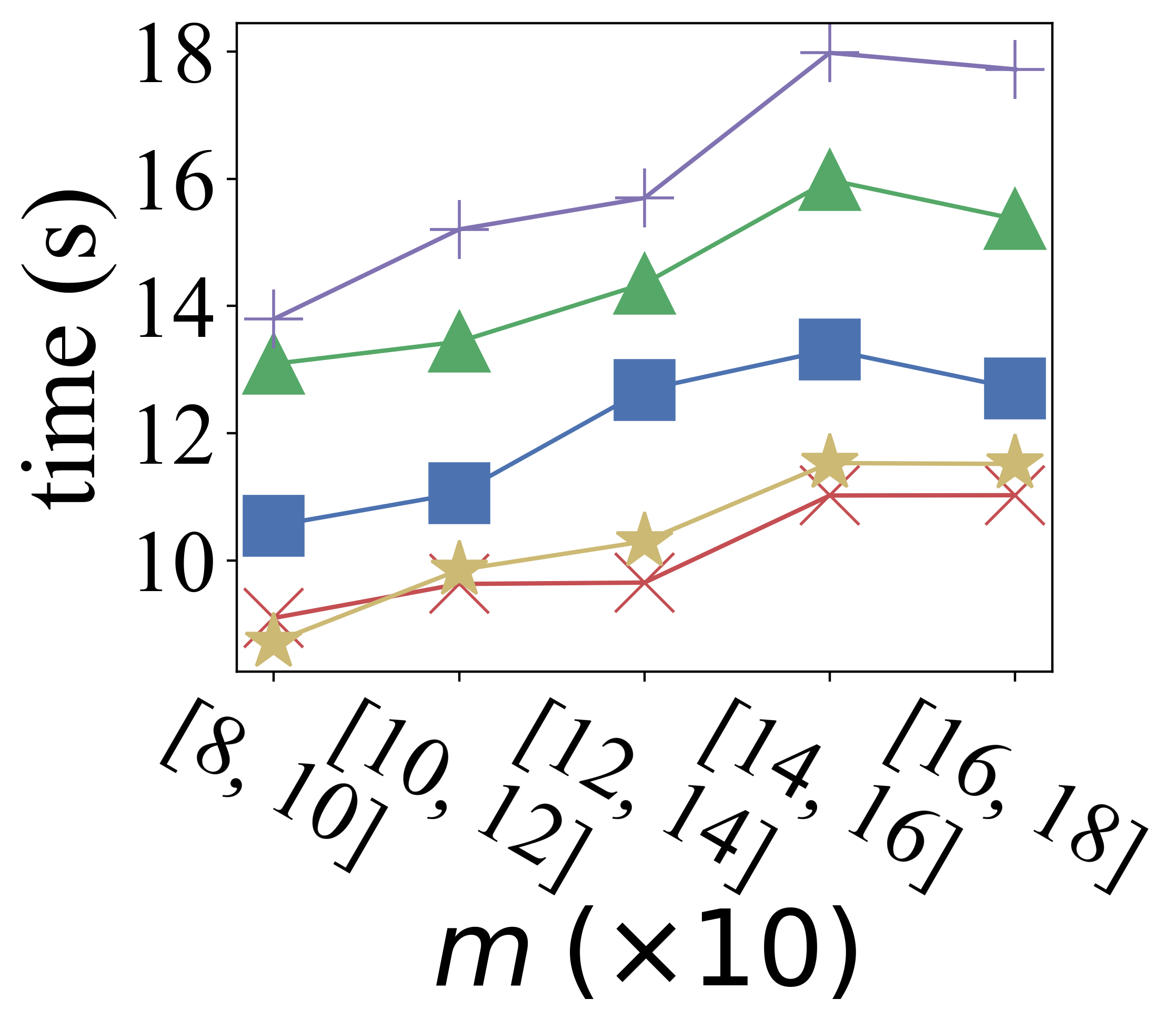}}
		\label{fig:xian_erp}}
	\subfigure[][{\scriptsize ERP (Beijing)}]{
		\scalebox{0.19}[0.19]{\includegraphics{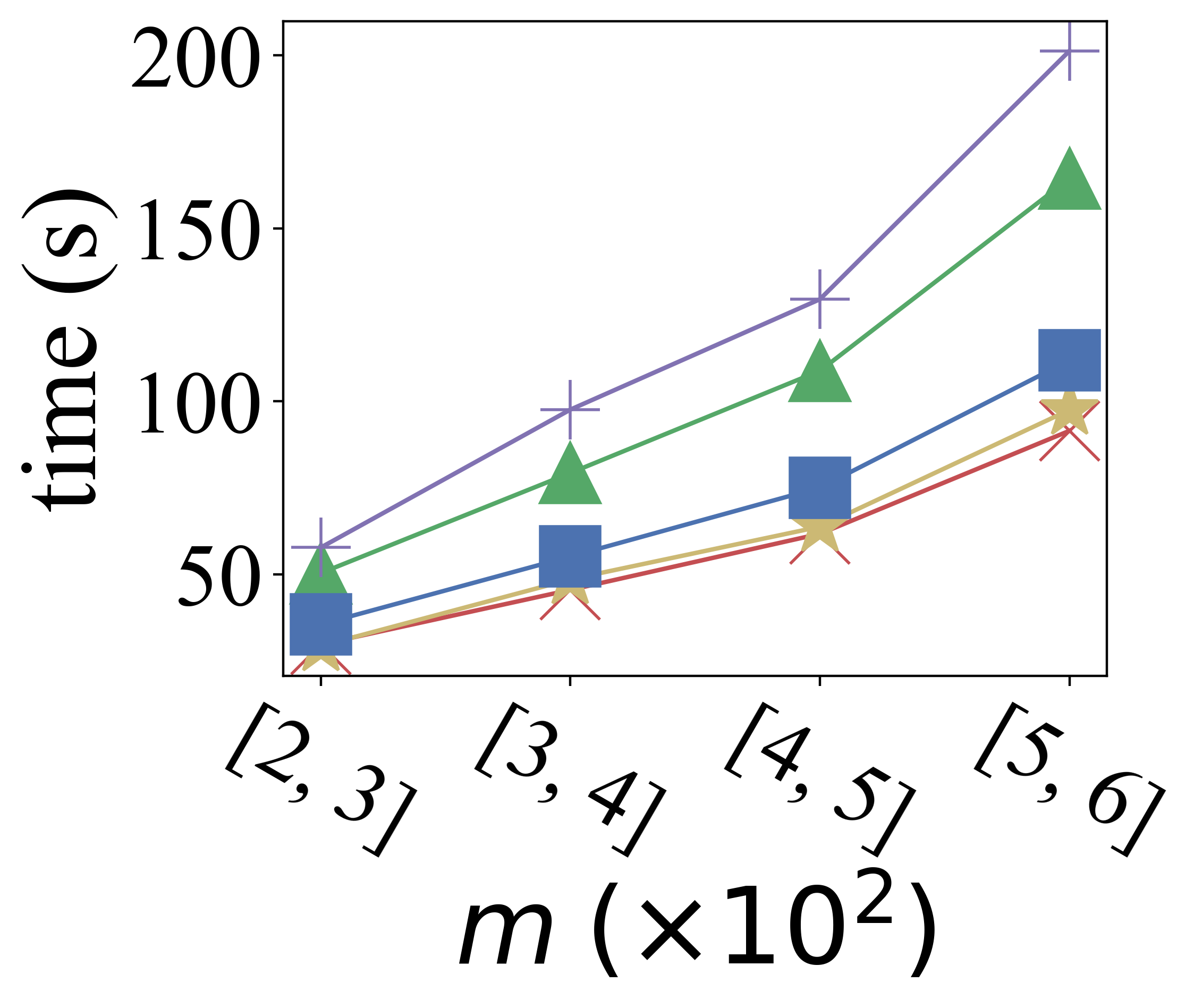}}
		\label{fig:beijing_erp}}
	\\
	\subfigure[][{\scriptsize FD (Porto)}]{
		\scalebox{0.19}[0.19]{\includegraphics{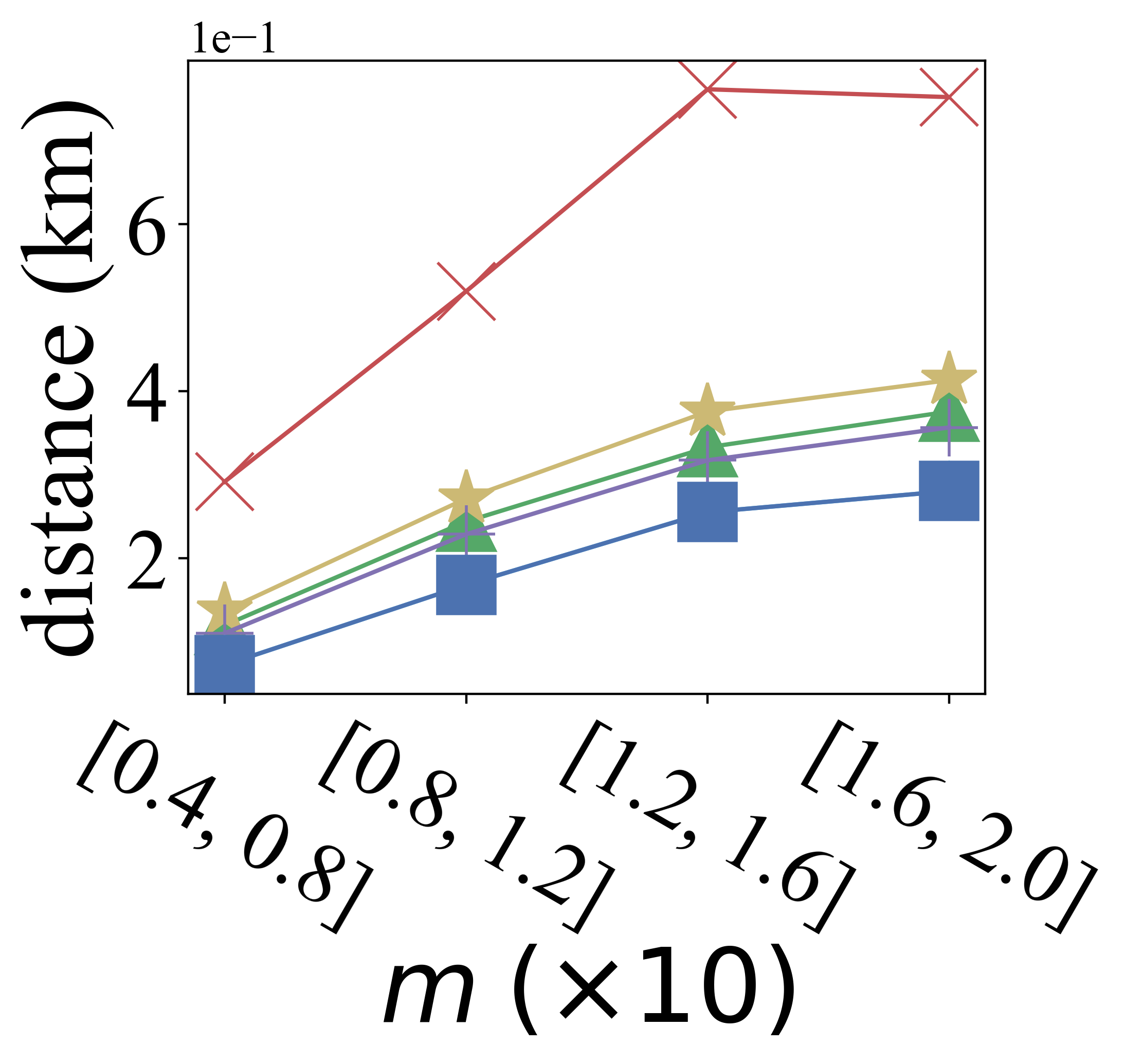}}
		\label{fig:porto_fc_score}}
	\subfigure[][{\scriptsize FD (Xi'an)}]{
		\scalebox{0.19}[0.19]{\includegraphics{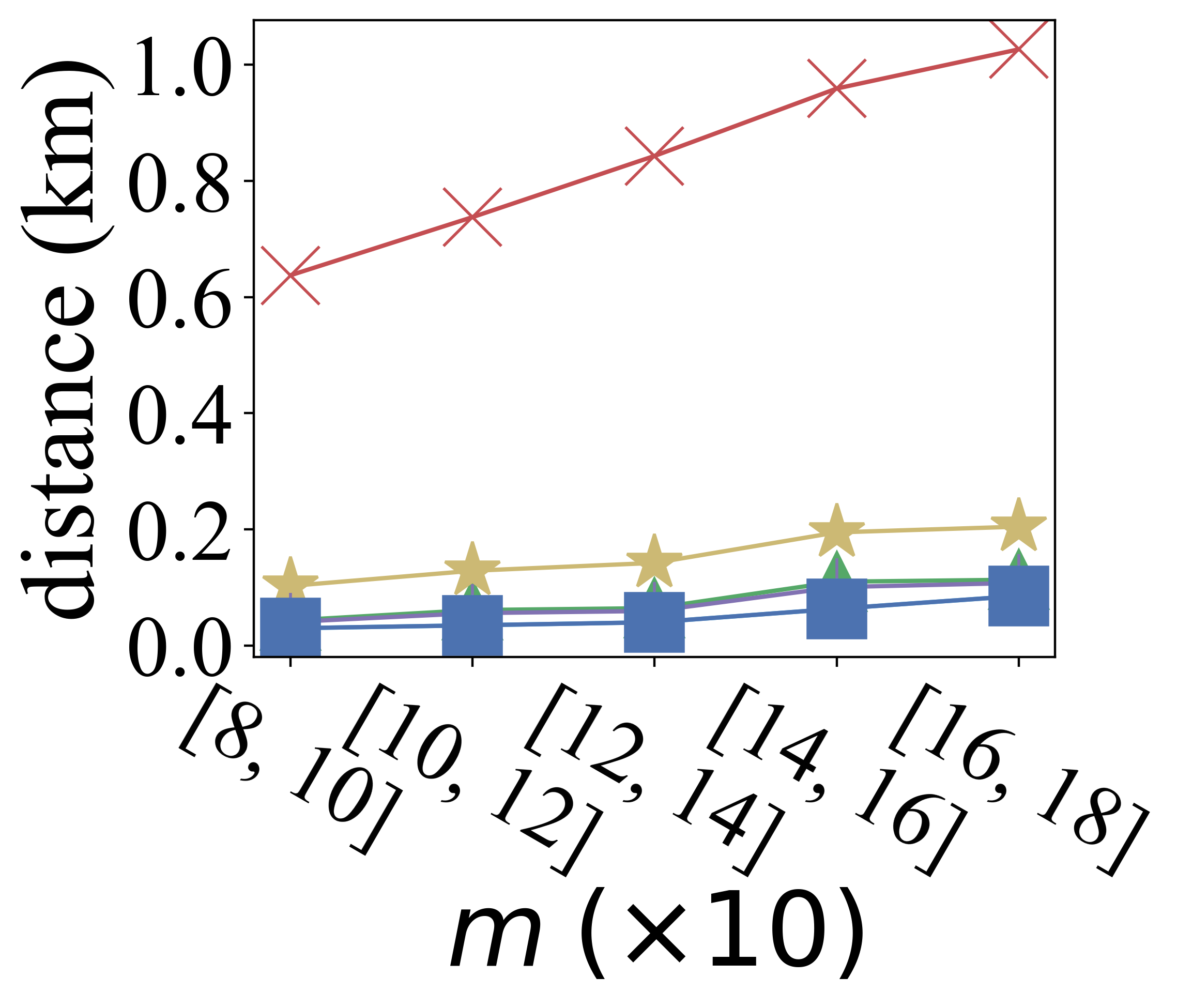}}
		\label{fig:xian_fc_score}}
	\subfigure[][{\scriptsize FD (Beijing)}]{
		\scalebox{0.19}[0.19]{\includegraphics{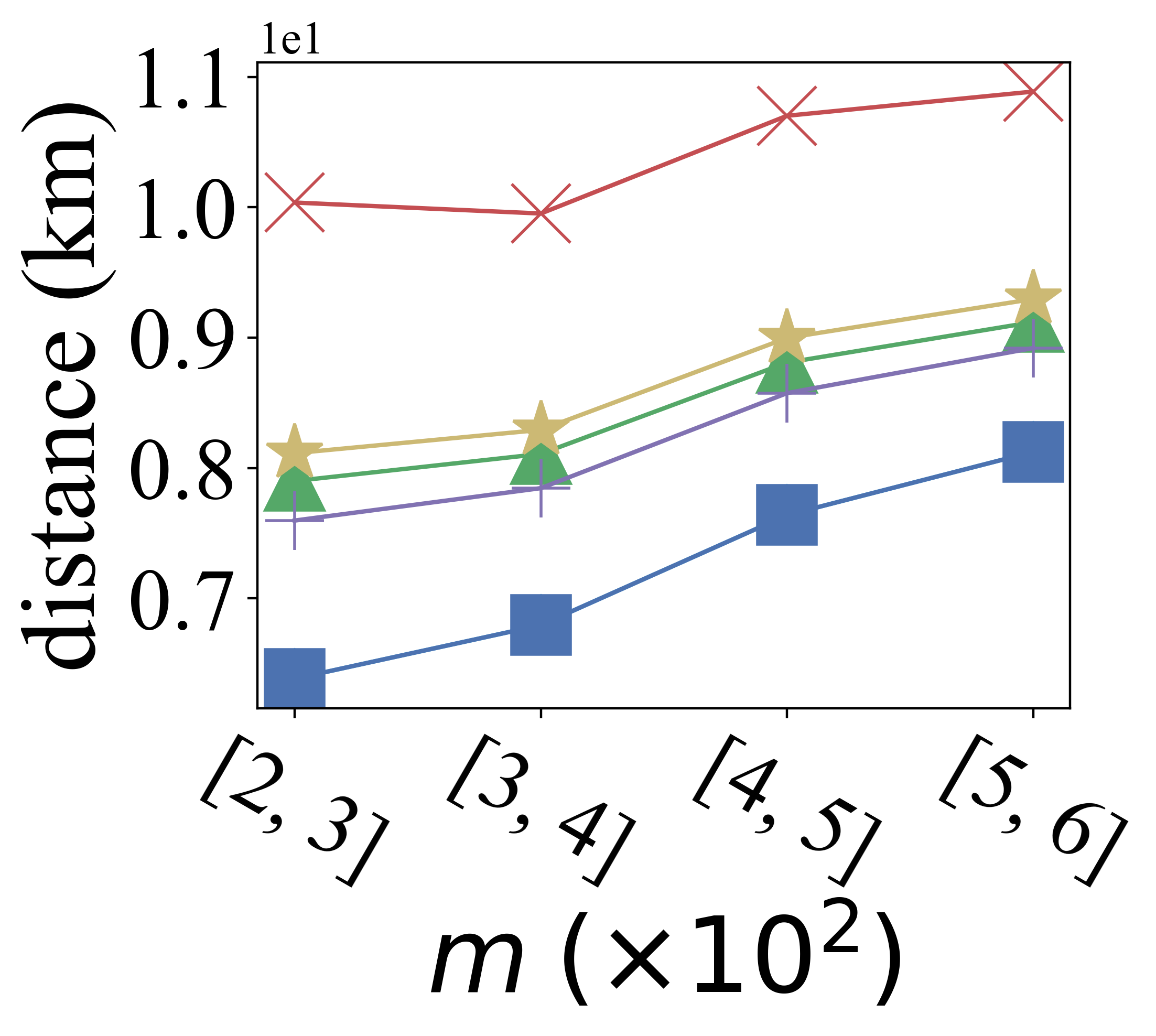}}
		\label{fig:beijing_fc_score}}
	\subfigure[][{\scriptsize FD (Porto)}]{
		\scalebox{0.19}[0.19]{\includegraphics{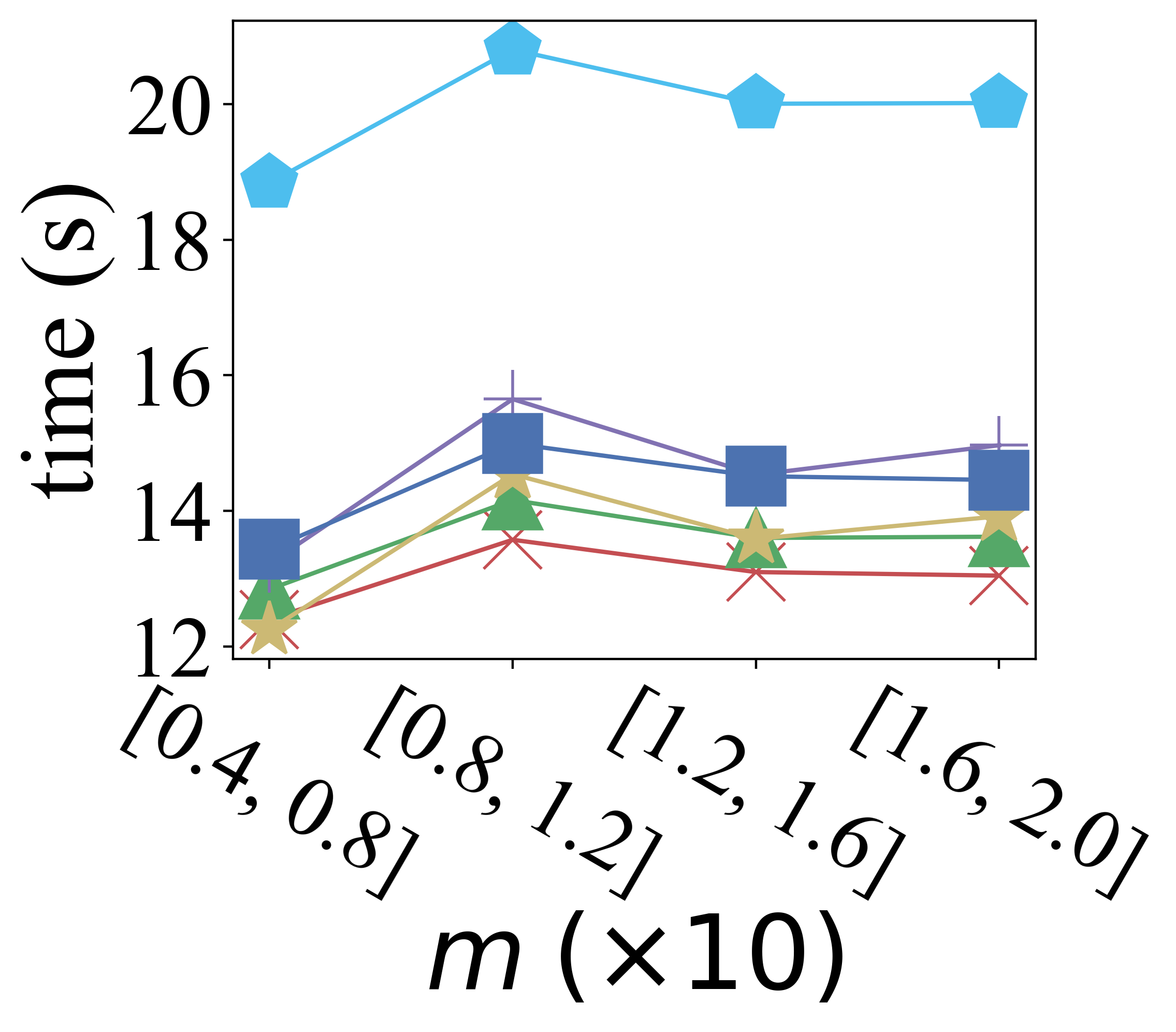}}
		\label{fig:porto_fc}}
	\subfigure[][{\scriptsize FD (Xi'an)}]{
		\scalebox{0.19}[0.19]{\includegraphics{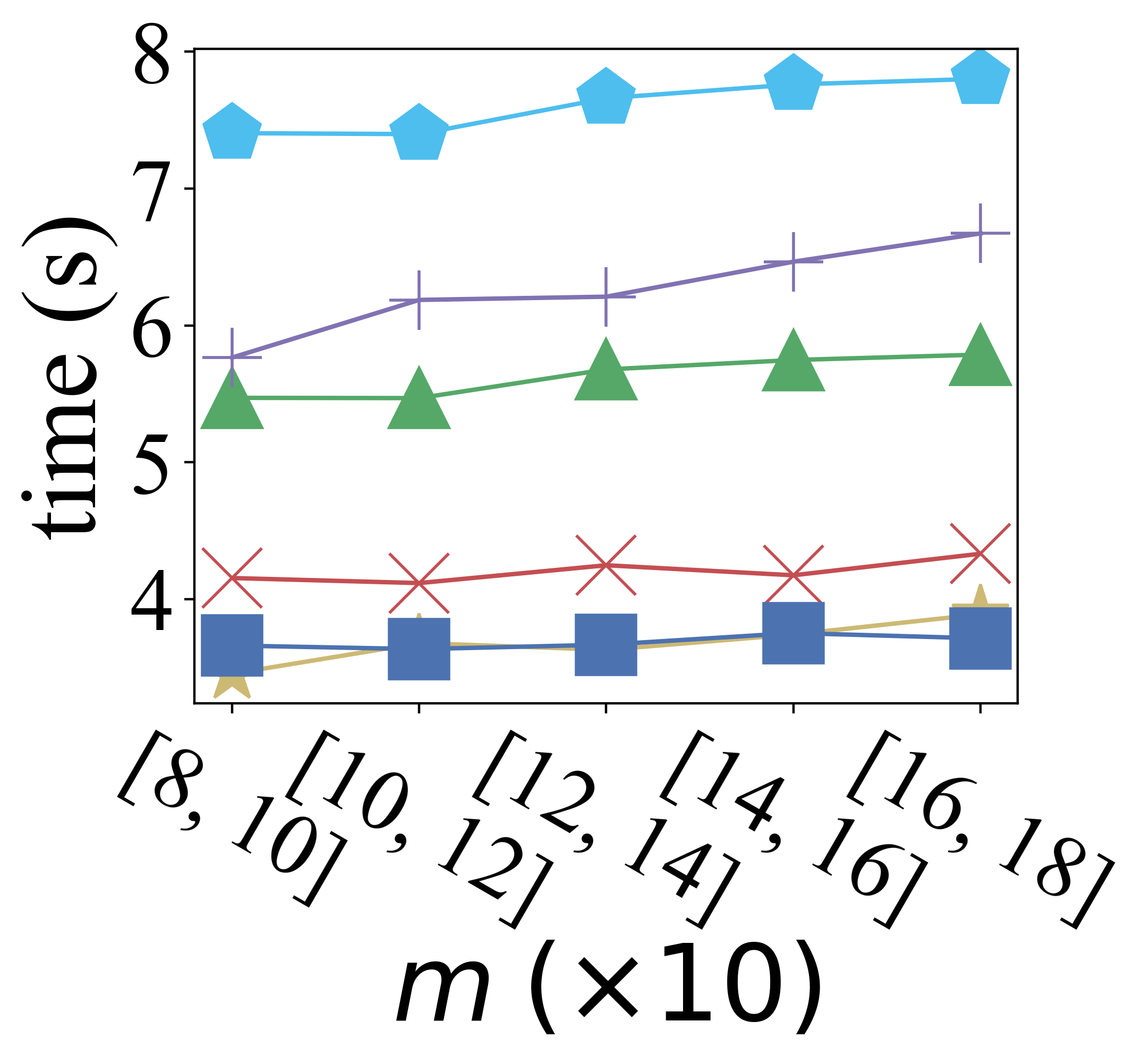}}
		\label{fig:xian_fc}}
	\subfigure[][{\scriptsize FD (Beijing)}]{
		\scalebox{0.19}[0.19]{\includegraphics{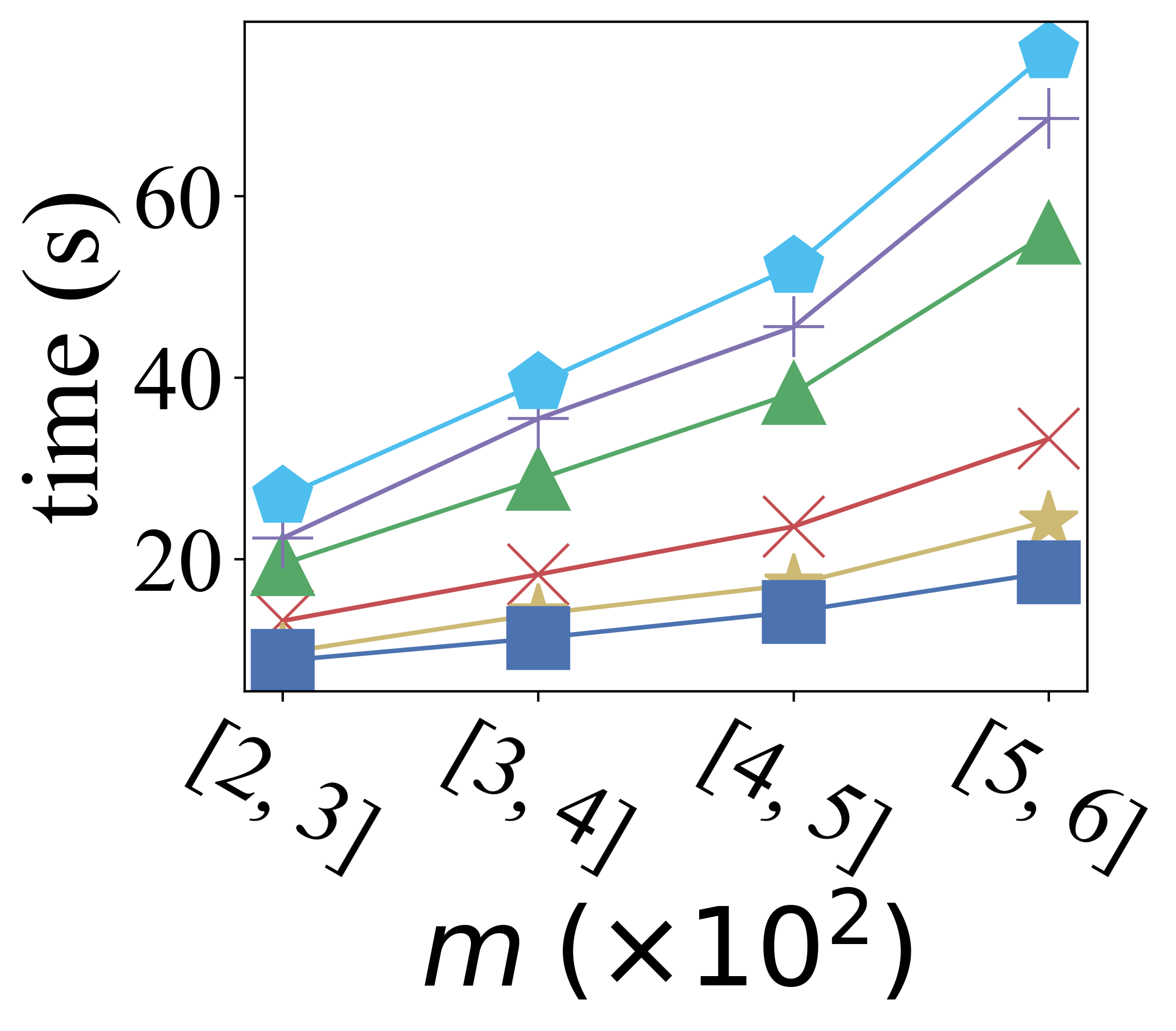}}
		\label{fig:beijing_fc}}
	\caption{\small Effectiveness and efficiency with varying query lengths}
	\label{fig:efficiencyLength}
\end{figure*}

Considering that there are a large number of data trajectories in the database, to improve the efficiency of searching the optimal subtrajectories, we use the pruning methods in the subsequent experiments to filter out the data trajectories that are different from the query trajectories. This paper proposes two modules to filter data trajectories that are not similar to the query trajectory: Filter with Key Points (FKP) and Grid-Based Pruning (GBP). They are compared with the SOTA pruning method, OSF~\cite{KoideXI20}. The details of the pruning methods and their experimental results can be found in the Appendixes B and C.

\noindent\textbf{Metrics.}
We will compare our CMA algorithm with the existing algorithms regarding efficiency and effectiveness. For a given query trajectory, we evaluate the efficiency of an algorithm in terms of the time to find the most similar subtrajectory from all data trajectories. We use four evaluation metrics identical to those used in previous work to evaluate the solutions found by different algorithms in this experiment: \revision{(1) Distance. It refers to the raw distance between a query trajectory and the optimal subtrajectory of the data trajectory found by the search algorithm. (2) Approximate Ratio (AR). Given a distance function, AR represents the ratio of the distance between the query trajectory and the subtrajectory found by an approximate algorithm to the distance between the query trajectory and the optimal solution. (3) Mean Rank (MR). It denotes the rank of the distance between the optimal subtrajectory found by the algorithm and the query trajectory among all subtrajectories of the original data trajectory. In particular, MR$=1$ indicates that the algorithm finds the optimal solution. (4) Relative Rank (RR). It is the percentage of all subtrajectories of the data trajectory that is better than the result returned by the algorithm.}

\noindent\textbf{Evaluation Platform.} 
The methods are implemented in C++14. The experiments are conducted on a Linux server equipped with 48-cores of Intel(R) Xeon(R) 2.20GHz CPUs and with 128.00 GB RAM.
\subsection{Experimental Results}
\label{sec:search}

\noindent\textbf{Effectiveness compared with other algorithms}
We used different algorithms for each distance function in different datasets to find the subtrajectories of the data trajectories with the smallest distance from the query trajectory. The experimental results are shown in Table \ref{tab:performance}. The approximation algorithms have substantial uncertainty in terms of effectiveness. Although the subtrajectory found by these approximation algorithms when using ERP as the distance function is close to the optimal subtrajectory, the subtrajectory found by approximate algorithms when using DTW as the distance function is far from the optimal subtrajectory. POS and PSS tend to select the trajectories with the same length as the query trajectory due to the higher cost of deleting a point when ERP is used as the distance function. In contrast, the length of the optimal subtrajectory tends to vary when DTW is used as the distance function. In addition, the subtrajectories found by the RLS and RLS-Skip algorithms learned based on reinforcement learning are also far from the optimal subtrajectories. CMA can find the exact optimal solution in all cases.

\noindent\textbf{Efficiency compared with other algorithms}
With the pruning algorithm, we can find the optimal subtrajectory from many data trajectories faster. Compared with ExactS, the efficiency of CMA has improved nearly 200 times on Xi'an datasets and nearly 50 times on the Porto dataset according to the Table \ref{tab:efficiencyT}. The longer the length of the trajectory, the more the improvement of CMA over ExactS. CMA can find the optimal subtrajectory relatively quickly regardless of the distance function. POS and RLS-Skip are the fastest, but they are approximate algorithms. \revision{The experimental results in Table \ref{tab:efficiencyT} indicate that CMA exhibits superior efficiency compared to other precise algorithms. Compared to CMA, Spring requires many additional computations. In addition to finding the optimal subtrajectory, Spring can identify all subtrajectories whose distances to the query trajectory are less than a given threshold (without overlaps between these subtrajectories). To achieve this, Spring continuously checks the DP matrix for subtrajectories that satisfy the criteria and outputs them, resulting in some additional computations. CMA, on the other hand, performs only one check after completing the calculation of the DP matrix. Spring is specifically designed for large-scale streaming data. The search space of GB is $O(mn)$. However, during the algorithm's execution, backtracking is required repeatedly, which can result in some nodes being searched multiple times. In contrast, each cell in the DP matrix of CMA is computed only once, making its efficiency slightly higher than that of GB.}

\begin{figure*}[t!]
	\centering\vspace{-1ex}
	\subfigure{
		\scalebox{0.40}[0.40]{\includegraphics{figures/legend.png}}}\hfill\\
	\addtocounter{subfigure}{-1}\vspace{-2ex}
	\subfigure[][{\scriptsize DTW (Beijing)}]{
		\scalebox{0.2}[0.2]{\includegraphics{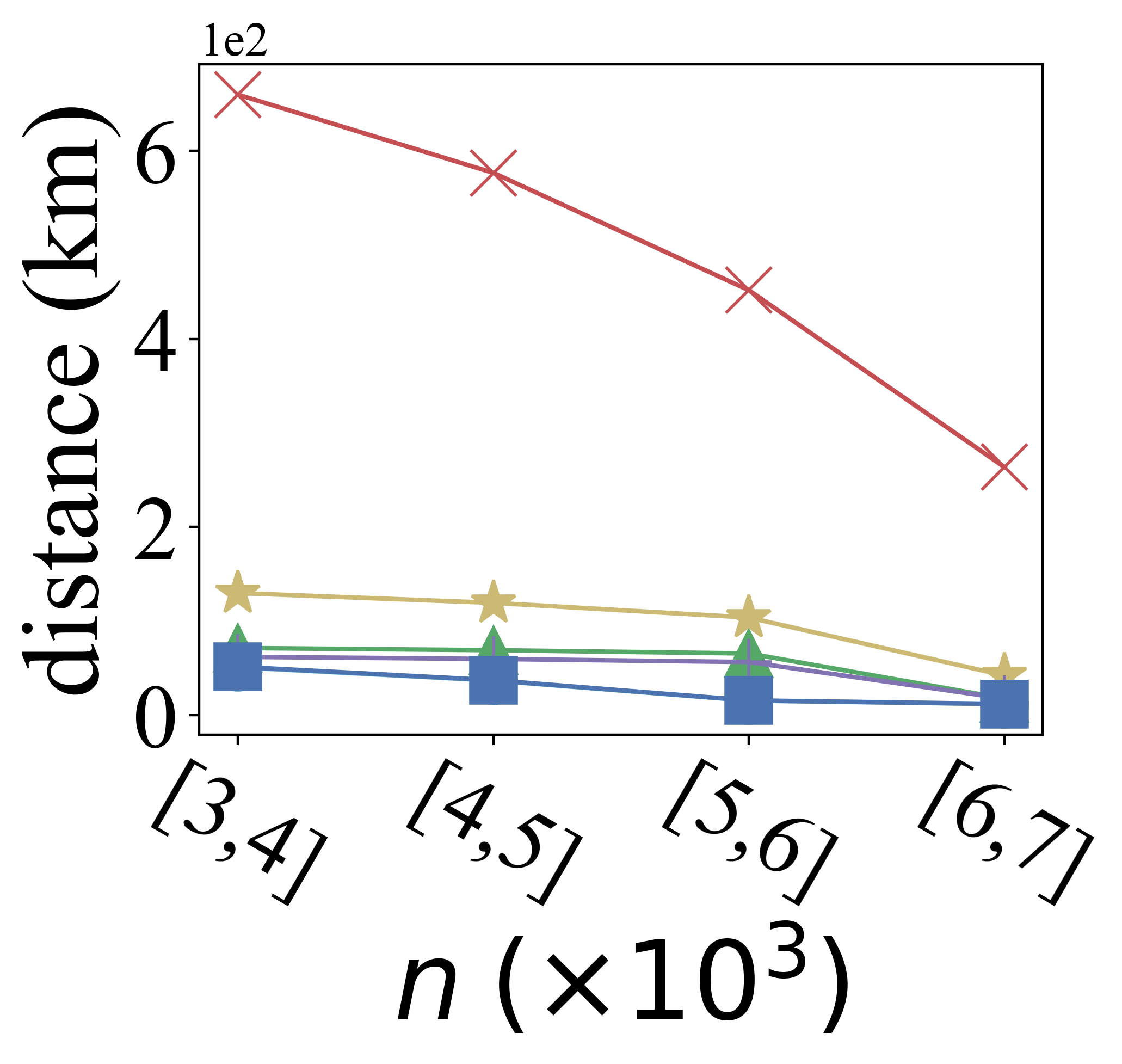}}
		\label{fig:dataLength_beijing_dtw_score}}
	\subfigure[][{\scriptsize EDR (Beijing)}]{
		\scalebox{0.2}[0.2]{\includegraphics{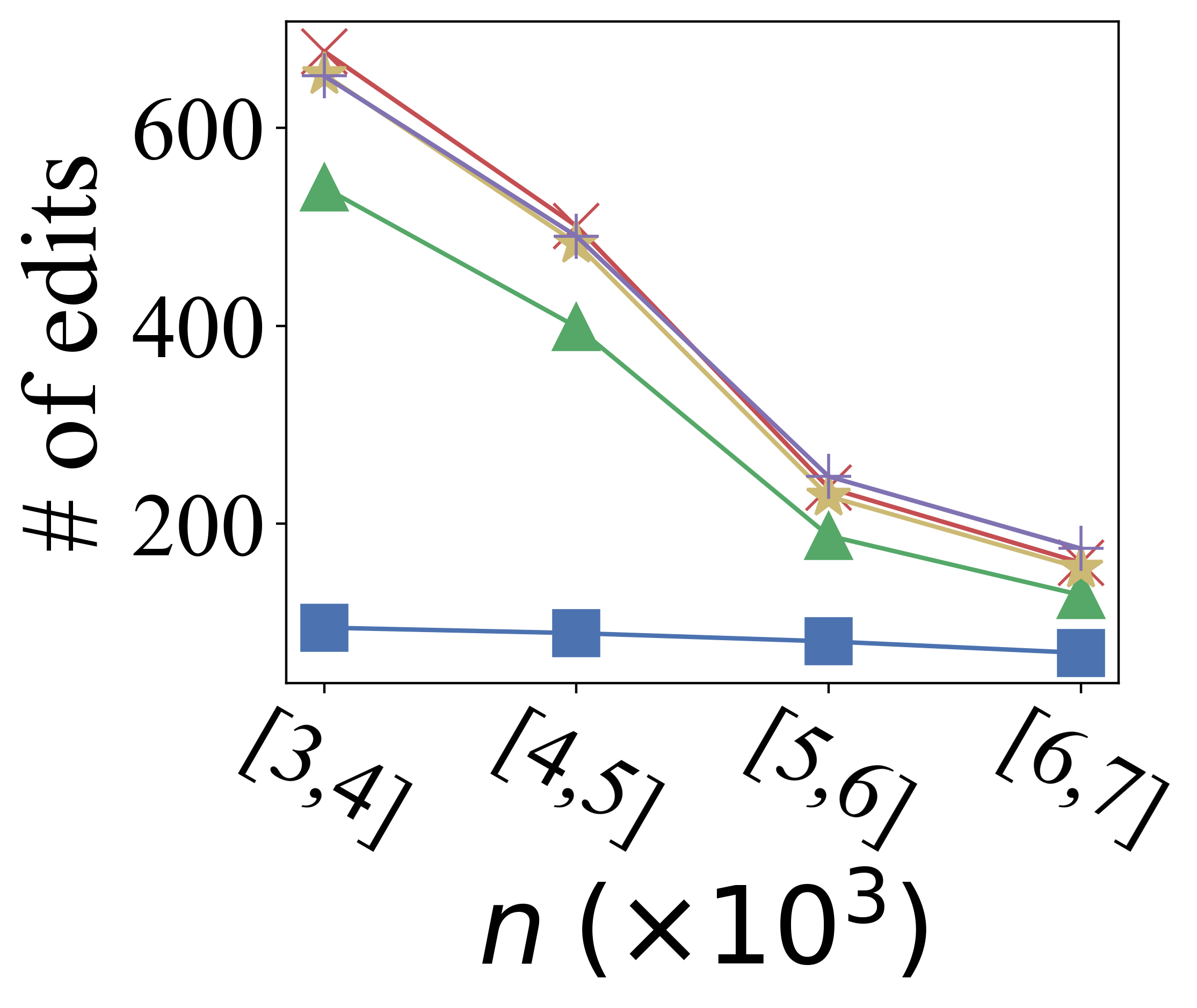}}
		\label{fig:dataLength_beijing_edr_score}}
	\subfigure[][{\scriptsize ERP (Beijing)}]{
		\scalebox{0.2}[0.2]{\includegraphics{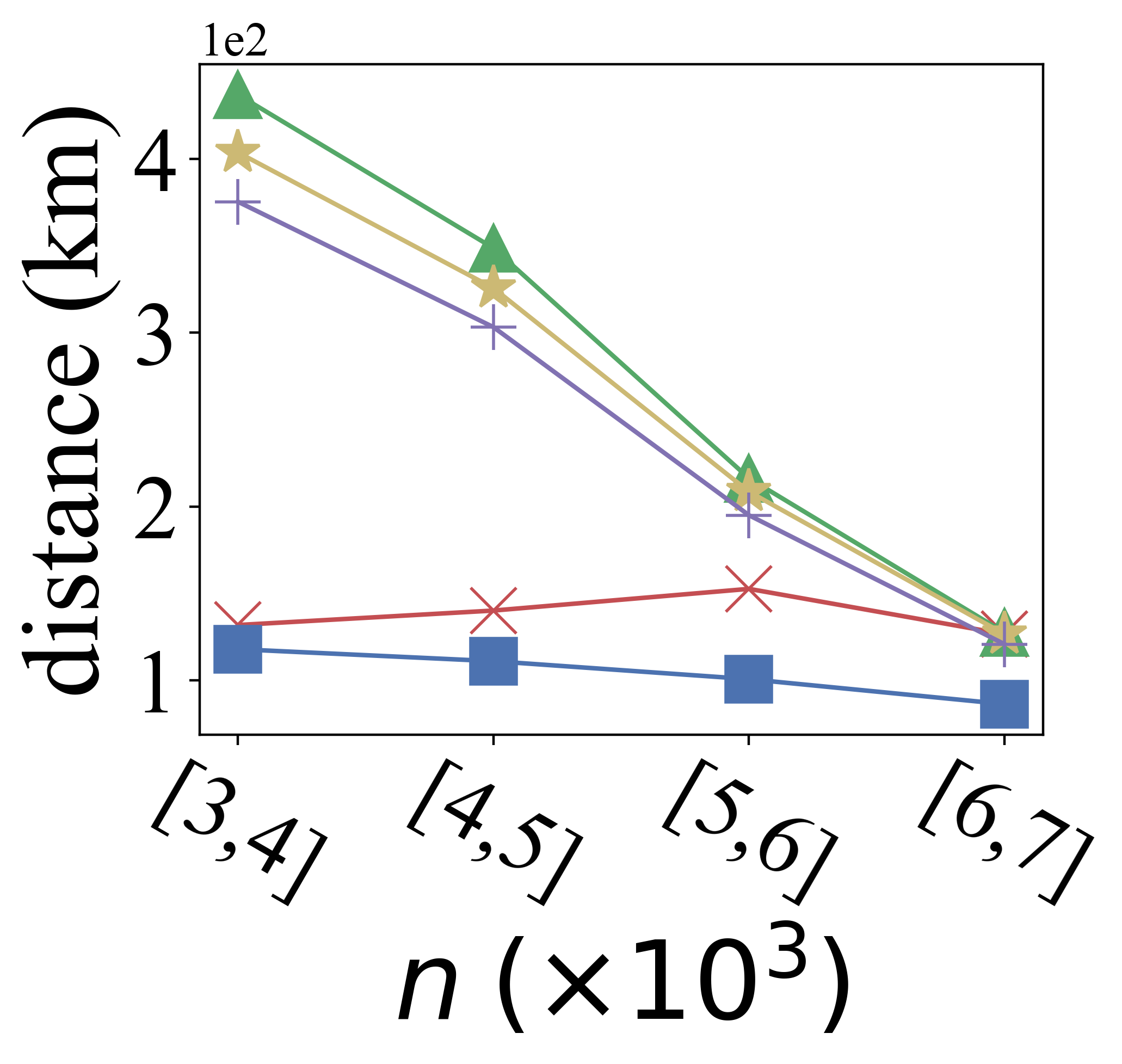}}
		\label{fig:dataLength_beijing_erp_score}}
	\subfigure[][{\scriptsize FD (Beijing)}]{
		\scalebox{0.2}[0.2]{\includegraphics{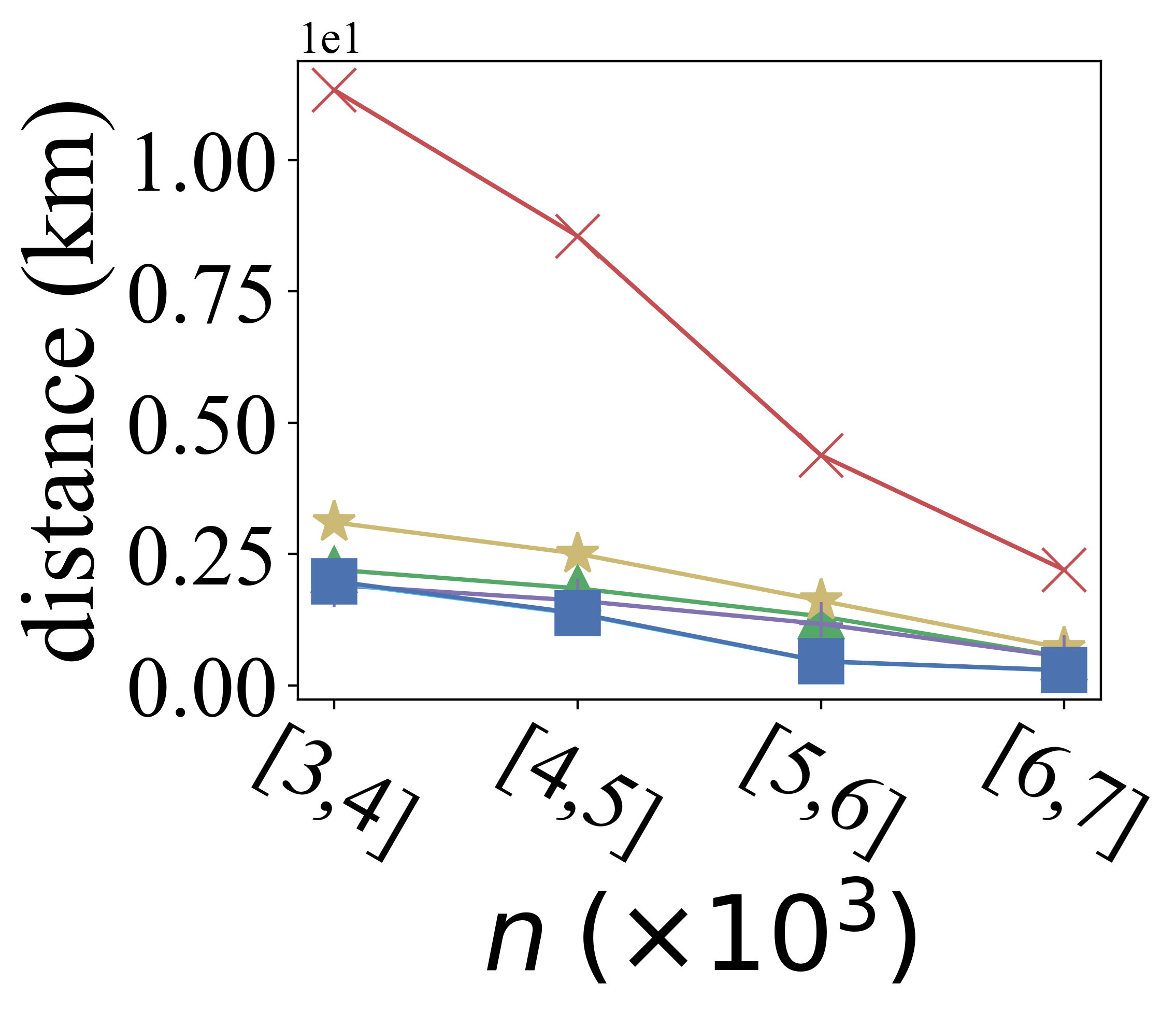}}
		\label{fig:dataLength_beijing_FC_score}}
	\\
	\subfigure[][{\scriptsize DTW (Beijing)}]{
		\scalebox{0.2}[0.2]{\includegraphics{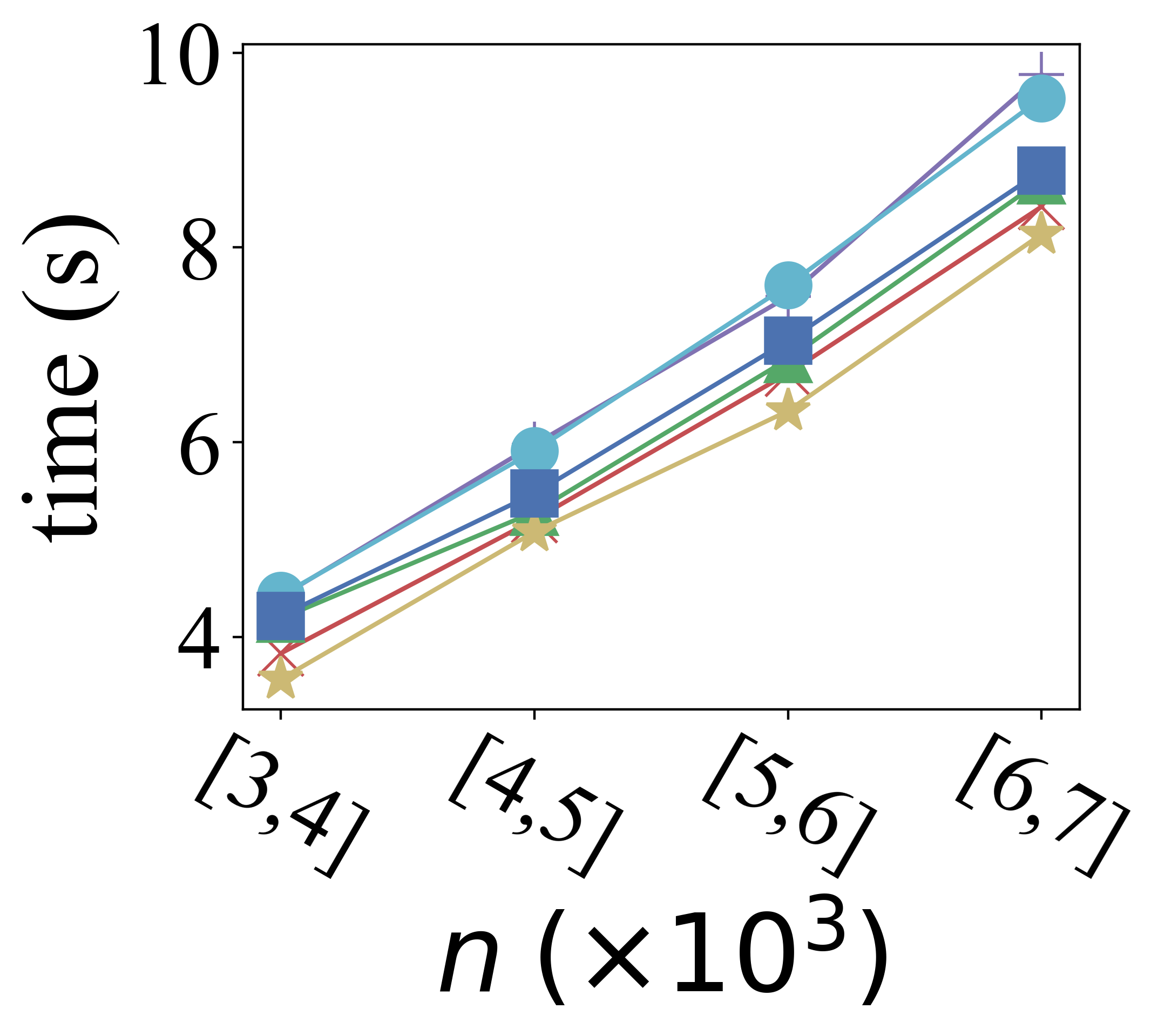}}
		\label{fig:dataLength_beijing_dtw}}
	\subfigure[][{\scriptsize EDR (Beijing)}]{
		\scalebox{0.2}[0.2]{\includegraphics{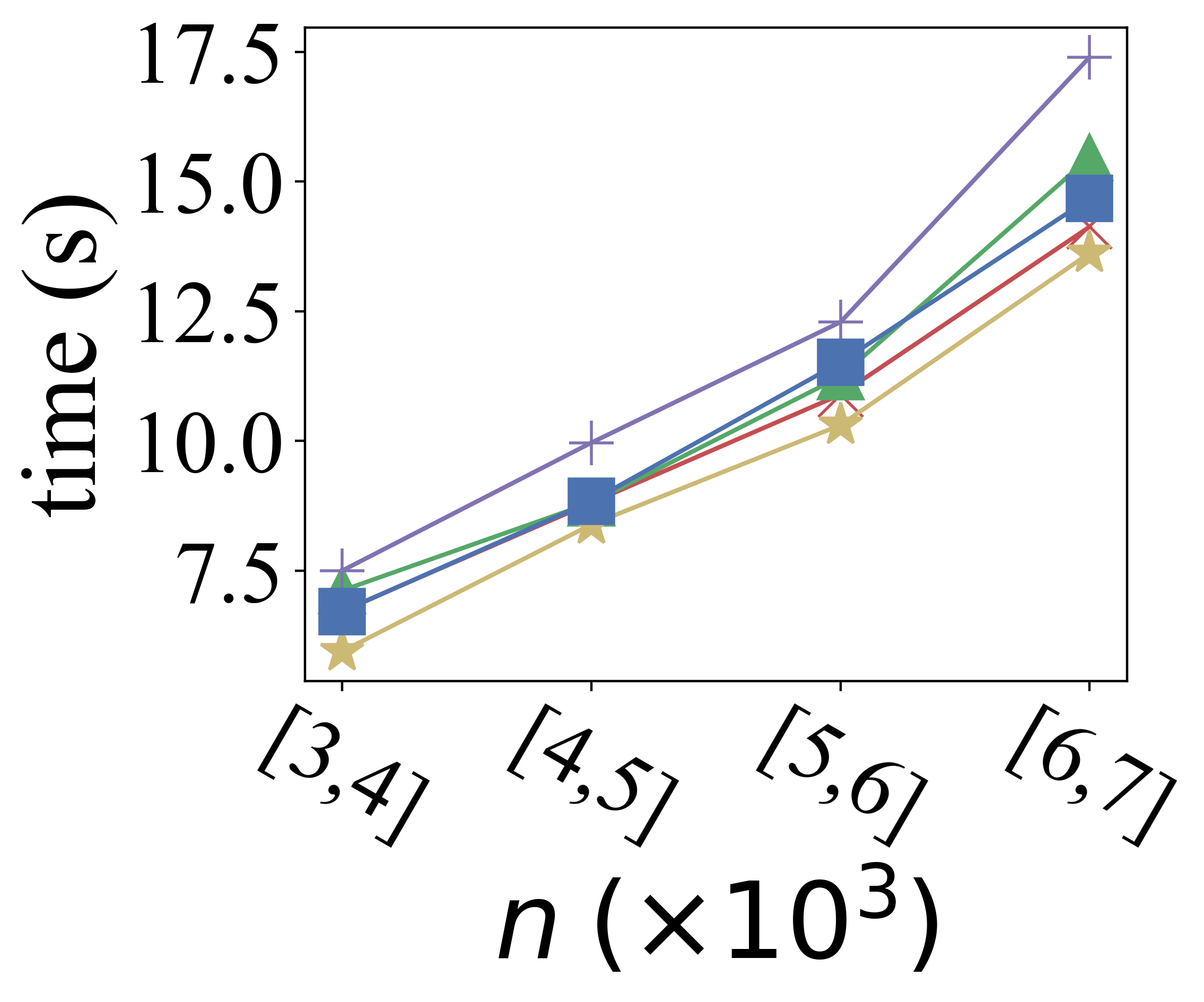}}
		\label{fig:dataLength_beijing_edr}}
	\subfigure[][{\scriptsize ERP (Beijing)}]{
		\scalebox{0.2}[0.2]{\includegraphics{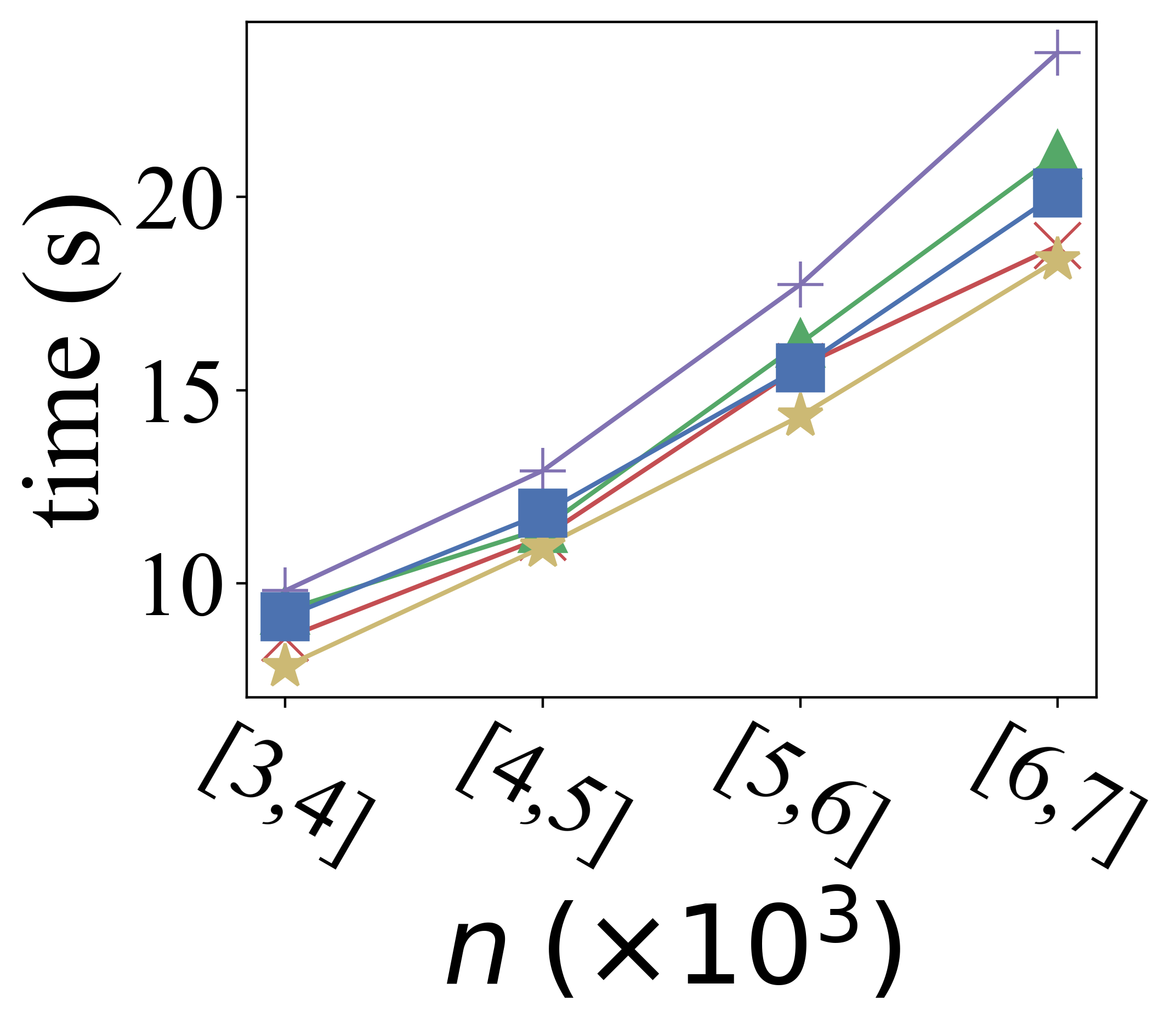}}
		\label{fig:dataLength_beijing_erp}}
	\subfigure[][{\scriptsize FD (Beijing)}]{
		\scalebox{0.2}[0.2]{\includegraphics{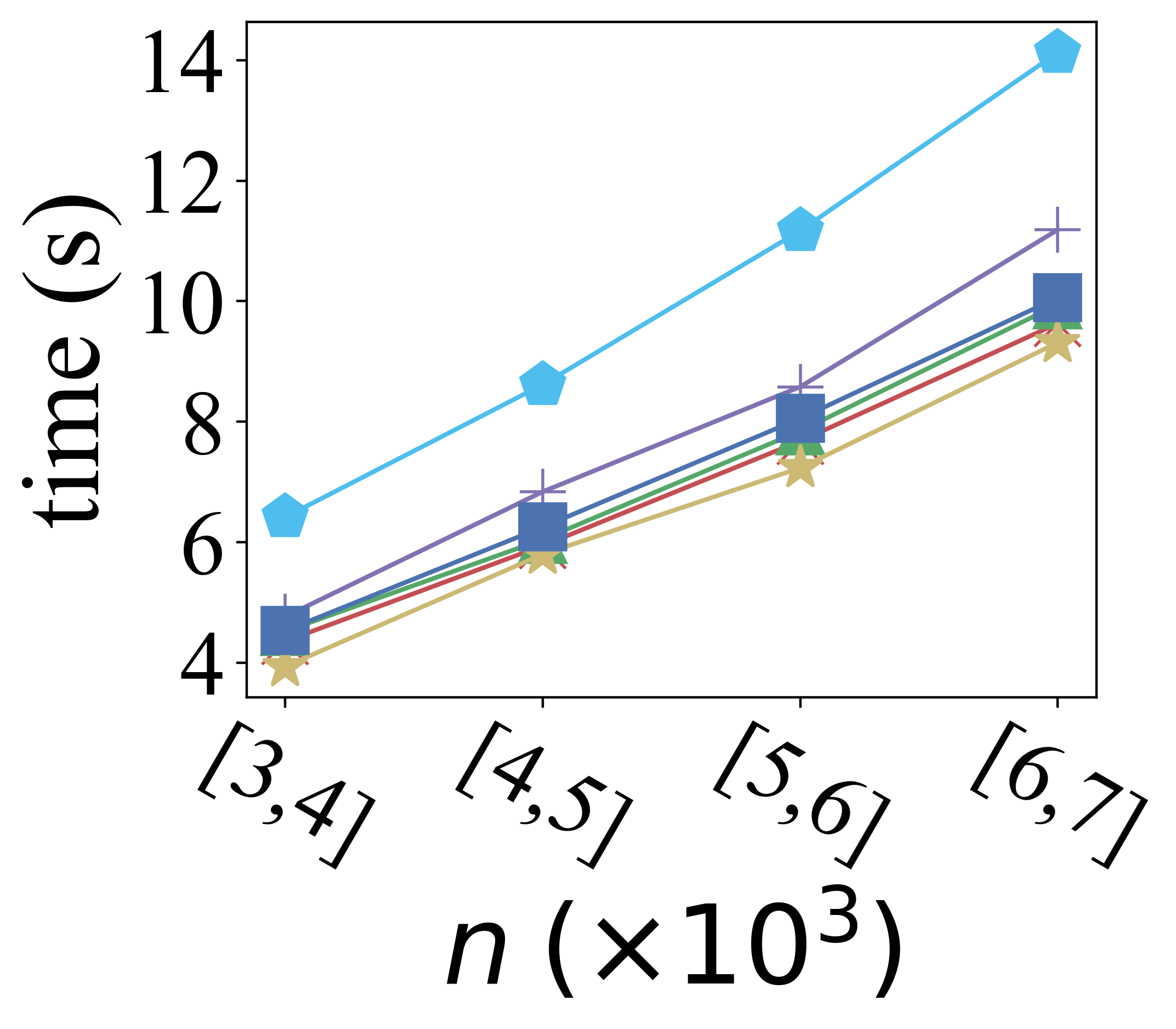}}
		\label{fig:dataLength_beijing_FC}}
	\caption{\small Effectiveness and efficiency with varying data lengths}
	\label{fig:dataLength_effectivenessLength}%\vspace{-1ex}
\end{figure*}

\begin{table}[t!]
	\begin{center}
		{\small 
			\caption{\small Efficiency of Algorithms.} \label{tab:efficiencyT}
			\vspace{1ex}
			\begin{tabular}{c|c|c|c|c|c}
				\hline
				\multirow{2}{*}{Dataset}&\multirow{2}{*}{Algorithm}&\multicolumn{4}{c}{Time Cost (s)} \\\cline{3-6}
				&&DTW&EDR&ERP&FD\\ \hline\hline
				\multirow{8}{*}{Porto}&POS&16.32&17.75&16.91&18.42 \\ 
				&PSS&18.06&16.90&17.14&18.05 \\ 
				&RLS&17.84&19.87&19.39&19.62 \\ 
				&RLS-Skip&16.62&15.28&17.92&18.90 \\ 
				&CMA&18.78&14.64&19.26&18.78 \\ 
				&ExactS&7794.59&6731.42&7225.32&8334.16 \\ 
				&Spring&20.04&-&-&-\\ 
				&GB&-&-&-&29.01\\ 
				\hline
				\multirow{8}{*}{Xi'an}&POS&6.69&9.69&13.12&5.48 \\ 
				&PSS&8.03&12.47&16.21&7.12 \\ 
				&RLS&7.93&14.66&18.33&7.74 \\ 
				&RLS-Skip&5.79&9.30&13.45&4.91 \\ 
				&CMA&5.65&9.79&14.08&4.31 \\ 
				&ExactS&1625.58&2789.93&3429.52&1312.26 \\ 
				&Spring&7.37&-&-&-\\ 
				&GB&-&-&-&10.76\\ 
				\hline
				\multirow{8}{*}{Beijing}&POS&17.53&35.15&45.54&18.30 \\ 
				&PSS&26.95&58.79&79.31&28.81 \\ 
				&RLS&33.17&72.37&97.63&35.46 \\ 
				&RLS-Skip&13.35&36.94&48.57&13.94 \\ 
				&CMA&10.81&41.53&55.18&11.29 \\ 
				&ExactS&overtime&overtime&overtime&overtime \\ 
				&Spring&16.46&-&-&-\\ 
				&GB&-&-&-&75.86\\ 
				\hline
			\end{tabular}
		}
	\end{center}\vspace{-1ex}
\end{table}

\noindent\textbf{Effectiveness of the length of query trajectory.}
\revision{We use different length ranges of query trajectories for each dataset in our experiments. For the Beijing dataset, we select the length ranges $m$ as $[200, 300]$, $[300, 400]$, $[400, 500]$, and $[500, 600]$. For the Xi'an dataset, we choose the length ranges $[80, 100]$, $[100, 120]$, $[120, 140]$, $[140, 160]$, and $[160, 180]$. For the Porto dataset, we use the length ranges $[4, 8]$, $[8, 12]$, $[12, 16]$, and $[16, 20]$ for the query trajectories.}
Figure \ref{fig:efficiencyLength} shows that the execution time of the algorithm increases with the length of the trajectory regardless of the dataset and distance function because the search algorithm takes less time to find the optimal subtrajectory for each query trajectory when the trajectory length is small. However, in the dataset of Porto, the execution time increases and then decreases with the length of the query trajectory, which may be attributed to the fact that there are fewer trajectories similar to the query trajectory in the dataset when the size of the query trajectory becomes longer. Thus, most trajectories are screened out in the filtering phase, resulting in a decrease in the final search time. RLS takes more time than other algorithms in almost all cases. 
All algorithms except CMA have poor performance when DTW is used as the distance function, regardless of the length of the query trajectory. It is because that DTW allows different points in query trajectory matches the same point in data trajectory.
Furthermore, the effectiveness of the approximation algorithm is improved as the length of the query trajectory increases when EDR is used as the distance function. As the query trajectory length increases, the number of eligible data trajectories decreases, thus the approximation algorithm has a higher probability of finding the optimal solution. Both algorithms, RLS and RLS-Skip, also find much worse subtrajectories than CMA, where RLS has a worse execution time than CMA in almost all cases.

\begin{table*}[t!]
	\begin{center}\vspace{-2ex}
		{\small 
			\caption{\small Summary of subtrajectory similarity search algorithms.} \label{tab:complexities}
			\vspace{1ex}
			\begin{tabular}{c|c|c|c|c|c|c|c|c|c|c}
				\hline
				\multirow{2}{*}{ Algorithms} & \multirow{2}{*}{ Accuracy} & \multicolumn{7}{c|}{ order-insensitive}& \multicolumn{2}{c}{order-sensitive} \\ \cline{3-11}
				& &{\bf  DTW}&{\bf  ERP}&{\bf EDR} & \makecell[c]{\bf FD} & {\bf NetERP} & {\bf NetEDR} & {\bf SURS} & {\bf LCSS} & {\bf LCRS}\\\hline\hline
				\textbf{CMA (Ours)} &exact & $O(mn)$ & $O(mn)$ & $O(mn)$& $O(mn)$& $O(mn)$ & $O(mn)$ & $O(mn)$& -& -\\ \hline
				ExactS~\cite{WangLCL20}  &exact & $O(mn^2)$ & $O(mn^2)$ & $O(mn^2)$& $O(mn^2)$& $O(mn^2)$ & $O(mn^2)$ & $O(mn^2)$& $O(mn^2)$& $O(mn^2)$\\ \hline
				Spring~\cite{SakuraiFY07} &exact & $O(mn)$ & - & -& - & - & -& -& -& -\\ \hline
				\makecell[c]{Greedy
					Backtracking (GB)~\cite{abs-2203-10364}} &exact & - &-& -& $O(mn)$& - & -& -& -& - \\ \hline
				POS~\cite{WangLCL20} &approx. & $O(mn)$ & $O(mn)$ & $O(mn)$& $O(mn)$& $O(mn)$ & $O(mn)$ & $O(mn)$& $O(mn)$& $O(mn)$\\ \hline
				PSS~\cite{WangLCL20} &approx. & $O(mn)$ & $O(mn)$ & $O(mn)$& $O(mn)$ & $O(mn)$ & $O(mn)$ & $O(mn)$& $O(mn)$& $O(mn)$\\ \hline
				RLS~\cite{WangLCL20} &approx. & $O(mn)$ & $O(mn)$ & $O(mn)$& $O(mn)$ & $O(mn)$ & $O(mn)$ & $O(mn)$& $O(mn)$& $O(mn)$\\ \hline
				RLS-Skip~\cite{WangLCL20} &approx. & $O(mn)$ & $O(mn)$ & $O(mn)$& $O(mn)$& $O(mn)$ & $O(mn)$ & $O(mn)$& $O(mn)$& $O(mn)$\\ \hline
			\end{tabular}
		}
	\end{center}
\end{table*}

\revision{
\noindent\textbf{Effectiveness of the length of data trajectories}
We tested the efficiency and effectiveness of different distance functions on the Beijing city dataset by varying the length of data trajectories. In the experiment, we selected 1000 trajectories, each with lengths in the intervals [3000,4000], [4000,5000], [5000,6000], and [6000,7000], from all trajectories in Beijing city. The experimental results are presented in Figure \ref{fig:dataLength_effectivenessLength}. The figure shows that the time to find the optimal solution increases linearly with the length of the data trajectory for all algorithms. Additionally, the distance of the subtrajectories found by the CMA, Spring, and GB algorithms decreases as the length of the data trajectory increases, indicating that longer data trajectories are more likely to contain subtrajectories that are more similar to the query trajectory. We also observed that longer trajectory lengths make it easier for approximation algorithms to find better solutions. This means that there are more subtrajectories similar to the query trajectory with the increase of the length of data trajectories, which enables the approximation algorithms to find better solutions.
}

\noindent\textbf{Performance of Spring and GB}
In this paper, we also explore the performance of Spring and GB; the experimental results of Spring are shown in Figure \ref{fig:xian_dtw_score} $\sim$ \ref{fig:porto_dtw}, while the results of GB are shown in Figure \ref{fig:xian_fc_score} $\sim$ \ref{fig:porto_fc}. The experimental results show that the AR of Spring and GB is 1 in all cases, which means that both algorithms can find the optimal solution. However,   the execution time of Spring is similar to that of CMA, while GB is less efficient.

\noindent\textbf{Summary of Results}.
We verify that CMA can accurately find the nearest subtrajectory from the data trajectory to the query trajectory. Meanwhile, the execution time of CMA is about the same as the two approximation methods (i.e., PSS and POS), and is much smaller than ExactS. Therefore, the proposed algorithm can quickly and accurately find the subtrajectories of the closest data trajectory for each query trajectory.

\section{Related Work}
\label{sec:related}
\noindent\textbf{Trajectory Distance Function.}
Many works have proposed metrics to measure the distance between two trajectories~\cite{YiJF98, ChenN04, ChenOO05, KoideXI20, VlachosGK02, 0007BCXLQ18, Yuan019, Xie14, AltG95}. We can divide these distance functions into two categories: order-insensitive and order-sensitive. The order-insensitive distance functions are independent of the position of the point in the trajectory; the order-sensitive distance functions are just the opposite. For example, the order-insensitive functions, DTW~\cite{YiJF98} and Fréchet distance (FD)~\cite{AltG95}, define the distance between trajectories as the cost of turning one trajectory into another through substitution operations. DTW allows different points in one trajectory to be mapped to the same point in another trajectory, enabling DTW to deal well with the case where two trajectories are sampled at different frequencies. Compared with DTW, edit distance with real penalty (ERP)~\cite{ChenN04} introduces  the insert and delete operations. The cost of inserting a point and deleting a point equals replacing it with a pre-defined default point. However, when the position of the default point is not set reasonably, the cost of deleting and inserting a point can be much greater than replacing it. Therefore, edit distance on real sequences (EDR)~\cite{ChenOO05} fixes this issue by introducing an upper bound. Specifically, when the distance between a point in the trajectory and its replacement is greater than this upper bound, the replacement cost equals the deletion cost. WED is a generic distance function that allows users to customize the cost of deletion, insertion, and replacement. The order-sensitive distance functions  (e.g., longest common subsequence (LCSS)~\cite{VlachosGK02}, longest overlapping road segments (LORS)~\cite{0007BCXLQ18}, and longest common road segments (LCRS)~\cite{Yuan019}) calculate the distance of a point in a trajectory from another trajectory considering the point positions in the trajectories.

\noindent\textbf{Subtrajectory Search.}
The previous work~\cite{KoideXI20} divides the subtrajectory search into two stages: filtering and verification. In the filtering phase, most of the trajectories whose distance from the query trajectory exceeds a given threshold are filtered out to reduce the number of validations \cite{faloutsos1994fast, WangLCL20}; in the validation phase, the execution time of the validation phase is simplified with the help of indexes. Unfortunately, this work invokes the trajectory distance function calculation method for all candidate subtrajectories within a trajectory during the validation phase, which makes the validation phase take much time. Another work~\cite{WangLCL20} focuses on how to find the subtrajectory with the minimum distance from the query trajectory in the data trajectory given a query trajectory of length $m$ and a data trajectory of length $n$. ExactS~\cite{WangLCL20} is proposed to find the optimal subtrajectory in time complexity of {\small$O(mn^2)$}. Meanwhile, this work also proposes approximate algorithms (e.g., POS and PSS~\cite{WangLCL20}) with $O(mn)$ time complexity. In addition to these traditional methods, this work proposes two reinforcement learning-based approximate methods (RLS, RLS-Skip~\cite{WangLCL20}) to find the optimal subtrajectory. Furthermore, RLS and RLS-Skip can adaptively select appropriate split points to improve the efficiency of the search. With DTW as the distance function, Spring~\cite{SakuraiFY07} can find the optimal subtrajectory exactly in $O(mn)$ time complexity. Besides, GB~\cite{abs-2203-10364} can find the exact optimal similar subtrajectory with $O(mn)$ time complexity on FD. However, Spring and GB do not apply to other distance functions.  In contrast, our CMA can be applied to most order-insensitive distance functions.  Table \ref{tab:complexities} summarizes the existing subtrajectory search methods. \revision{Due to space limitation, experiments on NetERP, NetEDR and SURS can be found in Appendix D.}

\noindent\textbf{Applications of Subtrajectory search.}
Some previous studies~\cite{WangZX14, WauryJKIX19} implement the travel time estimation of a segment of the trajectory by a similar subtrajectory search. One specific process is to search the most similar subtrajectory from the database and then use its time as an estimate of the current trajectory's communication time. The advantage of subtrajectory search is that it can solve the sparsity of trajectories in the database and thus find more similar trajectories. Another common application is to analyze the movement and behavioral performance of players on the sports ground through subtrajectory search~\cite{WangLCJ19}. In addition, subtrajectory search can be used to count the frequency of a given road section in the database for better road planning~\cite{ChenSZ11, KoideTYXI18, LuoT0N13}.
\section{Conclusion}
\label{sec:conclusion}
\revision{This paper focuses on a similar subtrajectory search problem, i.e., finding the subtrajectory of the data trajectory with the minimum distance for the query trajectory. We convert the problem of finding the optimal subtrajectory to finding the optimal matching sequence. For a given query trajectory of length $m$ and a data trajectory of length $n$, we propose the CMA algorithm to find the subtrajectory with the minimum distance to the query trajectory from the data trajectory in the time complexity of $O(mn)$. Finally, we conduct sufficient experiments on the datasets of Xi'an, Beijing and Porto, and the experimental results show that our CMA algorithm can find efficiently the exact optimal subtrajectory for each query trajectory.}

 \section{acknowledgment}
\label{sec:ack}

Peng Cheng's work is supported by the National Natural Science Foundation of China under Grant No. 62102149 and Open Foundation of Key Laboratory of Transport Industry of Big Data Application Technologies for Comprehensive Transport. Lei Chen’s work is supported by National Key Research and Development Program of China (2022YFE0200500), National Science Foundation of China (NSFC) under Grant No. U22B2060, the Hong Kong RGC GRF Project 16209519, CRF Project C6030-18G, C2004-21GF, AOE Project AoE/E-603/18, RIF Project R6020-19, Theme-based project TRS T41-603/20R, China NSFC No. 61729201, Guangdong Basic and Applied Basic Research Foundation 2019B151530001, Hong Kong ITC ITF grants MHX/078/21 and PRP/004/22FX, Microsoft Research Asia Collaborative Research Grant, HKUST-Webank joint research lab grant and HKUST Global Strategic Partnership Fund (2021 SJTU-HKUST). Xuemin Lin's work is supported by NSFC U2241211 and U20B2046. Wenjie Zhang's work is supported by the Australian Research Council FT210100303 and DP230101445. Corresponding author: Peng Cheng.

\balance

\bibliographystyle{ACM-Reference-Format}

\bibliography{add}

\newpage
\appendix

\centerline{\large APPENDIX}

\section{Proof of Equivalence of Distance Functions}
In this section, we will demonstrate that $wed$ and $dtw$ are special cases of the general distance function $\Theta$.
\begin{theorem}[WED]
	$wed$ is a specific case of $\Theta$, that is, $wed(\tau_q, \tau_d) = \Theta(\tau_q,\tau_d)$, when
	{\scriptsize\begin{eqnarray}
			\left\{
			\begin{array}{ll}
				Cost_{del}(\tau_q[i],$ $ \tau_d[j])=del(\tau_q[i]) \\
				Cost_{sub}(\tau_q[i], \tau_d[j])=sub(\tau_q[i],\tau_d[j])\\
				Cost_{ins(k)}(\tau_q[i], \tau_d[j])=ins(\tau_d[k+1:j-1])+sub(\tau_q[i],\tau_d[j])\\
			\end{array}
			\right. 
			\label{eq:cost_wed}
	\end{eqnarray}}
\end{theorem}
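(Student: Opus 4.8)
The plan is to prove the identity $wed(\tau_q,\tau_d)=\Theta(\tau_q,\tau_d)$ by exhibiting a cost-preserving correspondence between the two families of objects that each side minimizes over: the edit scripts underlying $wed$ and the matching sequences in $\mathbb{A}$ underlying $\Theta$. Because both quantities are defined as a minimum over their respective feasible sets, it suffices to establish the two inequalities separately, namely that every matching sequence induces an edit script of equal cost (yielding $wed(\tau_q,\tau_d)\le\Theta(\tau_q,\tau_d)$) and that every edit script induces a matching sequence of equal cost (yielding $\Theta(\tau_q,\tau_d)\le wed(\tau_q,\tau_d)$).

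For the first inequality I would start from an arbitrary matching sequence $\mathcal{A}=[a_1,\dots,a_m]\in\mathbb{A}$ and read off an explicit conversion of $\tau_q$ into $\tau_d$: insert the prefix $\tau_d[1:a_1-1]$; then for each $i$ delete $\tau_q[i]$ when $a_i=a_{i-1}$, substitute $\tau_q[i]$ by $\tau_d[a_i]$ when $a_i=a_{i-1}+1$, and substitute $\tau_q[i]$ by $\tau_d[a_i]$ together with the insertion of $\tau_d[a_{i-1}+1:a_i-1]$ when $a_i>a_{i-1}+1$; finally insert the suffix $\tau_d[a_m+1:n]$. Monotonicity of $\mathcal{A}$ guarantees this is a legal conversion, and substituting the three cost definitions of Equation~\ref{eq:cost_wed} term by term shows that its total edit cost equals exactly the matching-conversion cost of $\mathcal{A}$. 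Minimizing over $\mathbb{A}$ then gives $wed(\tau_q,\tau_d)\le\Theta(\tau_q,\tau_d)$.

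For the reverse inequality I would fix an optimal edit script realizing $wed(\tau_q,\tau_d)$ and first normalize it into a canonical monotone alignment, i.e.\ a strictly increasing set of substitution pairs $(i,a_i)$ together with the deleted query points and the inserted data points; the fact that the recursion of Equation~\ref{eq:raw_wed} always admits a monotone optimal alignment justifies this normalization. From the canonical alignment I define $a_i$ to be the target of $\tau_q[i]$ when it is substituted, and the index of the most recently consumed data point when it is deleted, verify that the resulting sequence is non-decreasing and hence lies in $\mathbb{A}$, and check that its matching-conversion cost coincides with the script cost, which gives $\Theta(\tau_q,\tau_d)\le wed(\tau_q,\tau_d)$.

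The step I expect to be the main obstacle is the boundary bookkeeping in the second direction: reconciling the leading and trailing operations of a free-form edit script with the rigid structure of a matching sequence, in which $\tau_q[1]$ is always charged as a substitution and the only unmatched data points permitted are the prefix $\tau_d[1:a_1-1]$ and the suffix $\tau_d[a_m+1:n]$. The crux is to show that an optimal script can be rewritten, without increasing its cost, so that within every maximal run of query points mapped to a common data point the single substitution is placed on the correct endpoint of the run; once this normalization and the corresponding treatment of leading deletions are secured, the term-by-term cost comparison and the combination of the two inequalities into $wed=\Theta$ are routine.
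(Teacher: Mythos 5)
Your proposal is correct and follows essentially the same route as the paper's proof: both establish $wed(\tau_q,\tau_d)=\Theta(\tau_q,\tau_d)$ via the two inequalities, obtaining $wed\le\Theta$ by turning an arbitrary matching sequence into a conversion of equal cost (the paper formalizes this as the forced recursion $wed'$), and $\Theta\le wed$ by extracting a monotone matching sequence from an optimal edit computation (the paper reads it off the dynamic-programming trace of Equation~\ref{eq:raw_wed}, which is automatically the canonical alignment you describe), with the boundary bookkeeping for leading deletions handled exactly as you anticipate. No substantive difference in decomposition or key ideas.
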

\begin{proof}
	We will prove {\scriptsize$wed(\tau_q[1:m], \tau_d[1:n]) \leq \Theta(\tau_q,\tau_d)$} and {\scriptsize$wed(\tau_q[1:m], \tau_d[1:n]) \geq \Theta(\tau_q,\tau_d)$} to establish {\scriptsize$wed(\tau_q, \tau_d) = \Theta(\tau_q,\tau_d)$}.
	
	For any given $\mathcal{A}{\tau_q:\tau_d}'$, we can recursively define the function $wed'(\tau_q[1:m],\tau_d[1:n])$ as follows:
	
	{\scriptsize\begin{eqnarray}
			&&wed'(\tau_q[1:i], \tau_d[1:j]) \\ \notag
			&=&\left\{
			\begin{array}{ll}
				wed'(\tau_q[1:i-1], \tau_d[1:j]) + del(\tau_q[i]), &a_i=j, a_{i-1}=j\\
				wed'(\tau_q[1:i-1], \tau_d[1:j-1]) + sub(\tau_q[i],\tau_d[j]),\\
				&a_i=j, a_{i-1}<j\\
				wed'(\tau_q[1:i], \tau_d[1:j-1]) + ins(\tau_d[j]), &otherwise\\
			\end{array}
			\right. 
			\label{eq:wed_fake}
	\end{eqnarray}}
	
	Meanwhile, we define {\scriptsize$wed(\tau_q[i:j], \tau_{\emptyset})=del(\tau_q[i:j])=\sum_{k=i}^{j}{del(\tau_q[k])}$} and {\scriptsize$wed(\tau_{\emptyset}, \tau_d[i:j])=ins(\tau_d[i:j])=\sum_{k=i}^{j}{ins(\tau_d[k])}$}. We can obtain the inequality {\scriptsize$wed'(\tau_q[1:m], \tau_d[1:n])\geq wed(\tau_q[1:m], \tau_d[1:n])$} by mathematical induction as follows.
	{\scriptsize
		\begin{eqnarray}
			&&wed'(\tau_q[1:i], \tau_d[1:j]) \\\notag
			&\geq &\min \{wed'(\tau_q[1:i-1], \tau_d[1:j-1]) + sub(\tau_q[i],\tau_d[j]), \\\notag
			&& wed'(\tau_q[1:i-1], \tau_d[1:j]) + del(\tau_q[i]),\\\notag
			&& wed'(\tau_q[1:i], \tau_d[1:j-1]) + ins(\tau_d[j])\}\\\notag
			&\geq &\min \{wed(\tau_q[1:i-1], \tau_d[1:j-1]) + sub(\tau_q[i],\tau_d[j]), \\\notag
			&& wed(\tau_q[1:i-1], \tau_d[1:j]) + del(\tau_q[i]),\\\notag
			&& wed(\tau_q[1:i], \tau_d[1:j-1]) + ins(\tau_d[j])\}\\\notag
			&=&wed(\tau_q[1:i], \tau_d[1:j])
		\end{eqnarray}
	}
	Next, we will prove that {\scriptsize$wed'(\tau_q[1:i], \tau_d[1:a_i]) = wed'(\tau_q[1:i-1], \tau_d[1:a_{i-1}]) + \text{Cost}(\tau_q[1:i], \tau_d[1:a_i])$ $(2\leq i)$}. We will discuss different cases:
	\begin{enumerate}
		\item When $a_i=j$ and $a_{i-1}=j$, we have {\scriptsize$wed'(\tau_q[1:i], \tau_d[1:a_i])=wed'(\tau_q[1:i-1], \tau_d[1:a_{i-1}])+Cost_{del}(\tau_q[1:i], \tau_d[1:a_i])$}.
		\item When $a_i=j$ and $a_{i-1}=j-1$, we have {\scriptsize$wed'(\tau_q[1:i], \tau_d[1:a_i])=wed'(\tau_q[1:i-1], \tau_d[1:a_{i-1}])+Cost_{sub}(\tau_q[1:i], \tau_d[1:a_i])$}.
		\item When $a_i=j$ and $a_{i-1}<j-1$, we have
		{\scriptsize
			\begin{eqnarray}
				&&wed'(\tau_q[1:i], \tau_d[1:a_i]) \\ \notag
				&=&wed'(\tau_q[1:i-1], \tau_d[1:a_i-1])+sub(\tau_q[i], \tau_d[a_i]) \\ \notag
				&=&wed'(\tau_q[1:i-1],  \tau_d[1:a_{i-1}])+\sum_{k=a_{i-1}+1}^{a_{i}-1}ins(\tau_d[k])+sub(\tau_q[i], \tau_d[a_i]) \\ \notag
				&=&wed'(\tau_q[1:i-1], \tau_d[1:a_{i-1}])+Cost_{ins(a_{i-1})}(\tau_q[i], \tau_d[a_i])
			\end{eqnarray}
		}
	\end{enumerate}
	
	Therefore, we can establish the following equation holds.
	{\scriptsize
		\begin{eqnarray}
			&&wed'(\tau_q[1:m], \tau_d[1:n]) \\\notag
			&=&wed'(\tau_q[1:m], \tau_d[1:a_m])+Insert(\tau_d[a_m+1:n])\\\notag
			&=&wed'(\tau_q[1:1], \tau_d[1:a_1])+\sum_{i=2}^{m}Cost(\tau_q[i], \tau_d[a_i])+Insert(\tau_d[a_m+1:n])\\ \label{eq:pr3}
			&=&wed'(\tau_q[1:0], \tau_d[1:a_1-1])+Cost(\tau_q[1], \tau_d[a_1])\\\notag
			&&+\sum_{i=2}^{m}Cost(\tau_q[i], \tau_d[a_i])+Insert(\tau_d[a_m+1:n])\\\label{eq:pr4}
			&=&Insert(\tau_d[1:a_1-1])+\sum_{i=1}^{m}Cost(\tau_q[i], \tau_d[a_i])+Insert(\tau_d[a_m+1:n])\\\notag
		\end{eqnarray}
	}
	
	We can derive Equation \ref{eq:pr3} from Equation \ref{eq:pr4} mainly because $\tau_q[1]$ will only be deleted or substituted by $\tau_d[a_1]$. Therefore, for any $\mathcal{A}{\tau_q:\tau_d}'$, we have {\scriptsize$wed(\tau_q[1:m], \tau_d[1:n]) \leq Insert(\tau_d[1:a_1]) + \sum_{i=1}^{m} Cost(\tau_q[i], \tau_d[a_i]) + Insert(\tau_d[a_m+1:n])$}. Thus, we can conclude that {\scriptsize$wed(\tau_q[1:m], \tau_d[1:n]) \leq \Theta(\tau_q,\tau_d)$}. Next, we will prove that {\scriptsize$wed(\tau_q[1:m], \tau_d[1:n]) \geq \Theta(\tau_q,\tau_d)$}.
	
	The equation \ref{eq:raw_wed} defines the recursive computation process of WED. Given the dynamic programming process, we can construct a matching sequence $\mathcal{A}{\tau_q:\tau_d}$. We start the recursion with the computation of $wed(\tau_q[1:m],\tau_d[1:n])$. Whenever {\scriptsize$wed(\tau_q[1:i], \tau_d[1:j]) = wed(\tau_q[1:i-1], \tau_d[1:j-1]) + sub(\tau_q[i],\tau_d[j])$}, we assign a value to $a_i$ and denote $a_i = j$. Thus, we obtain a sequence with assigned values $\mathbb{k} = \{a_{k_1}, a_{k_2}, \dots, a_{k_t}\}$. Given two adjacent assigned values $a_{k_u}$ and $a_{k_{u+1}}$, we need to delete $\tau_q[k_u+1:k_{u+1}-1]$ and insert $\tau_d[a_{k_u}+1:a_{k_{u+1}}-1]$. Therefore, we let $a_v = a_{k_u}$ for $(k_u+1 \leq v \leq k_{u+1}-1)$. We let $a_v = a_{k_1}$ for $1 \leq v \leq k_1$ and let $a_v = a_t$ for $a_t+1 \leq v \leq m$.
	
	After obtaining the constructed matching sequence $\mathcal{A}{\tau_q:\tau_d}$, we need to prove that the matching cost of this sequence is equivalent to $wed(\tau_q[1:m], \tau_d[1:n])$. Assuming $\tau_q[a_{k_u}]$ and $\tau_q[a_{k_{u+1}}]$ are adjacent subsituted points in the dynamic programming process, we have {\scriptsize$wed(\tau_q[1:{k_{u+1}}], \tau_d[1:a_{k_{u+1}}]) = wed(\tau_q[1:{k_u}], \tau_d[1:a_{k_u}]) + del(\tau_q[{k_u}+1:k_{u+1}-1]) + ins(\tau_d[a_{k_u}+1:a_{k_{u+1}}-1])$}.
	
	Therefore, we have 
	{\scriptsize
		\begin{eqnarray}
			&&wed(\tau_q[1:m],\tau_d[1:n]) \\\notag
			&=& del(\tau_q[1:{k_1}-1]) + del(\tau_q[{k_t}+1:m]) +  ins(\tau_d[1:a_{k_1}-1])\\ \notag 
			&& + ins(\tau_d[a_{k_t}+1:n]) + sub(\tau_q[k_1],\tau_d[k])+ \\\notag
			&&\sum_{a_{k_u}, a_{k_{u+1}}\in \mathbb{k}}del(\tau_q[{k_u}+1:k_{u+1}-1]) + \\ \notag
			&& ins(\tau_d[a_{k_u}+1:a_{k_{u+1}}-1])+sub(\tau_q[k_{u+1}], \tau_d[a_{k_{u+1}}])  \\ \notag 
			&=& \sum_{i=1}^{k_1-1}Cost_{del}(\tau_q[i])+\sum_{i={k_t}+1}^{m}Cost_{del}(\tau_q[i]) \\ \notag
			&& + Insert(\tau_d[1:a_1-1])+Insert(\tau_d[a_m+1:n]) + \\\notag
			&& \sum_{a_{k_u}, a_{k_{u+1}}\in \mathbb{k}}{\sum_{i=k_u+1}^{k_{u+1}-1}Cost_{del}(\tau_q[i]) + Cost_{ins(a_{k_u})}(\tau_q[k_{u+1}], \tau_d[[a_{k_{u+1}}])} \\ \notag
			&=& \sum_{a_i \in \mathcal{A}_{\tau_q:\tau_d}}{Cost(\tau_q[i], \tau_d[a_i])} + Insert(\tau_d[1:a_1-1])+Insert(\tau_d[a_m+1:n]) \\ \notag
			&\geq&  \Theta(\tau_q:\tau_d)
		\end{eqnarray}
	}
	
	Therefore, we can conclude that $wed(\tau_q[1:m], \tau_d[1:n]) \geq \Theta(\tau_q,\tau_d)$. Furthermore, combining this result with our previous conclusion, we can state that $wed(\tau_q[1:m], \tau_d[1:n]) = \Theta(\tau_q,\tau_d)$.
	
\end{proof}

Next, we will demonstrate that $dtw$ is also a special case of $\Theta$. Equation \ref{eq:raw_dtw} provides a detailed calculation process for $dtw$. During the recursive process, each invocation of Equation \ref{eq:raw_dtw} considers establishing an edge between $\tau_q[i]$ and $\tau_d[j]$. We denote the set of all edges as $E$. We define the weight of an edge $e$ as $e.w = sub(\tau_q[i], \tau_d[j])$ and thus, we have $dtw(\tau_q, \tau_d) = \sum_{e \in E} e.w$. According to the definition of $dtw$, each point has at least one edge. We define the points with multiple edges as \textit{multi-points}. We will now prove that there are no edges between multi-points.

\begin{lemma}
	\label{lem:only}
	Given the set of edges $E$ obtained from the recursive process of $dtw$, there are no edges between multi-points.
\end{lemma}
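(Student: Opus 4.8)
The plan is to read the edge set $E$ as the cells $(i,j)$ lying on the traceback path of the recursion in Equation~\ref{eq:raw_dtw}, so that an edge joins $\tau_q[i]$ to $\tau_d[j]$ exactly when $(i,j)$ is on this path, and $dtw(\tau_q,\tau_d)=\sum_{(i,j)\in E} sub(\tau_q[i],\tau_d[j])$. Since the three predecessors allowed by Equation~\ref{eq:raw_dtw} are $(i-1,j)$, $(i,j-1)$ and $(i-1,j-1)$, the path is monotone non-decreasing in both coordinates, so the cells it occupies in any fixed row $i$ (resp.\ column $j$) form a contiguous block. First I would record the characterization that $\tau_q[i]$ is a multi-point iff its row contributes at least two path cells (a horizontal run), and symmetrically $\tau_d[j]$ is a multi-point iff its column contributes at least two path cells (a vertical run).

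Next, I would argue by contradiction: suppose some edge $(i,j)\in E$ joins two multi-points. Using contiguity and monotonicity, $(i,j)$ cannot lie strictly inside its row's horizontal run, for then the path would pass $(i,j-1)\to(i,j)\to(i,j+1)$, no other cell of column $j$ could be on the path without violating column-monotonicity, and $\tau_d[j]$ would be matched only to $\tau_q[i]$, contradicting that it is a multi-point; symmetrically $(i,j)$ cannot lie strictly inside its column's vertical run. Hence $j$ is an endpoint of the row run and $i$ an endpoint of the column run. A short case analysis on which endpoints these are then pins the local shape of the path at $(i,j)$ down to one of two L-turns: either $(i,j-1)\to(i,j)\to(i+1,j)$ (enter horizontally, leave vertically) or $(i-1,j)\to(i,j)\to(i,j+1)$ (enter vertically, leave horizontally); the other two endpoint combinations would force $(i,j)$ to have two predecessors or two successors on the path and are therefore impossible.

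In either L-turn I would then apply a diagonal shortcut: replace the two moves through $(i,j)$ by the single diagonal step $(i,j-1)\to(i+1,j)$ (resp.\ $(i-1,j)\to(i,j+1)$), which is itself one of the legal transitions of Equation~\ref{eq:raw_dtw}. This yields another valid warping path that omits the cell $(i,j)$, so its total cost equals the old cost minus $sub(\tau_q[i],\tau_d[j])$. Because substitution costs are nonnegative, the new path costs no more than the optimum, hence it is also optimal yet has strictly fewer edges. Taking $E$ at the outset to be an optimal path with the fewest edges contradicts this minimality, and proves that no edge joins two multi-points.

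The step I expect to be the main obstacle is the boundary bookkeeping in the middle paragraph: turning ``monotone path'' into the precise statement that a two-sided multi-point forces a genuine turn rather than a pass-through, and cleanly excluding the degenerate endpoint combinations. This is coupled with the nonnegativity caveat, since $sub$ may equal $0$: the shortcut then need not strictly lower the cost, so the contradiction must be driven by minimality of the number of edges (equivalently, by fixing a canonical traceback that always prefers the diagonal) rather than by cost alone.
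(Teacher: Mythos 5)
Your proof follows essentially the same route as the paper's: both argue by contradiction, pin down the local shape of the path at an edge joining two multi-points (the paper assumes $\tau_q[i]$ is also connected to $\tau_d[j-1]$ and $\tau_d[j]$ also to $\tau_q[i+1]$ --- exactly your first L-turn), and then delete that edge to obtain another valid warping path of no greater cost. Your version is more careful on two counts, and the second one matters. First, you actually carry out the contiguity/endpoint case analysis (ruling out the pass-through and two-predecessor/two-successor configurations) that the paper compresses into ``we can assume''; that part is routine. Second, and more substantively, you observe that deleting the edge removes only $sub(\tau_q[i],\tau_d[j])\geq 0$ from the cost, so the resulting inequality need not be strict; the paper instead asserts $dtw'(\tau_q,\tau_d) < dtw(\tau_q,\tau_d)$ outright, which fails whenever the substitution cost is zero. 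Indeed, for two identical trajectories with $sub\equiv 0$, a legal traceback of Equation~\ref{eq:raw_dtw} can produce the path $(1,1)\to(1,2)\to(2,2)$, whose middle edge joins two multi-points at optimal (zero) cost, so the lemma as literally stated for an arbitrary traceback is false in such degenerate cases. Your repair --- take $E$ to be an optimal path with the fewest edges, or equivalently fix a diagonal-preferring traceback --- is exactly what is needed to make both the statement and the paper's own argument sound. In short: same approach, but your proposal closes a genuine gap in the published proof rather than containing one.
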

\begin{proof}
	Suppose there exist two multi-points $\tau_q[i]$ and $\tau_d[j]$ with an adjacent edge $e \in E$ connecting them. Since $\tau_q[i]$ is a multi-point, we can assume that $\tau_q[i]$ is connected to $\tau_d[j-1]$. Furthermore, as $\tau_d[j]$ is also a multi-point, there exists an edge between $\tau_d[j]$ and $\tau_q[i+1]$. We can construct a set $E' = E \setminus \{e\}$ such that $dtw'(\tau_q, \tau_d) < dtw(\tau_q, \tau_d)$, which contradicts the definition of $dtw$.
\end{proof}

Lemma \ref{lem:only} demonstrates a property of $dtw$. Although there are no edges connecting multi-points to each other, it is possible to have points $\tau_q[i]$ and $\tau_d[j]$ connected by an edge, where neither $\tau_q[i]$ nor $\tau_d[j]$ are multi-points. Therefore, we can classify the edges in set $E$ into three categories:
\begin{enumerate}
	\item Connecting two non-multi-points $\tau_q[i]$ and $\tau_d[j]$. \label{cond:1}
	\item Connecting a multi-point $\tau_q[k]$ in the query trajectory with multiple consecutive points $\tau_d[i:j]$ in the data trajectory. \label{cond:2}
	\item Connecting a multi-point $\tau_d[k]$ in the data trajectory with multiple consecutive points $\tau_q[i:j]$ in the query trajectory. \label{cond:3}
\end{enumerate}

We will place $\tau_q[i]$ from \ref{cond:1}, $\tau_q[k]$ from \ref{cond:2}, and $\tau_d[k]$ from \ref{cond:3} into a set $G$. Every $\tau_q[k] \in G$ in the query trajectory is connected to one or more points $\tau_d[i:j]$ ($j \geq i$). Similarly, every point $\tau_d[k] \in G$ in the data trajectory is connected to multiple points $\tau_q[i:j]$ ($j > i$). Thus, we have {\scriptsize$dtw(\tau_q, \tau_d) = \sum_{\tau_q[k] \in G} \sum_{t=i}^{j} sub(\tau_q[k], \tau_d[t]) + \sum_{\tau_d[k] \in G} \sum_{t=i}^{j} sub(\tau_q[t], \tau_d[k])$}. Based on these definitions and Lemma \ref{lem:only}, we will now prove that Theorem \ref{eq:q_dtw} holds true.

\begin{theorem}[DTW]
	\label{eq:q_dtw}
	$dtw$ is a specific case of $\Theta$, that is, $dtw(\tau_q, \tau_d)$ $= \Theta(\tau_q,\tau_d)$, when
	{\scriptsize\begin{eqnarray}
			\left\{
			\begin{array}{ll}
				Cost_{del}(\tau_q[i],$ $ \tau_d[j])=&sub(\tau_q[i],\tau_d[j]) \\
				Cost_{sub}(\tau_q[i], \tau_d[j])=&sub(\tau_q[i],\tau_d[j])\\
				Cost_{ins(k)}(\tau_q[i], \tau_d[j])=&\mathop{\min}\limits_{k\leq t\leq j-1}$ 
				$\sum_{p=k+1}^{t}{sub(\tau_q[i-1],\tau_d[p])} \\
				&+ \sum_{p=t+1}^{j}{sub(\tau_q[i],\tau_d[p])}\\
			\end{array}
			\right. 
	\end{eqnarray}}
\end{theorem}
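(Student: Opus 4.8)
The plan is to establish the two inequalities $\Theta(\tau_q,\tau_d)\le dtw(\tau_q,\tau_d)$ and $\Theta(\tau_q,\tau_d)\ge dtw(\tau_q,\tau_d)$ separately, mirroring the structure of the WED proof but exploiting the edge decomposition and Lemma~\ref{lem:only} already in hand. The key observation is that the group sum $dtw(\tau_q,\tau_d)=\sum_{\tau_q[k]\in G}\sum_{t=i}^{j}sub(\tau_q[k],\tau_d[t])+\sum_{\tau_d[k]\in G}\sum_{t=i}^{j}sub(\tau_q[t],\tau_d[k])$ expresses the DTW cost as a sum over disjoint, consecutive groups, each of which is either a single query point owning a run of data points or a single data point owned by a run of query points; these two shapes correspond exactly to the insertion and deletion operations of the $\Theta$ framework.

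For $\Theta\le dtw$, I would turn the groups of $G$ into one concrete matching sequence $\mathcal{A}^{*}$ and bound its matching-conversion cost. Ordering the groups from left to right, a data-side group (a multi-point $\tau_d[k]$ covering $\tau_q[i:j]$) is encoded by setting $a_i=a_{i+1}=\dots=a_j=k$, and a query-side group (a multi-point $\tau_q[k]$ covering $\tau_d[i:j]$) is encoded by setting $a_k=i$ and letting $\tau_d[i+1:j]$ be inserted; choosing the left endpoint here keeps the sequence non-decreasing and, since the first and last groups necessarily contain $\tau_q[1]/\tau_d[1]$ and $\tau_q[m]/\tau_d[n]$, guarantees $a_1=1$ and $a_m=n$ so that the prefix and suffix insertion terms vanish. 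I would then evaluate the cost: each data-side group contributes one $Cost_{sub}$ and several $Cost_{del}$, each equal to the corresponding $sub(\cdot,\cdot)$, reproducing its part of the group sum, while each query-side group contributes a single $Cost_{ins(k)}$ term whose internal minimum over the split point $t$ is at most the value obtained by assigning the whole run to $\tau_q[k]$, i.e. at most that group's $\sum_t sub$. Summing over all groups gives $\mathrm{cost}(\mathcal{A}^{*})\le dtw(\tau_q,\tau_d)$, and since $\Theta$ is the minimum over all matching sequences, $\Theta(\tau_q,\tau_d)\le dtw(\tau_q,\tau_d)$.

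For the reverse inequality I would argue that every matching sequence induces a legal DTW warping path of equal weight. Given an arbitrary $\mathcal{A}$, I align $\tau_q[i]$ with $\tau_d[a_i]$; whenever $a_i=a_{i-1}$ the repeated match becomes an extra edge into the data multi-point $\tau_d[a_i]$; whenever $a_i>a_{i-1}+1$ the gap $\tau_d[a_{i-1}+1:a_i-1]$ is distributed between $\tau_q[i-1]$ and $\tau_q[i]$ using precisely the split point $t$ that attains the minimum in $Cost_{ins(a_{i-1})}$; and any prefix $\tau_d[1:a_1-1]$ or suffix $\tau_d[a_m+1:n]$ is absorbed by $\tau_q[1]$ or $\tau_q[m]$. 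This produces a monotone path from $(1,1)$ to $(m,n)$ that touches every point of both trajectories, hence a valid DTW alignment, and by construction its total $sub$-weight equals $\mathrm{cost}(\mathcal{A})$. Because $dtw$ is the minimum weight over all warping paths, $dtw(\tau_q,\tau_d)\le \mathrm{cost}(\mathcal{A})$ for every $\mathcal{A}$, and minimizing the right-hand side yields $dtw(\tau_q,\tau_d)\le\Theta(\tau_q,\tau_d)$. Combining the two inequalities gives the claimed equality.

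I expect the main obstacle to be the bookkeeping around the insertion operator. The term $Cost_{ins(k)}$ hides a minimization over the split point $t$, and the crux is to verify that this split is exactly the freedom DTW has when deciding how a maximal run of consecutive data points is shared between the two query points flanking it. Lemma~\ref{lem:only} is what makes this clean, since it forbids edges between multi-points and therefore guarantees that such a run is owned by at most one multi-point and bounded by genuine (non-multi) matches on either side, so that no run is ever split among three or more query points. Checking monotonicity and full coverage of the induced path at the group boundaries, together with the anchoring of the endpoints that annihilates the prefix and suffix insertion terms, is the delicate part; the remaining per-group cost accounting is routine summation.
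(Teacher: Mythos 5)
Your proposal is correct and follows essentially the same route as the paper's own proof: one direction builds a matching sequence from the edge decomposition of the optimal DTW alignment using Lemma~\ref{lem:only} and bounds each group's contribution by the corresponding $Cost_{del}$/$Cost_{sub}$/$Cost_{ins(k)}$ terms, while the other direction turns an arbitrary matching sequence into a DTW computation of equal total $sub$-weight. The only cosmetic differences are that the paper formalizes the latter direction through a restricted recursion $dtw'$ (shown by induction to dominate $dtw$) rather than your explicit warping-path construction, and that it does not claim $a_m=n$ for the constructed sequence but instead lets the suffix $Insert$ term absorb the tail of a final query-side group --- a minor bookkeeping point your per-group accounting also covers.
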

\begin{proof}
	We can show that $dtw(\tau_q, \tau_d) = \Theta(\tau_q,\tau_d)$ by demonstrating that both $dtw(\tau_q[1:m], \tau_d[1:n]) \leq \Theta(\tau_q,\tau_d)$ and $dtw(\tau_q[1:m], \tau_d[1:n]) \geq \Theta(\tau_q,\tau_d)$ hold true.
	
	For any given $\mathcal{A}{\tau_q:\tau_d}'$,	{\scriptsize $Cost(\tau_q[i], \tau_d[j])=Cost_{ins(k)}(\tau_q[i], \tau_d[j])$} when $a_i=j$ and $a_{i-1}<j-1$. We define {\scriptsize$t_{k}^{*}=\mathop{\arg\min}\limits_{k\leq t\leq j-1}$ $\sum_{p=k+1}^{t}{sub(\tau_q[i-1],\tau_d[p])}$ $+ \sum_{p=t+1}^{j}{sub(\tau_q[i],\tau_d[p])}$} (Note that for not all k, $t_{k}^{*}$ is defined).  Then, we have
	{\scriptsize$$
	Cost_{ins(k)}(\tau_q[i], \tau_d[j])=\sum_{p=k+1}^{t_{k}^{*}}{sub(\tau_q[i-1],\tau_d[p])}+ \sum_{p=t_{k}^{*}+1}^{j}{sub(\tau_q[i],\tau_d[p])}
	$$}
	
	Next, we can recursively define the function $dtw'(\tau_q[1:m],\tau_d[1:n])$ as follows where $i>2$:
	
	{\scriptsize\begin{eqnarray}
			&&dtw'(\tau_q[1:i], \tau_d[1:j]) \\ \notag
			&=&\left\{
			\begin{array}{ll}
				dtw'(\tau_q[1:i-1], \tau_d[1:j]) + sub(\tau_q[i],\tau_d[j]), &a_i=j, a_{i-1}=j\\
				dtw'(\tau_q[1:i-1], \tau_d[1:j-1]) + sub(\tau_q[i],\tau_d[j]),\\
				&a_i=j, a_{i-1}=j-1\\
				dtw'(\tau_q[1:i-1], \tau_d[1:j-1]) + sub(\tau_q[i],\tau_d[j]),\\
				&t_{a_{i-1}}* \text{exists and} j=t*_{a_{i-1}} + 1\\
				dtw'(\tau_q[1:i], \tau_d[1:j-1]) + sub(\tau_q[i],\tau_d[j]), &otherwise\\
			\end{array}
			\right. 
			\label{eq:dtw_fake}
	\end{eqnarray}}
	
	Meanwhile, we define {\scriptsize$dtw'(\tau_q[1:1], \tau_d[1:j])=\sum_{k=1}^{j}{sub(\tau_q[1],\tau_d[k])}$}. We can obtain the inequality $dtw'(\tau_q[1:m], \tau_d[1:n])\geq dtw(\tau_q[1:m], \tau_d[1:n])$ by similar process as $wed$.
%	{\scriptsize
%		\begin{eqnarray}
%			&&dtw'(\tau_q[1:i], \tau_d[1:j]) \\\notag
%			&\geq &\min \{dtw'(\tau_q[1:i-1], \tau_d[1:j-1]) + sub(\tau_q[i],\tau_d[j]), \\\notag
%			&& dtw'(\tau_q[1:i-1], \tau_d[1:j]) + sub(\tau_q[i],\tau_d[j]),\\\notag
%			&& dtw'(\tau_q[1:i], \tau_d[1:j-1]) + sub(\tau_q[i],\tau_d[j])\}\\\notag
%			&\geq &\min \{dtw(\tau_q[1:i-1], \tau_d[1:j-1]) + sub(\tau_q[i],\tau_d[j]), \\\notag
%			&& dtw(\tau_q[1:i-1], \tau_d[1:j]) + sub(\tau_q[i],\tau_d[j]),\\\notag
%			&& dtw(\tau_q[1:i], \tau_d[1:j-1]) + sub(\tau_q[i],\tau_d[j])\}\\\notag
%			&=&dtw(\tau_q[1:i], \tau_d[1:j])
%		\end{eqnarray}
%	}
	Next, we will prove that {\scriptsize$dtw'(\tau_q[1:i], \tau_d[1:a_i]) = dtw'(\tau_q[1:i-1], \tau_d[1:a_{i-1}]) + \text{Cost}(\tau_q[1:i], \tau_d[1:a_i])$ $(2\leq i)$}. We will discuss different cases:
	\begin{enumerate}
		\item When $a_i=j$ and $a_{i-1}=j$, we have {\scriptsize$dtw'(\tau_q[1:i], \tau_d[1:a_i])=dtw'(\tau_q[1:i-1], \tau_d[1:a_{i-1}])+Cost_{del}(\tau_q[1:i], \tau_d[1:a_i])$}.
		\item When $a_i=j$ and $a_{i-1}=j-1$, we have {\scriptsize$dtw'(\tau_q[1:i], \tau_d[1:a_i])=dtw'(\tau_q[1:i-1], \tau_d[1:a_{i-1}])+Cost_{sub}(\tau_q[1:i], \tau_d[1:a_i])$}.
		\item When $a_i=j$ and $a_{i-1}<j-1$, we have
		{\scriptsize
			\begin{eqnarray}
				&&dtw'(\tau_q[1:i], \tau_d[1:a_i]) \\ \notag
				&=&dtw'(\tau_q[1:i], \tau_d[1:t*_{a_{i-1}}+1])+\sum_{k=t*_{a_{i-1}}+2}^{a_i}sub(\tau_q[i], \tau_d[k]) \\ \notag
				&=&dtw'(\tau_q[1:i-1],  \tau_d[1:t*_{a_{i-1}}])+\sum_{k=t*_{a_{i-1}}+1}^{a_i}sub(\tau_q[i], \tau_d[k]) \\ \notag
				&=&dtw'(\tau_q[1:i-1],  \tau_d[1:a_{i-1}])+\sum_{k=a_{i-1}+1}^{t*_{a_{i-1}}}sub(\tau_q[i-1], \tau_d[k]) \\ \notag
				&&+\sum_{k=t*_{a_{i-1}}+1}^{a_i}sub(\tau_q[i], \tau_d[k]) \\ \notag
				&=&dtw'(\tau_q[1:i-1], \tau_d[1:a_{i-1}])+Cost_{ins(a_{i-1})}(\tau_q[i], \tau_d[a_i])
			\end{eqnarray}
		}
	\end{enumerate}
	
	Therefore, we can establish the following equation holds.
	{\scriptsize
		\begin{eqnarray}
			&&dtw'(\tau_q[1:m], \tau_d[1:n]) \\\notag
			&=&dtw'(\tau_q[1:m], \tau_d[1:a_m])+Insert(\tau_d[a_m+1:n])\\\notag
			&=&dtw'(\tau_q[1:1], \tau_d[1:a_1])+\sum_{i=2}^{m}Cost(\tau_q[i], \tau_d[a_i])+Insert(\tau_d[a_m+1:n])\\ \label{eq:pr5}
			&=&Insert(\tau_d[1:a_1-1])+\sum_{i=1}^{m}Cost(\tau_q[i], \tau_d[a_i])+Insert(\tau_d[a_m+1:n])\\\notag
		\end{eqnarray}
	}
	
	Therefore, for any $\mathcal{A}{\tau_q:\tau_d}'$, we have {\scriptsize$dtw(\tau_q[1:m], \tau_d[1:n]) \leq Insert(\tau_d[1:a_1]) + \sum_{i=1}^{m} Cost(\tau_q[i], \tau_d[a_i]) + Insert(\tau_d[a_m+1:n])$}. Thus, we can conclude that {\scriptsize$dtw(\tau_q[1:m], \tau_d[1:n]) \leq \Theta(\tau_q,\tau_d)$}. Next, we will prove that {\scriptsize$dtw(\tau_q[1:m], \tau_d[1:n]) \geq \Theta(\tau_q,\tau_d)$}.
	
	The equation \ref{eq:raw_dtw} defines the recursive computation process of DTW. Given the dynamic programming process, we can construct a matching sequence $\mathcal{A}{\tau_q:\tau_d}$ by the multi-points set $G$. For each multi-point $\tau_d[k]$ and all points $\tau_q[i:j]$ that it connects, we set $a_t=k$ ($i\leq t\leq j$). For each multi-point $\tau_q[k]$ and all points $\tau_d[i:j]$ that it connects, we set $a_k=i$.
	
	For each multi-point $\tau_q[k]$ ($k<m$), we have
	{\scriptsize
	\begin{eqnarray}
		\sum_{t=i}^{j}sub(\tau_q[k],\tau_d[t])&=&Cost_{sub}(\tau_q[k], \tau_d[i]) + \sum_{t=i}^{j}sub(\tau_q[k],\tau_d[t]) \\ \notag
		&\geq& Cost_{sub}(\tau_q[k], \tau_d[i]) + Cost_{ins(i)}(\tau_q[k+1], \tau_d[j+1]) \\\notag
		&&- sub(\tau_q[k+1], \tau_d[j+1]) \notag
	\end{eqnarray}
}
    When $k=m$, {\scriptsize$\sum_{t=i}^{j}sub(\tau_q[k],\tau_d[t])=Cost_{sub}(\tau_q[k],\tau_d[i])+Insert(\tau_q[i+1:j])$}.
    
    For each multi-point $\tau_d[k]$, we have
    {\scriptsize
    \begin{eqnarray}
    	\sum_{t=i}^{j}sub(\tau_q[t],\tau_d[k])&=&Cost_{sub}(\tau_q[i], \tau_d[k]) + \sum_{t=i+1}^{j}Cost_del(\tau_q[i],\tau_d[t]) 
    \end{eqnarray}
}
    Finally, we can get
    {\scriptsize
    \begin{eqnarray}
    	dtw(\tau_q,\tau_d)&=&\sum_{\tau_q[k] \in G}{\sum_{t=i}^{j}sub(\tau_q[k],\tau_d[t])}+\sum_{\tau_d[k] \in G}{\sum_{t=i}^{j}sub(\tau_q[t],\tau_d[k])} \\ \notag
    	&\geq& \sum_{i=1}^{m}Cost(\tau_q[i], \tau_d[a_i])+Insert(\tau_d[a_m+1:n]) \\ \notag
    	&\geq& \Theta(\tau_q,\tau_d) \\ \notag
    \end{eqnarray}
}
	Therefore, we can conclude that $dtw(\tau_q[1:m], \tau_d[1:n]) \geq \Theta(\tau_q,\tau_d)$. Furthermore, combining this result with our previous conclusion, we can state that $dtw(\tau_q[1:m], \tau_d[1:n]) = \Theta(\tau_q,\tau_d)$.
	
\end{proof}

\section{Pruning Algorithm}

In the previous section, we proposed an algorithm with the time complexity of $O(mn)$ to find the subtrajectory with the minimum distance for the query trajectory in a data trajectory. The computation complexity for solving the SSS problem is thus optimized to $O(Nmn)$. Although the Algorithm \ref{algo:efficiencyA} is efficient enough to complete a similar subtrajectory search within $10ms$, considering many data trajectories in the database (i.e., usually millions of data), the time to find the optimal subtrajectory from the data trajectory database is about 15$\sim$30 minutes for a given query trajectory. This is still not affordable for most of the applications. In practice, most of the data trajectories in the database are far distant from the query trajectory. Thus, we apply two heuristic pruning methods on the raw data trajectories before passing them to our CMA algorithm.

\noindent\textbf{Grid Based Pruning (GBP).}
% Even though we can estimate the lower bound of the distance between the query trajectory and the subtrajectories of the data trajectory by selecting some key points, the time complexity of pruning increases as the number of query trajectories $M$ and $|Key(\tau_q)|$ increases. Therefore, we will introduce a grid-based pruning strategy in this section. 
\revision{GBP divides the space into multiple grids and stores the IDs of the data trajectories passing through each grid using inverted indexing. The intuition of GBP is: if a data trajectory does not contain enough points in the grids that the query trajectory passes by or is adjacent to, the data trajectory is unlikely to have subtrajectories similar to the query trajectory.}

We divide the map into a square grid with $\varepsilon $ as side lengths. For a point $\tau_d^{(k)}[j]$ in the data trajectory, we denote the grid it locates in as $g(\tau_d^{(k)}[j])$. Each grid is surrounded by 8 neighboring grids; we denote the set consisting of the grid $g(\tau_d^{(k)}[j])$ and its neighboring grids as $B(\tau_d^{(k)}[j])$. We consider $\tau_d^{(k)}[j]$ to be \textit{close} to the points in the grid it is located in and its neighboring grids, i.e., a point $\tau_q^{(t)}[i]$ is close to $\tau_d^{(k)}[j]$ if and only if $g(\tau_q^{(t)}[i]) \in B(\tau_d^{(k)}[j])$. If there exists a point in a query trajectory that is close to $\tau_d^{(k)}[j]$, then we consider this query trajectory to be close to $\tau_d^{(k)}[j]$. Meanwhile, we denote the set formed by the points of all query trajectories close to $\tau_d^{(k)}[j]$ as $H(\tau_d^{(k)}[j])$.

{\small\begin{eqnarray}
		H(\tau_d^{(k)}[j])=\{\tau_q^{(t)}[i]|g(\tau_q^{(t)}[i]) \in B(\tau_d^{(k)}[j])\} \label{equ:eq4}
\end{eqnarray}}

Given a data trajectory $\tau_d^{(k)}$, we can count how many points in the query trajectory $\tau_q$ are close to that query trajectory and denote it as $close(\tau_q,\tau_d^{(k)})$, that is, 
{\small\begin{eqnarray}
		close(\tau_q,\tau_d^{(k)})=\left|\left\{\tau_q[i]\left|\tau_q[i] \in \mathop{\cup}\limits_{1\leq j \leq n}H(\tau_d^{(k)}[j])\right.\right\}\right| \label{equ:eq3}
\end{eqnarray}}
The larger $close(\tau_q,\tau_d^{(k)})$ means that more points of the query trajectory are distributed around the data trajectory $\tau_d^{(k)}$, then there is likely a segment of sub-trajectories in $\tau_d^{(k)}$ that are similar to $\tau_q$. We define a constant $\mu$ ($0<\mu<1$) and we call the algorithm \ref{algo:efficiencyA} to search for the optimal sub-trajectory in $\tau_d^{(k)}$ whenever $close(\tau_q,\tau_d^{(k)})\geq \mu \cdot m$.

\noindent\textbf{Key Points Filter (KPF).} \revision{The intuition of KPF is that: if an estimated lower bound of distance  between the query trajectory $\tau_q$ and a data trajectory $\tau'_d$ is larger than the distance between the query trajectory and the current found optimal similar subtrajectory, we can skip running CMA on $\tau'_d$ (i.e., pruning $\tau'_d$). In KPF, we propose a method to estimate the lower bound of the distance between a $\tau_q$ and a $\tau'_d$, then prune the data trajectories by the lower bound. Specifically, to speed up the calculation of the estimation of the lower bound, we apply a sampling method in KPF, which samples $r\cdot|\tau_q|$ (i.e., $r$ is the sampling rate) key points from $\tau_q$ and estimates the lower bound of the total distance between the $r\cdot|\tau_q|$ key points and the data trajectory $\tau'_d$, then enlarge the result with $\frac{1}{r}$ to represent the lower bound of the distance between $\tau_q$ and $\tau'_d$. In our overall process, all trajectories that are not pruned by the GBP module are passed to KPF module. KPF is synchronous with the CMA searching process, and uses the current found optimal similar subtrajectory to further prune the new data trajectories. If a data trajectory is not pruned by GBP nor KPF, we will invoke a CMA algorithm on it. Next, we will provide a detailed explanation of the KPF module.}

For each point $\tau_q[i]$ in the query trajectory $\tau_q$, we can convert it to a point in the data trajectory by substitution or deletion. Without considering other points, each point has a minimum cost of conversion. We use $minCost(\tau_q[i],\tau_d)$ to denote the lower bound of the cost of converting $\tau_q[i]$ to a point in $\tau_d$. We give a formal definition of $minCost(\tau_q[i],\tau_d)$ as follows.
{\small\begin{eqnarray}
		minCost(\tau_q[i],\tau_d)=\min\{del(\tau_q[i]), \mathop{\min}\limits_{1\leq j\leq n}sub(\tau_q[i],\tau_d[j])\} \notag
\end{eqnarray}}
Next, we introduce the lower bound on the conversion cost by the Theorem \ref{the:lbocc}.
\begin{theorem}[Lower Bound of Cost]
	\label{the:lbocc}
	We denote {\small$minCost(\tau_q,\tau_d)$} as {\small$\sum_{i=1}^{m}minCost(\tau_q[i],\tau_d )$} and {\small$minCost(\tau_q,\tau_d)$} is the lower bound on the cost of converting $\tau_q$ to $\tau_d$, i.e., {\scriptsize$minCost(\tau_q,\tau_d)\leq \mathop{\min}\limits_{1\leq j\leq n}C_{m,j}$}.
\end{theorem}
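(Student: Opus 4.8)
The plan is to bound the matching-conversion cost of \emph{every} matching sequence term by term, so that the bound survives taking the minimum. First I would invoke the reduction established in Equation \ref{eq:reduce} together with the definition of $C_{i,j}$: the quantity $\min_{1\leq j\leq n} C_{m,j}$ equals the minimum, over all matching sequences $\mathcal{A}_{\tau_q:\tau_d}\in\mathbb{A}$, of $\sum_{i=1}^{m} Cost(\tau_q[i],\tau_d[a_i])$. Hence it suffices to show that for \emph{any} fixed matching sequence the per-point costs dominate the corresponding $minCost$ terms, and then sum; the minimum on the left will inherit the inequality automatically.

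The core step is a pointwise inequality: for every index $i$ and every matching sequence, $Cost(\tau_q[i],\tau_d[a_i]) \geq minCost(\tau_q[i],\tau_d)$. I would prove this by the three-way case analysis from the Point Matching-Conversion Cost definition. In the deletion case $Cost_{del}(\tau_q[i],\tau_d[a_i]) = del(\tau_q[i]) \geq minCost(\tau_q[i],\tau_d)$ directly from the definition of $minCost$. In the substitution case $Cost_{sub}(\tau_q[i],\tau_d[a_i]) = sub(\tau_q[i],\tau_d[a_i]) \geq \min_{1\leq j\leq n} sub(\tau_q[i],\tau_d[j]) \geq minCost(\tau_q[i],\tau_d)$. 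In the insertion case $Cost_{ins(k)}(\tau_q[i],\tau_d[a_i]) = ins(\tau_d[k+1:a_i-1]) + sub(\tau_q[i],\tau_d[a_i])$, and since every insertion cost is nonnegative this is at least $sub(\tau_q[i],\tau_d[a_i])$, which again dominates $minCost(\tau_q[i],\tau_d)$ as above. Thus the inequality holds in all three cases.

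Summing the pointwise inequality over $i=1,\dots,m$ gives, for every matching sequence, $\sum_{i=1}^m Cost(\tau_q[i],\tau_d[a_i]) \geq \sum_{i=1}^m minCost(\tau_q[i],\tau_d) = minCost(\tau_q,\tau_d)$. Taking the minimum over all matching sequences on the left and applying the reduction from the first paragraph yields $\min_{1\leq j\leq n} C_{m,j} \geq minCost(\tau_q,\tau_d)$, which is exactly the claim.

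The main obstacle I anticipate is handling the insertion case cleanly across the different distance functions: for WED the cost decomposes as $ins(\cdot)+sub(\cdot)$ with a manifestly nonnegative insertion term, but for DTW the insertion cost is the more intricate $\min_t$ expression, so I would need to verify that this expression still dominates a single substitution $sub(\tau_q[i],\tau_d[j])$ for some $j$, confirming that the bound $minCost$ is genuinely distance-function-agnostic. The other point to state carefully is the nonnegativity of $sub$, $del$, and $ins$, which is what licenses dropping the insertion term and is implicitly assumed for all the cost functions considered.
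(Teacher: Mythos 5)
Your proof is correct, but it takes a genuinely different route from the paper's. The paper proves Theorem \ref{the:lbocc} by induction on the prefix length: assuming $minCost(\tau_q[1:k-1],\tau_d)\leq \min_{1\leq j\leq n}C_{k-1,j}$, it pushes the bound through the WED recurrence (Equation \ref{eq:wed_cal}), handling the $j=1$ boundary and the general $j>1$ case separately; as the paper itself says, it does this "using WED as an example," so the argument is tied to one concrete DP recurrence and would have to be repeated for DTW and other functions. You instead bypass the recurrence entirely: you use the fact that $\min_{1\leq j\leq n}C_{m,j}$ equals the minimum over all matching sequences of $\sum_{i=1}^{m}Cost(\tau_q[i],\tau_d[a_i])$ (which follows from the definition of $C_{i,j}$ by partitioning $\mathbb{A}$ according to the value of $a_m$ — you do not actually need Equation \ref{eq:reduce} for this, only the definition), and then prove the pointwise domination $Cost(\tau_q[i],\tau_d[a_i])\geq minCost(\tau_q[i],\tau_d)$ across the three conversion cases, which survives summation and the outer minimum. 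What your approach buys is generality and brevity: the case analysis only needs each abstract conversion cost to dominate either $del(\tau_q[i])$ or $sub(\tau_q[i],\tau_d[a_i])$, which holds verbatim for WED and also for DTW's more intricate $Cost_{ins(k)}$ (the $\min_t$ expression is a sum of nonnegative $sub$ terms plus $sub(\tau_q[i],\tau_d[a_i])$, so it dominates the substitution at the matched point itself), making the lemma distance-function-agnostic in exactly the way the pruning module needs. What the paper's induction buys is independence from the matching-sequence characterization of $C_{m,j}$ — it works purely with the implemented recurrence, so it doubles as a check on the DP boundary cases. You are also right to flag nonnegativity of $sub$, $del$, $ins$ as the load-bearing assumption; the paper uses it silently, and your proof makes explicit that the theorem would fail without it.
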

\begin{proof}
	We use WED as an example to prove this theorem using mathematical induction.
	
	(i) When $i=1$, we have {\scriptsize$minCost(\tau_q[1:i],\tau_d)\leq \mathop{\min}\limits_{1\leq j\leq n}sub(\tau_q[1],\tau_d[j])=\mathop{\min}\limits_{1\leq j\leq n}C_{m,j}$}.
	
	(ii) Suppose when $i=k-1$, we have {\small$minCost(\tau_q[1:k-1],\tau_d)\leq \mathop{\min}\limits_{1\leq j\leq n}C_{k-1,j}$}. We will discuss the case when $i=k$ in two scenarios. If $j=1$, then {\small$minCost(\tau_q[1:k],\tau_d)=minCost(\tau_q[1:k-1],\tau_d)+minCost(\tau_q[k],\tau_d)$}. Considering that {\small$minCost(\tau_q[1:k-1],\tau_d)<C_{k-1,1}$} and {\small$minCost(\tau_q[1:k-1],\tau_d)<del(\tau_q[1:i-1])$}, we have {\small$minCost(\tau_q[1:k],\tau_d)\leq \min\{C_{i-1,j}+del(\tau_q[i]), sub(\tau_q[i],\tau_d[1])+del(\tau_q[1:i-1])\}=C_{k,1}$}. For each {\small$\forall j\neq 1$}, we have 
	{\small \begin{eqnarray}
			&&minCost(\tau_q[1:k],\tau_d) \notag \\
			&=&minCost(\tau_q[1:k-1],\tau_d) +minCost(\tau_q[k],\tau_d)\notag \\
			&\leq& \mathop{\min}\limits_{1\leq t\leq j}C_{k-1,t}+sub(\tau_q[i],\tau_d[j]) \notag \\
			&\leq&C_{k,j} \notag
	\end{eqnarray}}
	Thus, we can obtain $minCost(\tau_q[1:k],\tau_d)\leq \mathop{\min}\limits_{1\leq j\leq n}C_{k,j}$.
	
	Finally, by combining (i) and (ii), the theorem \ref{the:lbocc} is proved.
\end{proof}

\begin{figure}
	\begin{small}
		\begin{center}
			\includegraphics[width=0.30\textwidth]{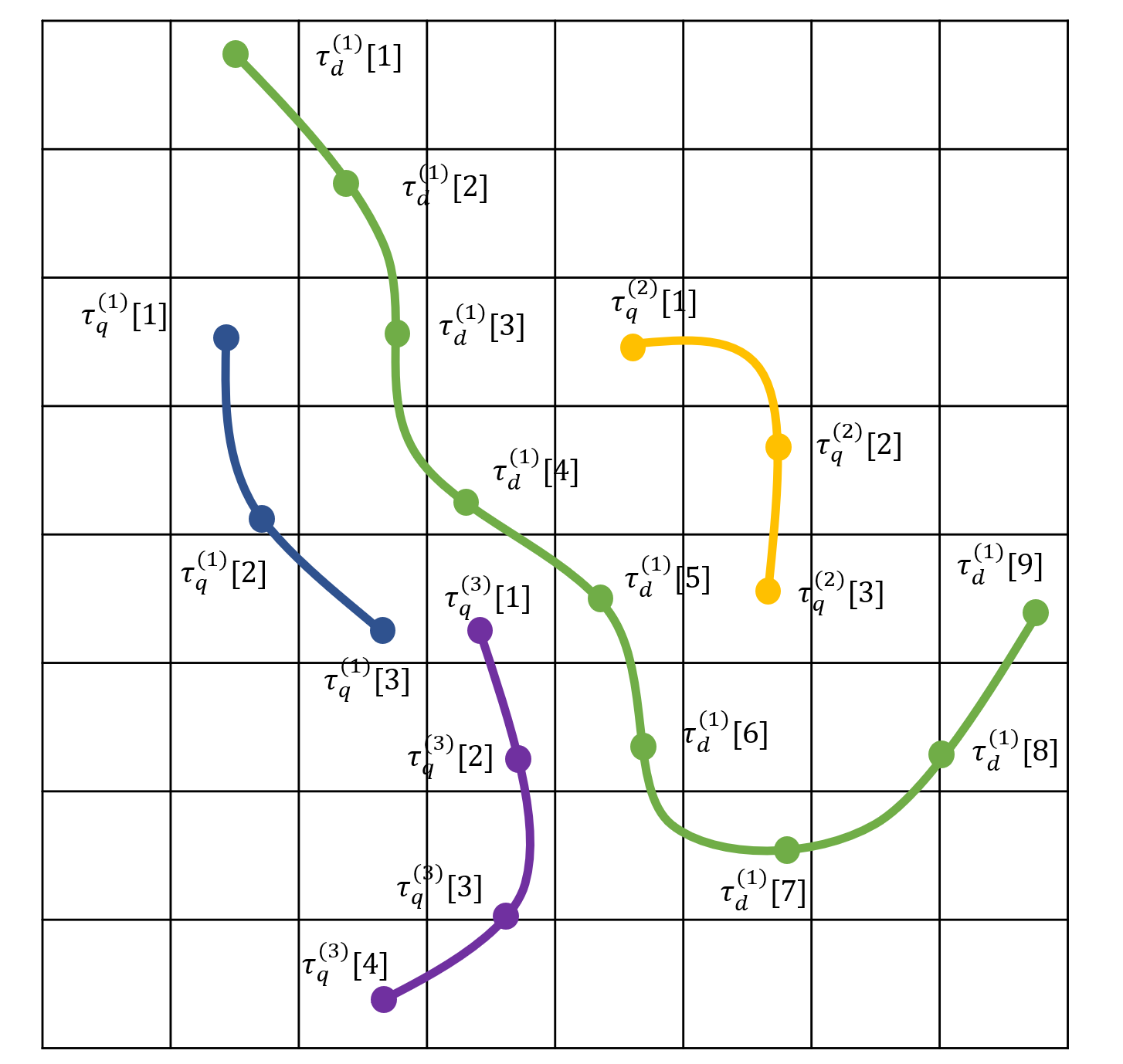}
		\end{center}
		\caption{Pruning Example}
		\label{fig:pruningExample}
	\end{small}
\end{figure}

Nevertheless, we cannot directly compute $minCost(\tau_q,\tau_d)$ in practical applications. It is because the time complexity of computing $minCost(\tau_q,\tau_d)$ is $O(m\cdot n)$, which is the same as the time complexity of computing the optimal subtrajectory directly. Fortunately, we can approximate $minCost(\tau_q,\tau_d)$ due to the continuity of trajectory. In reality, the position of an object cannot change dramatically in a short time, thus the location of a point in a trajectory is close to the location of its neighboring points. Based on the continuity of the trajectory, we can select key points $Key(\tau_q)(=\{\tau_q[e_1],\tau_q[e_2],\dots,\tau_q[e_K]\})$ from the original query trajectory. Based on these key points, we can compute the estimation of $minCost(\tau_q,\tau_d)$ and denote it by $minCost_e(\tau_q,\tau_d)$ as follows.
{\small\begin{equation}
		minCost_e(\tau_q,\tau_d)=\frac{1}{r}\sum_{\tau_q[e_i]\in Key(\tau_q)}minCost(\tau_q[e_i],\tau_d) \label{eq:e2}
\end{equation}}
Here, $r=\frac{|Key(\tau_d)|}{n}$ denoting the ratio of key points selected from the query trajectory. The selection of key points $Key(\tau_q)$ affects the accuracy of estimation; we uniformly select points from $\tau_q$ in this paper. The more key points in $Key(\tau_q)$, the more accurate the estimation of $minCost(\tau_q,\tau_d)$ will be. However, the time complexity of the computation also increases as $Key(\tau_q)$ increases. Therefore, we need to choose as few key points as possible while ensuring the estimation accuracy as much as possible.

With GBP and KPF, the whole similar subtrajectory searching process is shown in Algorithm \ref{algo:efficiencyC}.

\begin{algorithm}[t]
	{\small
		\DontPrintSemicolon
		\KwIn{\small a query trajectory $\tau_q$, data trajectories $\{\tau_d^{(1)},\tau_d^{(2)},\dots,\tau_d^{(N)}\}$}
		\KwOut{\small the optimal subtrajectory $\tau_d^{(k^{*})}[i^{*}:j^{*}]$ for the query trajectory $\tau_q$}
		$\tau_d^{(*)}[i^{*}:j^{*}]\gets \{\}$\;
		\ForAll{$1\leq k \leq N$}{
			For each point in the data trajectory $\tau_d^{(k)}$, we compute $H(\tau_d^{(k)}[j])$ according to the Equation \ref{equ:eq4}\;\label{line:3}
			
			According to the equation \ref{equ:eq3}, we can compute $close(\tau_q,\tau_d^{(k)})$ for each query trajectory $\tau_q$\;
			
			\ForAll{$1\leq t \leq M$}{
				\eIf{$result = (-1,-1,-1)$ }{
					$\tau_d^{(k)}[i^{k}:j^{k}]\gets CMA(\tau_q,\tau_d^{(k)})$\;
					$\tau_d^{(*)}[i^{*}:j^{*}]\gets \tau_d^{(k)}[i^{k}:j^{k}]$\;
				}{
					\If{$close(\tau_q,\tau_d^{(k)})>\mu \cdot m$}{
						calculate the estimation of the lower bound $minCost_e(\tau_q,\tau_d^{(k)})$ by Equation \ref{eq:e2}\;
						\If{$minCost_e(\tau_q,\tau_d^{(k)})<\Theta(\tau_q,\tau_d^{(*)}[i^{*}:j^{*}])$}{
							\label{line:14}
							$\tau_d^{(k)}[i^{k}:j^{k}]\gets CMA(\tau_q,\tau_d^{(k)})$\;
							\If{$\Theta(\tau_q,\tau_d^{(k)}[i^{k}:j^{k}])<\Theta(\tau_q,\tau_d^{(*)}[i^{*}:j^{*}])$}{
								$\tau_d^{(*)}[i^{*}:j^{*}]\gets \tau_d^{(k)}[i^{k}:j^{k}]$\;
							}
						}
					}
				}
			}
		}
		\Return{$\tau_d^{(*)}[i^{*}:j^{*}]$}
		\caption{\small Pruning Algorithm}
		\label{algo:efficiencyC}}
\end{algorithm}

% \begin{figure}
	% 	\begin{small}
		% 		\begin{center}
			% 			\includegraphics[width=0.30\textwidth]{}
			% 		\end{center}
		% 		\caption{Complexity Analysis}
		% 		\label{fig:complexityanalysis}
		% 	\end{small}
	% \end{figure}

% \begin{example}
	%     $H(\tau_d^{(k)}[j])$ denotes the set consisting of all points in the neighbor of $\tau_d^{(k)}[j]$. Figure k shows one data trajectory and three query trajectories. The point $\tau_d^{(1)}[4]$ is surrounded by points $\tau_q^{(2)}[1],\tau_q^{(1)}[3]$ and $\tau_q^{(3)}[1]$; thus, $H(\tau_d^{(1)}[4])=\{\tau_q^{(2)}[1],\tau_q^{(1)}[3],$ $\tau_q^{(3)}[1]\}$.  Besides, we have $H(\tau_d^{(1)}[5])=\{\tau_q^{(2)}[2],\tau_q^{(2)}[3],\tau_q^{(3)}[1],$ $\tau_q^{(3)}[2]\}$. $close(\tau_q^{(t)},\tau_d^{(k)})$ denotes the set of all points in $\tau_q^{(t)}$ appearing in the neighbor of $\tau_d^{(k)}$. For the query trajectory $\tau_q^{(1)}$, all its points are near at least one point of the data trajectory $\tau_d^{(1)}$. Thus, $close(\tau_q^{(1)},\tau_d^{(1)})=\{\tau_q^{(1)}[1],\tau_q^{(1)}[2],\tau_q^{(1)}[3]\}$. Similarly, we have $close(\tau_q^{(1)},\tau_d^{(3)})=\{\tau_q^{(3)}[1],\tau_q^{(3)}[2],\tau_q^{(3)}[3]\}$. Thus, when we set $\mu$ to 1, only $\tau_q^{(1)}$ and $\tau_q^{(2)}$ are considered similar to $\tau_d^{(1)}$. We will call the algorithm \ref{algo:efficiencyA} to search the subtrajectories from $\tau_d^{(1)}$ with the closest distance to $\tau_q^{(1)}$ and $\tau_q^{(2)}$.
	% \end{example}

\noindent\textbf{Complexity.} By using the hash table, we can find the points in its neighboring grid for each $\tau_d^{(k)}[j]$ in a constant time. Thus the time complexity of computing $H(\tau_d^{(k)}[j])$ is $O(1)$, making the time complexity of computing for each point in the data trajectory $O(n)$. On the other hand, the computational complexity of Equation \ref{equ:eq3} depends mainly on the number of points in the query trajectory around the data trajectory, which is much smaller than $n$. Therefore, the pruning complexity is $O(Nn)$ in total if the average length of the data trajectories is $n$. Suppose the average length of the query trajectory is $m$. Since we select key points from the query trajectory at a certain ratio $r$, the number of key points is $mr$. Thus, the time complexity of computing the distance lower bound is $mrQn$, where $Q$ denote the count of trajectories satisfying $close(\tau_q,\tau_d^{(k)})>\mu \cdot m$. Finally, the time complexity of the algorithm \ref{algo:efficiencyC} is $O(Nn+mrQn+Q'mn)$, where $Q'$ denotes the count of trajectories satisfying line \ref{line:14}, and we have $N\gg Q\gg Q'$.

\section{Experimental Results of Pruning Algorithm}
\noindent\textbf{Analysis of pruning time and searching time.}
We explored the time of the pruning process and the search process with different pruning algorithms and search algorithms. We can find that the two modules, GBP and KPF, have different effects on the overall process of finding the optimal subtrajectories. For example, Figure \ref{fig:Efficiency_PruningSearching} shows that using the GBP module will significantly speed up the pruning process because we only need a time complexity of $O(n)$ to determine whether this data trajectory $\tau_d$ is similar to the query trajectory. However, the drawback of GBP is that the fixed-parameter $\mu$ is not sufficient to sieve out all the dissimilar data trajectories well, which increases the number of invocations of the search algorithm. In particular, the time to complete the whole search process increases significantly when the complexity of the search algorithm is high.

On the other hand, Figure \ref{fig:Efficiency_PruningSearching} shows that using KPF makes the pruning process significantly more time-consuming since no data structure is used to speed up the operation resulting in that we have to calculate the distance between the key point in the trajectory and the nearest point in the data trajectory for each query trajectory to estimate the lower bound of the distance between the optimal subtrajectory and the query trajectory. KPF will filter out all the data trajectories whose lower bound of distance from the query trajectory is greater than the distance between the current optimal subtrajectory and the query trajectory. The distance between the optimal subtrajectory and the query trajectory decreases as the search process continues, which means that more and more data trajectories will be sieved out, leaving only a few data trajectories that need to be searched. The results shown in Figure \ref{fig:Efficiency_PruningSearching} indicate that the filtering process takes more time than the case of using only GBP when we use KPF, yet the search time is less than the case of using only GBP. Therefore, both the search and pruning processes take less time when we include both the GBP and the KPF module in the pruning process. In addition, the difference in pruning process overhead between using both KPF and GBP modules and using only GBP module is not significant, which indicates that using GBP module can significantly reduce the execution time of KPF module. Compared with OSF, GBP and KPF are able to sieve out more data trajectories that are not similar to the query trajectory, which also makes their search process much less time-consuming than OSF. In particular, the experimental results demonstrate that the search time using the CMA is significantly less than that using POS as the search algorithm when we use DTW as the distance function, which verifies the efficiency of the algorithm proposed in this paper.

\noindent\textbf{Effect of the number of data trajectories on cost time.}
Similarly, we also investigate the relationship between the algorithm execution time and the number of data trajectories. As shown in Figure \ref{fig:varyN}, the time of the whole query process shows a linear relationship with $N$. The pruning time does not increase significantly with the number of data trajectories when we use both the GBP and KPF modules, which shows the excellent scalability of our algorithm. The reason is that using the GBP module to filter the $N$ data trajectories is fast. The increase in time is mainly since the number of data trajectories similar to the query trajectory increases as $N$ becomes more extensive. In addition, the time of the whole query process increases obviously when using only the GBP module or the KPF module. Besides, the experimental results still show that the GBP·KPF proposed in this paper is superior to the effect of OSF. The time overhead of the pruning process while only using GBP or KPF is greater than the time overhead of the OSF algorithm.
\begin{figure}[t!]
	\centering
	\subfigure{
		\scalebox{0.08}[0.08]{\includegraphics{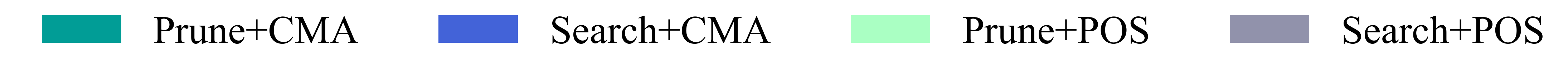}}}\hfill\\
	\addtocounter{subfigure}{-1}\vspace{-2ex}
	\subfigure[][{\scriptsize DTW}]{
		\scalebox{0.19}[0.19]{\includegraphics{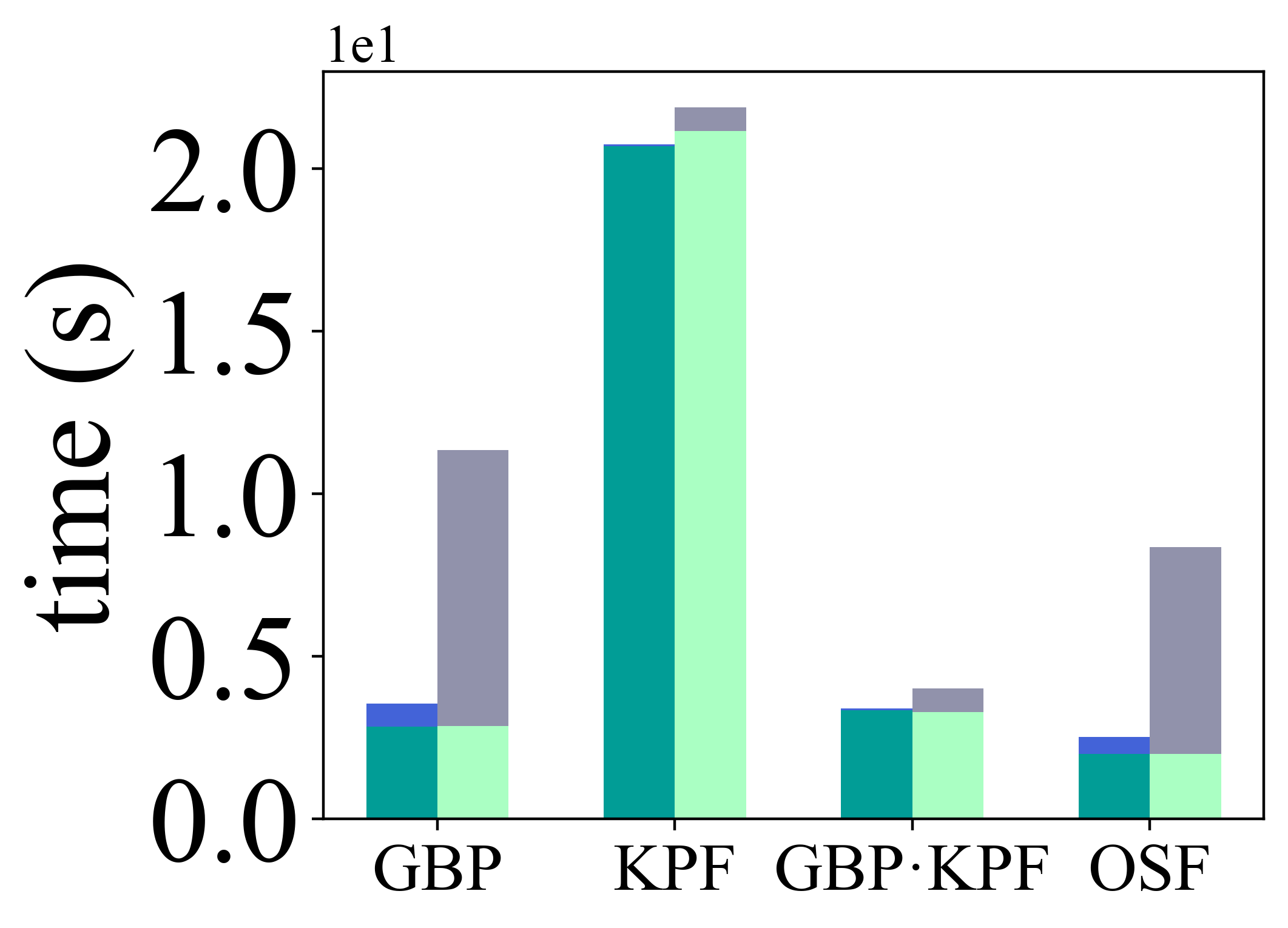}}
		\label{fig:parts_dtw_score}}
	\subfigure[][{\scriptsize EDR}]{
		\scalebox{0.19}[0.19]{\includegraphics{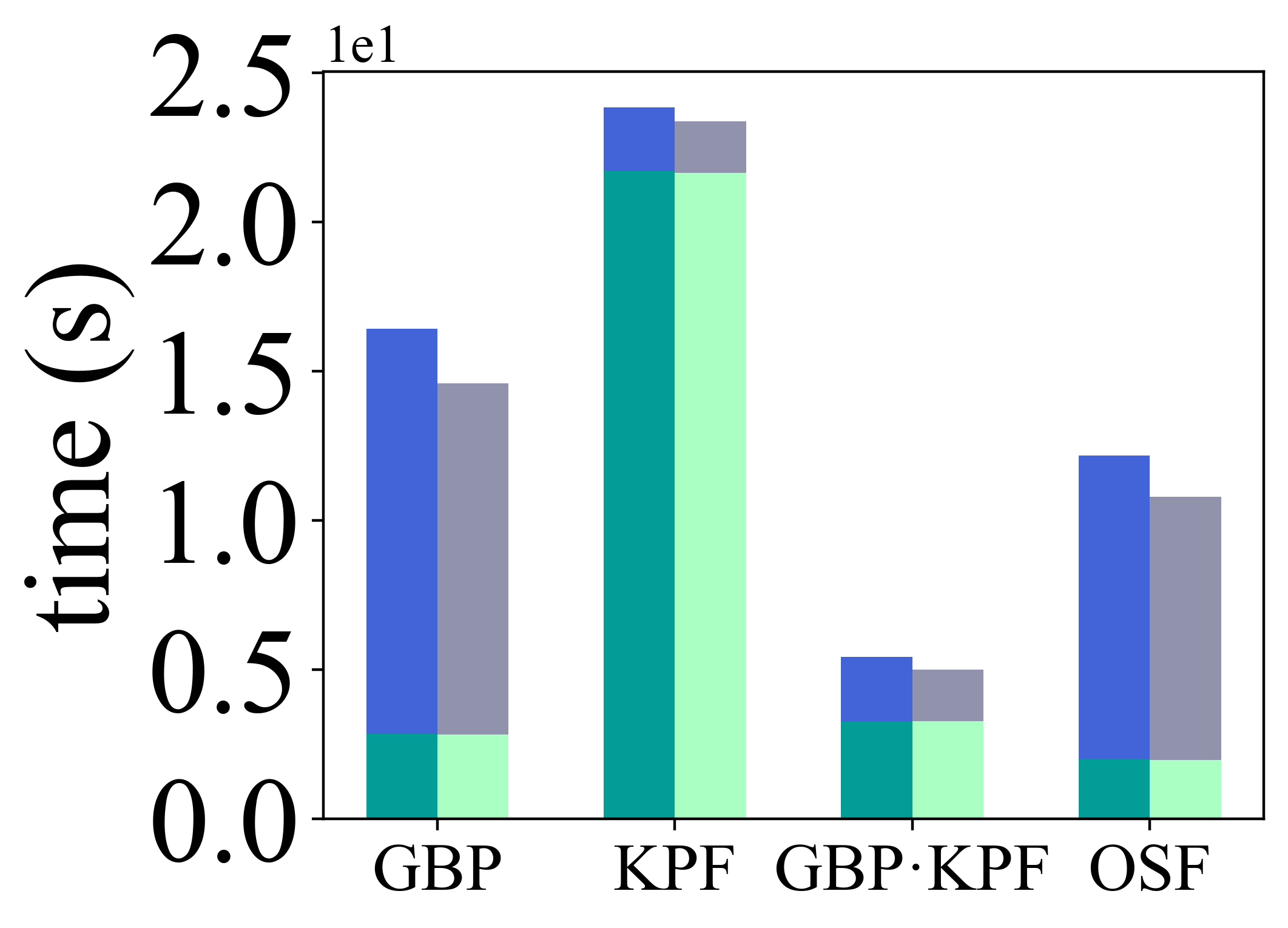}}
		\label{fig:parts_edr_score}}
	\subfigure[][{\scriptsize ERP}]{
		\scalebox{0.19}[0.19]{\includegraphics{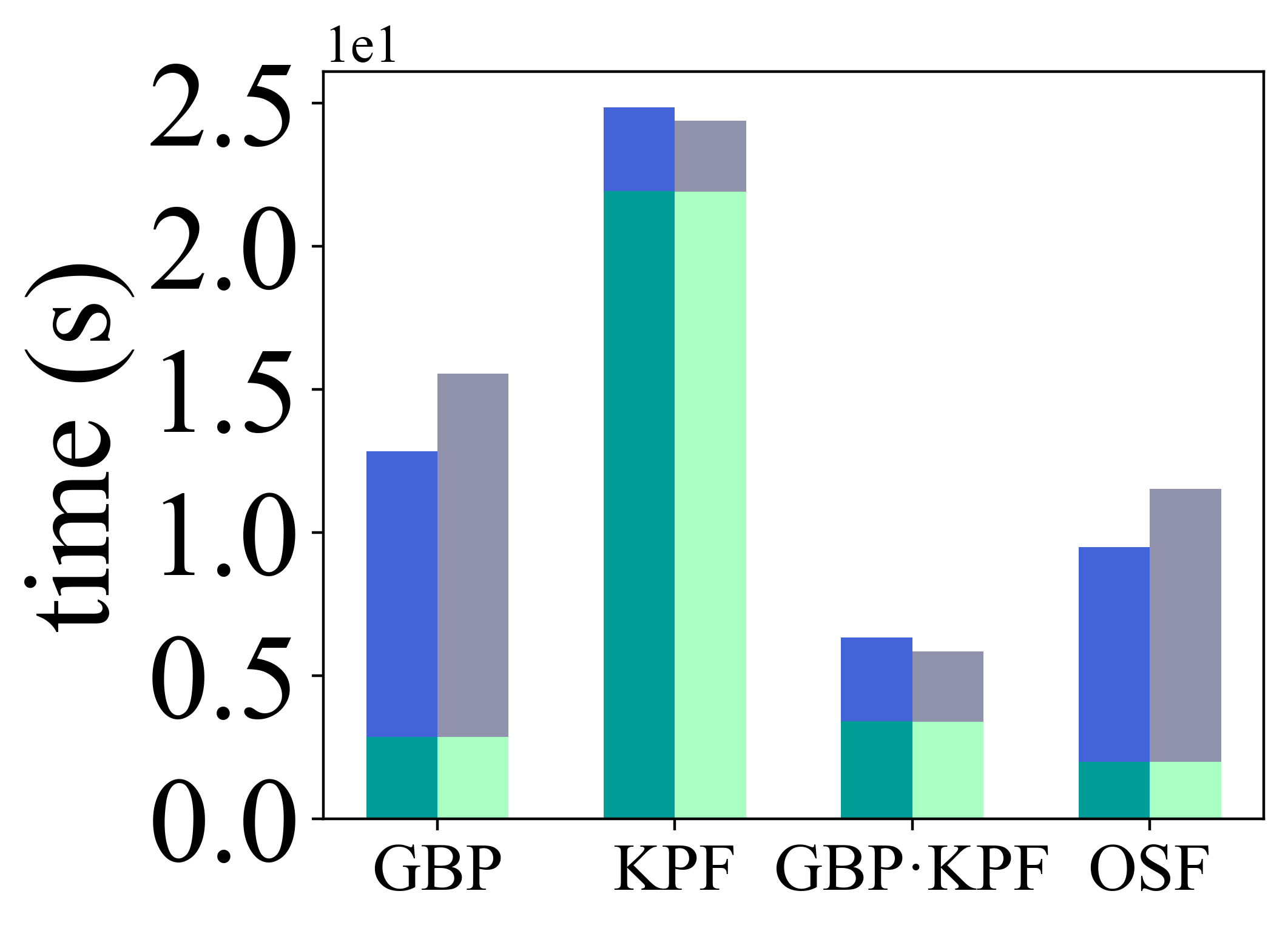}}
		\label{fig:parts_erp_score}}
	\caption{\small Efficiency of Pruning and Searching}
	\label{fig:Efficiency_PruningSearching}
\end{figure}

\begin{figure}[t!]
	\centering	
	\subfigure{
		\scalebox{0.08}[0.08]{\includegraphics{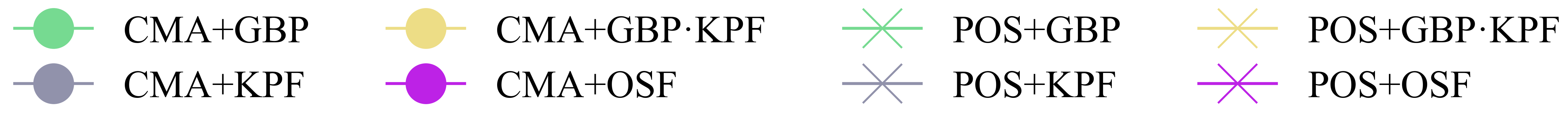}}}\hfill\\
	\addtocounter{subfigure}{-1}\vspace{-2ex}
	\subfigure[][{\scriptsize DTW}]{
		\scalebox{0.19}[0.19]{\includegraphics{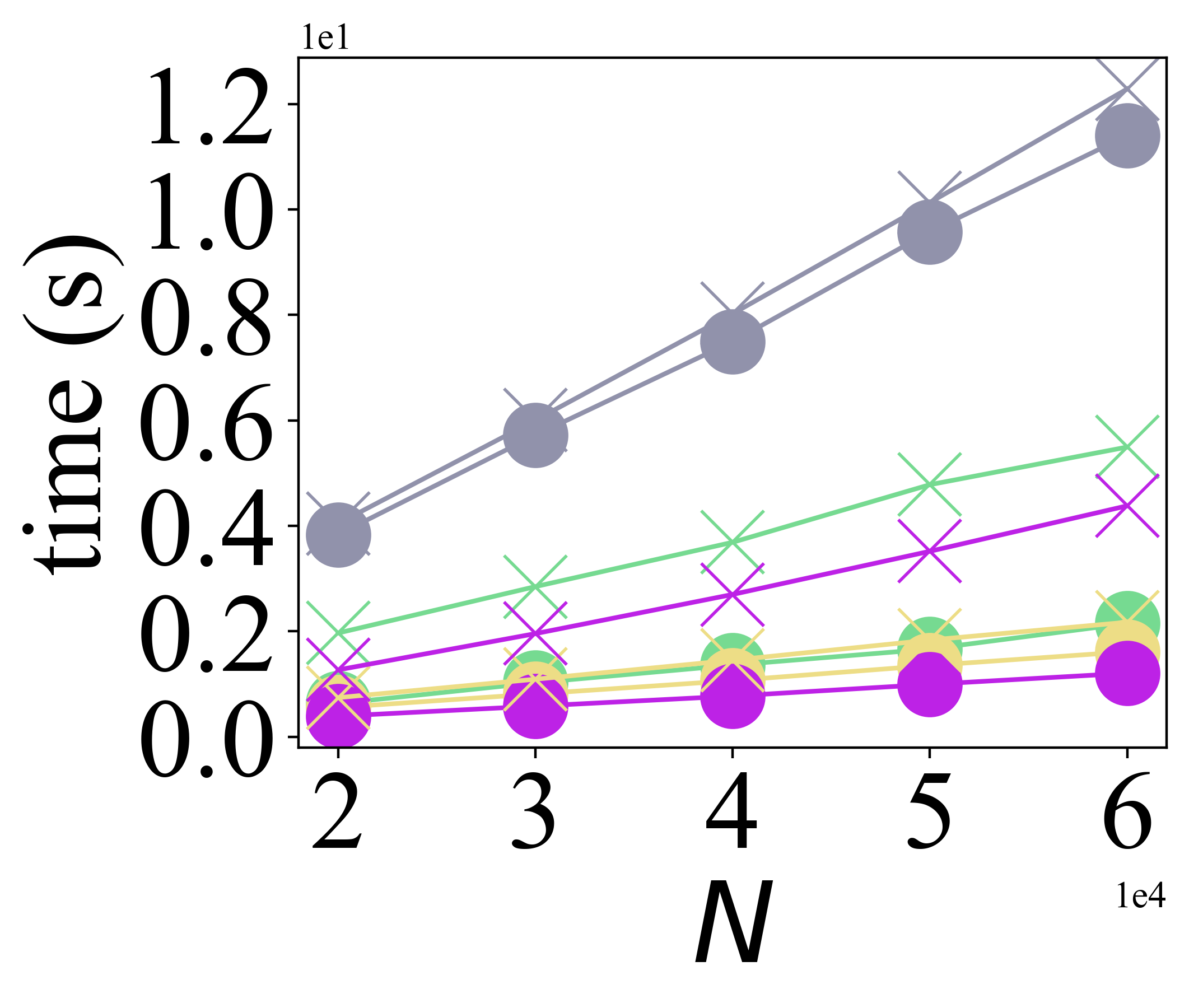}}
		\label{fig:varyN_dtw_score}}
	\subfigure[][{\scriptsize EDR}]{
		\scalebox{0.19}[0.19]{\includegraphics{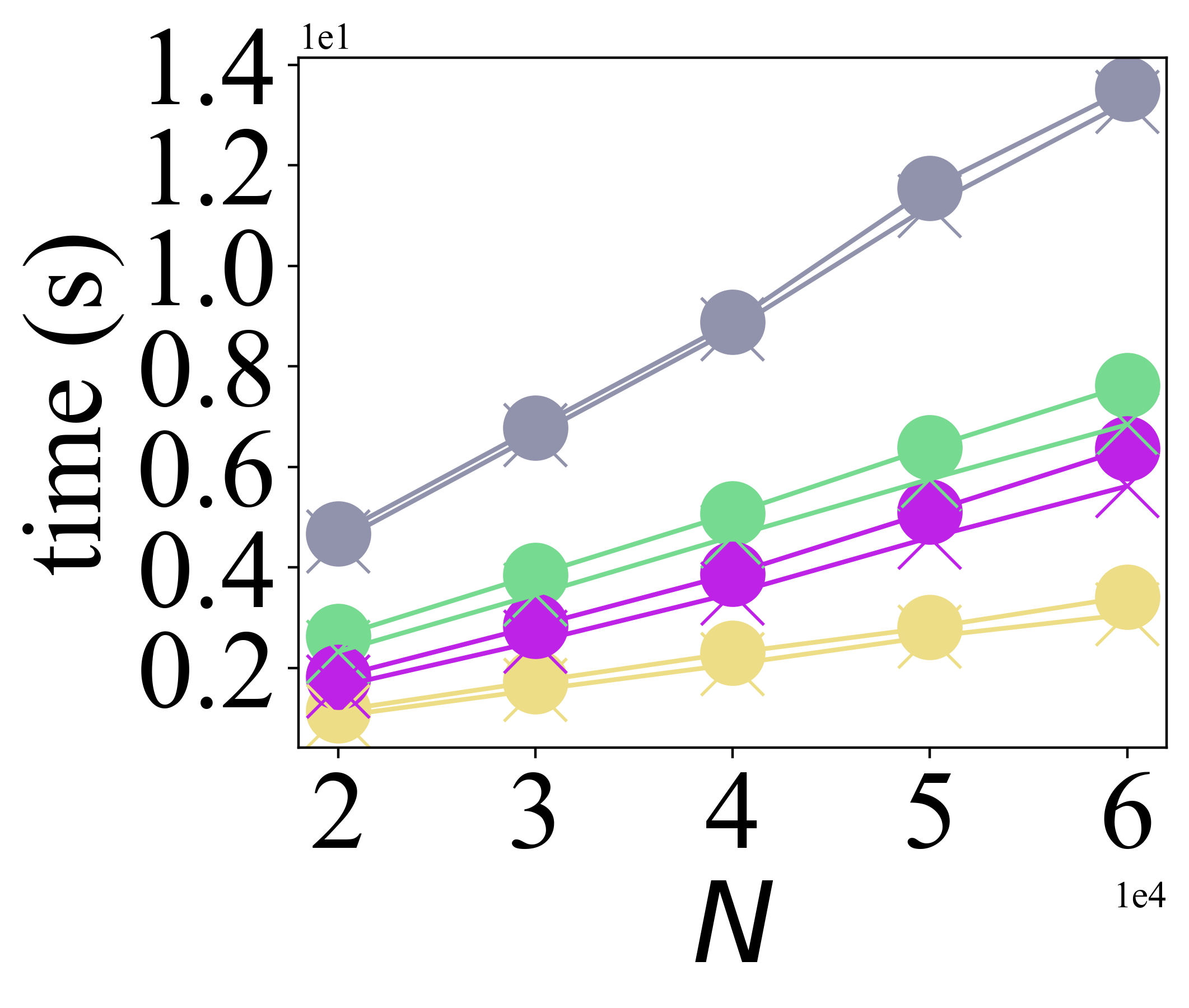}}
		\label{fig:varyN_edr_score}}
	\subfigure[][{\scriptsize ERP}]{
		\scalebox{0.19}[0.19]{\includegraphics{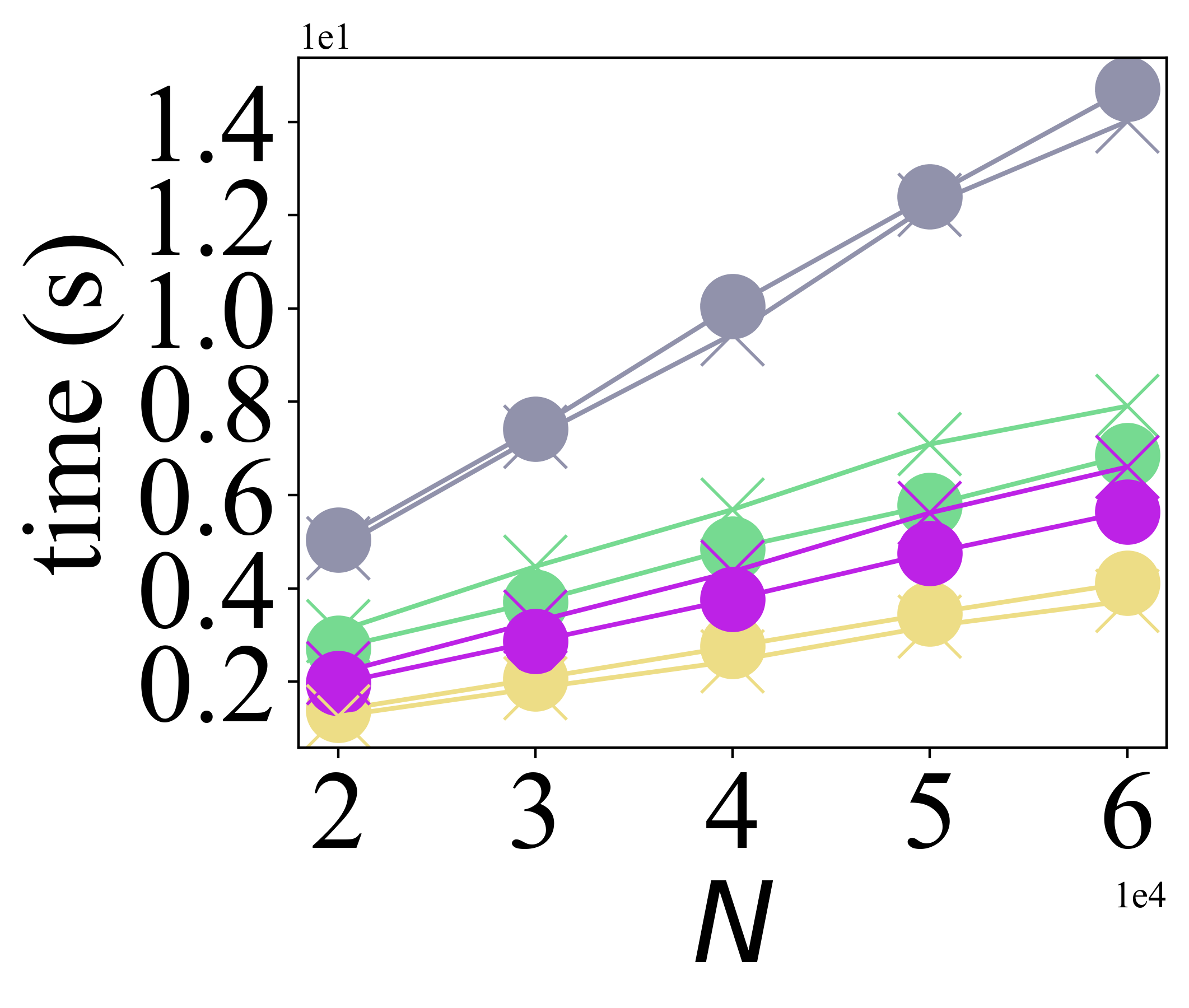}}
		\label{fig:varyN_erp_score}}
	\caption{\small Efficiency with varying data trajectories size $N$}
	\label{fig:varyN}
\end{figure}

\begin{figure*}[t!]
	\centering
	\subfigure{
		\scalebox{0.08}[0.08]{\includegraphics{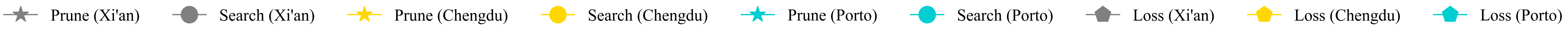}}}\hfill\\
	\addtocounter{subfigure}{-1}\vspace{-2ex}
	\subfigure[][{\scriptsize Time}]{
		\scalebox{0.19}[0.19]{\includegraphics{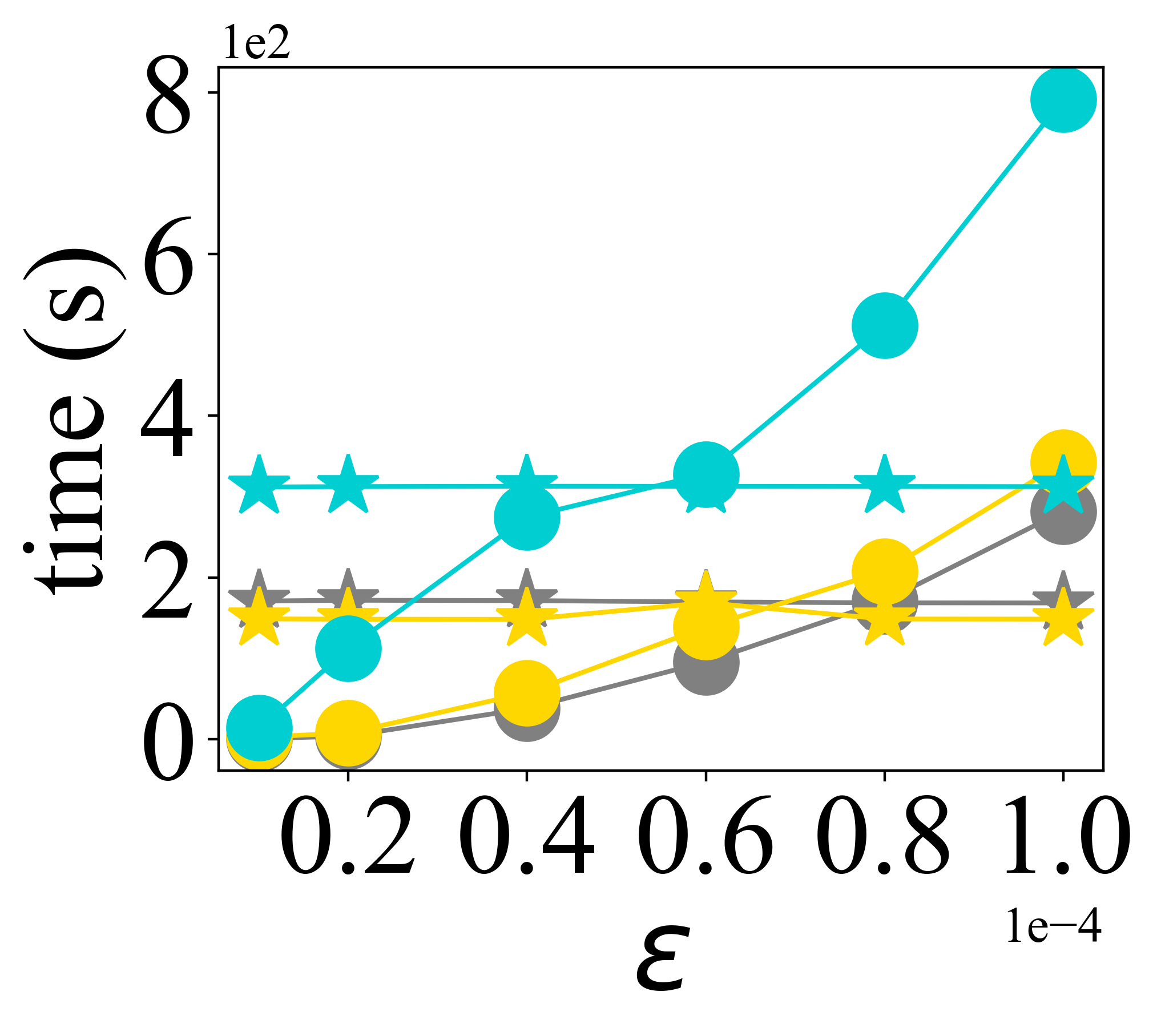}}
		\label{fig:parameter_grid_time_grid}}
	\subfigure[][{\scriptsize Time}]{
		\scalebox{0.19}[0.19]{\includegraphics{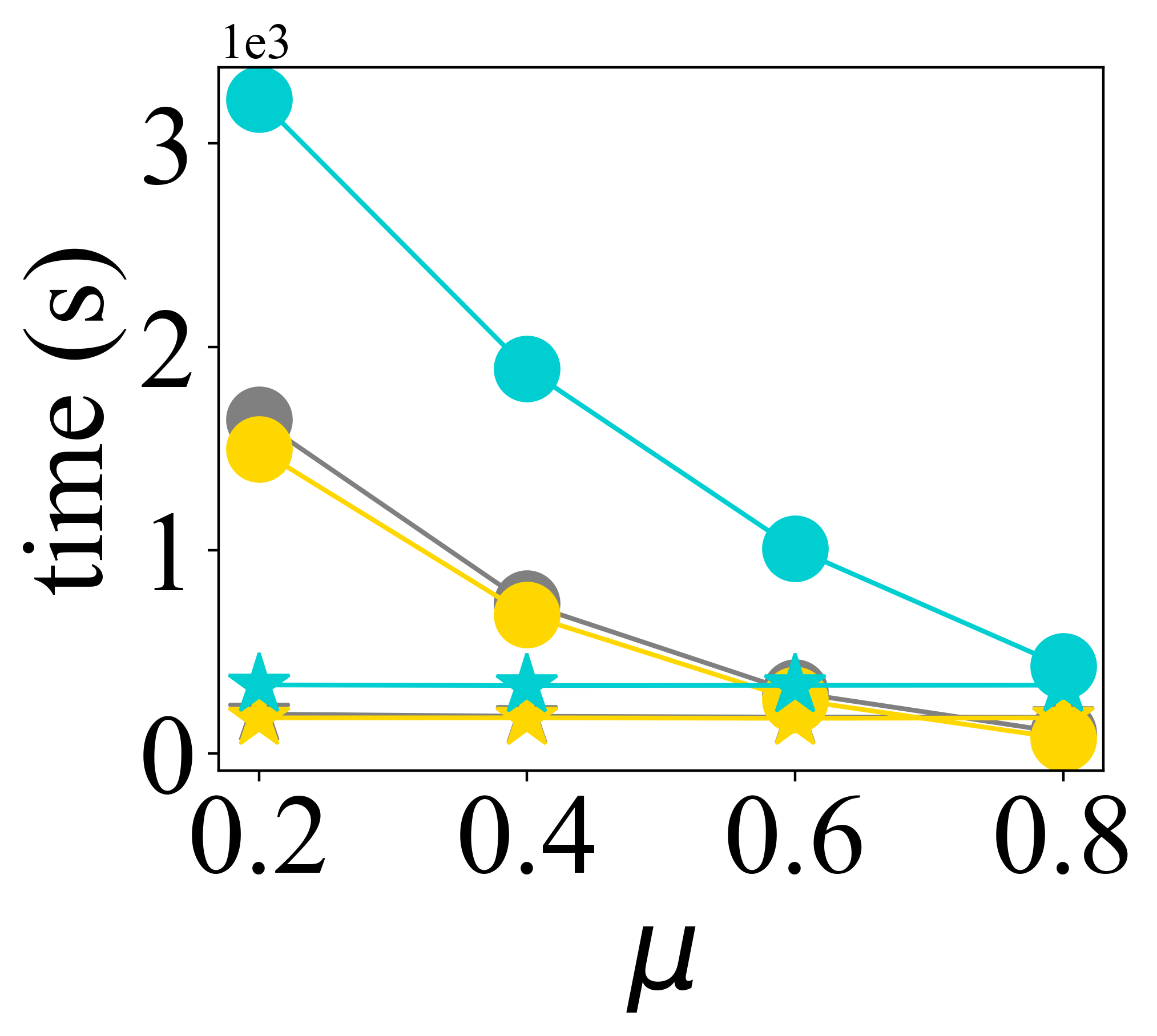}}
		\label{fig:parameter_mu_time_grid}}
	\subfigure[][{\scriptsize Time}]{
		\scalebox{0.19}[0.19]{\includegraphics{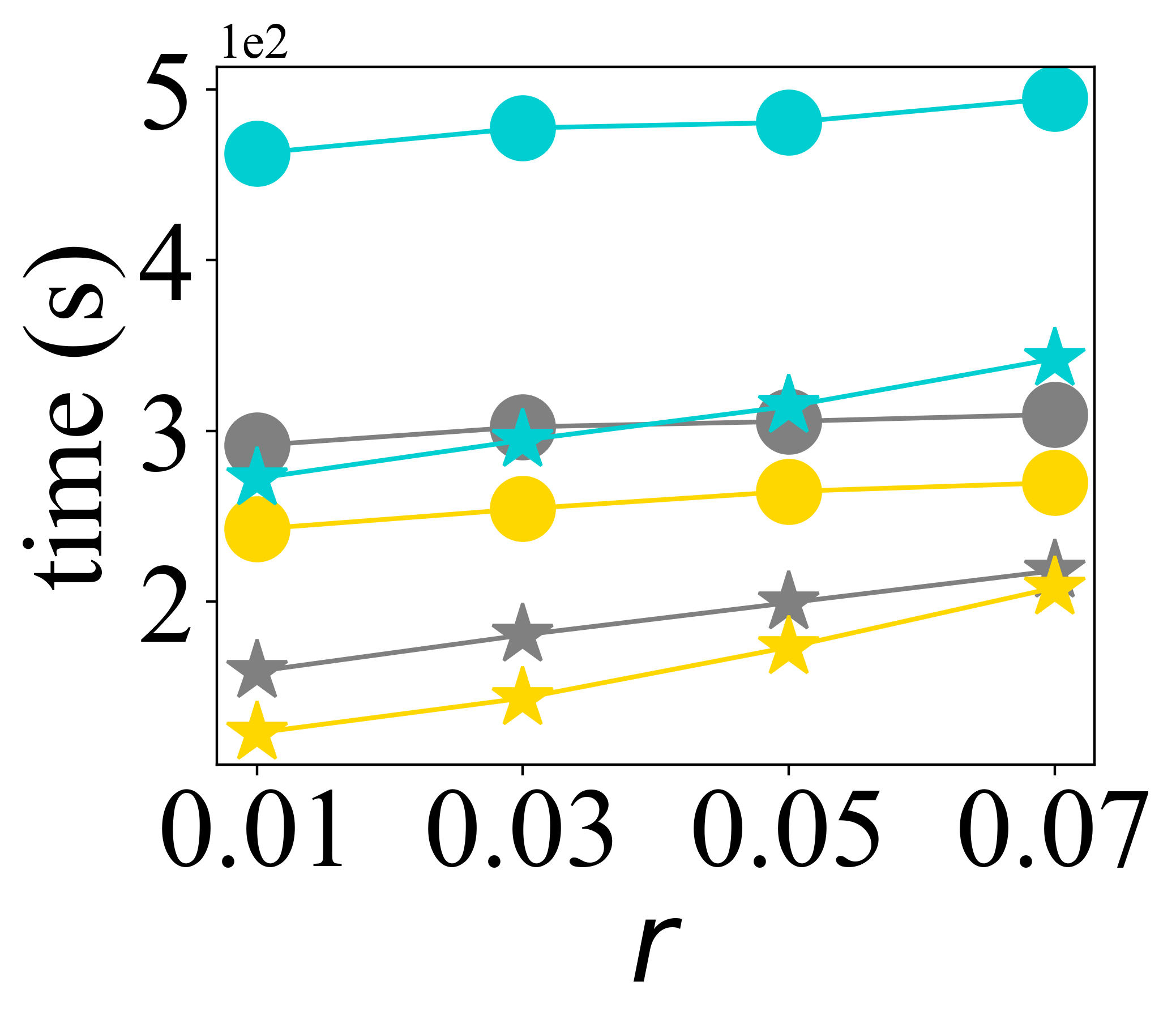}}
		\label{fig:parameter_rate_time_grid}}
	\subfigure[][{\scriptsize Loss}]{
		\scalebox{0.19}[0.19]{\includegraphics{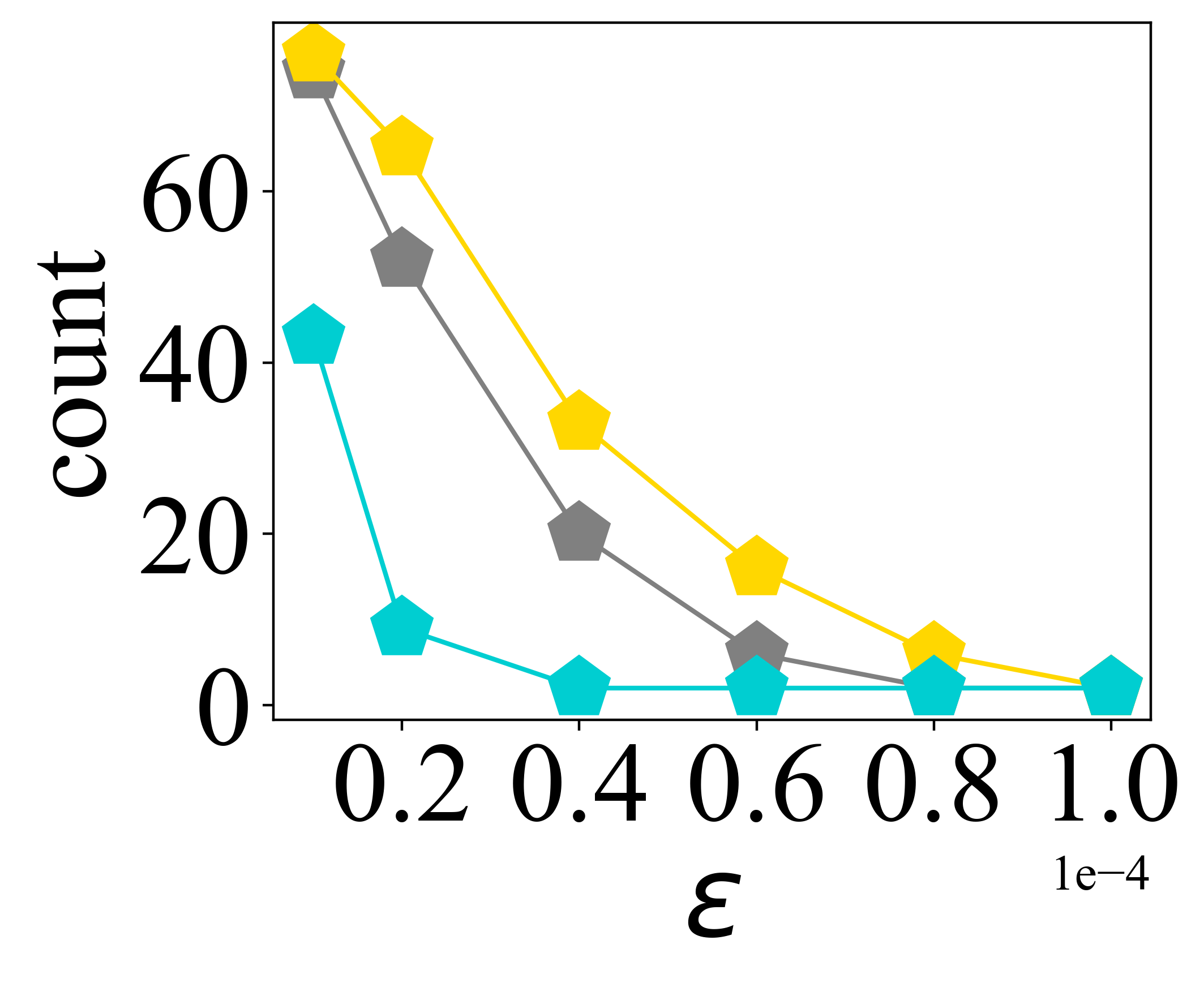}}
		\label{fig:parameter_grid_loss_grid}}
	\subfigure[][{\scriptsize Loss}]{
		\scalebox{0.19}[0.19]{\includegraphics{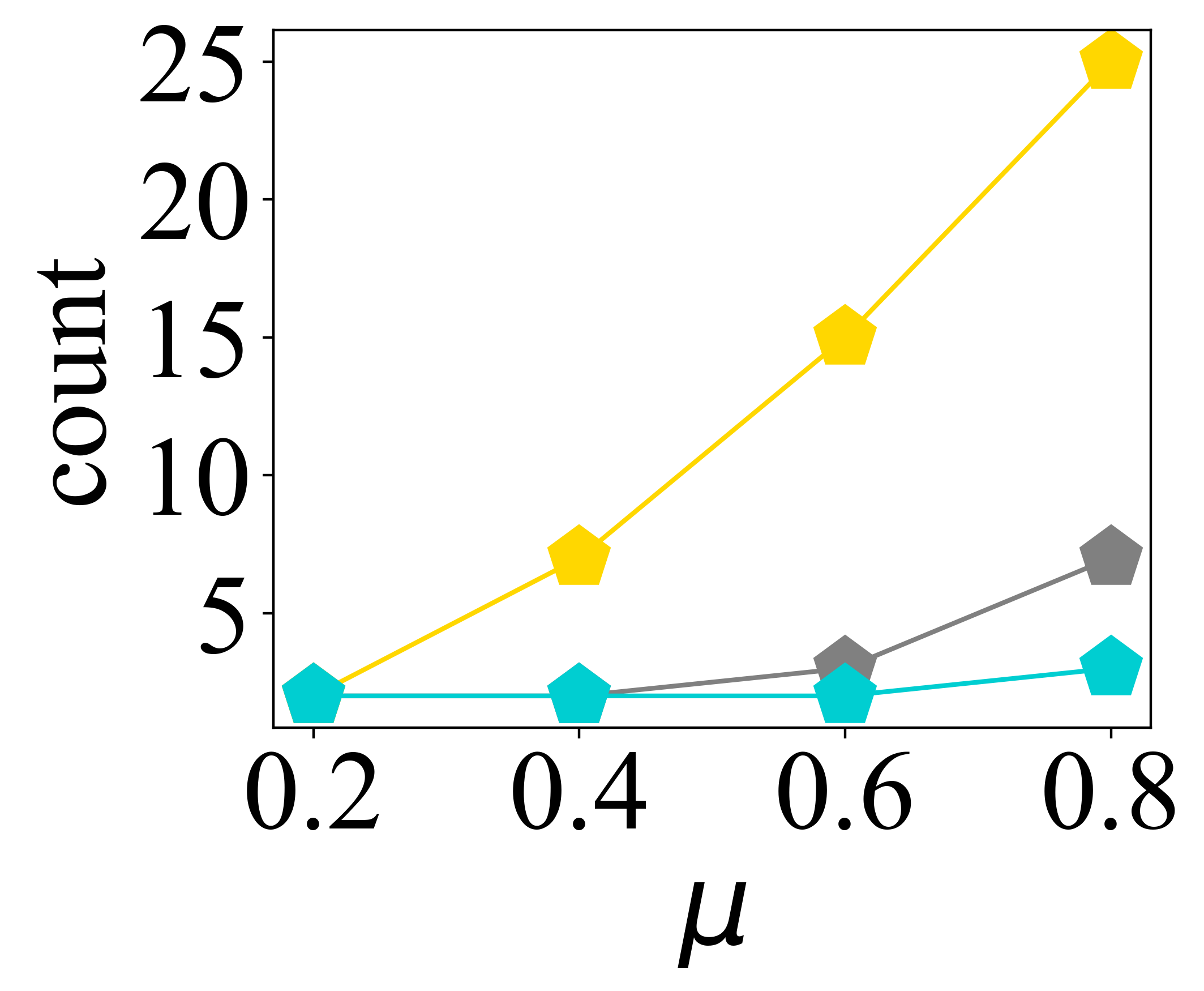}}
		\label{fig:parameter_mu_loss_grid}}
	\subfigure[][{\scriptsize Loss}]{
		\scalebox{0.19}[0.19]{\includegraphics{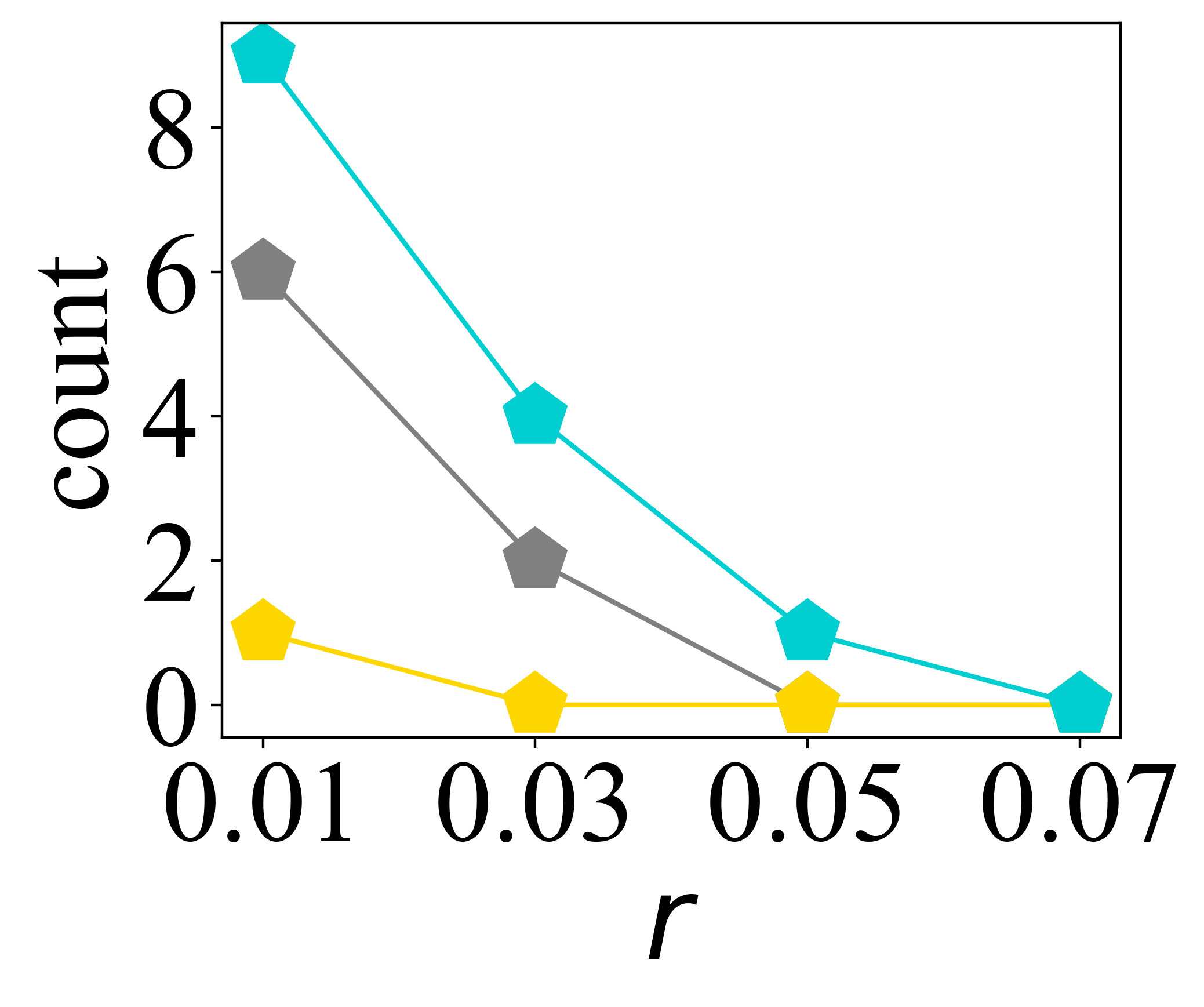}}
		\label{fig:parameter_rate_loss_grid}}
	\caption{\small Effect of different parameters}
	\label{fig:parameters}
\end{figure*}

\noindent\textbf{Effect of $r$, $\varepsilon$ and $\mu$ on pruning algorithm.}
The pruning condition setting can significantly impact the final pruning effect. Most of the trajectories can pass the pruning condition when the pruning condition is set loosely, which will lead to a sharp increase in the time to calculate the optimal subtrajectory. In contrast, when the strict pruning condition will lead to filtering out data trajectories similar to the query trajectory; thus the algorithm cannot find the optimal subtrajectory for some query trajectories. Therefore, two metrics (i.e., time and loss) are used to determine the impact of these three parameters $r$, $\varepsilon$, and $\mu$ on the pruning algorithm. A shorter run time of the algorithm means that the algorithm can filter out more trajectories that are not similar, while loss indicates the number of query trajectories for which the optimal subtrajectory cannot be found with the current parameter settings. We conducted experiments on the Xi'an dataset with different parameter settings based on ERP as a distance function using the CMA algorithm, and the experimental results are shown in Figure \ref{fig:parameters}. \revision{From the experimental results presented in Figure \ref{fig:parameter_grid_time_grid} and \ref{fig:parameter_grid_loss_grid}, it is clear that as $\varepsilon$ increases, the number of trajectories remained after pruning also increases. This leads to an increase in the amount of time needed for subsequent searches. However, as $\varepsilon$ increases, the number of query trajectories for which the optimal subtrajectory is not found decreases. At $\varepsilon=0.8e^{-4}$, the optimal subtrajectories are found for all query trajectories. To minimize the execution time of the framework and guarantee the optimal subtrajectories can be searched out, we set $\varepsilon$ to $0.8e^{-4}$ as the default value in the experiments of our draft.
Similarly, the experimental results show that as $\mu$ increases, more trajectories will be pruned, which leads to a higher probability of missing the optimal subtrajectories. In contrast, if $\mu$ decreases, the time cost required for the subsequent searching increases. Therefore, to minimize the execution time of the framework and guarantee the optimal subtrajectories can be safely searched out, we set $\mu$ to 0.4 as the default value, based on the experimental results.
Figure \ref{fig:parameter_rate_time_grid} and \ref{fig:parameter_rate_loss_grid} illustrates that as $r$ becomes larger, the time overhead of the pruning process also increases. This is because subtrajectory distances are approximated by a more accurate lower bound, making it less likely to miss the optimal subtrajectory. On the other hand, when $r$ is small, the pruning process becomes less time-consuming, but it becomes easier to overlook the optimal subtrajectories. Therefore, we set the sampling rate $r$ to $0.05$ in the experiments presented in this paper. 
}
\begin{figure*}[h!]
	\centering
	\subfigure{
		\scalebox{0.40}[0.40]{\includegraphics{figures/legend.png}}}\hfill\\
	\addtocounter{subfigure}{-1}\vspace{-2ex}
	\subfigure[][{\scriptsize SURS (Porto)}]{
		\scalebox{0.19}[0.19]{\includegraphics{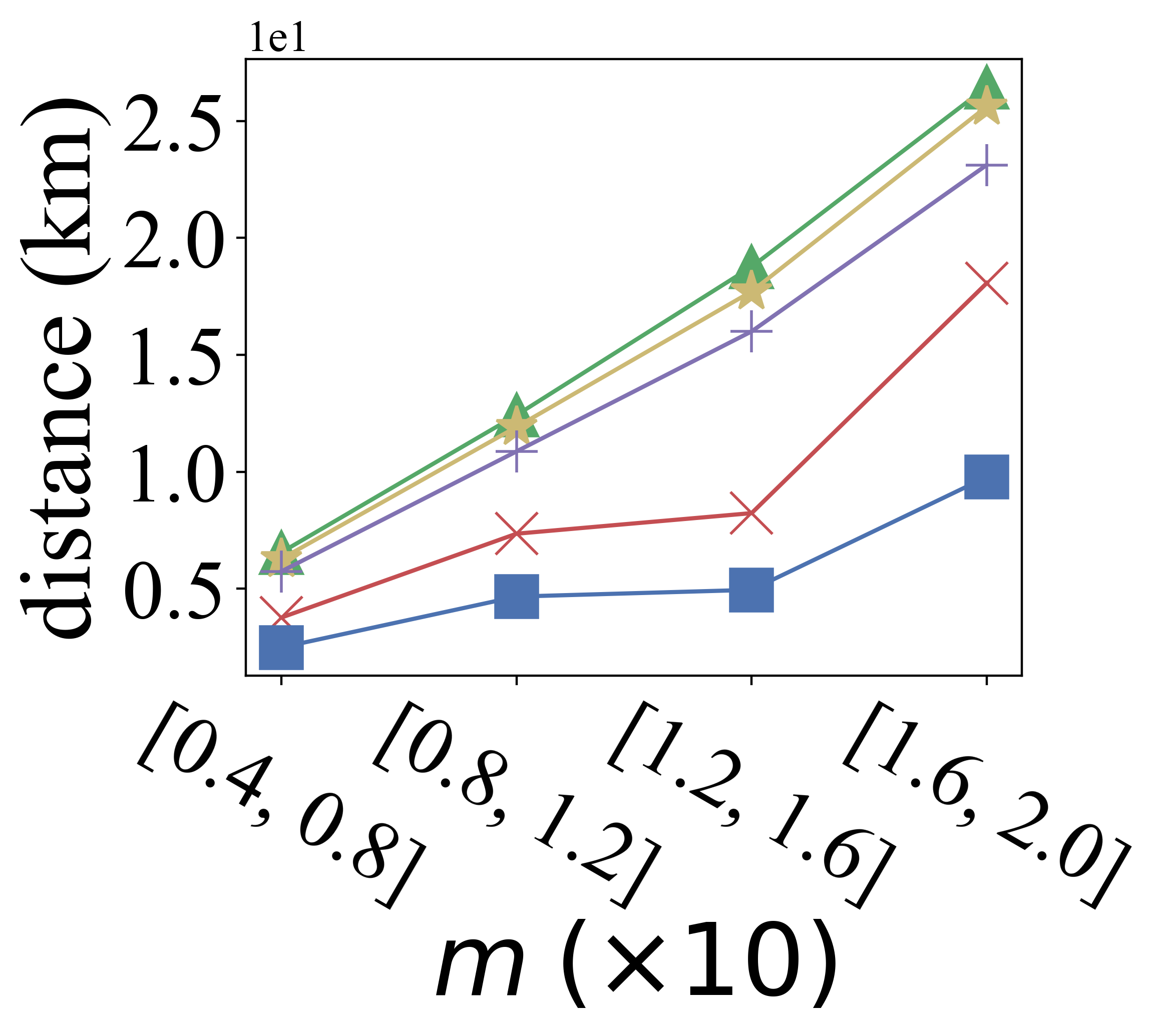}}
		\label{fig:road_porto_SURS_score}}
	\subfigure[][{\scriptsize SURS (Xi'an)}]{
		\scalebox{0.19}[0.19]{\includegraphics{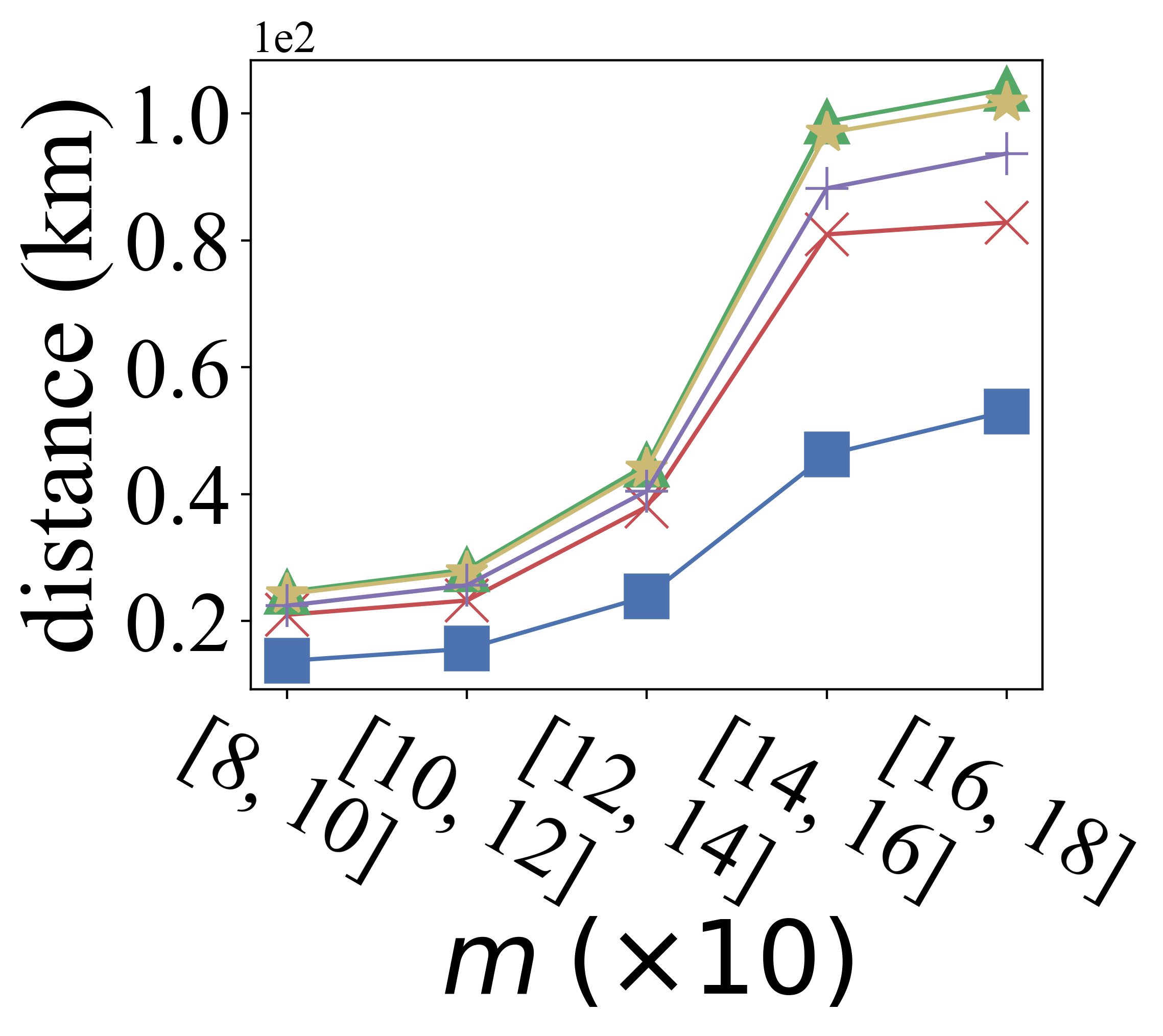}}
		\label{fig:road_xian_SURS_score}}
	\subfigure[][{\scriptsize SURS (Beijing)}]{
		\scalebox{0.19}[0.19]{\includegraphics{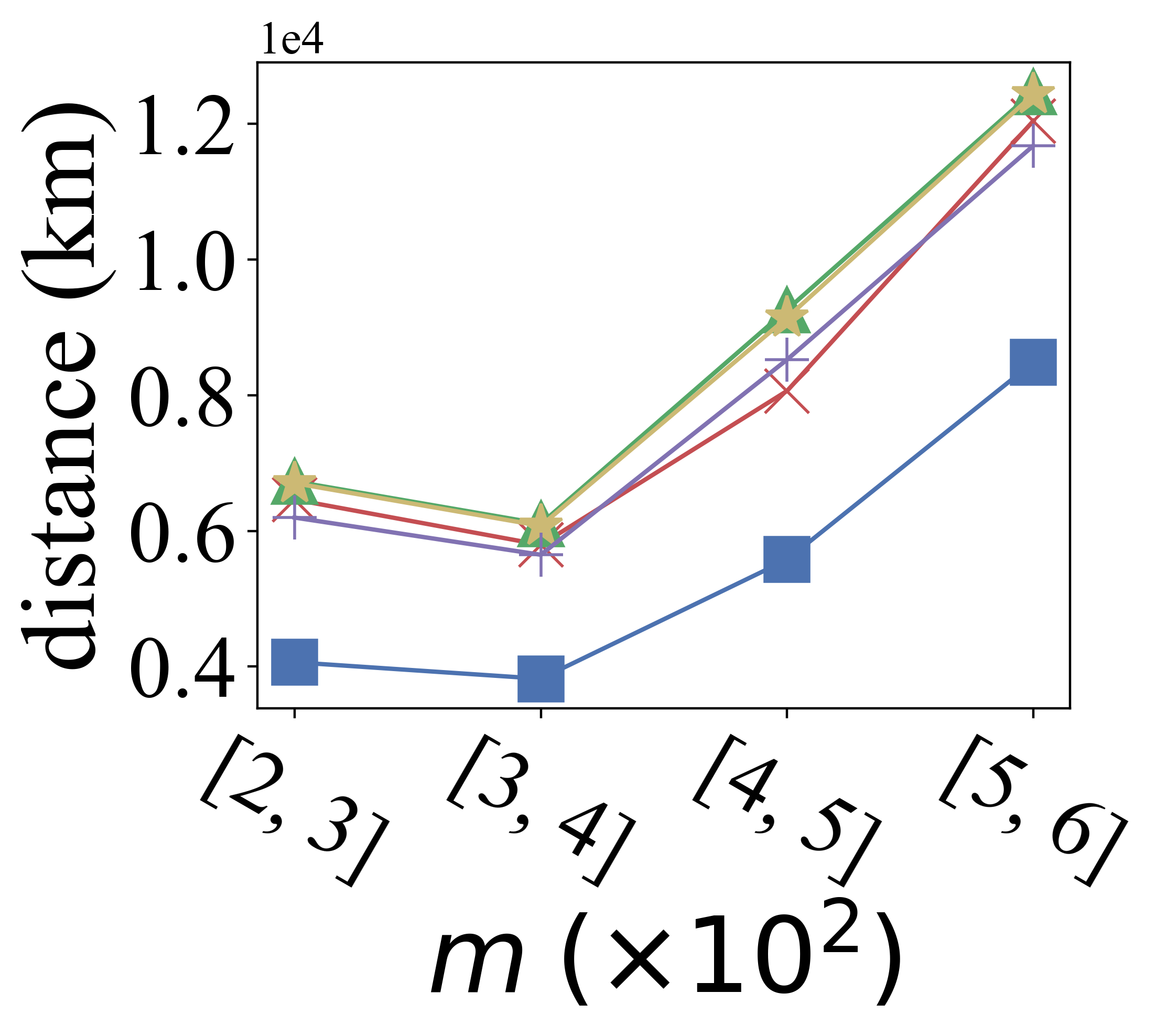}}
		\label{fig:road_beijing_SURS_score}}
	\subfigure[][{\scriptsize SURS (Porto)}]{
		\scalebox{0.19}[0.19]{\includegraphics{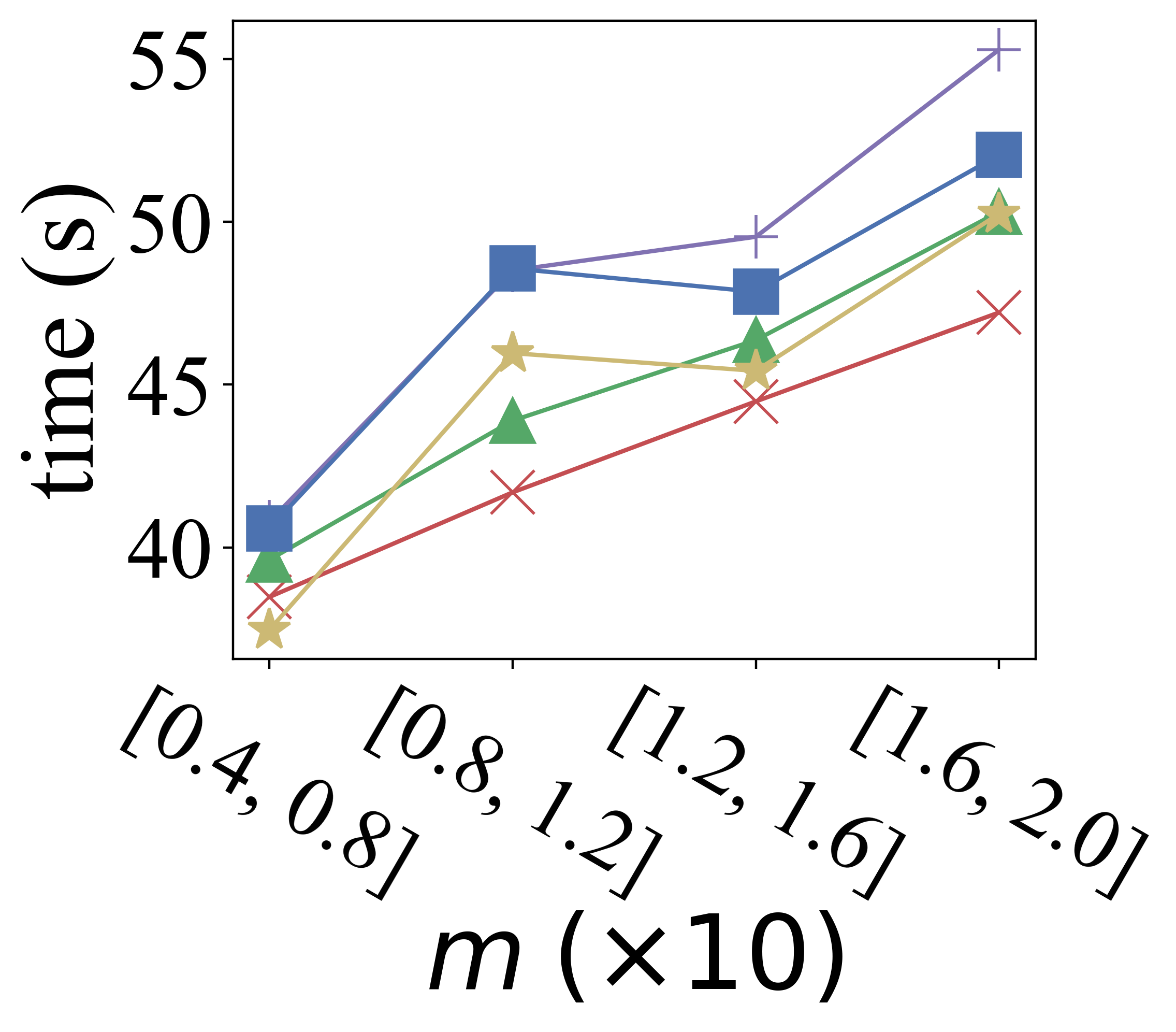}}
		\label{fig:road_porto_SURS}}
	\subfigure[][{\scriptsize SURS (Xi'an)}]{
		\scalebox{0.19}[0.19]{\includegraphics{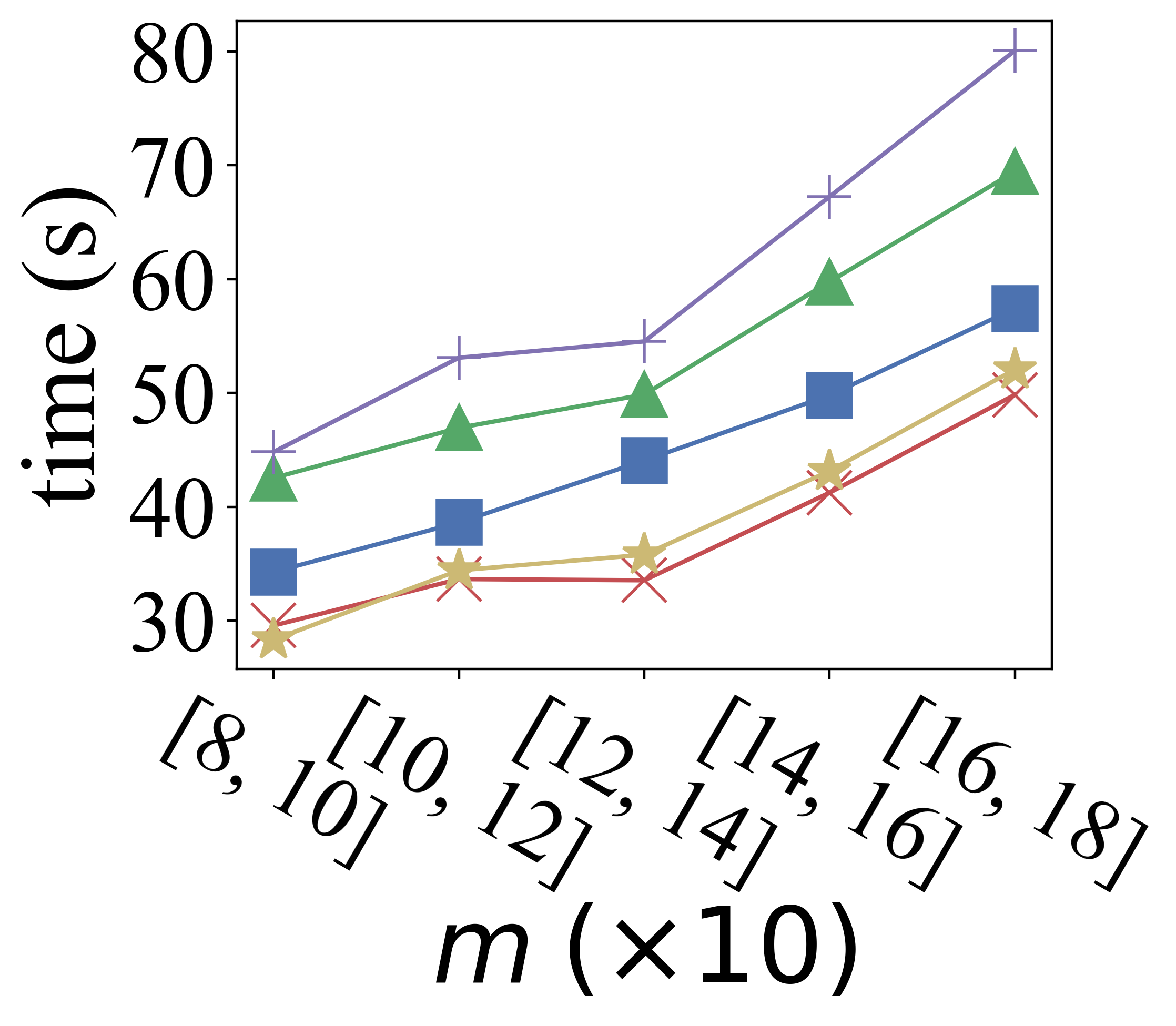}}
		\label{fig:road_xian_SURS}}
	\subfigure[][{\scriptsize SURS (Beijing)}]{
		\scalebox{0.19}[0.19]{\includegraphics{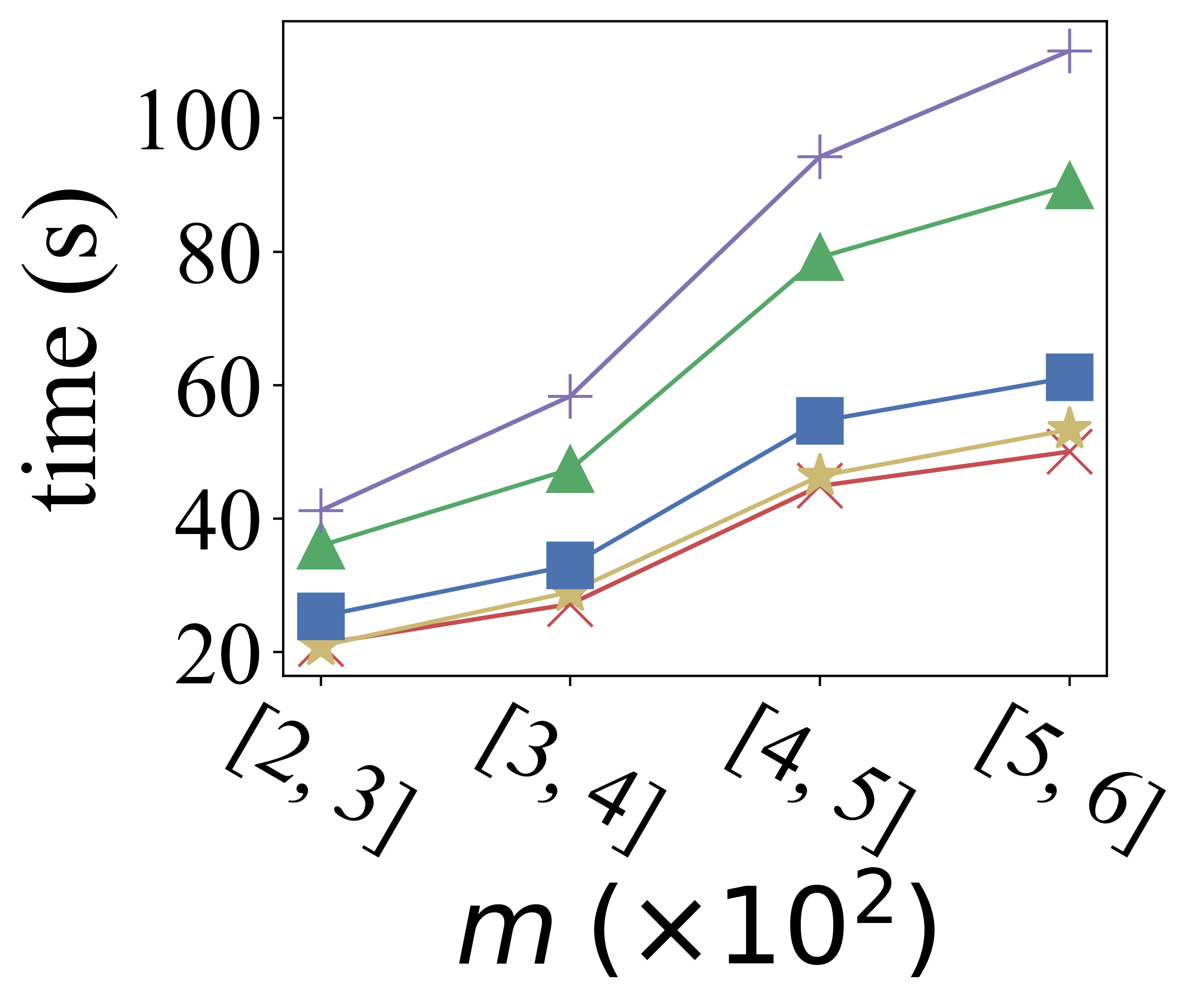}}
		\label{fig:road_beijing_SURS}}
	\\
	\subfigure[][{\scriptsize NetEDR (Porto)}]{
		\scalebox{0.19}[0.19]{\includegraphics{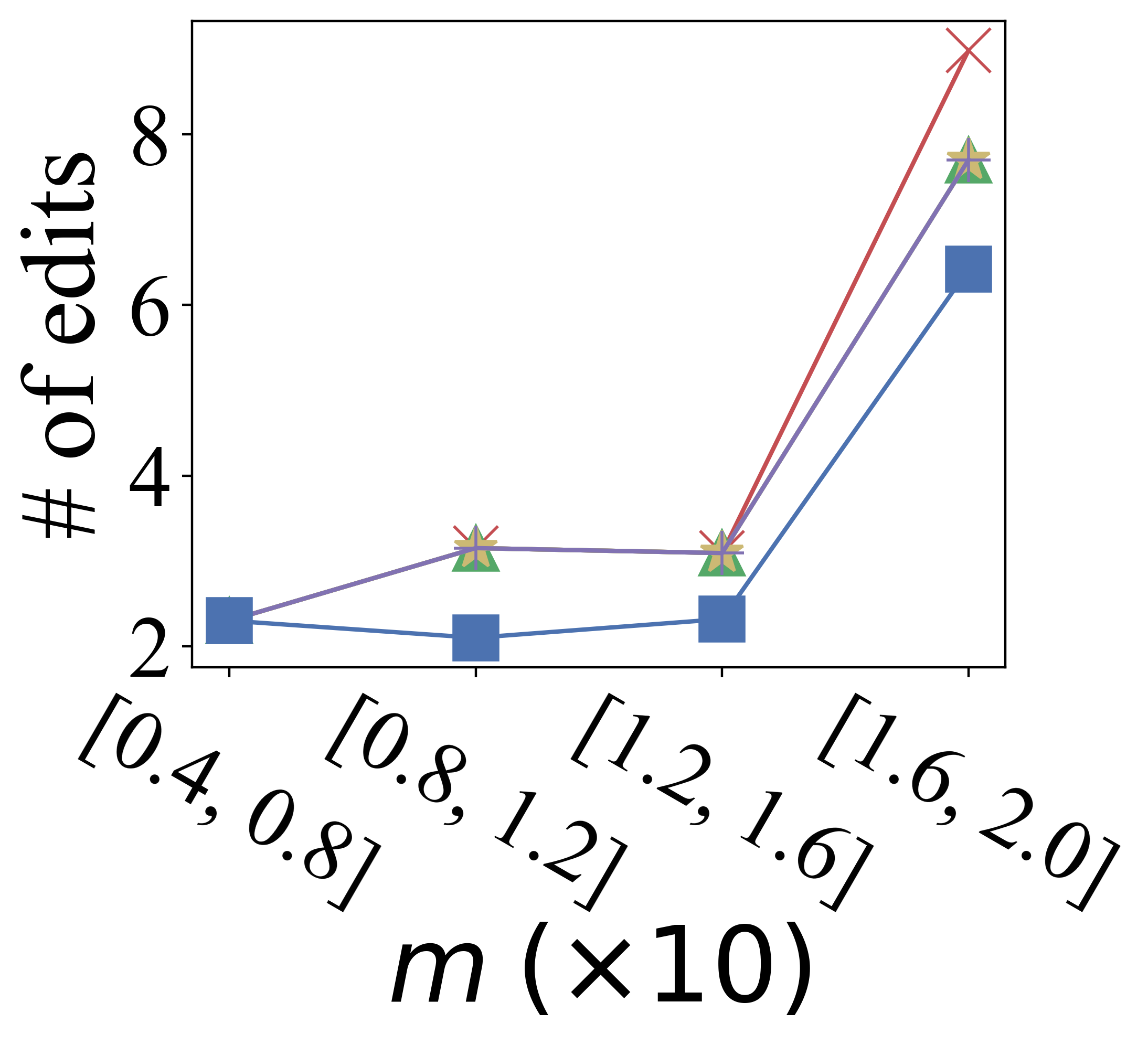}}
		\label{fig:road_porto_NetEDR_score}}
	\subfigure[][{\scriptsize NetEDR (Xi'an)}]{
		\scalebox{0.19}[0.19]{\includegraphics{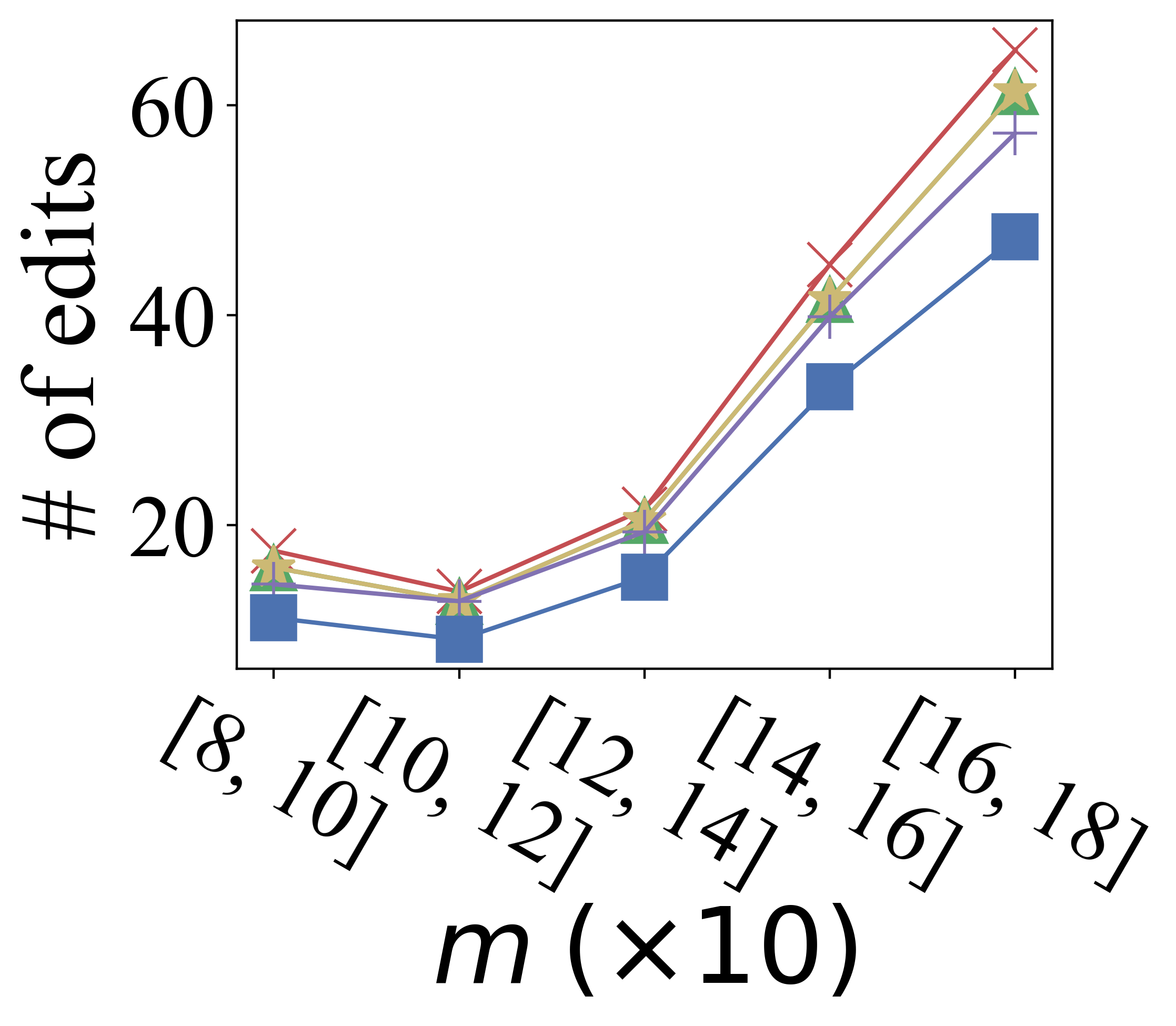}}
		\label{fig:road_xian_NetEDR_score}}
	\subfigure[][{\scriptsize NetEDR (Beijing)}]{
		\scalebox{0.19}[0.19]{\includegraphics{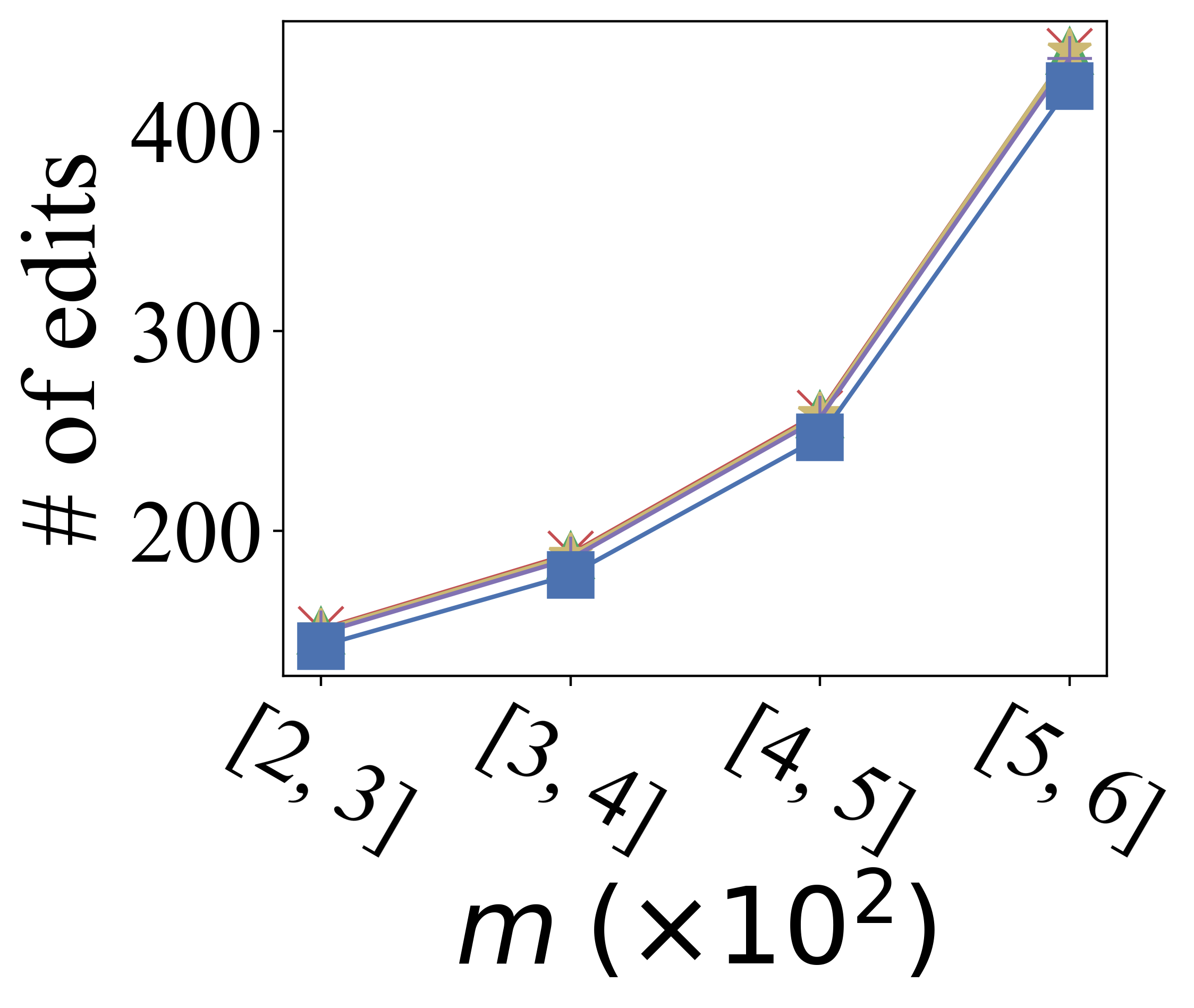}}
		\label{fig:road_beijing_NetEDR_score}}
	\subfigure[][{\scriptsize NetEDR (Porto)}]{
		\scalebox{0.19}[0.19]{\includegraphics{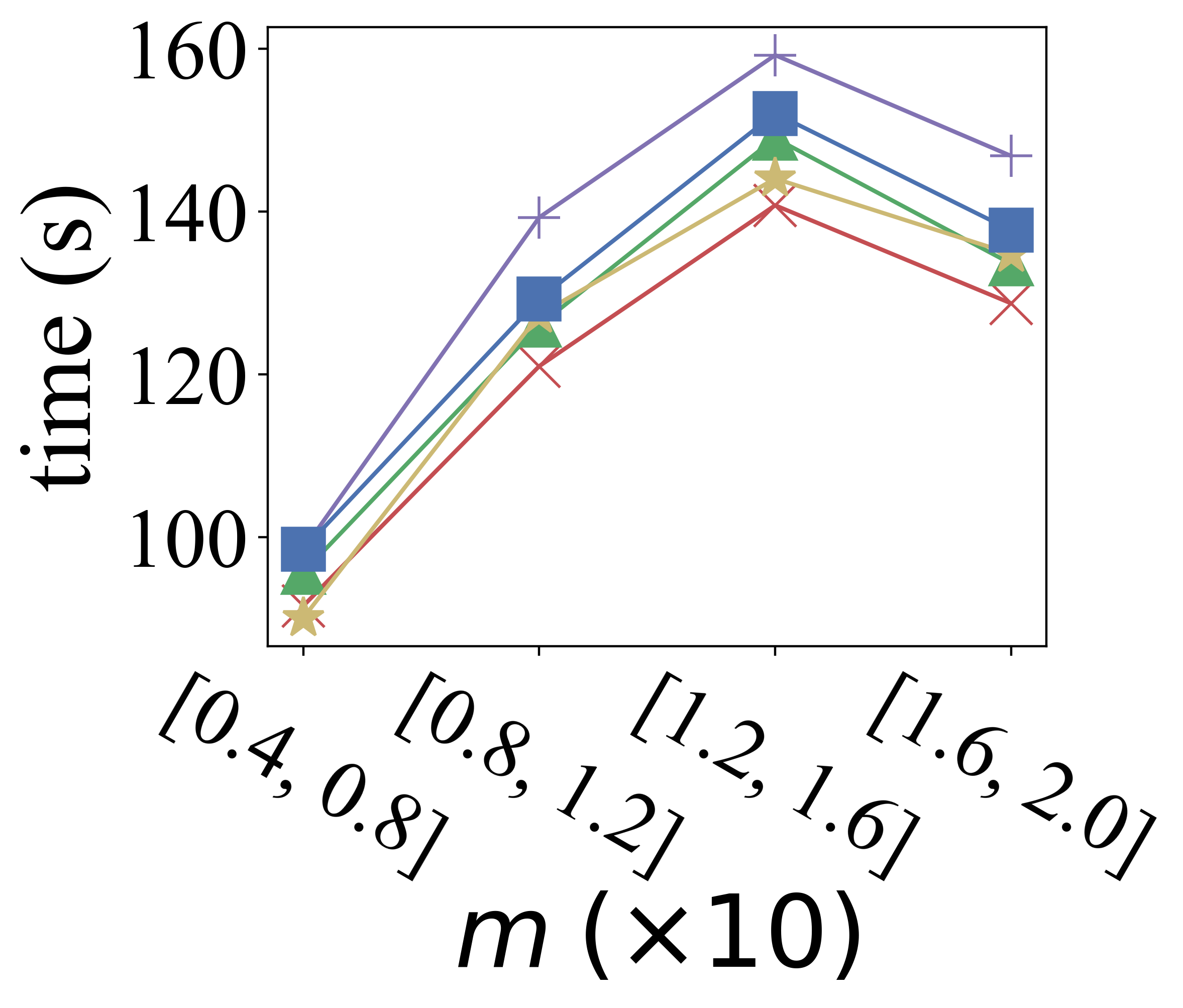}}
		\label{fig:road_porto_NetEDR}}
	\subfigure[][{\scriptsize NetEDR (Xi'an)}]{
		\scalebox{0.19}[0.19]{\includegraphics{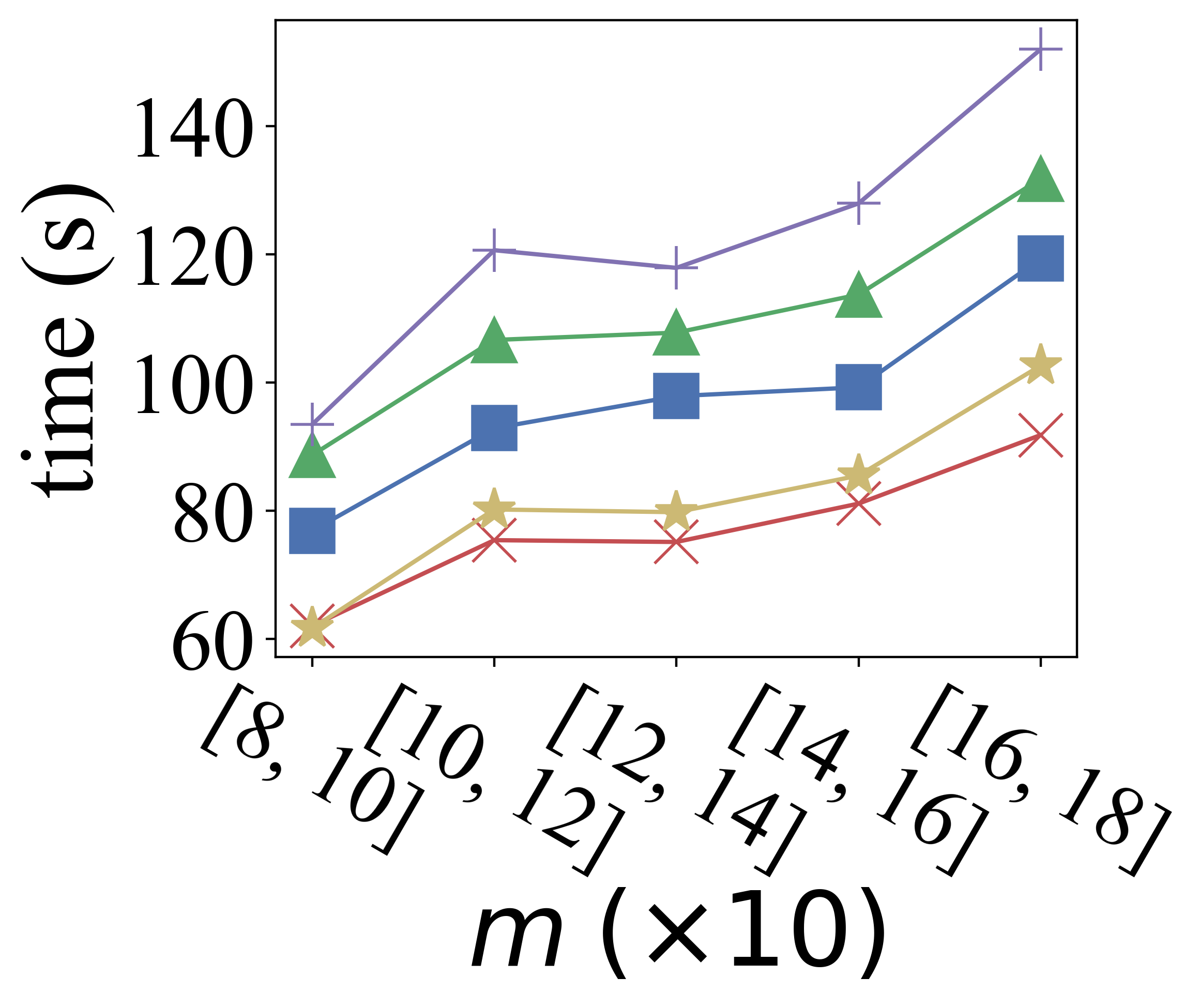}}
		\label{fig:road_xian_NetEDR}}
	\subfigure[][{\scriptsize NetEDR (Beijing)}]{
		\scalebox{0.19}[0.19]{\includegraphics{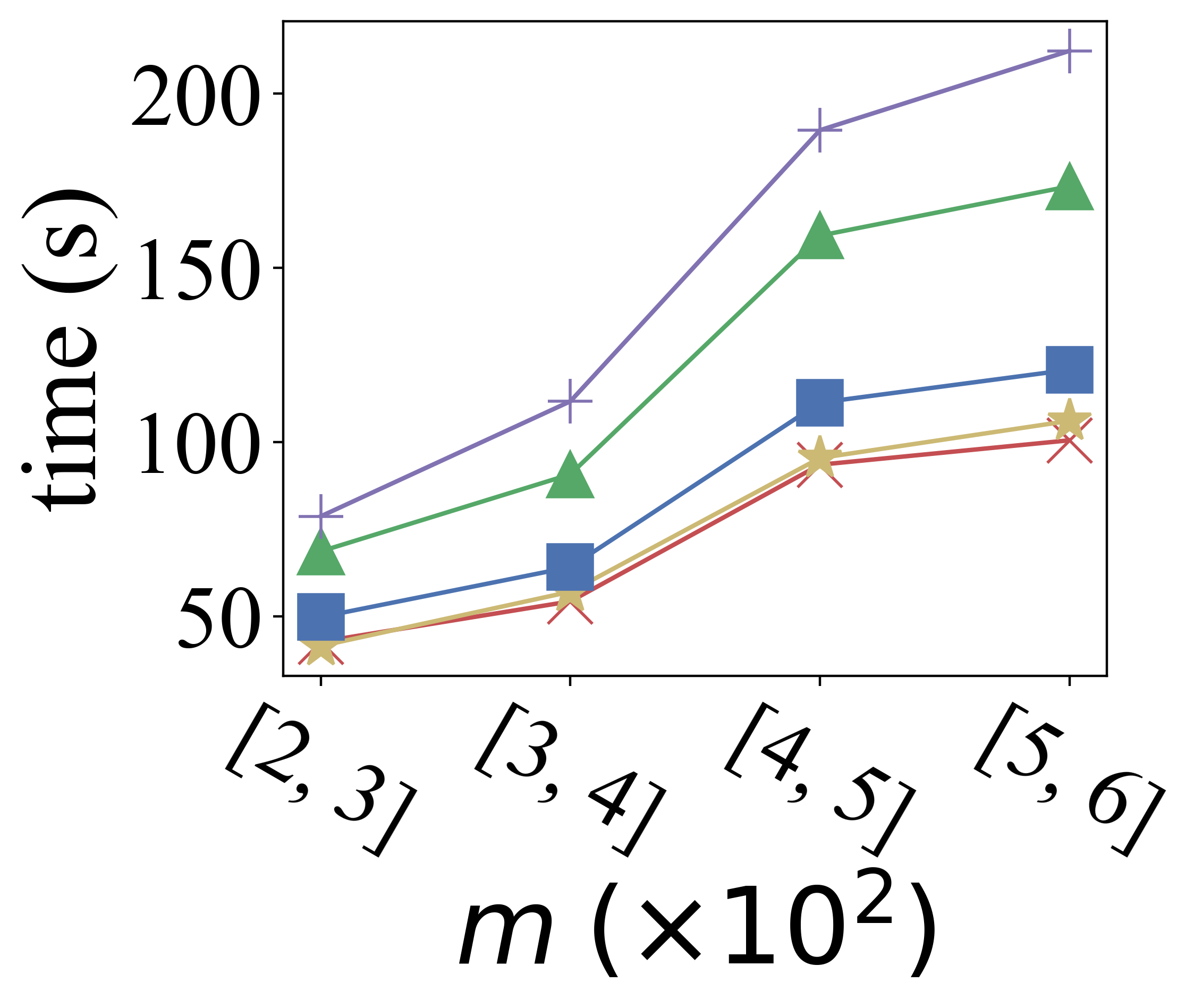}}
		\label{fig:road_beijing_NetEDR}}
	\\
	\subfigure[][{\scriptsize NetERP (Porto)}]{
		\scalebox{0.19}[0.19]{\includegraphics{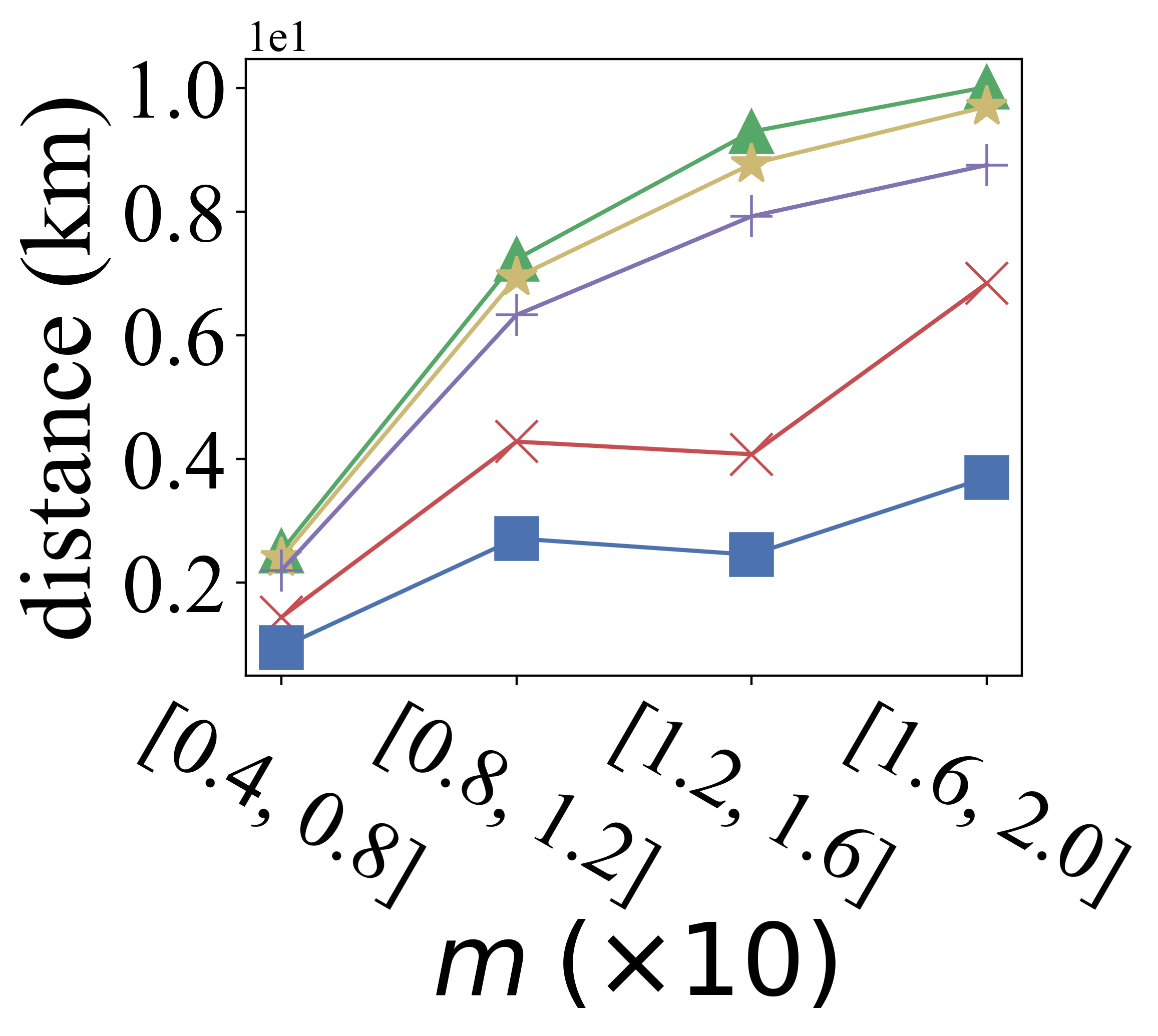}}
		\label{fig:road_porto_NetERP_score}}
	\subfigure[][{\scriptsize NetERP (Xi'an)}]{
		\scalebox{0.19}[0.19]{\includegraphics{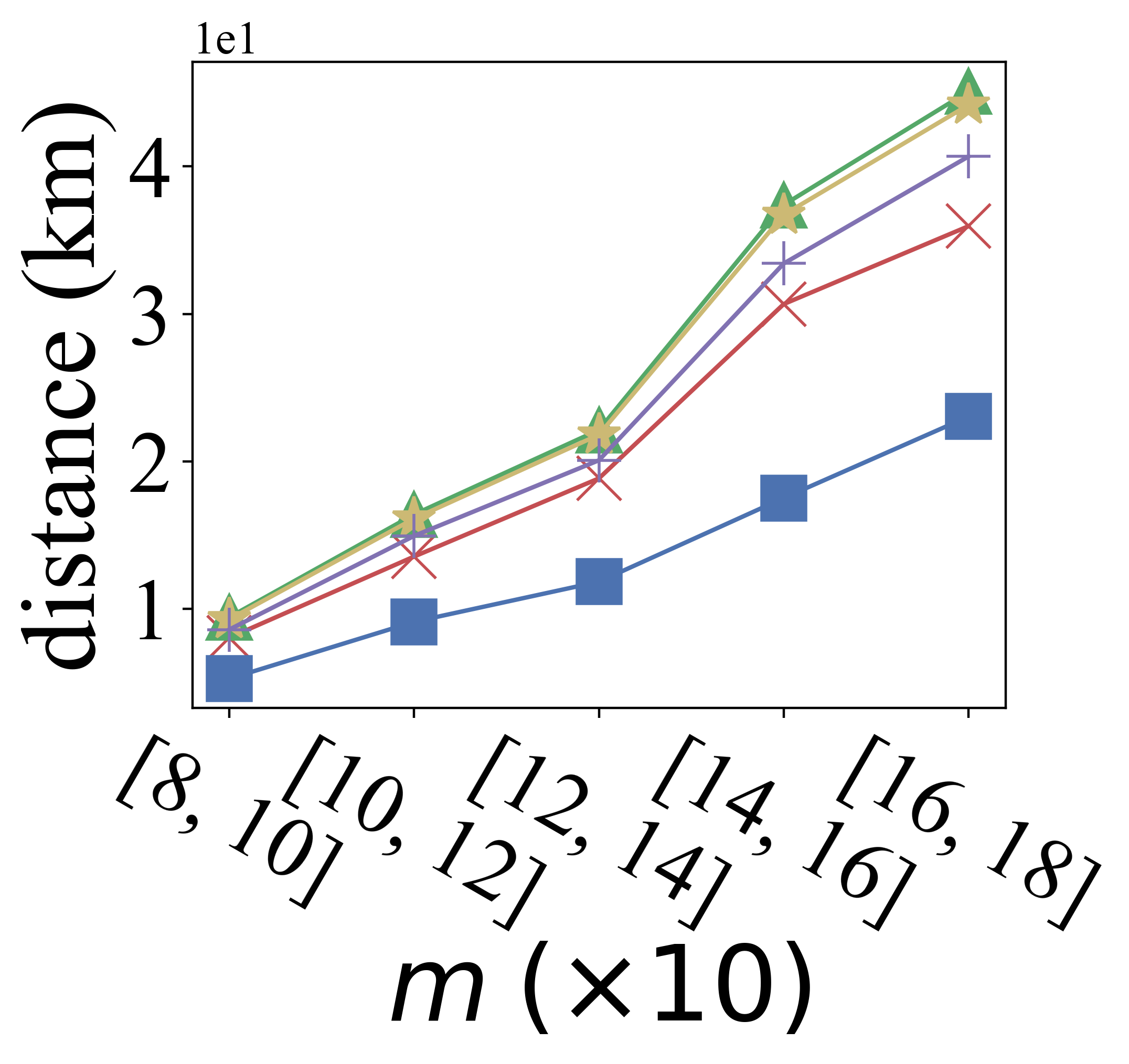}}
		\label{fig:road_xian_NetERP_score}}
	\subfigure[][{\scriptsize NetERP (Beijing)}]{
		\scalebox{0.19}[0.19]{\includegraphics{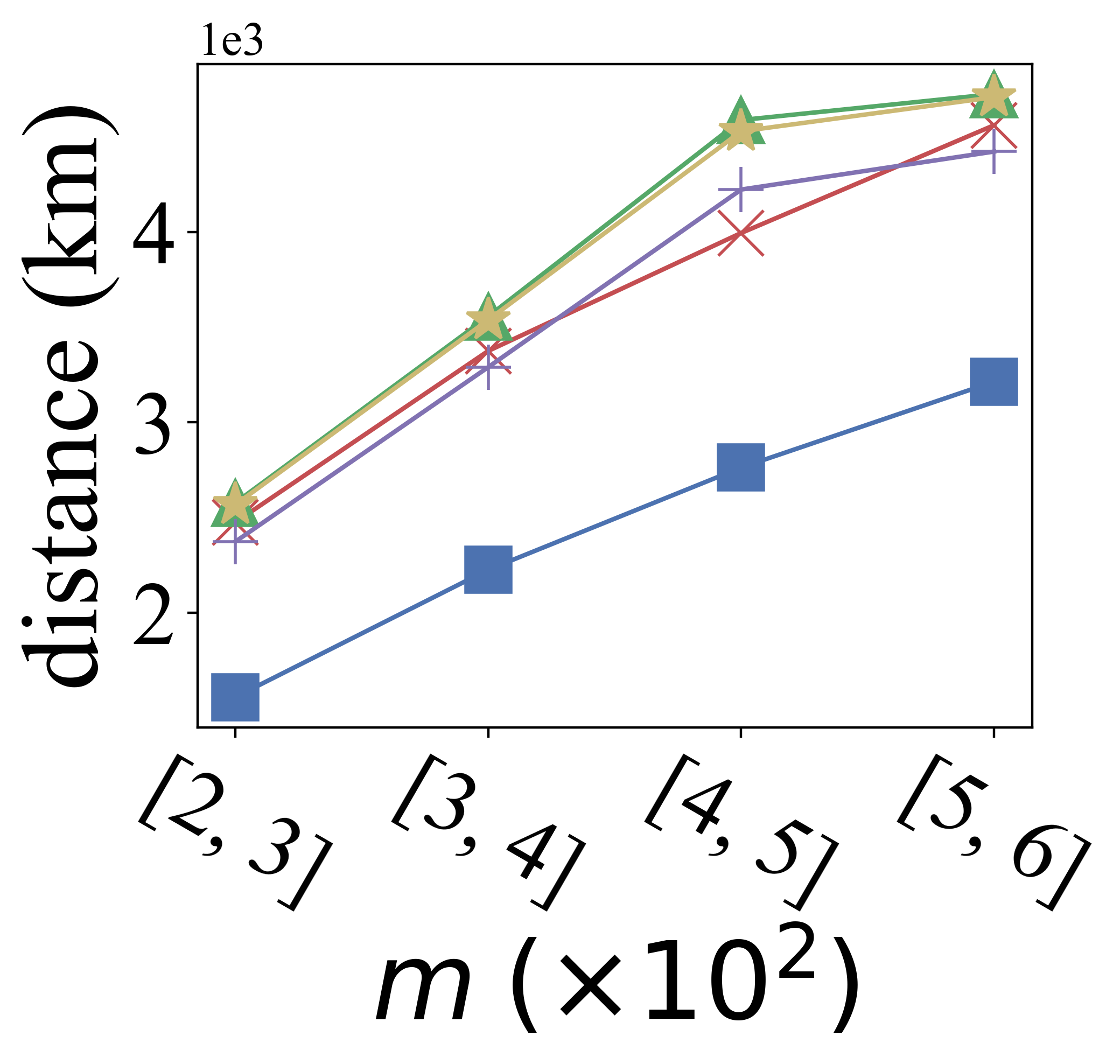}}
		\label{fig:road_beijing_NetERP_score}}
	\subfigure[][{\scriptsize NetERP (Porto)}]{
		\scalebox{0.19}[0.19]{\includegraphics{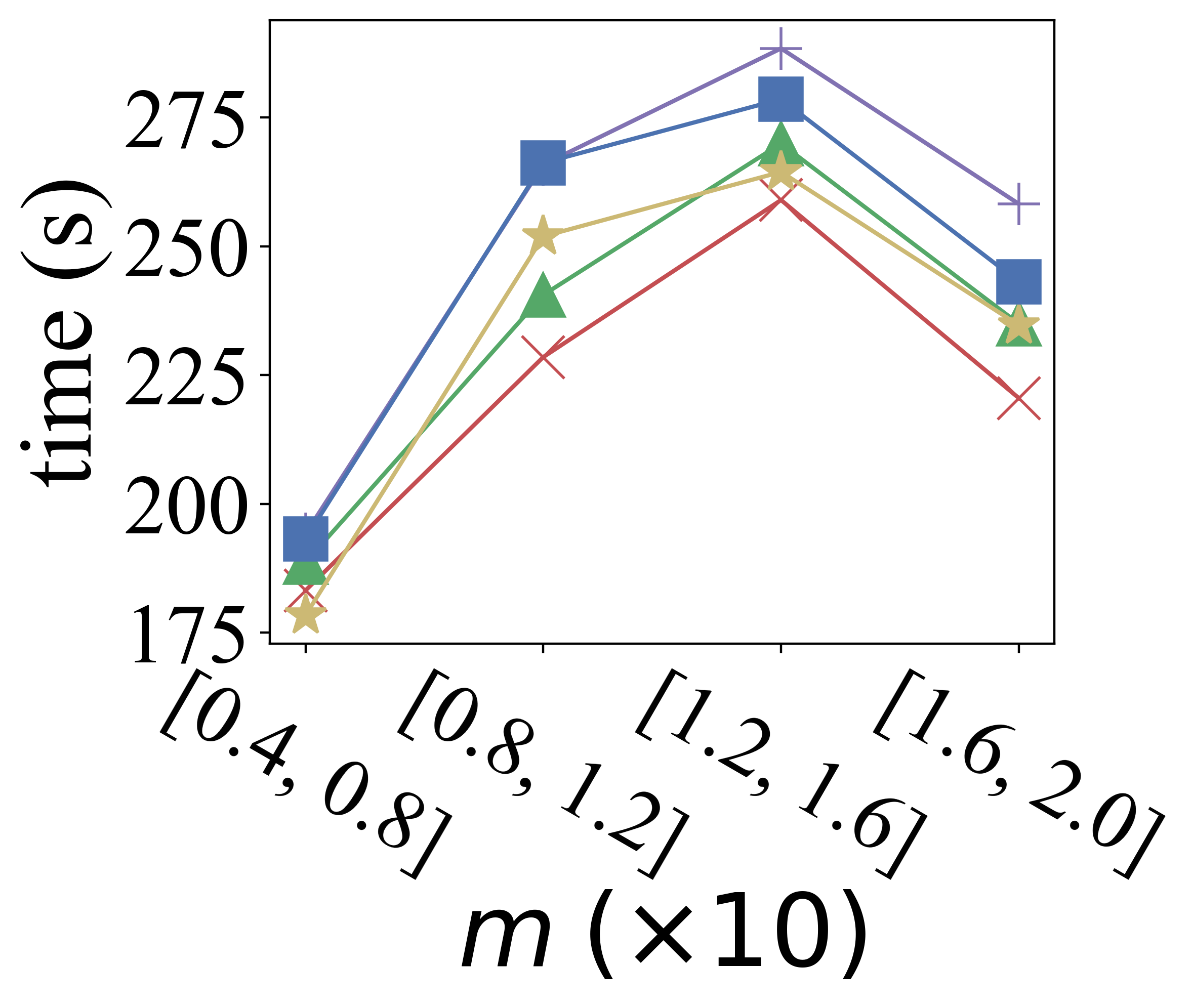}}
		\label{fig:road_porto_NetERP}}
	\subfigure[][{\scriptsize NetERP (Xi'an)}]{
		\scalebox{0.19}[0.19]{\includegraphics{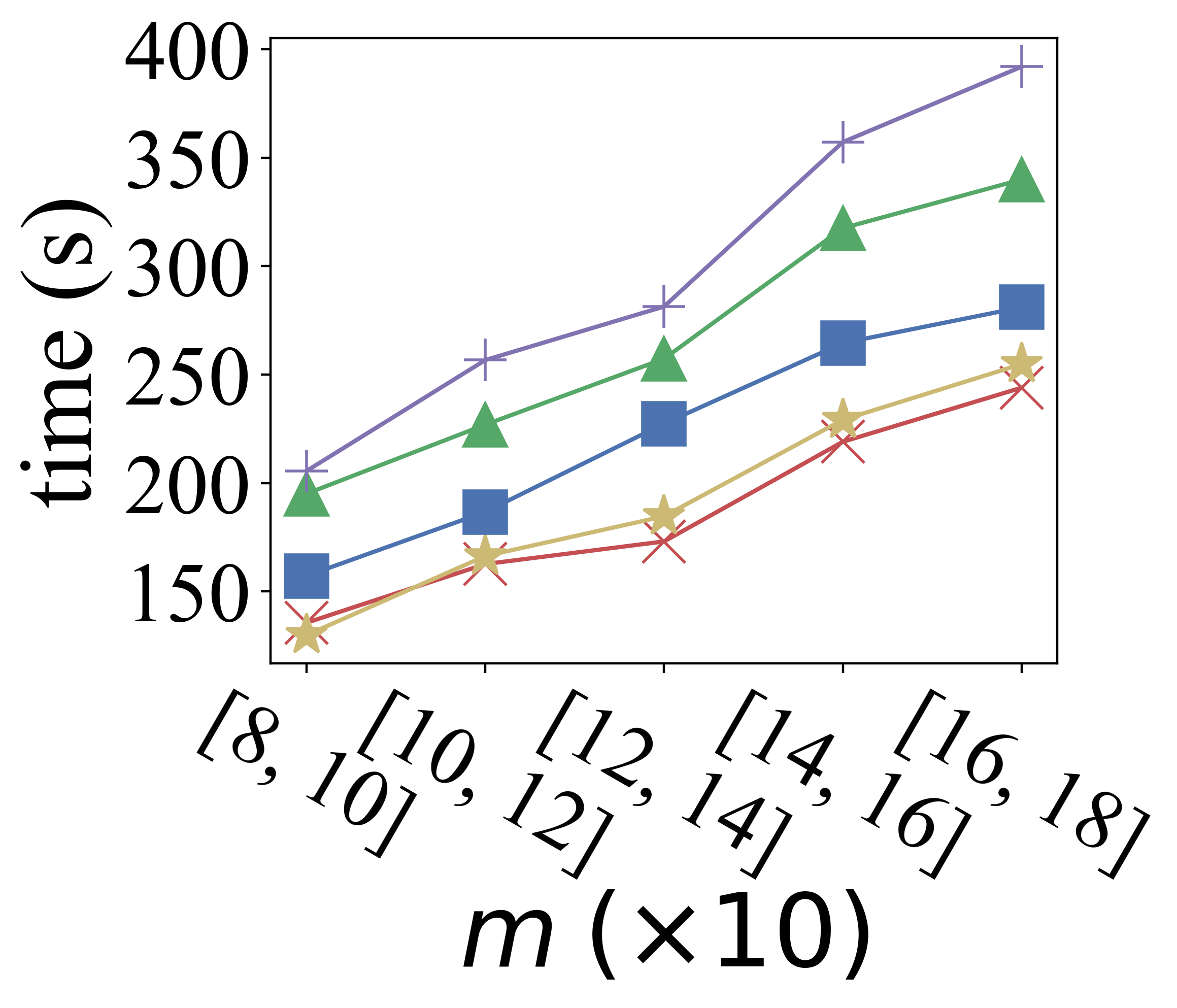}}
		\label{fig:road_xian_NetERP}}
	\subfigure[][{\scriptsize NetERP (Beijing)}]{
		\scalebox{0.19}[0.19]{\includegraphics{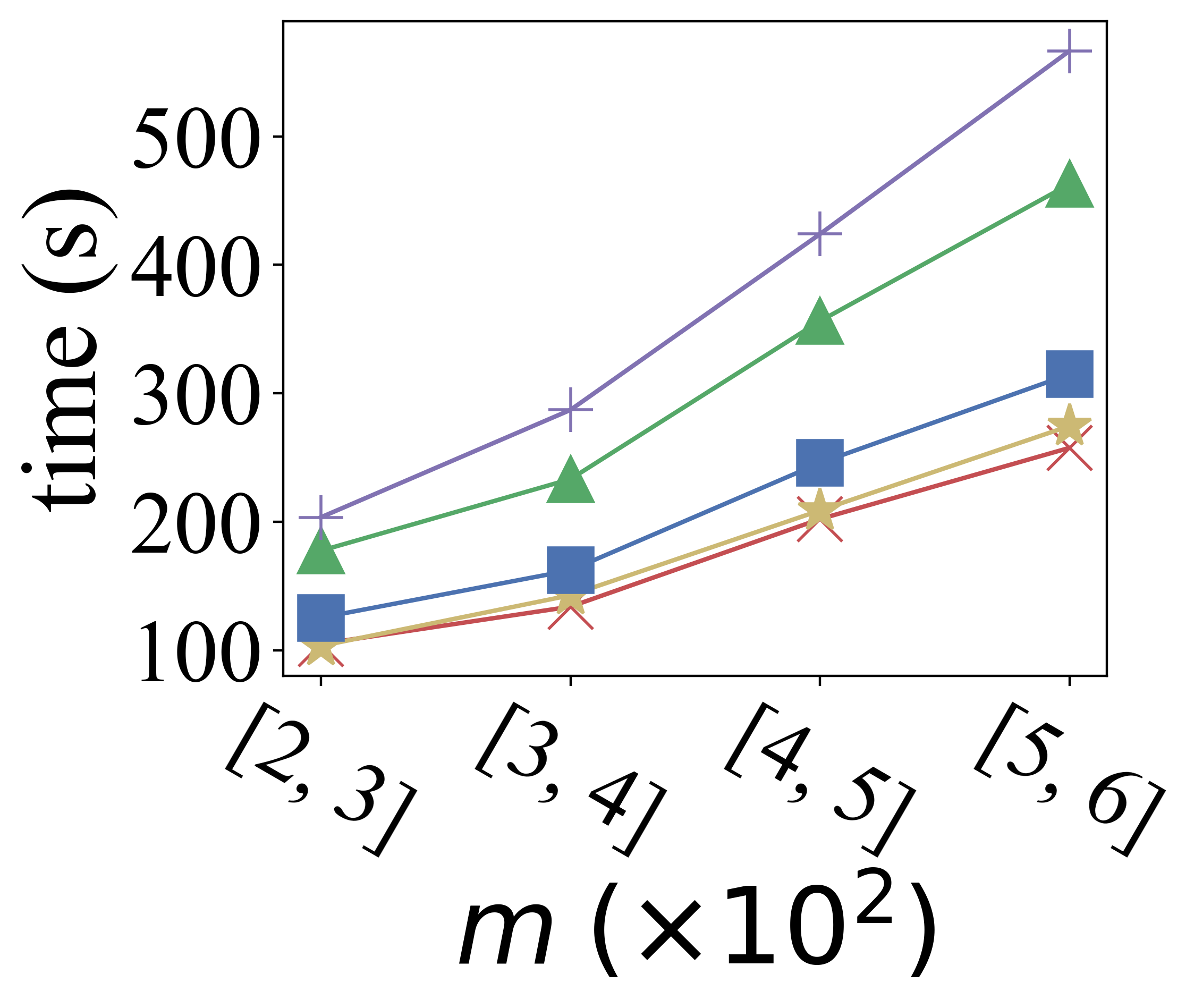}}
		\label{fig:road_beijing_NetERP}}
	\caption{\small Effectiveness and efficiency with varying query lengths}
	\label{fig:efficiencyLengthNet}
\end{figure*}
\section{Experiment on Other Distance Function}
\revision{
Our algorithm proposed in this paper applies not only to trajectories represented by GPS points but also to trajectories represented by other forms such as each point or edge on the road network. Experiments are conducted to test the performance of our method under different representations of trajectory points. Three different distance functions are employed:

\begin{enumerate}
	\item NetERP: In NetERP, each trajectory point is a point on the road network. Unlike ERP, the distance between trajectory points in NetERP is the distance on the road network.
	\item NetEDR: In NetEDR, each trajectory point is a point on the road network. The cost of inserting, deleting, or replacing a trajectory point is 1.
	\item SURS: In SURS, each trajectory consists of a series of edges on the road network. The cost of inserting or deleting a trajectory edge is equal to the weight of the corresponding edge. The cost of replacing one trajectory edge with another is equal to the sum of the weights of the two corresponding edges on the road network.
\end{enumerate}

NetERP, NetEDR, and SURS are all special cases of WED and can be applied to the algorithm proposed in this paper. We used RoutingKit~\footnote{https://github.com/RoutingKit/RoutingKit} to convert our original GPS dataset into a road network. We compared the performance of these three algorithms on different datasets and show the experimental results in Figure \ref{fig:efficiencyLengthNet}:

The experimental results show that the algorithm proposed in this paper can achieve optimal results, regardless of the distance function used. However, using NetEDR and NetERP to search for subtrajectories has a relatively high time complexity since they require the use of the shortest path algorithm to calculate the distance between two trajectory points. As the value of $m$ increases, the time required to find the optimal subtrajectory also increases, and the distance between the optimal solution and the query trajectory continues to increase.

Different distance functions depend on different representations of trajectories. As long as the distance function is location-insensitive, the CMA algorithm can be used to find the optimal subtrajectory for any trajectory representation.
}
\section{Effect of $K$ on Our Framework}
\revision{We conduct additional experiments to investigate the impact of $K$ on the algorithm's execution time. The results shown in Figure \ref{fig:varyK} indicate that the search framework proposed in this paper can still achieve good performance even when $K$ is large. Figure \ref{fig:varyK} shows the summation of distances between the $K$ found sub-trajectories and the query trajectory and the running times. During the search process, we maintain a heap of size $K$. Similar to the process of top-k similar subtrajectory search in the existing VLDB2020 work \cite{WangLCL20}, each time when we invoke a similar subtrajectory search algorithm (such as CMA and POS) on a new data trajectory, we insert the searched optimal subtrajectory of the data trajectory into the heap. The time required to maintain the heap of size $K$ is negligible compared to the similar subtrajectory search algorithms, thus the overall time complexity of the search process mainly depends on the number of invocations of the similar subtrajectory search algorithm.}
\begin{figure*}[h!]
	\centering
	\subfigure{
		\scalebox{0.40}[0.40]{\includegraphics{figures/legend.png}}}\hfill\\
	\addtocounter{subfigure}{-1}\vspace{-2ex}
	\subfigure[][{\scriptsize EDR (Porto)}]{
		\scalebox{0.2}[0.2]{\includegraphics{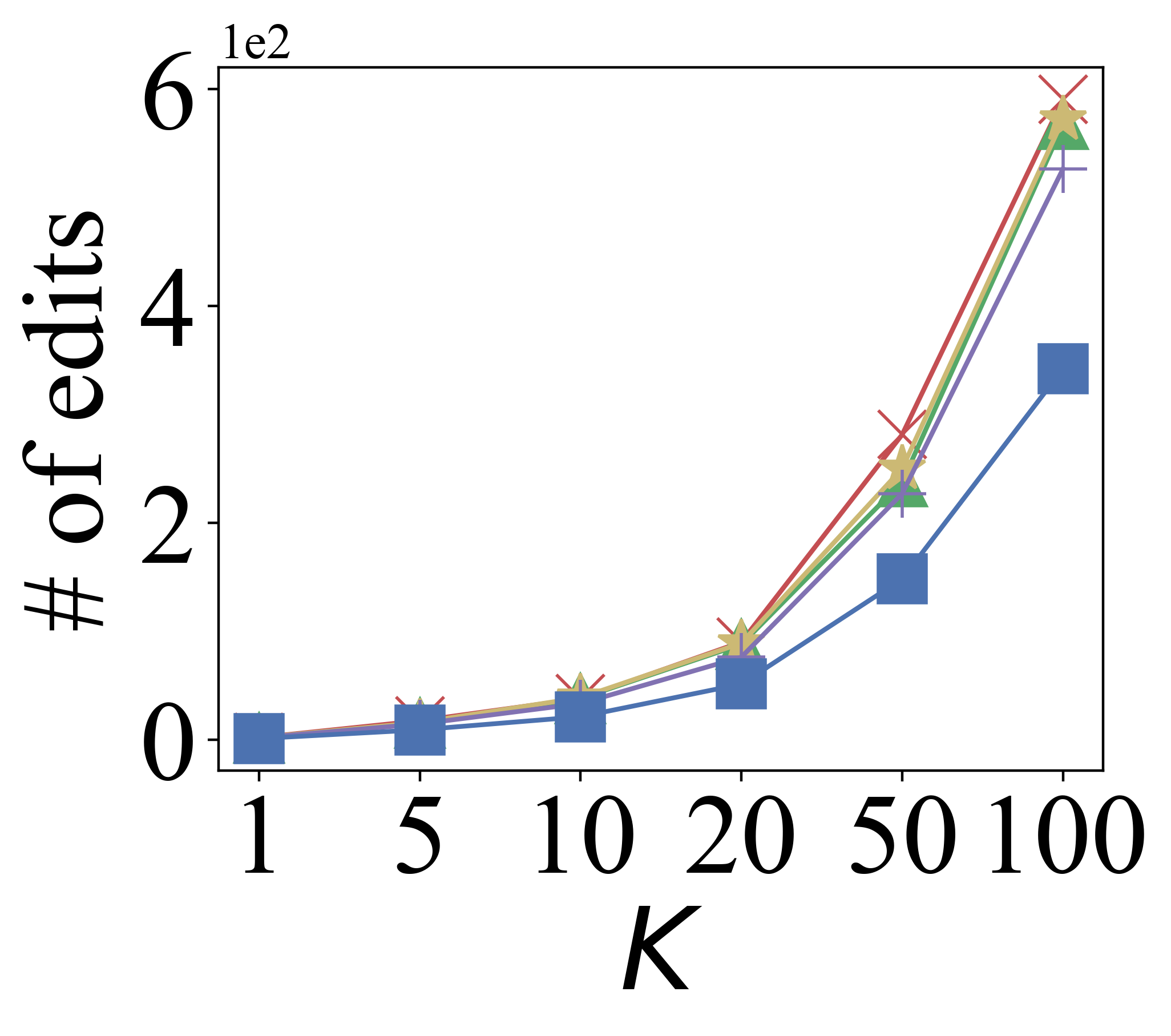}}
		\label{fig:varyK_porto_edr_score}}
	\subfigure[][{\scriptsize EDR (Xi'an)}]{
		\scalebox{0.2}[0.2]{\includegraphics{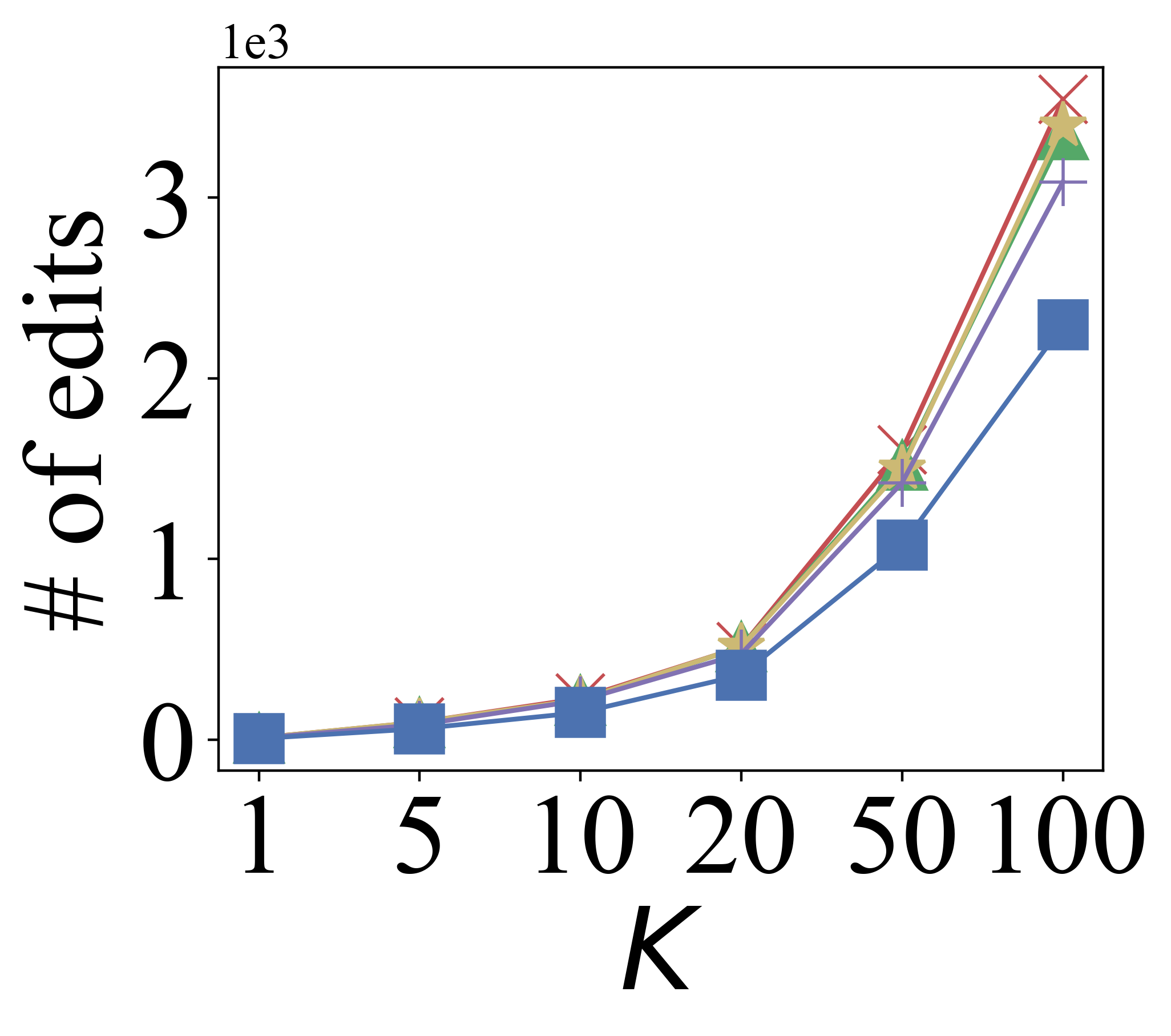}}
		\label{fig:varyK_xian_edr_score}}
	\subfigure[][{\scriptsize EDR (Beijing)}]{
		\scalebox{0.2}[0.2]{\includegraphics{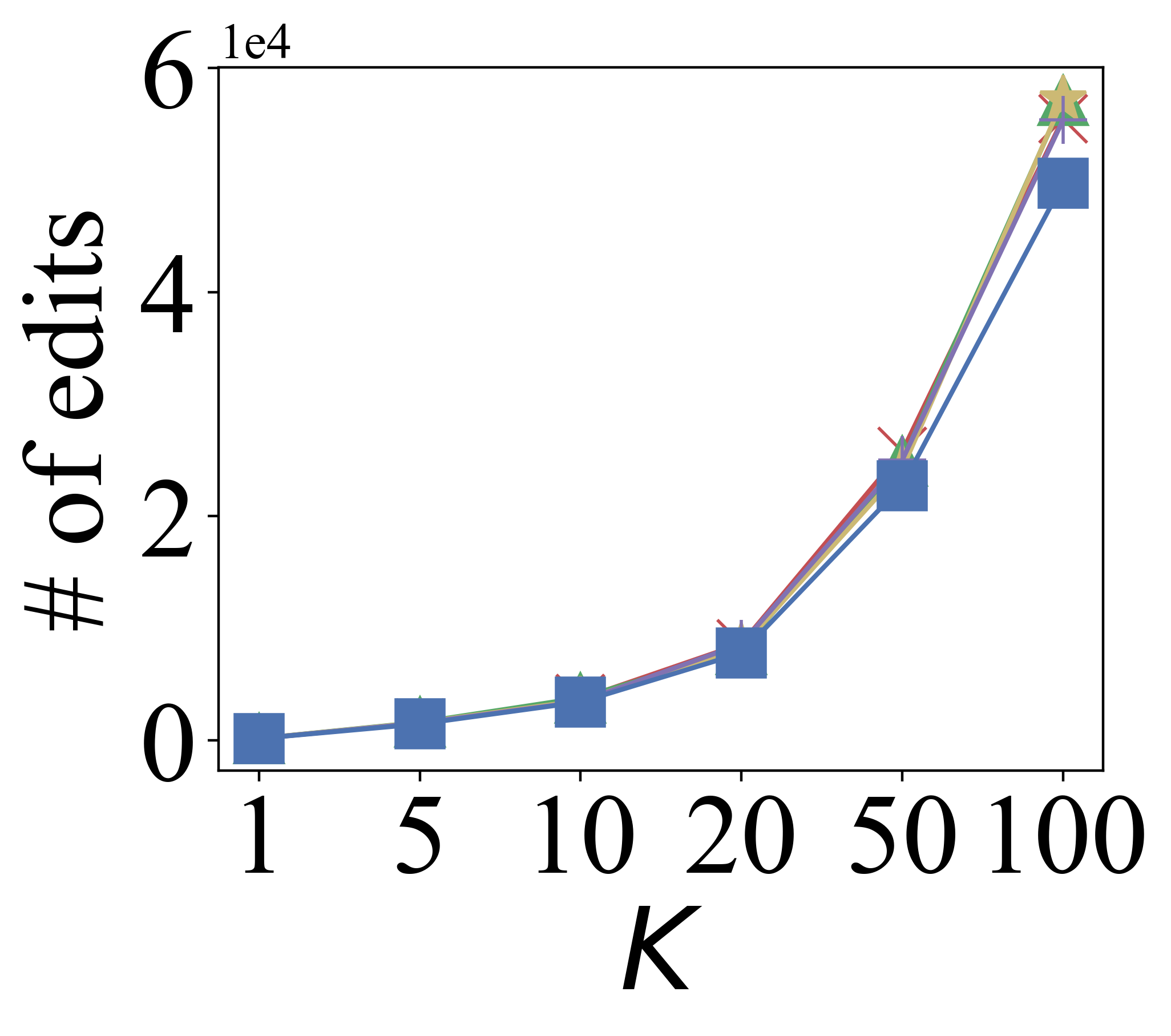}}
		\label{fig:varyK_beijing_edr_score}}
	\subfigure[][{\scriptsize EDR (Porto)}]{
		\scalebox{0.2}[0.2]{\includegraphics{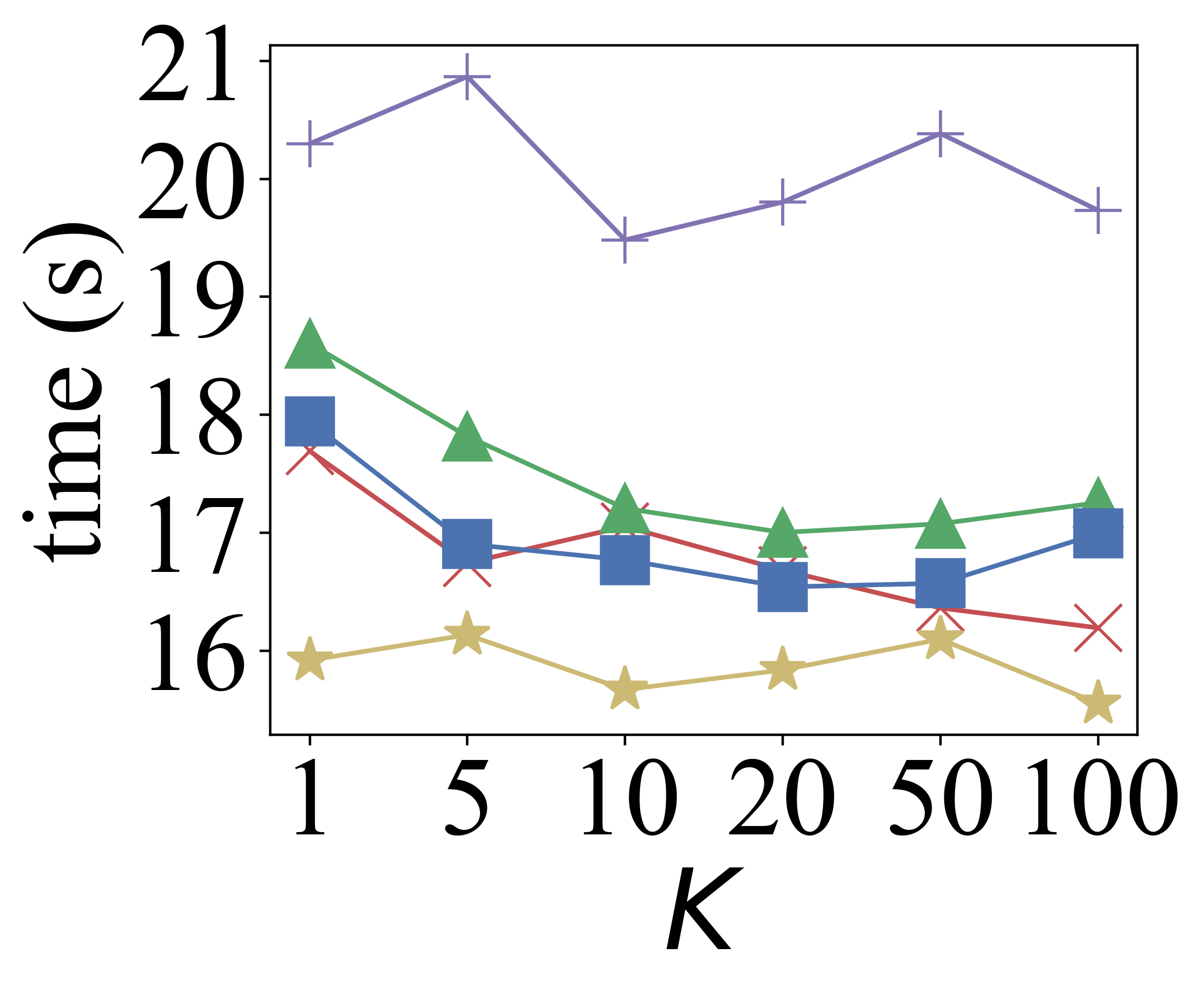}}
		\label{fig:varyK_porto_edr}}
	\subfigure[][{\scriptsize EDR (Xi'an)}]{
		\scalebox{0.2}[0.2]{\includegraphics{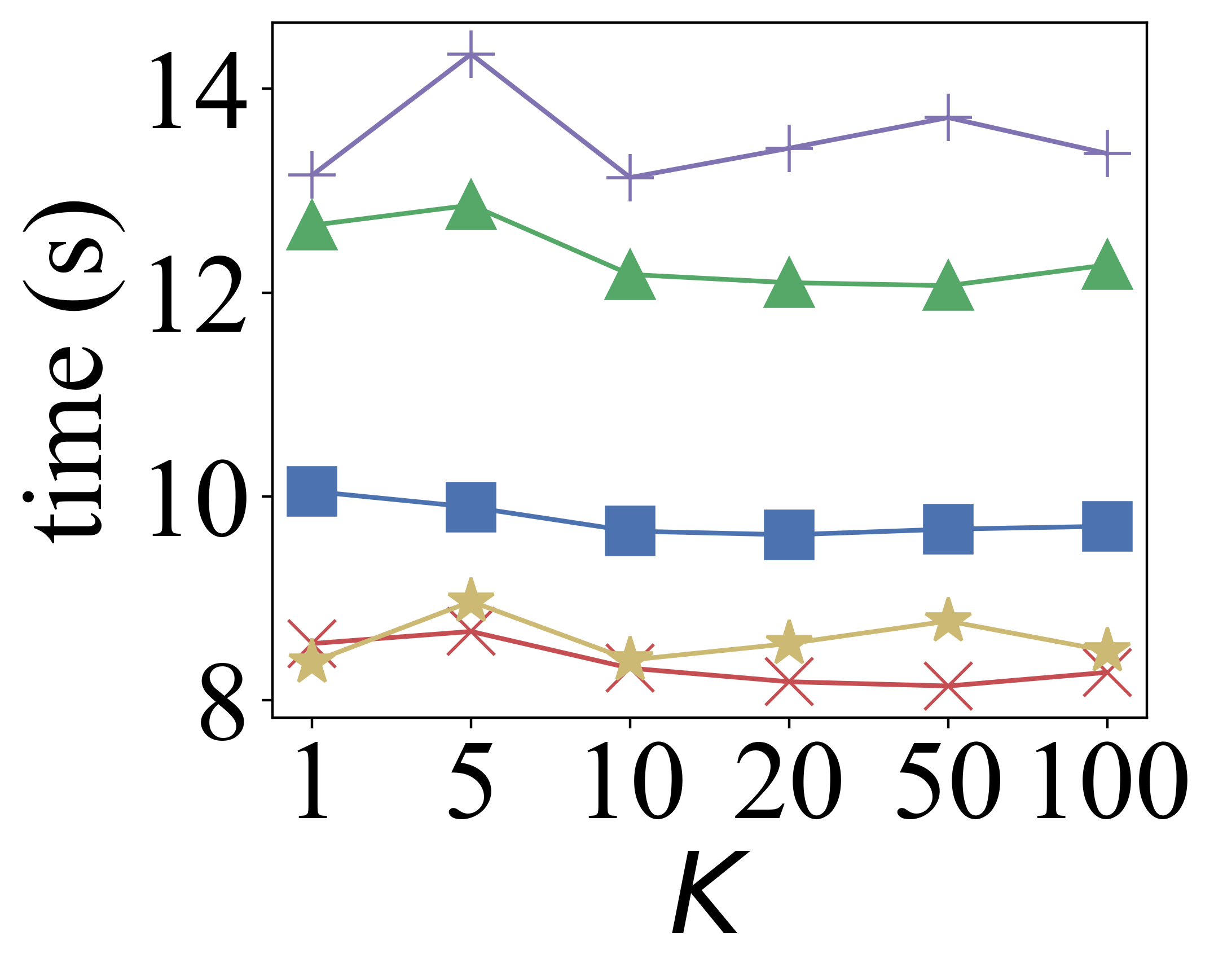}}
		\label{fig:varyK_xian_edr}}
	\subfigure[][{\scriptsize EDR (Beijing)}]{
		\scalebox{0.2}[0.2]{\includegraphics{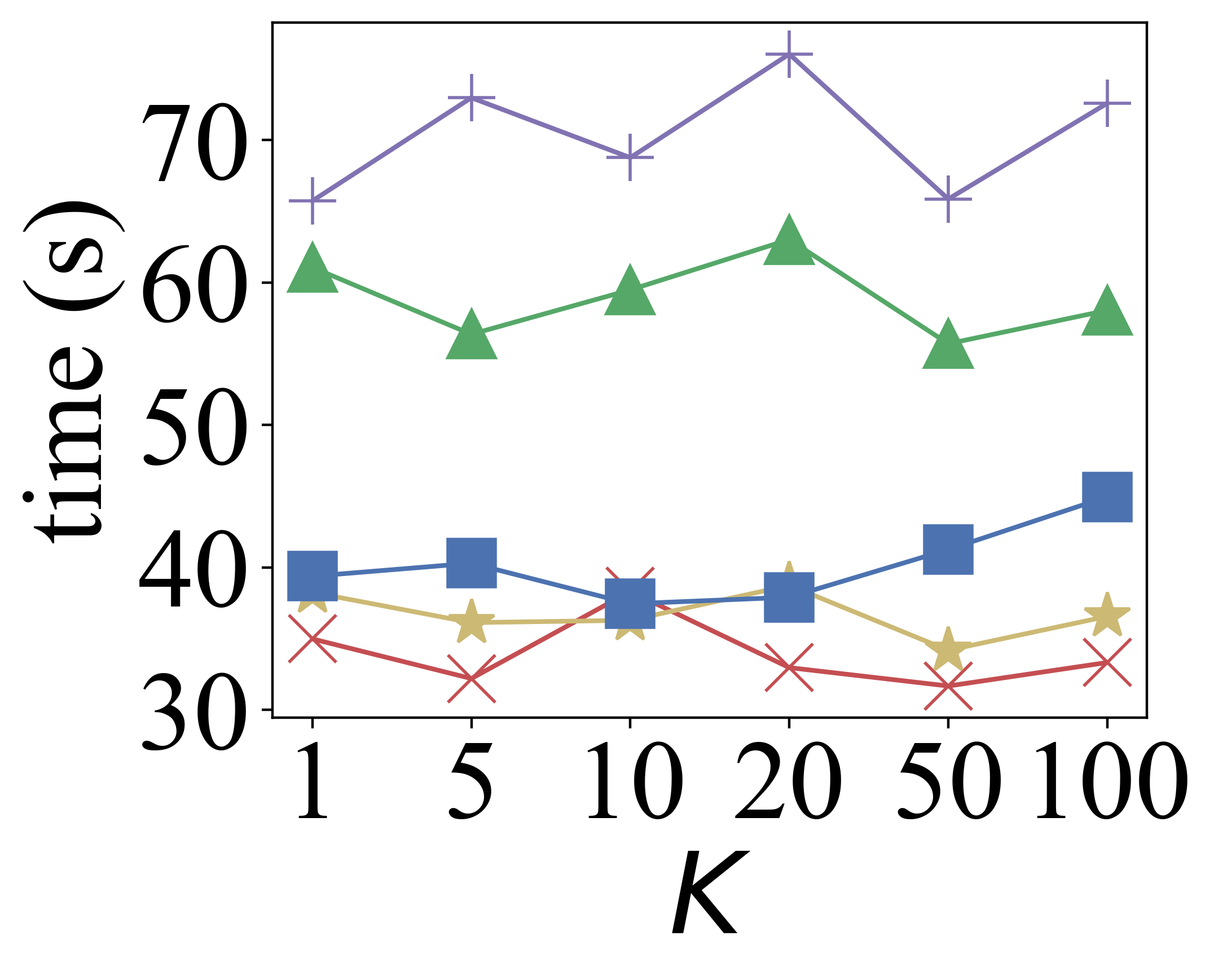}}
		\label{fig:varyK_beijing_edr}}
	\\
	\subfigure[][{\scriptsize DTW (Porto)}]{
		\scalebox{0.2}[0.2]{\includegraphics{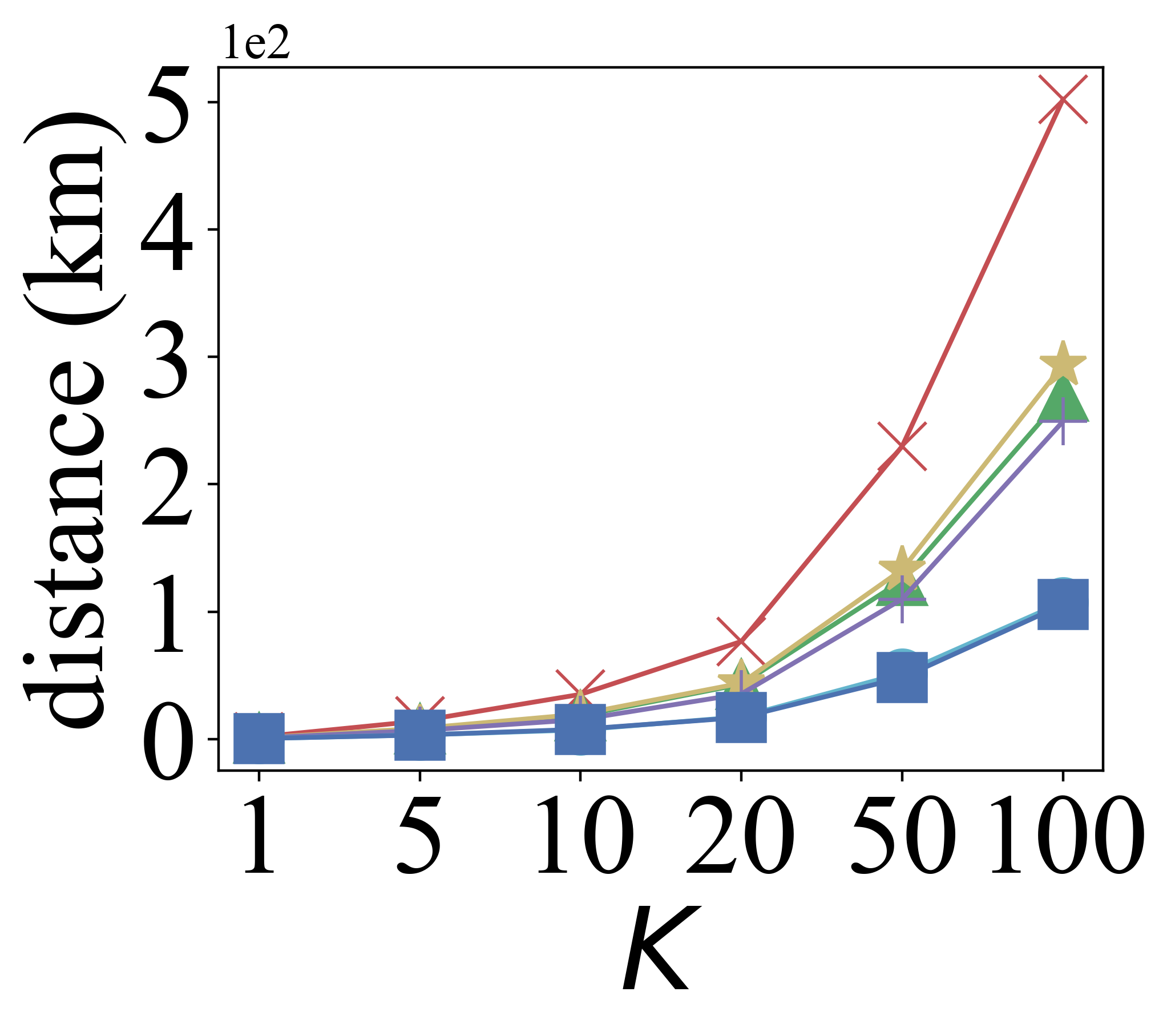}}
		\label{fig:varyK_porto_dtw_score}}
	\subfigure[][{\scriptsize DTW (Xi'an)}]{
		\scalebox{0.2}[0.2]{\includegraphics{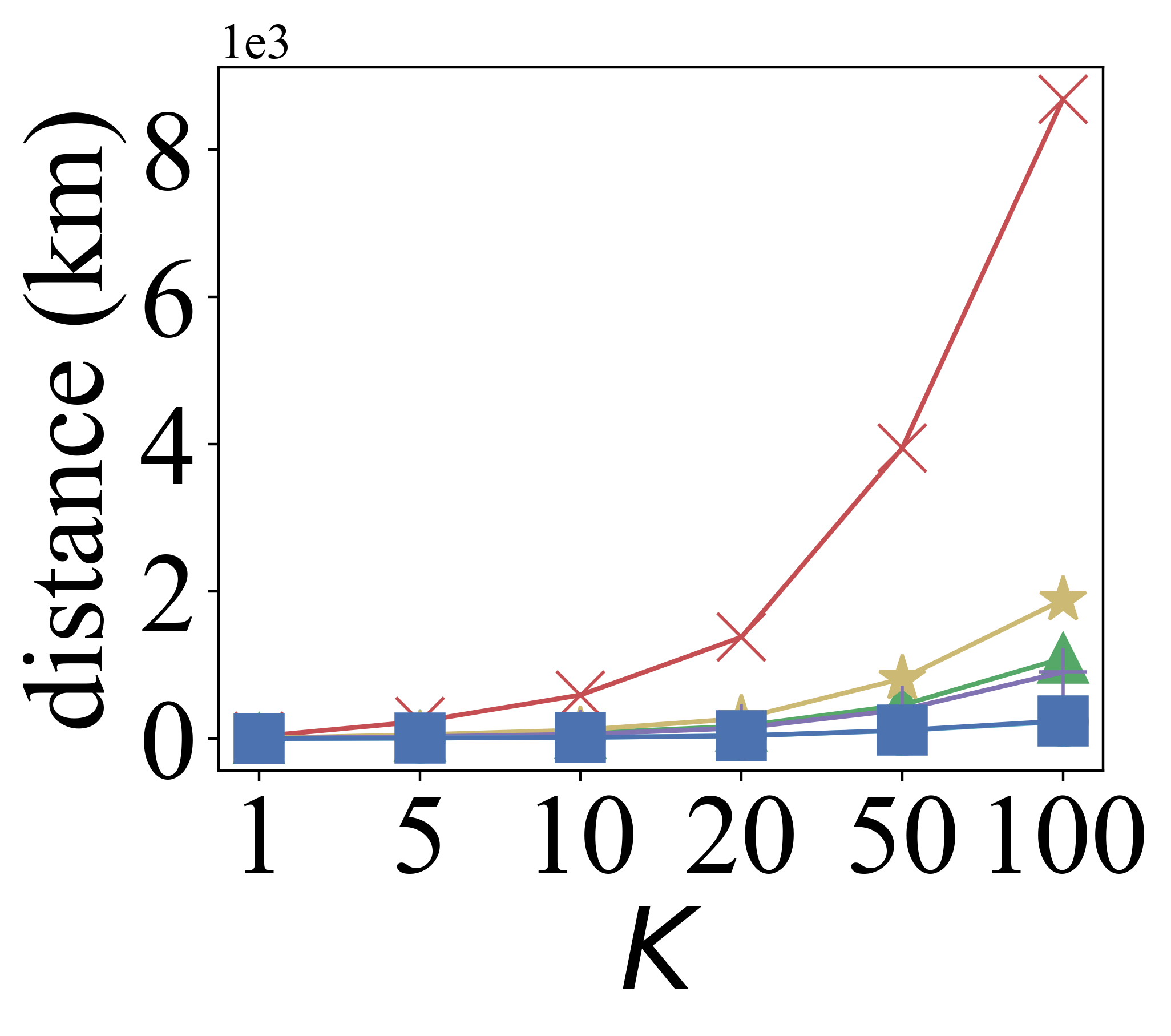}}
		\label{fig:varyK_xian_dtw_score}}
	\subfigure[][{\scriptsize DTW (Beijing)}]{
		\scalebox{0.2}[0.2]{\includegraphics{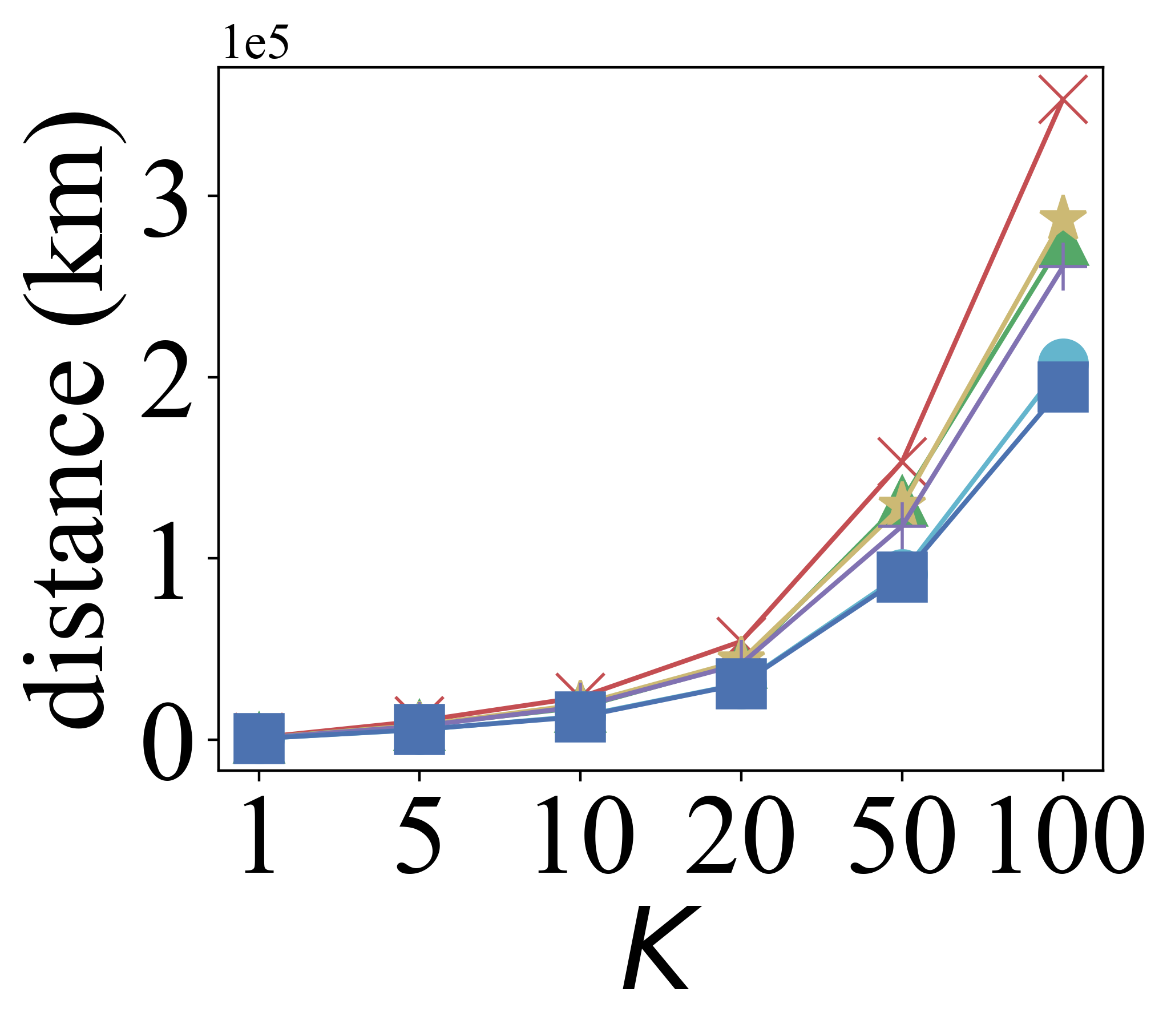}}
		\label{fig:varyK_beijing_dtw_score}}
	\subfigure[][{\scriptsize DTW (Porto)}]{
		\scalebox{0.2}[0.2]{\includegraphics{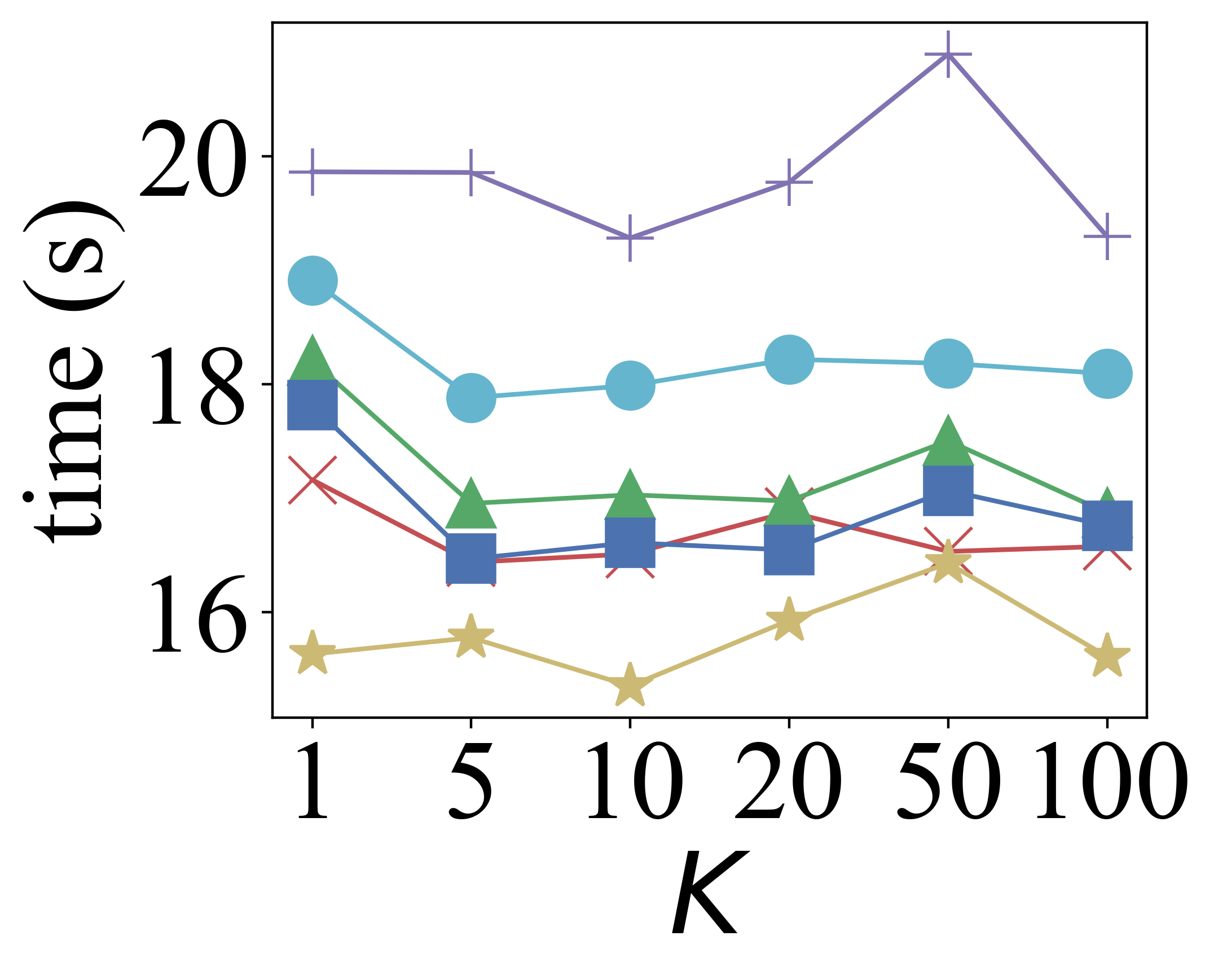}}
		\label{fig:varyK_porto_dtw}}
	\subfigure[][{\scriptsize DTW (Xi'an)}]{
		\scalebox{0.2}[0.2]{\includegraphics{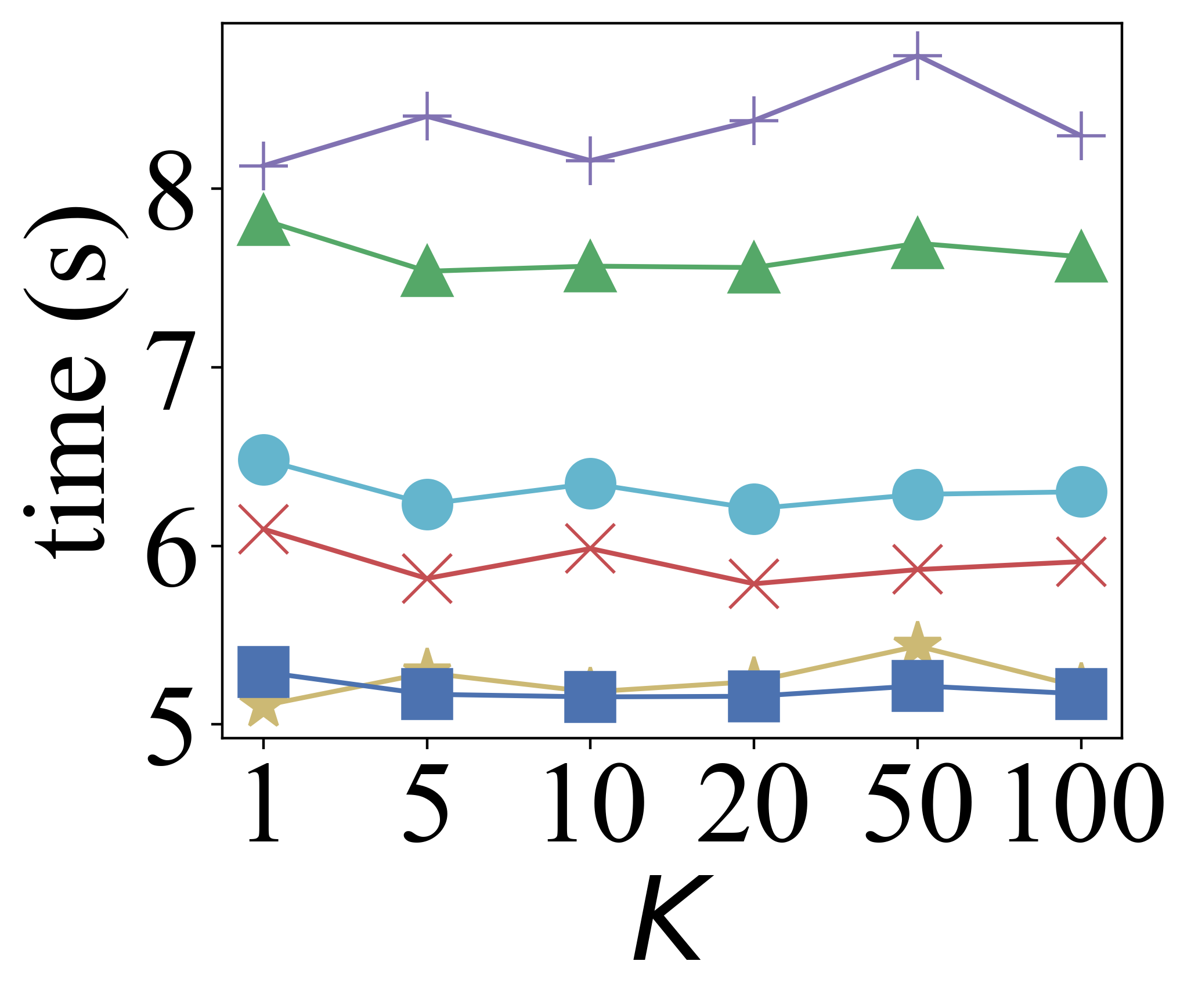}}
		\label{fig:varyK_xian_dtw}}
	\subfigure[][{\scriptsize DTW (Beijing)}]{
		\scalebox{0.2}[0.2]{\includegraphics{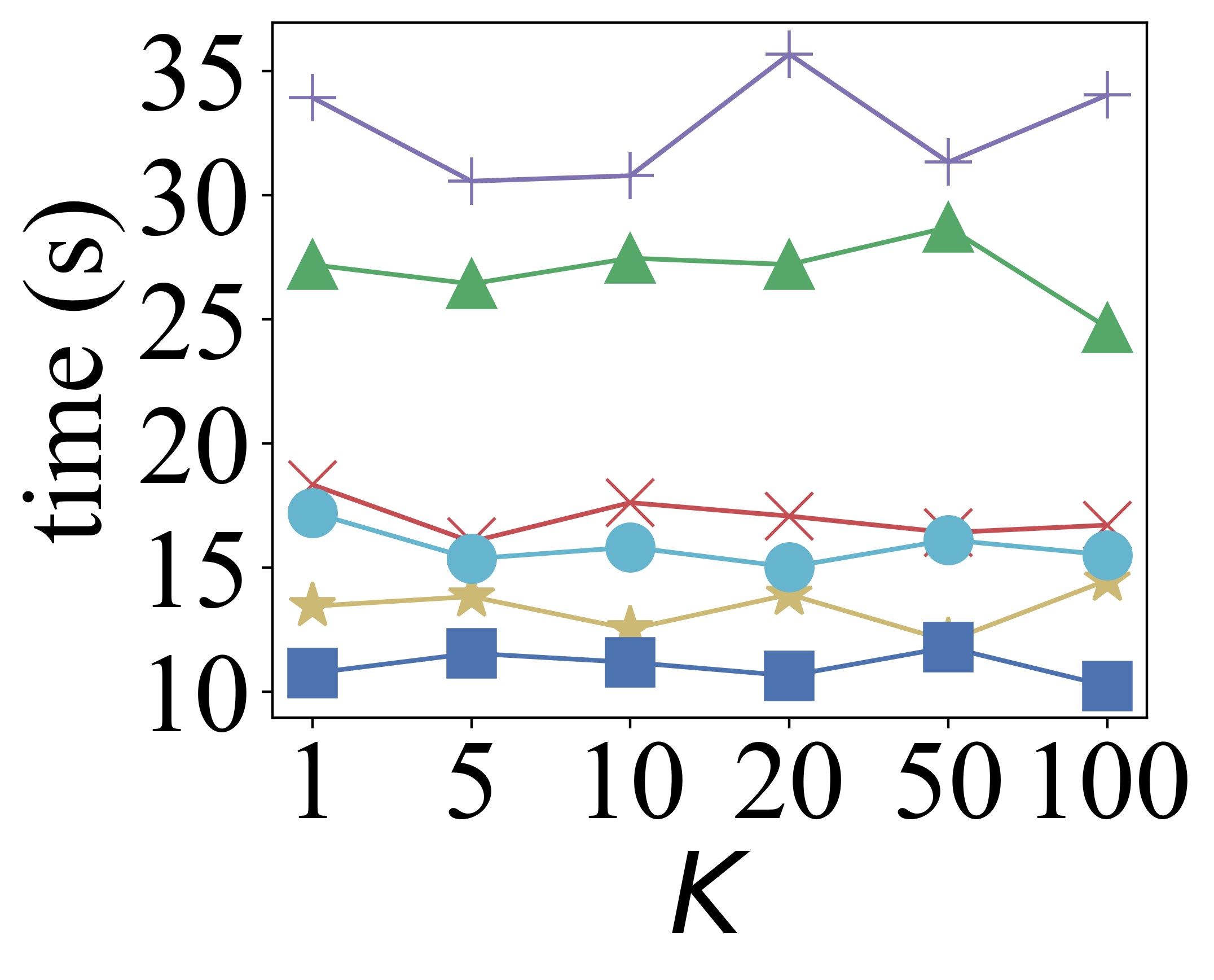}}
		\label{fig:varyK_beijing_dtw}}
	\\	
	\subfigure[][{\scriptsize ERP (Porto)}]{
		\scalebox{0.2}[0.2]{\includegraphics{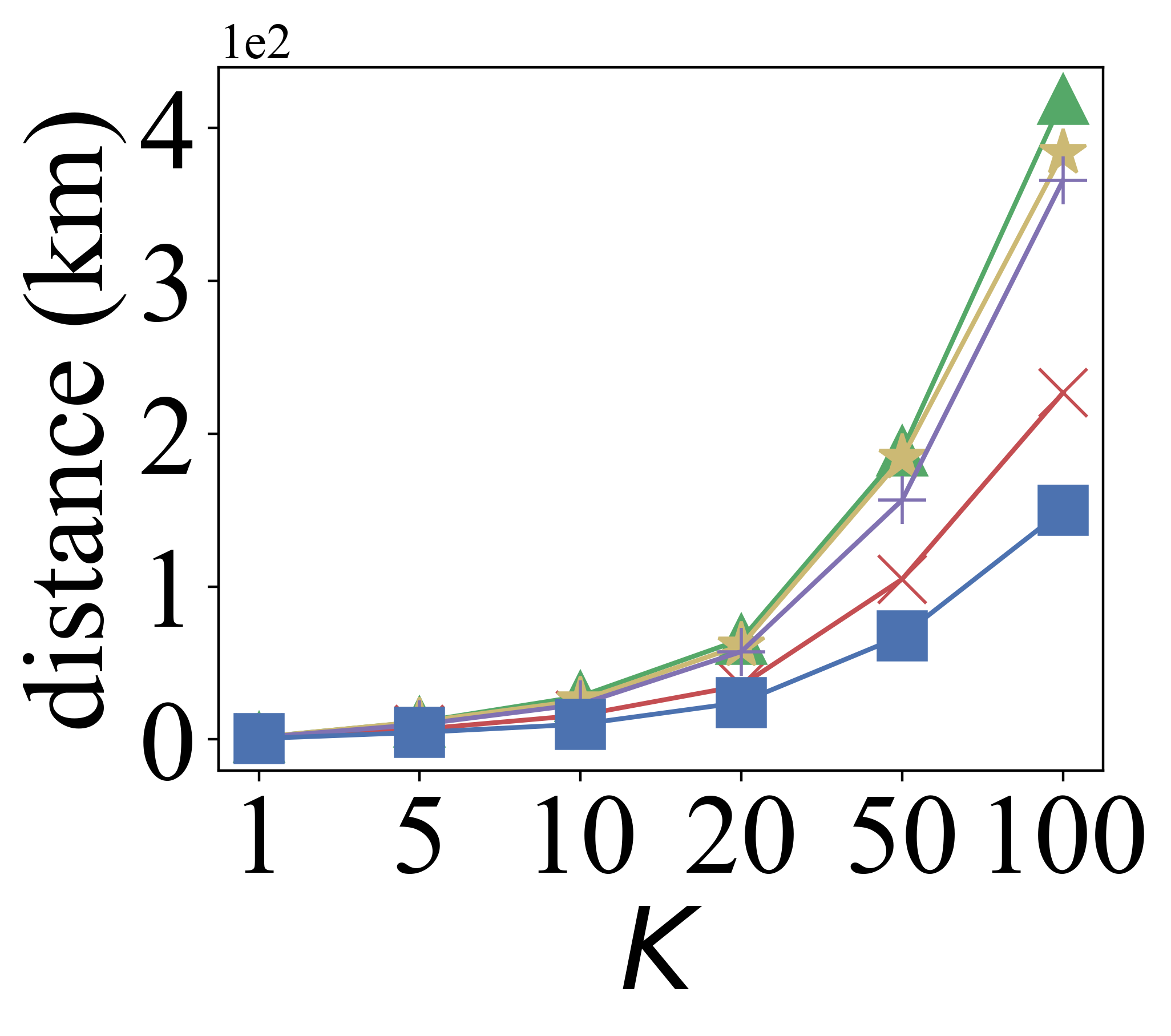}}
		\label{fig:varyK_porto_erp_score}}
	\subfigure[][{\scriptsize ERP (Xi'an)}]{
		\scalebox{0.2}[0.2]{\includegraphics{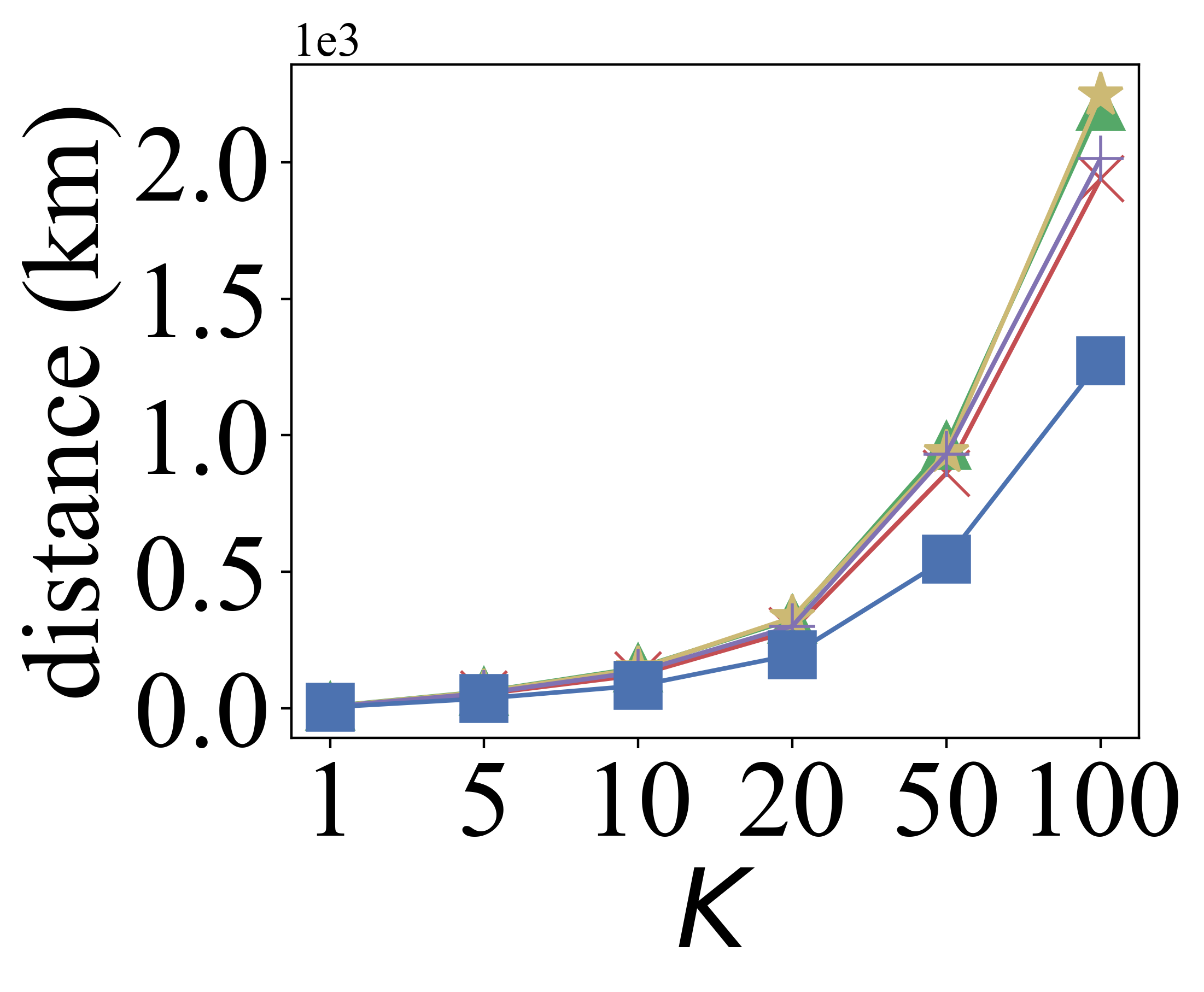}}
		\label{fig:varyK_xian_erp_score}}
	\subfigure[][{\scriptsize ERP (Beijing)}]{
		\scalebox{0.2}[0.2]{\includegraphics{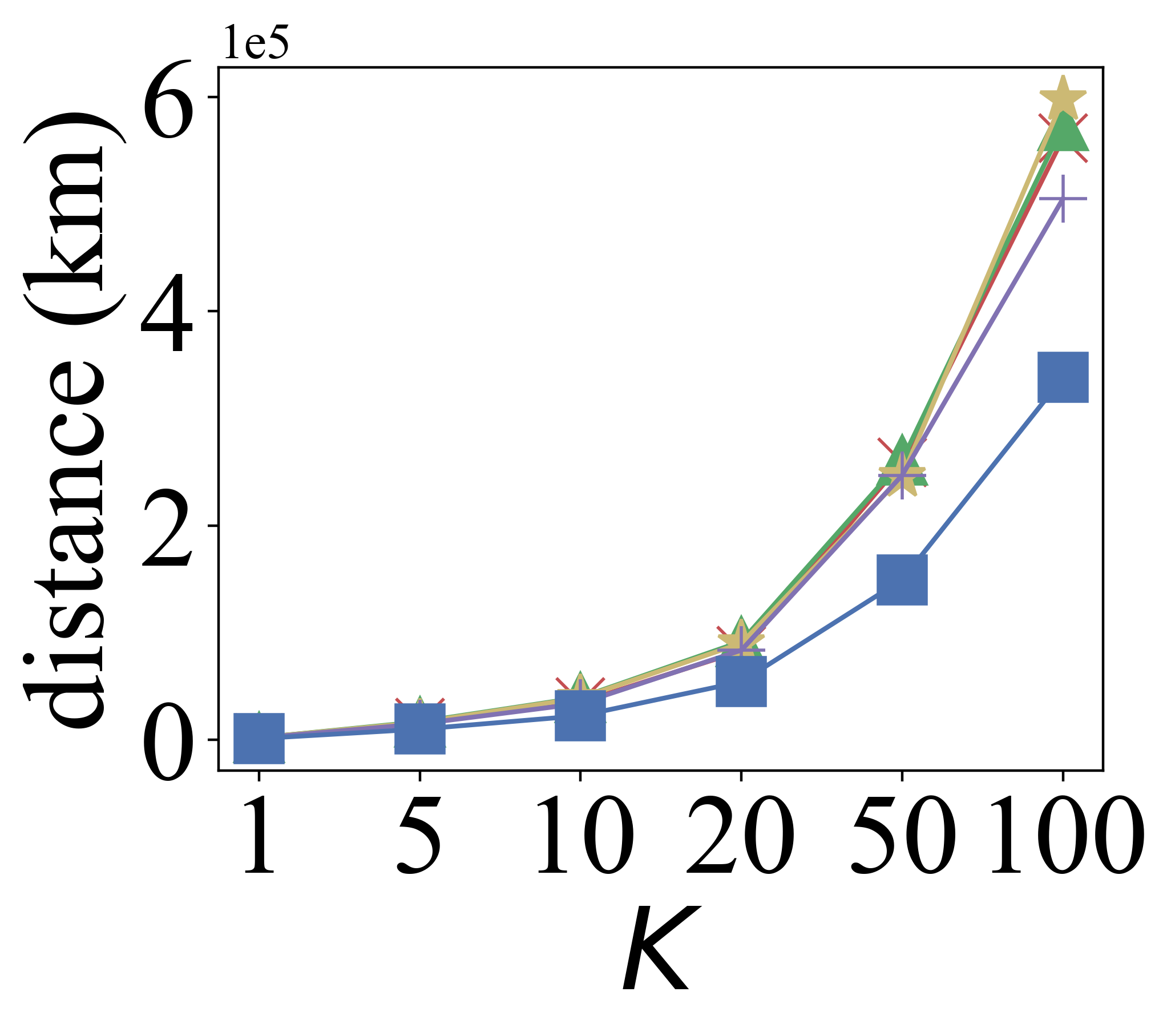}}
		\label{fig:varyK_beijing_erp_score}}
	\subfigure[][{\scriptsize ERP (Porto)}]{
		\scalebox{0.2}[0.2]{\includegraphics{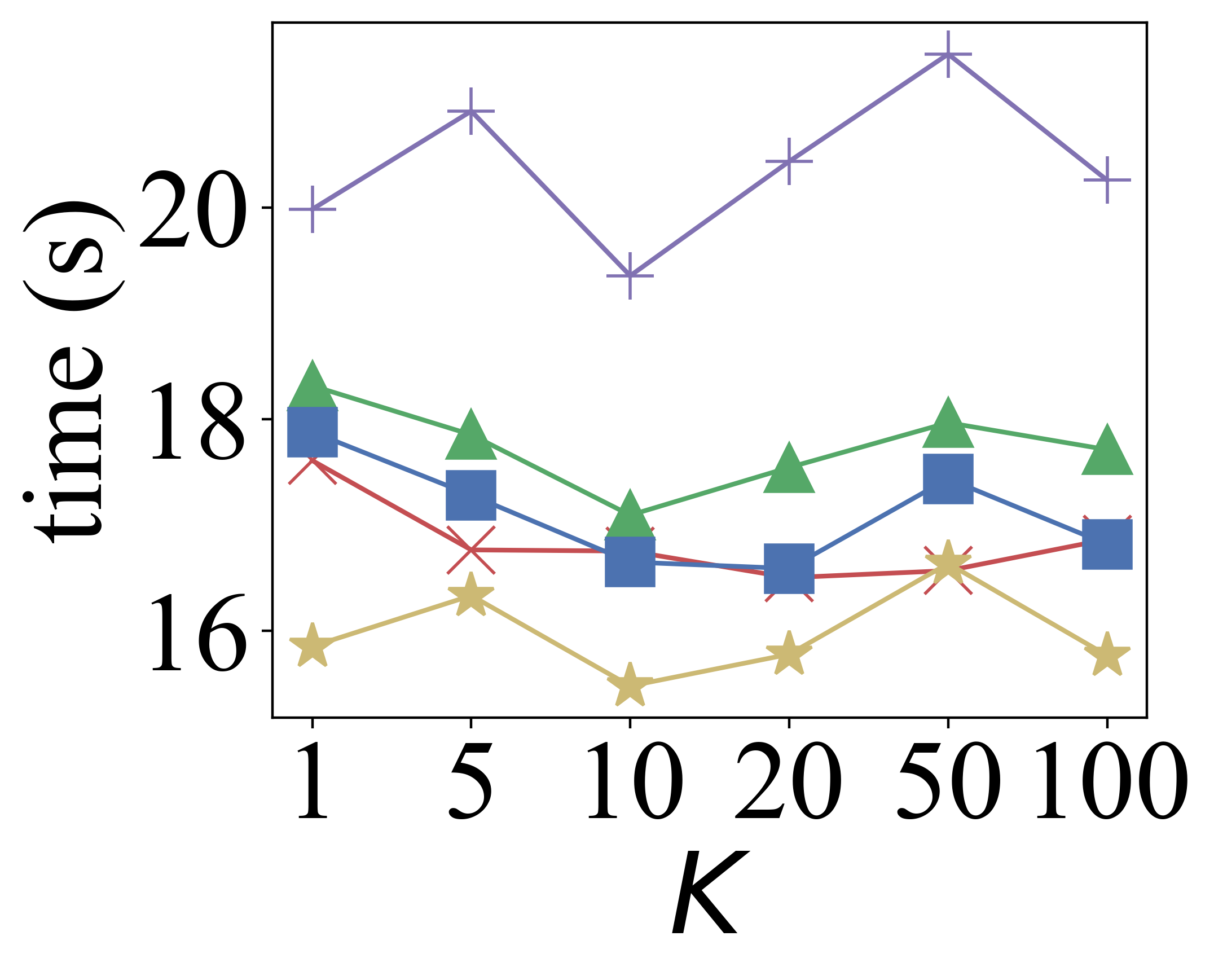}}
		\label{fig:varyK_porto_erp}}
	\subfigure[][{\scriptsize ERP (Xi'an)}]{
		\scalebox{0.2}[0.2]{\includegraphics{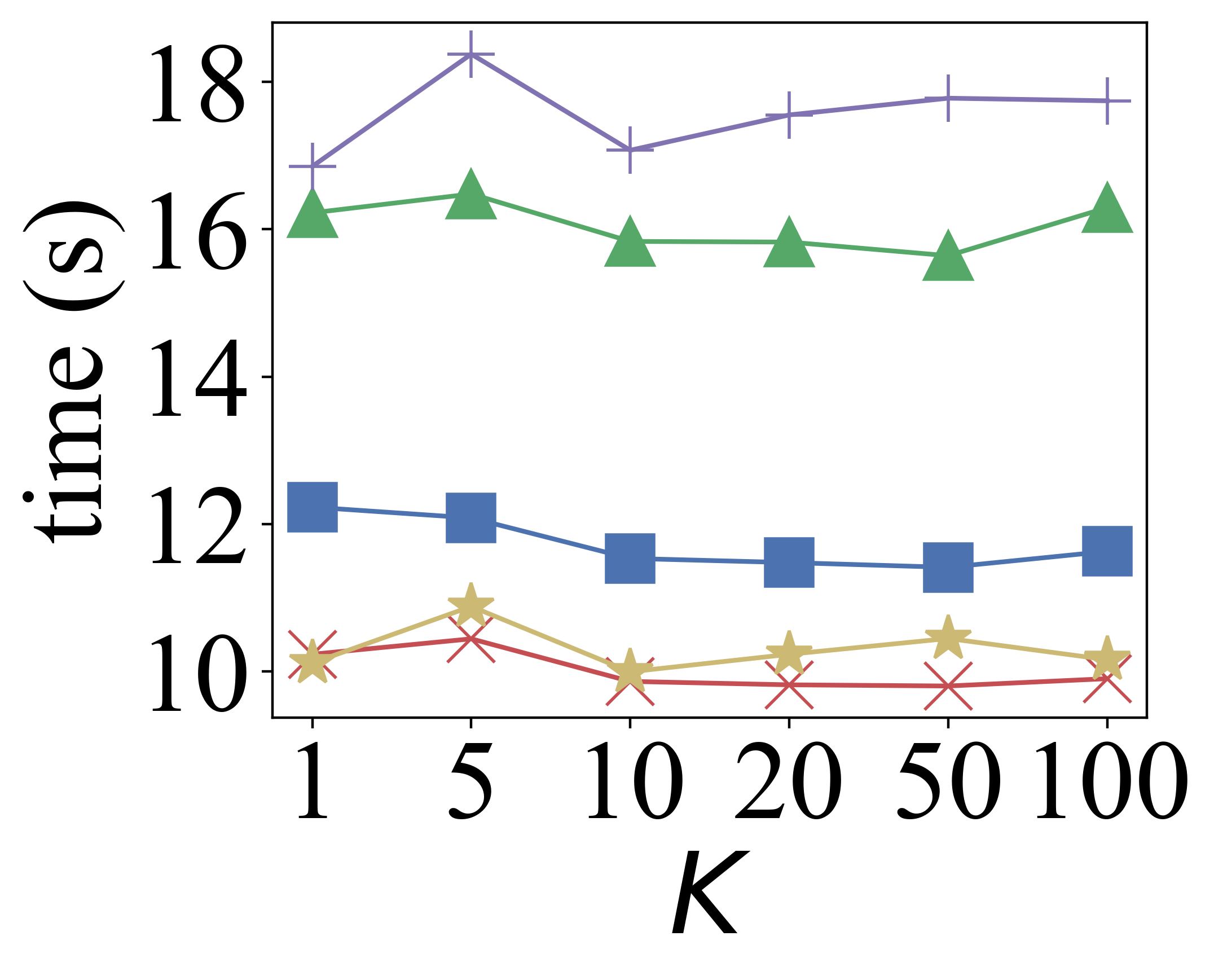}}
		\label{fig:varyK_xian_erp}}
	\subfigure[][{\scriptsize ERP (Beijing)}]{
		\scalebox{0.2}[0.2]{\includegraphics{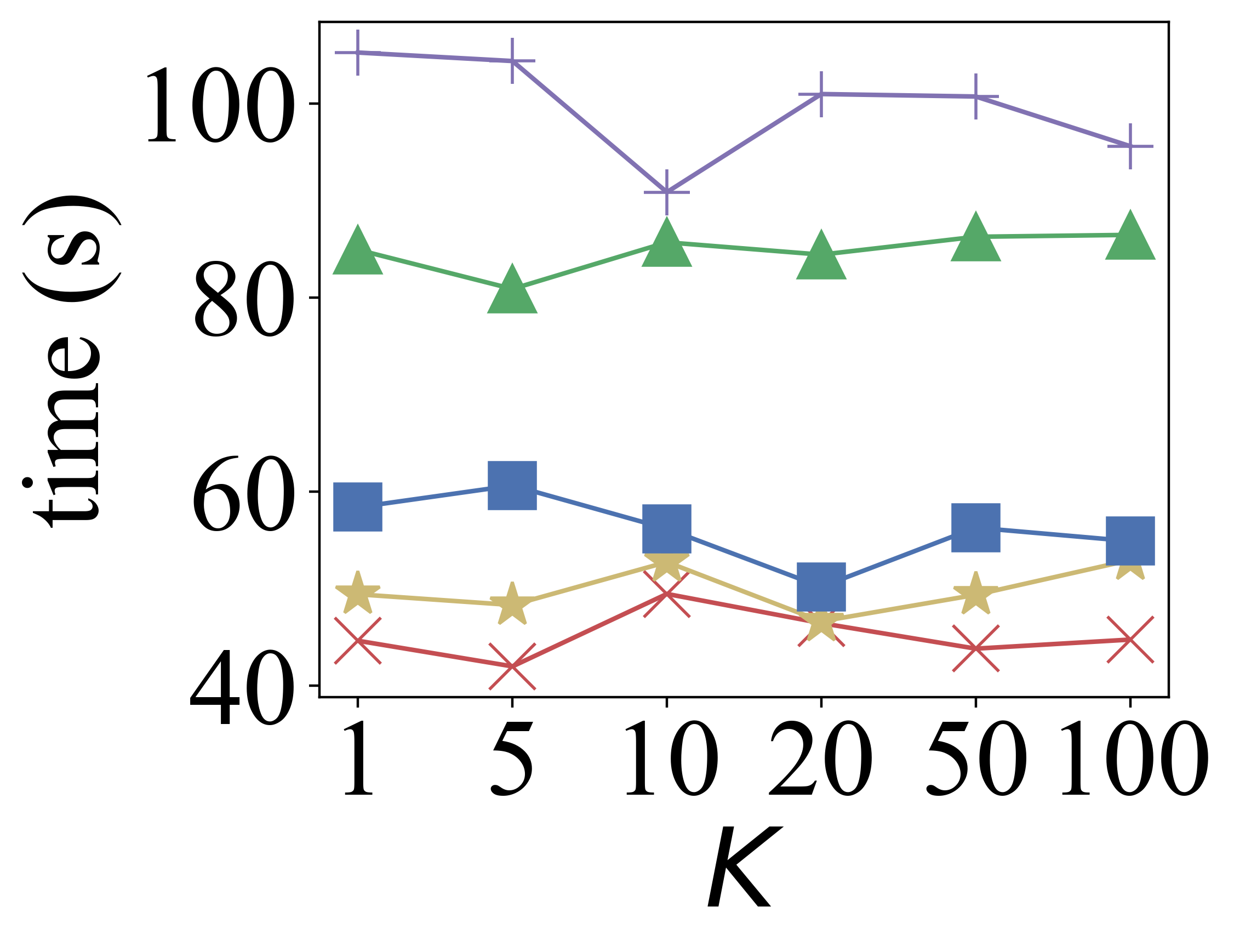}}
		\label{fig:varyK_beijing_erp}}	
	\\
	\subfigure[][{\scriptsize FD (Porto)}]{
		\scalebox{0.2}[0.2]{\includegraphics{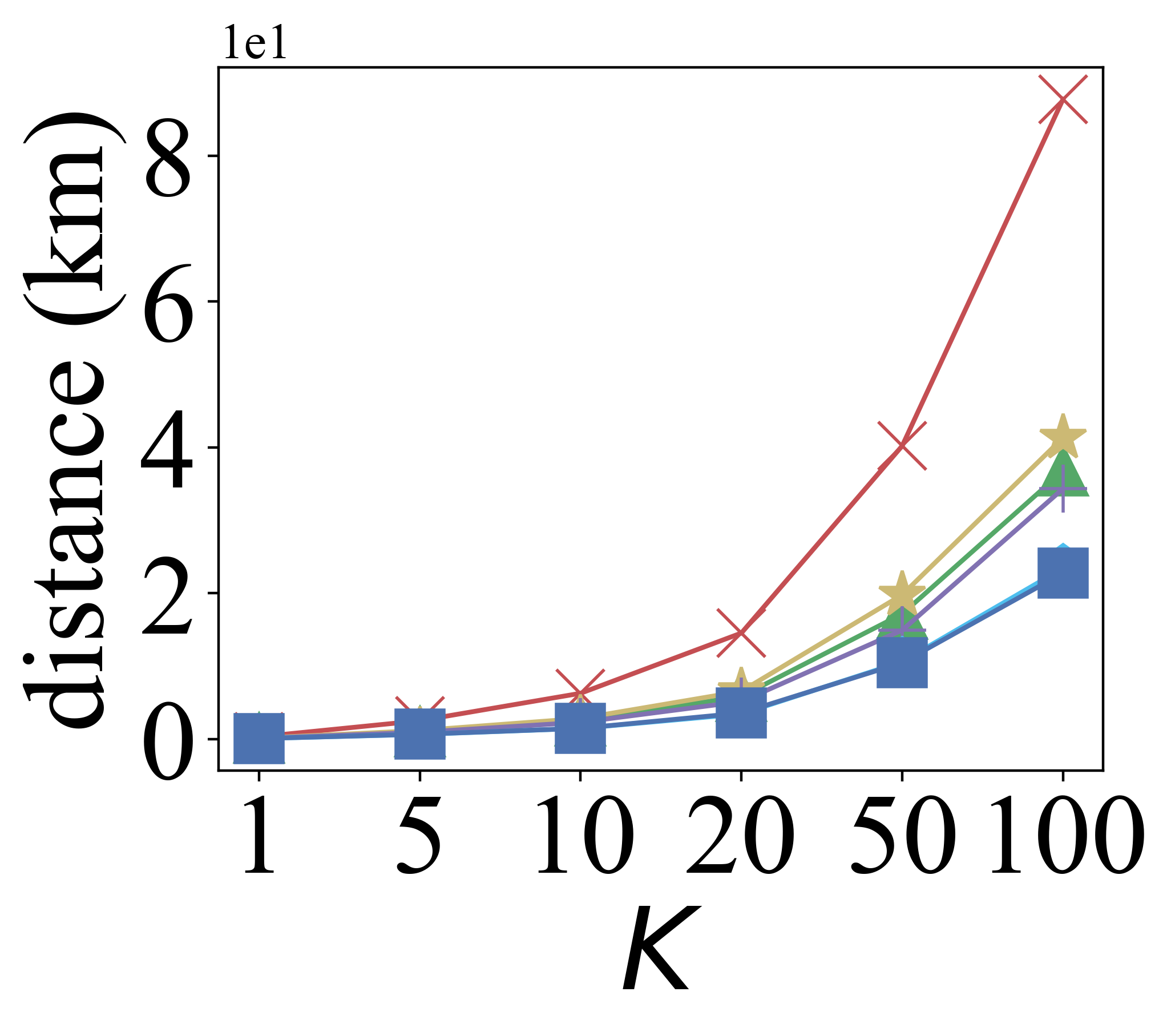}}
		\label{fig:varyK_porto_FC_score}}
	\subfigure[][{\scriptsize FD (Xi'an)}]{
		\scalebox{0.2}[0.2]{\includegraphics{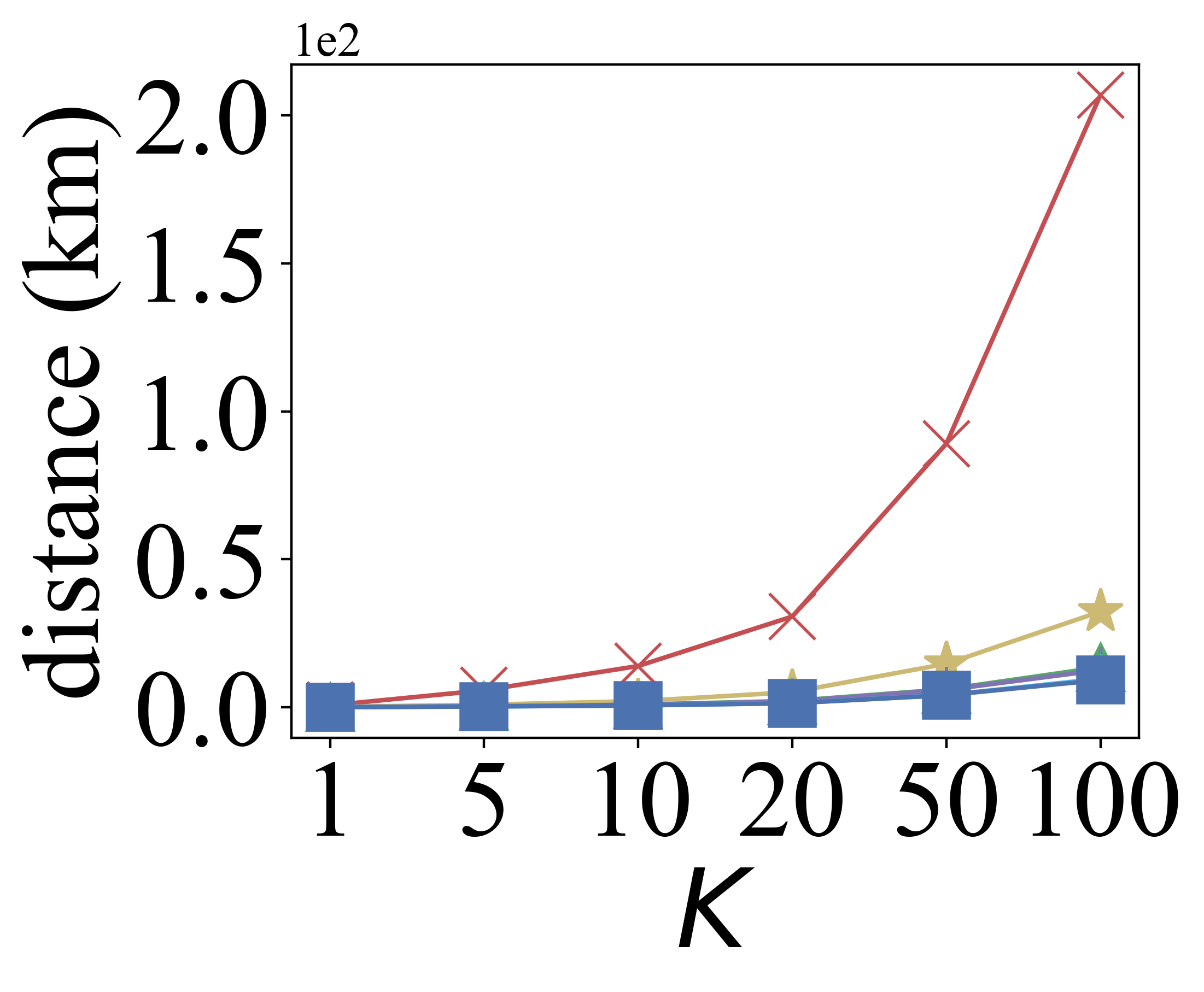}}
		\label{fig:varyK_xian_FC_score}}
	\subfigure[][{\scriptsize FD (Beijing)}]{
		\scalebox{0.2}[0.2]{\includegraphics{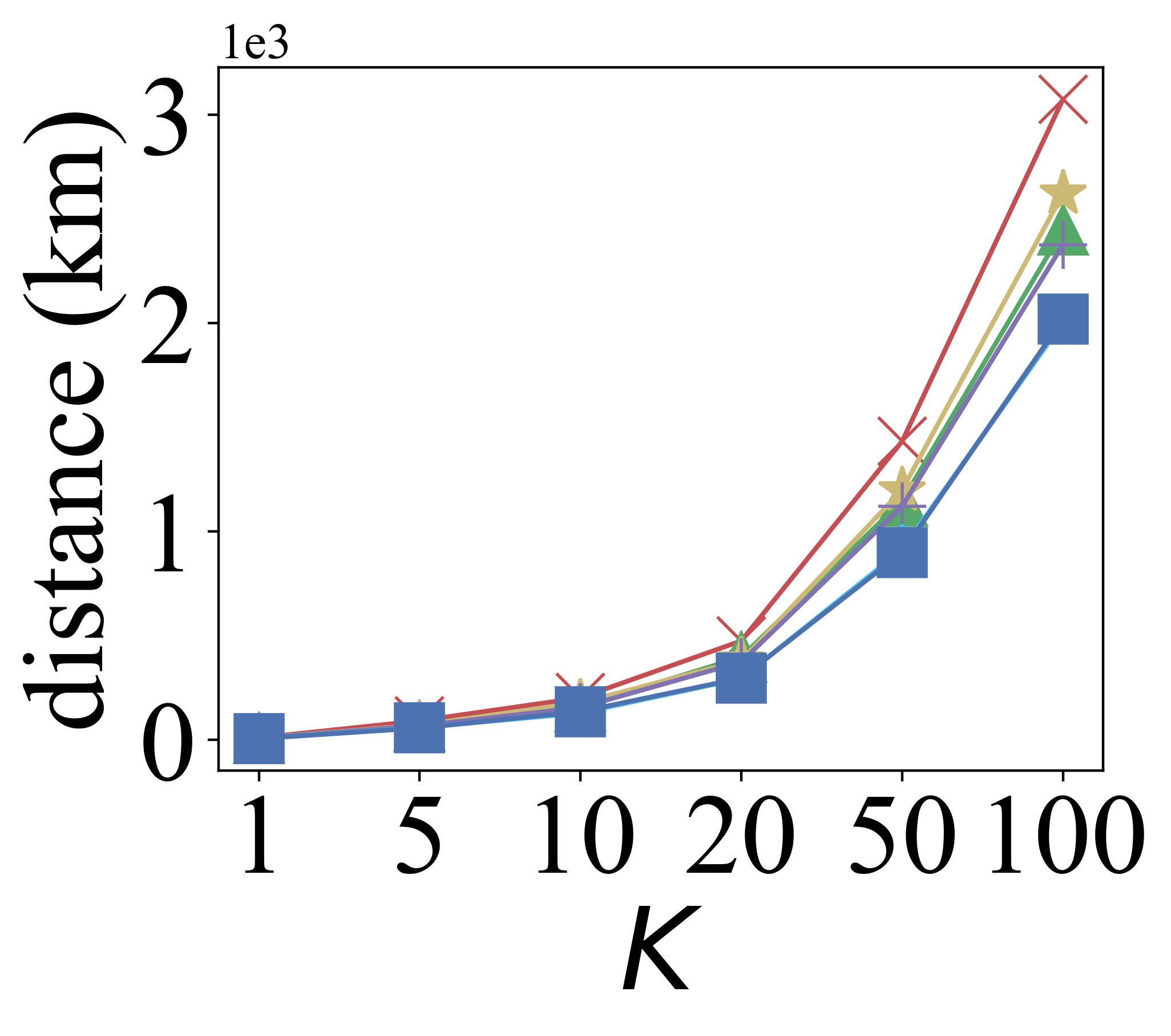}}
		\label{fig:varyK_beijing_FC_score}}
	\subfigure[][{\scriptsize FD (Porto)}]{
		\scalebox{0.2}[0.2]{\includegraphics{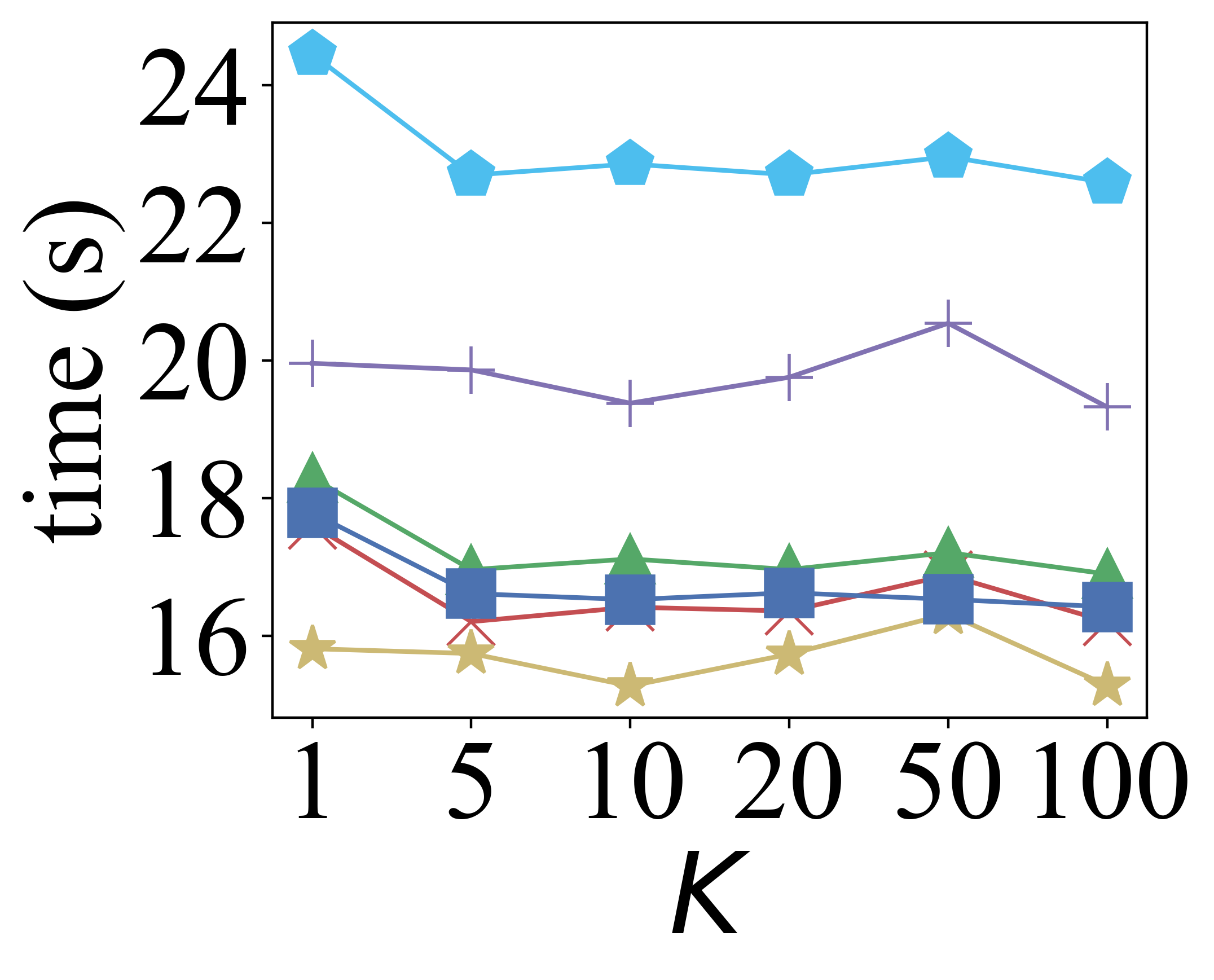}}
		\label{fig:varyK_porto_FC}}
	\subfigure[][{\scriptsize FD (Xi'an)}]{
		\scalebox{0.2}[0.2]{\includegraphics{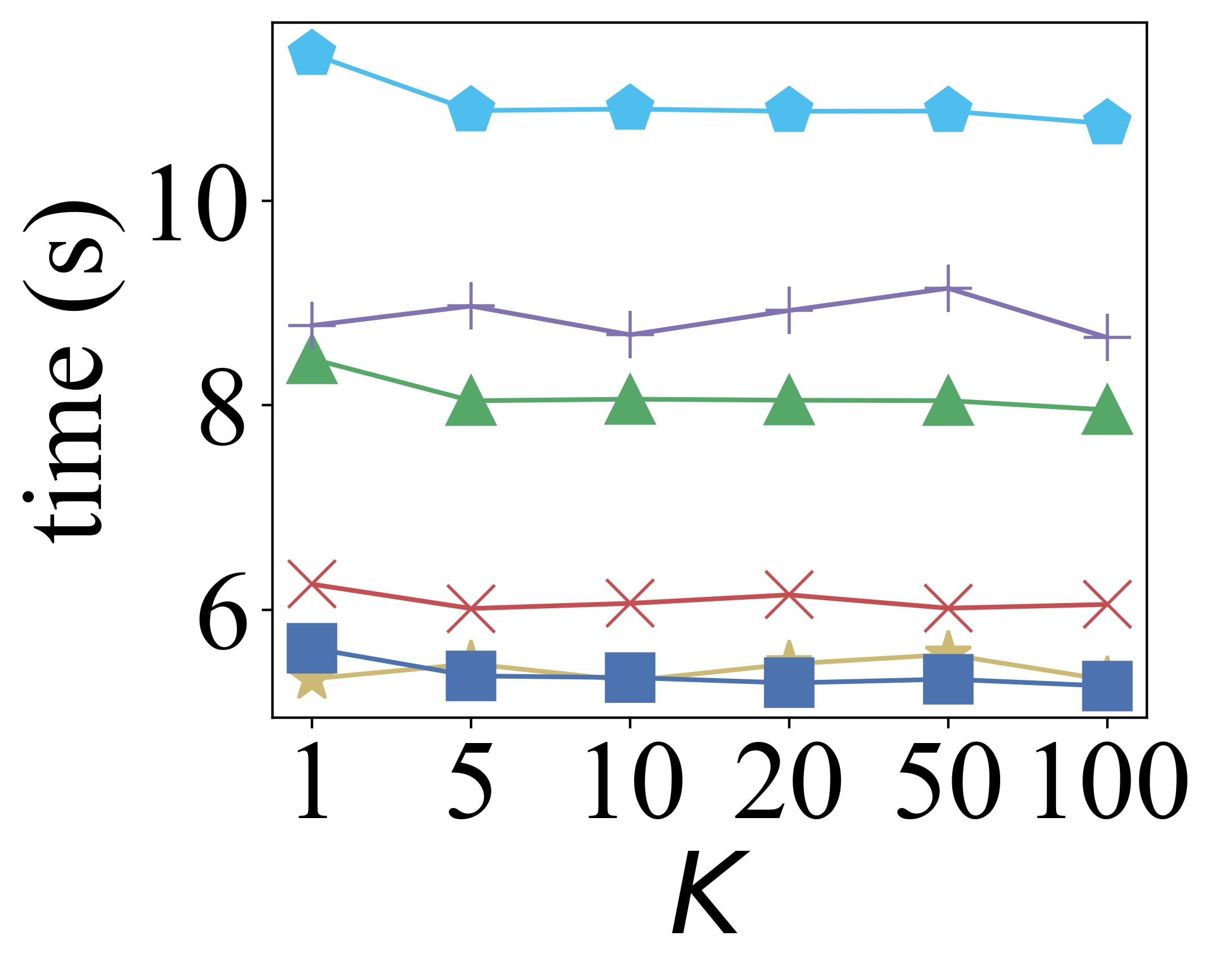}}
		\label{fig:varyK_xian_FC}}
	\subfigure[][{\scriptsize FD (Beijing)}]{
		\scalebox{0.2}[0.2]{\includegraphics{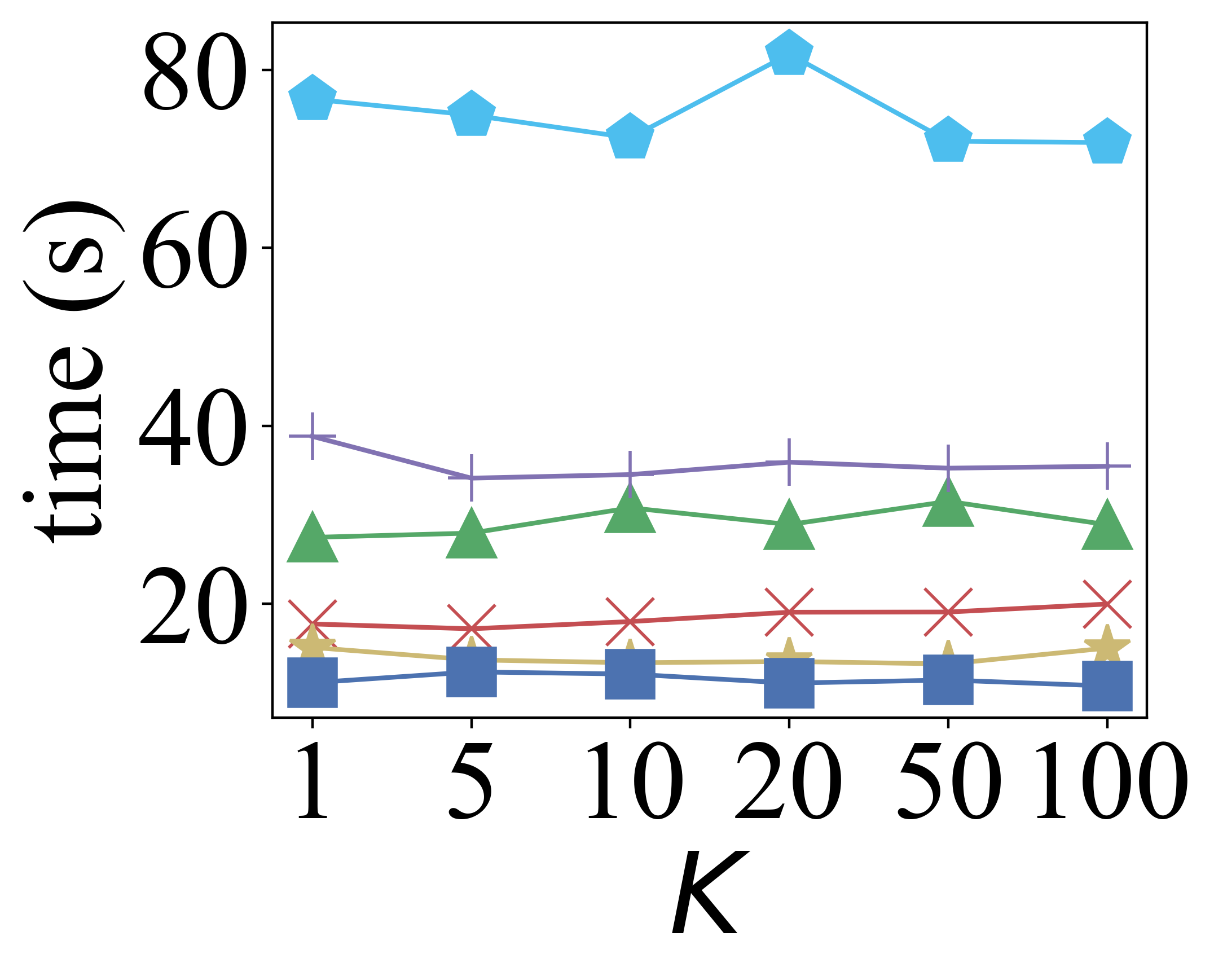}}
		\label{fig:varyK_beijing_FC}}
	\caption{\small Efficiency with varying $K$}\vspace{-2ex}
	\label{fig:varyK}
\end{figure*}

\end{document}